\numberwithin{equation}{section}
\newcommand{\ndash}{\nobreakdash-\hspace{0pt}}
\newcommand{\Ndash}{\nobreakdash--}
\newcommand{\ii}{{\mathrm{i}}}
\newcommand{\dd}{{\mathrm{d}}}
\newcommand{\DD}{{\mathrm{D}}}
\newcommand{\NN}{{\mathrm{N}}}
\newcommand{\hDD}{{\Hat\DD}}
\newcommand{\hNN}{{\Hat\NN}}
\newcommand{\mr}{\mathrm}
\DeclareMathOperator{\Fun}{Fun}
\DeclareMathOperator{\Harm}{Harm}
\newcommand{\bt}{\bullet}
\newcommand{\ra}{\rightarrow}
\newcommand{\be}{\begin{equation}}
\newcommand{\ee}{\end{equation}}
\newcommand{\backgrounds}{residual fields}
\newcommand{\pseudovacua}{residual fields}
\newcommand{\DDD}{\mathfrak{D}}
\newcommand{\bA}{{\mathbb{A}}}
\newcommand{\bB}{{\mathbb{B}}}
\newcommand{\bD}{{\mathbb{D}}}
\newcommand{\bE}{{\mathbb{E}}}
\newcommand{\tbA}{\Tilde\bA}
\newcommand{\tbB}{\Tilde\bB}
\newcommand{\calL}{{\mathcal{L}}}
\newcommand{\calZ}{{\mathcal{Z}}}
\newcommand{\calF}{{\mathcal{F}}}
\newcommand{\calT}{{\mathcal{T}}}
\newcommand{\calEL}{\mathcal{EL}}
\newcommand{\calN}{{\mathcal{N}}}
\newcommand{\EE}{\mathrm{e}}
\DeclareMathOperator{\Map}{Map}
\DeclareMathOperator{\Ber}{Ber}
\newcommand{\hHarm}{{\Hat{\mathrm{Harm}}}}
\newcommand{\id}{\mathrm{id}}
\DeclareMathOperator{\tr}{Tr}
\DeclareMathOperator{\ad}{ad}
\DeclareMathOperator{\Ad}{Ad}
\DeclareMathOperator{\Hom}{Hom}
\newtheorem{Thm}{Theorem}[section]
\newtheorem{Prop}[Thm]{Proposition}
\newtheorem{Lem}[Thm]{Lemma}
\newtheorem{Cor}[Thm]{Corollary}
\newtheorem*{Thm*}{Theorem}
\newtheorem*{Lem*}{Lemma}
\theoremstyle{remark}
\newtheorem{Rem}[Thm]{Remark}
\newtheorem*{Ack}{Acknowledgment}
\newtheorem*{Rem*}{Remark}
\theoremstyle{definition}
\newtheorem{Def}[Thm]{Definition}
\newtheorem{Exa}[Thm]{Example}
\newtheorem{Ass}[Thm]{Assumption}
\newtheorem{cor}[Thm]{Corollary}
\newcommand{\braket}[2]{\left\langle{\,{#1}\,,\,{#2}\,}\right\rangle}
\newcommand{\bbR}{{\mathbb{R}}}
\newcommand{\bbZ}{{\mathbb{Z}}}
\newcommand{\de}{\partial}
\newcommand{\calB}{\mathcal{B}}
\newcommand{\calH}{\mathcal{H}}
\newcommand{\calS}{\mathcal{S}}
\newcommand{\calC}{\mathcal{C}}
\newcommand{\calM}{\mathcal{M}}
\newcommand{\calV}{\mathcal{V}}
\newcommand{\calU}{\mathcal{U}}
\newcommand{\calP}{\mathcal{P}}
\newcommand{\calY}{\mathcal{Y}}
\newcommand{\Dens}{\mathrm{Dens}}
\newcommand{\HDens}{\mathrm{Dens}^{\frac12}}
\newcommand{\sfeta}{\boldsymbol{\eta}}
\newcommand{\sfA}{{\mathsf{A}}}
\newcommand{\sfB}{{\mathsf{B}}}
\newcommand{\sfa}{{\mathsf{a}}}
\newcommand{\sfb}{{\mathsf{b}}}
\newcommand{\sfe}{{\mathsf{e}}}
\newcommand{\sfX}{{\mathsf{X}}}
\newcommand{\frg}{{\mathfrak{g}}}
\def\gpd{\,\lower1pt\hbox{$\longrightarrow$}\hskip-.24in\raise2pt
               \hbox{$\longrightarrow$}\,}
\let\Tilde=\widetilde
\let\Bar=\overline
\let\Hat=\widehat
\newcommand\qq{}
\newcommand\cmp[1]{{\qq Commun.\ Math.\ Phys.\ \bf #1}}
\newcommand\lmp[1]{{\qq Lett.\ Math.\ Phys.\ \bf #1}}
\newcommand\ijmp[1]{{\qq Int.\ J. Mod.\ Phys.\ \bf #1}}
\newcommand\anm[1]{{\qq Ann.\ Math.\ \bf #1}}
\newcommand\jdg[1]{{\qq J.\ Diff.\ Geom.\ \bf #1}}
\newcommand\proma[1]{{\qq Progress in Mathematics \bf #1}}
\newcommand\conm[1]{{\qq Cont.\  Math.\  \bf #1}}
\newcommand{\bulkzm}{{\mathcal{V}_M}}
\newcommand{\vac}{\mathrm{res}}
\newcommand{\g}{\mathfrak{g}}
\newcommand{\bR}{\mathbb{R}}
\newcommand{\bC}{\mathbb{C}}
\begin{document}
\title{Perturbative quantum gauge theories on manifolds with boundary}


\author[A.~S.~Cattaneo]{Alberto~S.~Cattaneo}
\address{Institut f\"ur Mathematik, Universit\"at Z\"urich,
Winterthurerstrasse 190, CH-8057 Z\"urich, Switzerland}
\email{alberto.cattaneo@math.uzh.ch}

\author[P. Mnev]{Pavel Mnev}
\address{
Max Planck Institute for Mathematics, Vivatsgasse 7,
53111 Bonn, Germany}
\address{
St. Petersburg Department of V. A. Steklov Institute of Mathematics of the Russian Academy of Sciences, Fontanka 27, St. Petersburg, 191023 Russia
}
\email{pmnev@pdmi.ras.ru}

\author[N. Reshetikhin]{Nicolai Reshetikhin}
\address{Department of Mathematics,
University of California, Berkeley,
CA 94720, USA \& ITMO University, Kronverkskii ave. 49, Saint Petersburg 197101, Russian Federation
\& KdV Institute for Mathematics, University of Amsterdam,
Science Park 904, 1098 XH Amsterdam, The Netherlands}
\email{reshetik@math.berkeley.edu}

\thanks{
A.~S.~C. acknowledges partial support of SNF Grant No.~200020-149150/1. This research was (partly) supported by the NCCR SwissMAP, funded by the Swiss National Science Foundation, and by the COST Action MP1405 QSPACE, supported by COST (European Cooperation in Science and Technology).
P.~M. acknowledges partial support of RFBR
Grant No.~13-01-12405-ofi-m
and of SNF Grant No.~200021-137595.
The work of N.~R. was supported by the NSF grant DMS-0901431, by the Chern-Simons research grant.
The work on section 2 was supported by
RSF project no. 14-11-00598.
}

\keywords{quantum gauge field theories; manifolds with boundary; BV formalism; BFV formalism; topological field theories; Hodge decomposition; Segal--Bargmann transform; configuration spaces; deformation quantization; analytic torsions}

\begin{abstract}

This paper introduces a general perturbative quantization scheme for gauge theories on manifolds with boundary, compatible with cutting and gluing, in the cohomological symplectic (BV-BFV) formalism.
Explicit examples, like abelian BF theory and its perturbations, including nontopological ones, are presented.
\end{abstract}

\maketitle


\setcounter{tocdepth}{3}
\tableofcontents

\section{Introduction}\label{intro}

The goal of this paper is to lift Atiyah--Segal's functors to the cochain level. We show how to construct the data of such functors in terms of perturbative path integrals.

The natural framework for this construction is the Batalin--Vilkovisky formalism, or, more precisely, its natural extension to the setting of spacetime manifolds with boundary \cite{CMR,CMR2}.

The formalism we propose also incorporates the idea of Wilsonian effective action. In particular, partition functions for closed manifolds in our approach, rather than being numbers, are half-densities on the space of residual fields (if the latter can be chosen to be a point, we do get a number).  Models for the space of residual fields are partially ordered and one can pass from a larger to a smaller model by a certain fiber integration procedure -- in this way a version of Wilson's renormalization flow is built into the picture. Also, in this context, the reduced spaces of states in the case of topological field theories are not forced to be finite-dimensional, which allows one to accommodate for interesting examples (e.g. $BF$ theory) which do not fit into Atiyah's axiomatics in its usual form.


\begin{Rem}
In the text, manifolds, possibly with boundary, are always assumed to be smooth, compact and oriented.
\end{Rem}

\subsection{Functorial quantum field theory}
The functorial point of view on quantum field theory
was first outlined in \cite{At,Se} in the context of topological and conformal field theories,
however it is quite general and can be taken as a universal structure which is present in any quantum field theory.

In this framework a quantum field theory is a monoidal functor from a category of cobordisms
to a given monoidal category.
The target category is, usually, the category of complex vector spaces, or appropriate infinite-dimensional versions.
The category of cobordisms depends on the type of field theory. For example, for topological field theories these
are usually smooth oriented cobordisms.
For Yang-Mills theory and sigma models this is a category of smooth Riemannian manifolds with a collar at the boundary.
Other examples of geometric structures on cobordisms are: framing, volume form, conformal structure, spin and $\mbox{spin}^\mathbb{C}$-structures (on a Riemannian manifold).

When the target category is the category of vector spaces, such a functor does the following. To an $(n-1)$\ndash
dimensional manifold $\Sigma$ (equipped with collars \cite{ST} if we want to have smooth compositions)
it assigns a vector space:
\[
\Sigma\mapsto H(\Sigma)
\]
It should agree with the orientation reversing mapping
\[
H(\Sigma)\cong H(\overline{\Sigma})^*
\]
and should have the monoidal property
\[
H(\Sigma_1\sqcup \Sigma_2)=H(\Sigma_1)\otimes H(\Sigma_2)
\]
where the tensor product should be appropriately completed
in the infinite-dimensional case.
Here $H^*$ is the dual vector space. Typically these vector spaces are infinite-dimensional and the
notion of the dual vector space may depend on the construction of QFT.

To an $n$-dimensional cobordism $M: \de_-M\mapsto \de_+M$ the functor assigns a linear  map
\[
M\mapsto \psi_M: H(\de_-M)\to H(\de_+M)
\]
Taking into account the orientation reversing mapping and the monoidal property,
the mapping $\psi_M$ can be regarded as a vector:
\[
\psi_M\in H(\de M)
\]
Here $\de M=\overline{\de_-M}\sqcup\de_+M$ is the boundary of $M$.
For a given $M$ the space $H(\de M)$ is called the space of boundary states\footnote{To be
precise, as usual, states are density matrices on this space.}. The vector $\psi_M$ is called the state (a.k.a.\ the amplitude or the partition function or the wave function).

\subsection{The functional integral}
In the case of a theory without gauge symmetries, the space of states associated to the boundary and the state
associated to the bulk can be obtained as follows in the functional integral formalism.
We start from
a field theory on a manifold $M$ defined in terms of a space of fields $F_M$ on $M$ and an action functional $S_M$, which is a functional on $F_M$. We refer to $M$ as the space-time manifold as this is its physical meaning in field theory (but not in string theory where space-time is the target of maps defined on the worldsheet $M$).

Under mild assumptions, a local classical field theory naturally defines a symplectic manifold $F^\de_\Sigma$ of boundary fields on a boundary manifold $\Sigma$. The space of states is then defined as a quantization of $F^\de_\Sigma$. In the simple, but common, situation when $F^\de_\Sigma$ is an affine space, the quantization can be defined by choosing a Lagrangian polarization with a smooth leaf space $B_\Sigma$. The space of states is then defined as the space of functions on $B_\Sigma$. If $\Sigma=\de M$, there is a surjective submersion from the space of fields $F_M$ to the space of boundary fields $F^\de_{\de M}$. We denote by $p_M$ the composition of this map with the projection $F^\de_{\de M}\to B_{\de M}$. Then the state associated to $M$ may be heuristically defined as
\[
\psi_M(\beta) = \int_{\Phi\in p_M^{-1}(\beta)} \EE^{\frac\ii\hbar S_M(\Phi)} D\Phi,
\]
where $\beta$ is a point in $B_{\de M}$.

The gluing procedure is formally obtained by pairing the two states coming from two manifolds with the same boundary (component) $\Sigma$ via integration over $B_{\Sigma}$.\footnote{This procedure relies implicitly on a version of Fubini theorem which is heuristically expected to hold, cf. Remark \ref{rem: Fubini} and the preamble of Appendix \ref{a:comp}.}. This integral is not defined measure theoretically, but
as a formal power series modelled on the asymptotic expansion of an oscillatory integral around a critical point, with coefficients given by Feynman diagrams.\footnote{This formal power series is expected to be the asymptotic series for the non-perturbative state defined for finite values of $\hbar$.}
Sometimes it is also convenient to ``linearize'' the space of fields. Then the
procedure consists in splitting the action into a sum
$S_M=S_M^0+S_M^\text{pert}$, where $S_M^0$ is quadratic in the fields and
$S_M^\text{pert}$ is a small perturbation. One defines the Gaussian integral for $S_M^0$ as usual
 and then computes the effects of the perturbation in terms
of expectation values of powers of $S_M^\text{pert}$ in the Gaussian theory.

\subsection{Gauge theories and the BV formalism}
One of the results of this paper is the lift of the above construction to the cochain level, which is needed to treat gauge theories (or, more generally, theories with degenerate action functionals). The idea is to replace the vector space $H(\Sigma)$ by a cochain complex $\calH^\bullet(\Sigma)$
(whose cohomology in degree zero is $H(\Sigma)$). The state associated to a bulk $M$ in such a theory is a cocycle
in $\calH^0(\de M)$. The reason for this is that the construction of a state   usually depends on gauge choices and as a consequence the state
is defined up to a coboundary.

The functional integral approach outlined above has to be modified to accommodate for these changes. At first we assume that $M$
has no boundary. In this case
the most general framework is the Batalin--Vilkovisky (BV) formalism \cite{BV81}. It requires two steps: extending the space of fields on a manifold $M$ 
to an odd-symplectic supermanifold of fields $\calF_M$, and then extending the action functional to a function
$\calS_M$ on $\calF_M$ which satisfies a certain condition called the master equation. The space of fields $\calF_M$ usually comes with a special Lagrangian submanifold $\calL_0$ 
which corresponds to
the classical fields of the theory and the infinitesimal generators of symmetry.
The main result
of Batalin and Vilkovisky is that the integral of $\exp(\ii\calS_M/\hbar)$ over a Lagrangian submanifold $\calL$ of
$\calF_M$ is invariant under deformations of $\calL$. The application to field theory consists in replacing the, usually ill-defined, integral over $\calL_0$ with a well-defined integral over a deformation $\calL$ (this procedure is called the gauge-fixing).

Under mild assumptions, one can show \cite{CMR,CMR2} that a local BV theory naturally defines an even symplectic supermanifold $\calF^\de_\Sigma$ of boundary fields on a boundary manifold $\Sigma$
endowed with an odd function $\calS^\de_\Sigma$ that Poisson commutes with itself
(this structure is familiar from the BFV formalism; see \cite{BFV} and, for a more recent mathematical treatment, \cite{JS,FS}).
Again, we assume that we have a Lagrangian polarization on $\calF^\de_\Sigma$ with a smooth
leaf space $\calB_\Sigma$.
The space of states, now a cochain complex, is defined as the space of functions\footnote{The construction is in fact canonical if one works with half-densities instead of functions, which we will actually do in the paper. For simplicity of exposition we consider functions in this Introduction.} on
$\calB_\Sigma$ (in order to have a $\bbZ$\ndash graded complex, one needs a $\bbZ$\ndash grading, a.k.a.\ ghost number, on the supermanifolds of fields, which is usually the case). The coboundary operator $\Omega_\Sigma$ on the space of states is constructed
as  a quantization
of $\calS^\de_\Sigma$ which we assume to square to zero (otherwise the theory is called anomalous).

If $\Sigma=\de M$, there is a surjective submersion from the space of fields $\calF_M$ to the space of boundary fields
$\calF^\de_{\de M}$. The master equation for $\calS_M$ turns out to be modified by terms coming from
$\calF^\de_{\de M}$ (the classical master equation in this situation was analyzed in \cite{CMR} in the framework of BV-BFV theory). As we explain below (see Section \ref{ss:pertq}),  if we denote by $p_M$ the composition of the map
from $\calF_M$ to
$\calF^\de_{\de M}$ with the projection to the leaf space $\calB_{\de M}$, the fibers of $p_M$ inherit an odd-symplectic structure and the restriction of $\calS_M$ to the fibers satisfies the master equation modified by a boundary term.
The state
associated to $M$ is then defined by integrating the exponentiated action over a Lagrangian submanifold $\calL$ in the fibers
\begin{equation}\label{e:psi intro}
\psi_M(b) = \int_{\Phi\in\calL\subset p_M^{-1}(b)} \EE^{\frac\ii\hbar S_M(\Phi)} D\Phi,\quad b\in\calB_{\de M}.
\end{equation}
Notice that in principle we need a choice of $\calL$ in each fiber $p_M^{-1}(b)$. We refrain from using the notation $\calL_b$, for
we will see that for the formalism to make sense one actually has to assume that, at least locally, the fibration is
a product manifold and that $\calL$ is a Lagrangian submanifold of the fibers independent of the base point.\footnote{This assumption is natural in the setting of perturbative quantization in the formal neighborhood of a fixed critical point of the action when the relevant spaces of fields/boundary fields are automatically equipped with a linear structure.
} Notice that the functional integral corresponds to a choice of ordering. This yields a preferred quantization $\Omega_{\de M}$ of the
boundary action.

One of the goals of this paper is to show that, under natural assumptions, this is a well-defined procedure and that a change of gauge fixing (i.e., a deformation of the Lagrangian submanifolds $\calL$) changes the state $\psi_M$ by an
$\Omega_{\de M}$\ndash exact term.

\subsection{Perturbation theory and residual fields}
The functional integral (\ref{e:psi intro}) is understood as an expansion in Feynman diagrams corresponding to the
asymptotic
expansion around a critical point. We also consider perturbation theory where $\calS_M=\calS_M^0+\calS_M^\text{pert}$, where $\calS^0_M$ is quadratic and $\calS_M^\text{pert}$ is a small perturbation. In this case, it is also interesting to allow for non-isolated critical
points of $\calS_M^0$. The idea is to consider critical points of $\calS_M^0$ modulo its own gauge symmetry
as {\backgrounds} and to integrate 
in transversal directions to the space of {\backgrounds}.
The resulting state is a function on the space of {\backgrounds},
which is a finite-dimensional supermanifold and comes equipped with a BV Laplacian, i.e., an odd second order operator $\Delta$
that squares to zero (and anticommutes with $\Omega_{\de M}$). The main result is that, under certain assumptions, the state is now closed
under the coboundary operator $\hbar^2\Delta+\Omega_{\de M}$ and changes by
$\hbar^2\Delta+\Omega_{\de M}$\ndash exact terms under changes of gauge-fixing. This has profound consequences, e.g., when one wants to globalize the results (i.e., define the state as a function on the whole space of
solutions of Euler-Lagrange equations for $\calS_M$ modulo gauge symmetry,
and
not just on a formal neighborhood of each point as in perturbation theory), cf. \cite{BCM} for the detailed treatment of globalization for the Poisson sigma model.

More general spaces of {\backgrounds} may be defined as submanifolds of $\calF_M$ compatible with the BV structure.
This leads, e.g., to a Wilsonian picture, where one has a hierarchy of spaces of residual (``low energy'') fields and can pass from larger to smaller models by fiber BV integrals, see Appendix \ref{appendix: semi-classics and eff actions} for more details. Choosing appropriate spaces of {\backgrounds} is also important for the gluing procedure, see Section \ref{sss:gluing}.


\subsection{Main results}

This paper contains two main results. The first one is the construction of a
general framework of perturbative
quantization of any local QFT with gauge symmetry on manifolds with boundary.
Some of the assumptions may be too strong for specific examples. This is why
the application of this framework requires extra
work. In particular, we do not address possible issues with renormalization which would
be very important for non-topological theories.

Our second main result concerns the application of the general framework to a class of
topological field theories, \textit{BF-like theories}, see Sections \ref{s:abeBF} and \ref{sec: BF-like}. The result can be formulated as the following
theorem.

\begin{Thm*} The following holds for BF-like theories such as
BF theories, 2D Yang-Mills theory, the nonlinear Poisson sigma-model and the first order formalism
for quantum mechanics
where the construction of $\Omega_{\de M}$ and of the state $\psi_M$
is given in terms of configuration space integrals:

\begin{enumerate}
\item  $\Omega_{\de M}^2=0$.
\item The quantum master equation modified by the boundary term holds:\\
$(\hbar^2\Delta+\Omega_{\de M})\psi_M=0$.
\item A change of gauge changes the state (the partition function) by a coboundary:\\
$\psi_M\mapsto \psi_M+(\hbar^2\Delta+\Omega_{\de M})\phi$.
\item The gluing axiom holds, i.e. if $M=M_1\cup_{\Sigma} M_2$, then
\[
\psi_M=P_*(\psi_{M_1}\underset\Sigma\ast\psi_{M_2}),
\]
where $P_*$ is the BV-pushforward with respect to the odd-symplectic fibration
of residual fields $P\colon\calV_{M_1}\times \calV_{M_2}\rightarrow \calV_M$ (see section 2.4.4),
and $\ast_\Sigma$ is the pairing of states in $\calH(\Sigma)$.
\end{enumerate}
\end{Thm*}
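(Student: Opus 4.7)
My plan is to treat the four claims in order, exploiting that for BF-like theories the structural ingredients (propagators, vertices, boundary action) are given by explicit de Rham-theoretic data on compactified configuration spaces. For (1), I would start from the classical statement $\{\calS^\partial_\Sigma,\calS^\partial_\Sigma\}=0$ supplied by the BFV structure on $\calF^\partial_\Sigma$ and quantize in the polarization $\calB_\Sigma$ using the ordering dictated by the path integral. Since $\calS^\partial_\Sigma$ is at most cubic in the fields and linear in the momenta conjugate to the ghosts (as happens for abelian BF, nonabelian BF, 2D Yang--Mills, the Poisson sigma-model, and 1D first-order mechanics treated in the paper), the obstruction to $\Omega_{\partial M}^2=0$ is a finite list of normal-ordering/tadpole terms that reduce to $\tr\,\ad_{[-,-]}$-type expressions; these vanish for the unimodular Lie or Poisson data in each example. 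Thus (1) reduces to a small bookkeeping check in each model.

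For (2), the heart of the argument, I would realize $\psi_M(b)$ as a perturbative sum over Feynman diagrams on $M$ with propagators $\eta$ built from a chosen gauge-fixing compatible with the polarization on $\partial M$, integrated over the Fulton--MacPherson--Axelrod--Singer compactification $C_n(M)$. The operators $\Delta$ (acting on residual fields) and $\Omega_{\partial M}$ (acting on boundary states) both have graph-theoretic interpretations: $\Delta$ contracts an internal edge, while $\Omega_{\partial M}$ inserts a boundary vertex. I would apply Stokes' theorem to $d$ of each diagram integral on $C_n(M)$ and analyze the boundary strata: principal faces (two points colliding in the interior) vanish by the usual Kontsevich/Axelrod--Singer vanishing lemmas (angle-form dimension counting plus Jacobi-type identities from $d\eta=0$ away from the diagonal), hidden faces (subsets of points collapsing) vanish by the classical master equation for $\calS_M$, and faces where points migrate to $\partial M$ reproduce precisely the action of $\Omega_{\partial M}$. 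The combined identity is $(\hbar^2\Delta+\Omega_{\partial M})\psi_M=0$. The main obstacle is precisely this boundary-stratum analysis: one must fix propagators that have the correct behavior at both the diagonal and $\partial M$, verify the required vanishing on hidden faces (including those where points pile onto $\partial M$), and identify the surviving contributions with the diagrammatic expansion of $\Omega_{\partial M}$.

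Claim (3) follows from (2) by the usual BV argument augmented by the boundary: a deformation $\calL_s$ of the gauge-fixing Lagrangian in the fibers of $p_M$ produces, via the BV Stokes theorem in the bulk, a $\Delta$-exact piece plus, via the same configuration-space Stokes argument as in (2), an $\Omega_{\partial M}$-exact piece coming from boundary strata, assembling to an $(\hbar^2\Delta+\Omega_{\partial M})$-exact variation of $\psi_M$. Finally, for (4), I would split the perturbative expansion on $M=M_1\cup_\Sigma M_2$ using that any propagator on $M$ compatible with the polarization on $\partial M$ decomposes as propagators on $M_1$ and $M_2$ plus a finite-rank correction supported on a collar of $\Sigma$, the correction being exactly the propagator of the quadratic part restricted to the residual fields along $\Sigma$. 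Applying a diagram-by-diagram Fubini (summation over diagrams in $M_1$ and in $M_2$ with insertions on $\Sigma$ paired by the polarization), one obtains the convolution $\psi_{M_1}\ast_\Sigma\psi_{M_2}$, and the finite-rank correction is precisely absorbed by a fiber BV integral along $P\colon\calV_{M_1}\times\calV_{M_2}\to\calV_M$, yielding $\psi_M=P_*(\psi_{M_1}\ast_\Sigma\psi_{M_2})$. I expect the diagrammatic Fubini step to be the most delicate point, since it requires simultaneous compatibility of the propagators, the polarization, and the choice of residual field models on all three manifolds.
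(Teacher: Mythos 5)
Your strategies for (2)--(4) --- Stokes' theorem on compactified configuration spaces with the boundary strata assembling into $\Omega_{\de M}$, BV Stokes for gauge independence, and a diagram-by-diagram Fubini plus a BV pushforward over the redundant residual fields for gluing --- are essentially the ones the paper uses. But your treatment of (1) contains a genuine gap that also undermines (2). You take $\Omega_{\de M}$ to be the standard-ordering quantization of $\calS^\de_\Sigma$ and reduce $\Omega_{\de M}^2=0$ to a finite check of ordering anomalies that vanish by unimodularity. The paper's logic is the reverse: $\Omega_{\de M}=\Omega_0+\Omega_{\mr{pert}}$ is \emph{defined} by the contributions of subgraphs collapsing at the boundary in the Stokes argument for the mQME, and it agrees with the standard quantization of $\calS^\de_\Sigma$ only up to higher-order corrections (the $\gamma_j\tr\ad_{\sfA}^{d-4j}$ terms for nonabelian $BF$ in odd dimensions; the full Kontsevich star-product coefficients $B^{IJ}$, resp.\ the $A_\infty$ coefficients $A_{I_1\cdots I_k}$, for the Poisson sigma model). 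With that definition, $\Omega^2=0$ is proved by a second Stokes argument on $\underline{C}_{\Gamma'}$ (configurations collapsing at a boundary point, modulo translations along the boundary and scalings), and for the Poisson sigma model it encodes associativity of the star product, respectively the $A_\infty$ relations --- not unimodularity, which is only what controls the short loops. Your finite bookkeeping check would not even produce the correct operator for the PSM, where the standard quantization is of unbounded order and receives corrections at every order in $\hbar$.

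Second, your mQME argument misses the composite-field issue: when $\Omega_{\de M}$ contains second or higher functional derivatives in the boundary fields (nonabelian $BF$ in the $\bB$-representation, the PSM), the higher derivative of the expectation value is not the expectation value of the higher derivative, because additional connected diagrams survive when several boundary insertions are concentrated at one point. The paper resolves this by introducing composite fields $[\bA^I]$, reinterpreting higher functional derivatives as first-order derivatives in composite fields, and replacing the exponential of the effective action by the $\bullet$-exponential; the mQME holds for this full state, not for the naive one, so your claim (2) fails as stated in exactly the interesting examples. (A minor point: you have also swapped the roles of the boundary strata --- the faces where two bulk vertices collide cancel via the classical master equation, i.e.\ the Jacobi-type identity on vertex tensors, while the hidden faces with three or more collapsing points are the ones killed by the dimension-counting vanishing theorems.)
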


This theorem is proven in Sections \ref{s:abeBF} and \ref{sec: BF-like} with the beginning of Section \ref{sec: BF-like} (specifically, Subsections \ref{s:BF-like pert},\ref{s:mQME}) being the core
part.

\subsection{Summary}

In Section ~\ref{s:BVBFV} we give the framework of quantum BV-BFV theory.
Later, in Sections 3 and 4 we use Feynman diagrams and integrals over configuration
spaces to make a precise mathematical construction of abelian BF theories and their perturbations,
including, in particular, non-abelian BF theories, 2D Yang-Mills theiory and the Poisson sigma model.
Section ~\ref{s:BVBFV}  begins with a short review of the classical BV-BFV formalism for Lagrangian
field theories on manifolds
with boundaries \cite{CMR,CMR2}. Then, after introducing in Section \ref{sec: BV pushforward in families}
the main construction underlying our quantization scheme --  BV pushforward in a family --
we continue with an abstract formulation of its quantum version (Section \ref{ss:qBVBFV}) which will be substantiated by examples in the rest of the paper.
Then we present the construction of perturbative quantization which starts with a classical BV-BFV theory and returns a quantum BV-BFV theory (Section \ref{ss:pertq}).
 Here we focus on finite-dimensional integrals and comment on the infinite-dimensional version defined via the stationary phase asymptotical formula, with integrals defined by their Feynman diagram expansions.
In particular, we show how the functional integral formalism yields a preferred quantization of the BFV action
-- i.e., roughly speaking, of the constraints on boundary fields -- which is compatible with the quantization in the bulk.

In Section~\ref{s:abeBF}, we consider the case of abelian $BF$ theories. We discuss the space of {\backgrounds},
the choice of gauge fixings (by Hodge theory on manifolds with boundary), and the construction of
propagators (some of the results of sections~\ref{sec: abBF backgrounds}
and~\ref{ss:prop}  were already described in \cite{Wu}).
We compute
the state explicitly, see \eqref{e:state},  as
\[
\Hat\psi_M= T_M\, \EE^{\frac\ii\hbar\calS^\text{eff}_M},
\]
where $T_M$ is, up to a coefficient depending on Betti numbers of $M$, the torsion of $M$ (to the power $\pm1$) and
\[
\calS^\text{eff}_M=
\pm\left(\int_{\de_2 M}\bB\sfa -\int_{\de_1 M}\sfb\bA\right)
\pm \int_{\de_2M\times\de_1M}\pi_1^*\bB\,\eta\,\pi_2^*\bA
\]
is the effective action, where $\bA$ and $\bB$ denote the boundary fields, $\sfa$ and $\sfb$ the residual fields and
$\eta$ the propagator ($\pi_1$ and $\pi_2$ are just projections to the factors in the Cartesian product).
We show that the quantum BV-BFV axioms are satisfied. Finally, we
discuss the gluing procedure and show that it is a combination of the gluing formula for torsions and of Mayer--Vietoris. In particular, we derive a formula for the gluing of propagators (see also Appendix~\ref{We now do the final step in computing the propagator on $M$ for the reduced  space of pseudovacua}).

In Section~\ref{sec: BF-like} 
we discuss examples of quantum BV-BFV theories that arise as a perturbation of abelian $BF$ theories.
These include non-abelian $BF$ theories, quantum mechanics, the Poisson sigma model, two-dimensional Yang--Mills theory and
particular cases of Chern--Simons theory. For this class of examples we show that gluing and
the quantum BV-BFV axioms are satisfied.
Notice that, with the exception of quantum mechanics and two-dimensional Yang--Mills theory, we only present topological field theories, yet recall that the formalism of Section \ref{s:BVBFV} is general. In the context of two-dimensional Yang--Mills theory we also present a nontrivial example of the generalized Segal--Bargmann transform. The Poisson sigma model provides an example where the boundary structure gets quantum corrections.

Appendix \ref{a:Hodge} introduces the necessary background on Hodge theory on manifolds with boundary. In Appendix \ref{a:prop} we present a construction of propagators on manifolds with boundary by a version of the method of image charges. In Appendix \ref{a:comp} we present the details of the gluing procedure for propagators. In Appendices \ref{a: exa of prop} and \ref{a: exa of gluing of prop} we provide examples of propagators and of the gluing construction for propagators. In Appendix \ref{appendix: semi-classics and eff actions} we comment on the globalization aspect of our formalism where perturbative quantization is performed in a family over the moduli space of solutions of the Euler--Lagrange equations of the classical system modulo gauge symmetry.

\subsection{Final comments}
The general setting described in Section~\ref{s:BVBFV} has a much wider scope than the few examples
presented in this paper, which are however particularly suitable to point out the various features of the formalism.
Depending on the reader's taste, it might actually be useful to start with the examples, at least Section~\ref{s:abeBF}, first and to return to Sections~ \ref{sec: BV pushforward in families}, \ref{ss:qBVBFV} and \ref{ss:pertq} later. 

Whereas we discuss abelian $BF$ theory in full details, we only present the general structure for its perturbations.
Explicit computations of states are of course important in relevant examples (see \cite{CMW} for a computation in
split Chern--Simons theory).

We also plan to present another instantiation of the general theory in the case of the discrete version of $BF$ theories
in a separate paper \cite{cell_ab_BF}.

The application of the formalism to other classes of theories, in particular to physical theories like Yang--Mills with and without matter, is part of a long standing program.

\begin{Ack} We thank Francesco Bonechi, Ivan Contreras, Santosh Kandel, Thomas Kappeler, Samuel Monnier, Albert S. Schwarz, Jim Stasheff and especially Konstantin Wernli for useful discussions and comments.
A.~S.~C. and P.~M. gratefully acknowledge support from the University of California at Berkeley, the QGM centre at the University of Aarhus
and the Simons Center for Geometry and Physics where parts of research for this paper were performed. N.~R. is also grateful for the hospitality at QGM, Aarhus University and at Universite Paris 7, where an important part of the work has been done.
Also, P.~M. thanks the University of Zurich, where he was affiliated until mid 2014 and where a substantial part of the work was done, for providing an excellent work environment.
\end{Ack}

\section{The BV-BFV formalism}\label{s:BVBFV}
The aim of this Section is to describe a perturbative quantization scheme for gauge theories on manifolds with boundary in the framework of the BV-BFV formalism introduced in \cite{CMR,CMR2}.
For the reader's convenience, we start by recalling the classical BV-BFV construction (Section \ref{sec: class BV-BFV}). In Section \ref{ss:qBVBFV} we describe the mathematical structure of a quantum BV-BFV theory, and in Section \ref{ss:pertq} we develop the perturbative quantization scheme which starts with a classical BV-BFV theory and lands in the quantum one. The main technical tool underlying the construction of quantization is the family (parametric) version of the construction of pushforward for solutions of quantum master equation along odd-symplectic fibrations; we present this construction in Section \ref{sec: BV pushforward in families}.

\subsection{The classical BV-BFV formalism}\label{sec: class BV-BFV}
Here we will recall basic definitions of BV-BFV manifolds which are the fundamental structure for classical gauge theories on space-time manifolds with boundary. The reader is referred to \cite{CMR} for details and examples.
\subsubsection{BV-BFV manifolds}
Let $\calF$ be a supermanifold with an additional
$\bbZ$\ndash grading; we will speak of a graded manifold. An  odd vector field $Q$ of degree $+1$ on $\calF$ is called \textsf{cohomological} if it
commutes with itself, i.e., $[Q,Q]=0$. A symplectic form (i.e., a closed, nondegenerate $2$\ndash form) $\omega$ is called a \textsf{BV form} if it is odd and has degree $-1$ and a
\textsf{BFV form} if it is even and has degree $0$. If $\omega$ is exact, a specific $\alpha$ of the same parity and degree with $\omega=\delta\alpha$ will be called a
\textsf{BV/BFV $1$\ndash form}.
\begin{Rem}
In the application to field theory, the coordinates on the BV manifold are the classical fields, the ghosts and the antifields for all of them. In particular, the de Rham differential on such a supermanifold will correspond to the variation and for this reason we use the symbol $\delta$. This will also avoid confusion with the de Rham differential $\dd$ on the underlying spacetime manifold. Finally, observe that the degree in this context is what is usually called ghost number. In the case when no classical fermionic fields are present, the parity is equal to the ghost number modulo $2$. This is the case in all the examples discussed in this paper, but in this introductory Section we prefer to be general. As a result $\omega$ is tri-graded: form degree $2$, parity odd, ghost number $-1$.
\end{Rem}

A vector field $Q$ is called symplectic if $L_Q\omega=0$ and Hamiltonian if $\iota_Q\omega=\delta S$ for a function $S$. In the BFV case, by degree reasons,
if the cohomological vector field is symplectic, it is also automatically Hamiltonian with a uniquely defined function $S$ of degree $+1$ called the \textsf{BFV action}.
In the BV case, a Hamiltonian function of degree $0$ for the cohomological vector field is called a \textsf{BV action}.

\begin{Def}
A \textsf{BFV manifold} is a triple $(\calF,\omega,Q)$ where $\calF$ is a graded manifold,
$\omega$ is a BFV form and $Q$ is a cohomological, symplectic vector field on $\calF$.
A BFV manifold is called \textsf{exact} if a BFV $1$\ndash form $\alpha$ is specified.
\end{Def}

\begin{Def}\label{def: BV-BFVmfd}
A \textsf{BV-BFV manifold} over a given exact BFV manifold $(\calF^\de,\omega^\de=\delta\alpha^\de,Q^\de)$ is a quintuple
$(\calF,\omega,\calS,Q,\pi)$ where $\calF$ is a graded manifold, $\omega$ is a BV form,
$\calS$ is an even function of degree $0$, $Q$ is a cohomological vector field and $\pi\colon\calF\to\calF^\de$ is a surjective submersion such that
\begin{enumerate}[(i)]
\item $\iota_Q\omega = \delta\calS + \pi^*\alpha^\de$,
\item $Q^\de=
\delta\pi\, Q$.
\end{enumerate}
Here 
$\delta\pi$
denotes the differential of the map $\pi$.
If $\calF^\de$ is a point, $(\calF,\omega,\calS)$ is called a \textsf{BV manifold}.
\end{Def}

A 
consequence of the conditions of Definition \ref{def: BV-BFVmfd}
is the 
\textsf{modified Classical Master Equation (mCME)}:
\begin{equation}\label{e:mCME}
Q(\calS) = \pi^*(2\calS^\de -\iota_{Q^\de}\alpha^\de).
\end{equation}
In the case 
when $\calF^\de$ is a point,
it reduces to the usual CME, $Q(S)=0$. The latter is normally written as $(S,S)=0$,
where $(\ ,\ )$ is the BV bracket defined by $\omega$.\footnote{
Note that $(\ , \ )$ is a Gerstenhaber bracket due to the odd degree of $\omega$. In the literature it is also called the
anti-bracket.
}
The modified CME (\ref{e:mCME}) can equivalently be rewritten as
\begin{equation}\label{e:mCME_1}
\frac{1}{2}\,\iota_Q \iota_Q \omega = \pi^*\calS^\de.
\end{equation}

\subsubsection{Classical BV-BFV theories}
An exact \textsf{BV-BFV $d$\ndash dimensional field theory} is the local association of an exact BFV manifold
$(\calF^\de_\Sigma,\omega^\de_\Sigma=\delta\alpha^\de_\Sigma,Q^\de_\Sigma)$ to every $(d-1)$\ndash dimensional compact manifold $\Sigma$ and
of a BV manifold $(\calF_M,\omega_M,\calS_M,Q_M,\pi_M)$  over the BFV manifold
$(\calF^\de_{\de M},\omega^\de_{\de M}=\delta\alpha^\de_{\de M},Q^\de_{\de M})$ to every $d$\ndash dimensional compact manifold $M$ with boundary
$\de M$.
Here $\calF_M$ is the space of fields on $M$ (in the bulk) and $\calF^\de_{\de M}$ is the space fields on the boundary $\de M$  (or the phase space).
{\sf Local association} means that the graded manifolds $\calF_M$ and $\calF^\de_\Sigma$ 
are modeled on
spaces of sections of bundles (or, more generally, sheaves) over $M$ and $\Sigma$, whereas the function, symplectic forms and cohomological vector fields are local (i.e., they are defined as integrals of functions of finite jets of the fields). In particular, $\calF_M$, $\calF_\Sigma^\de$ are, typically, infinite-dimensional Banach or Fr\'echet manifolds (depending on the allowed class of sections).

\begin{Rem}
The BV-BFV formalism may be generalized to the nonexact case (see \cite{CMR,CMR2}), but we will not need it in this paper.
\end{Rem}

A classical BV-BFV theory can be seen, in the spirit of Atiyah-Segal axioms, as a functor from the category of $d$-dimensional cobordisms endowed with some geometric structure (depending on a particular model, it can be a Riemannian metric, a conformal structure, a volume form, a principal bundle, a cell decomposition, etc.) with composition given by gluing along common boundary, to the category with objects the BFV manifolds and morphisms the BV-BFV manifolds over direct products of BFV manifolds, with composition given by homotopy fiber products. This functor is compatible with the monoidal structure on source (space-time) and target (BFV) categories, given by disjoint unions and direct products, respectively (in particular, $\calF^\de_\varnothing$ is a point).   See \cite{CMR} for details. See also \cite{Schlegel} for the approach to gluing via synthetic geometry.

\subsection{Finite-dimensional BV pushforward in families}
\label{sec: BV pushforward in families}

Here we will recall the notion of 
the BV integral (Section \ref{sec: BV int}) and its refined version, the BV pushforward construction, or fiber BV integral (Section \ref{sec: BV pushforward}). The latter is a model for a path integral 
over ``fast'' (or ``ultraviolet'') fields, depending on the ``slow'' (or ``infrared'') residual fields (Wilson's effective action), within the Batalin-Vilkovisky approach to gauge theories. We then introduce the family (or parametric) version of BV pushforward (Section \ref{sec: family version}), which models the computation of matrix elements of the evolution operator in the effective action framework. In this sense, the BV pushforward in families can be regarded as a ``hybrid effective action'' formalism (i.e. a hybrid between effective action in BV formalism and an evolution operator/partition function, as in Atiyah-Segal axiomatics). In Section \ref{sec: BV pushforward, case of exp} we specialize the construction of BV pushforward in family to ``exponential'' half-densities, i.e. those of the form $\mathfrak{m}^{\frac12}\, \EE^{\frac\ii\hbar \calS}$ and consider the asymptotics $\hbar\rightarrow 0$, which sets the stage for the perturbative quantization scheme that is the focus of this paper.

Within this Section, unless explicitly stated otherwise, we are assuming that all manifolds are finite-dimensional and all integrals are convergent (see also \cite{ABF} for the discussion of finite-dimensional BV integrals; for classical BV formalism in finite-dimensional setting, see \cite{FK}). We also assume that manifolds are equipped with orientations, so that we can ignore the distinction between densities and Berezinians.

The logic is that we develop all the constructions in the setting of finite-dimensional integrals, which are defined within measure theory. Then we can consider the fast oscillating ($\hbar\rightarrow 0$) asymptotics of our integral and write it, using stationary phase formula, as a sum of Feynman diagrams. In the case of path integrals over infinite-dimensional spaces of fields, we instead {\sf define} the integral perturbatively, i.e. as a formal power series in $\hbar$ with coefficients given by sums of Feynman diagrams. In this perturbative setting, theorems that are proven for measure-theoretic integrals have to be checked, model by model, on the level of Feynman diagrams.

\subsubsection{BV integral}
\label{sec: BV int}
Let $\calY$ be a $\mathbb{Z}$-graded manifold with a degree $-1$
odd symplectic form $\omega$. 

\begin{Thm}[\cite{Khudaverdian,Severa}]
The space $\Dens^{\frac12}(\calY)$ of half-densities on $\calY$ carries a degree $+1$ odd 
coboundary operator, the canonical
BV Laplacian $\Delta$, such that in any local Darboux coordinate chart $(x^i,\xi_i)$ on $\calY$, the operator $\Delta$ has the form $\sum_i\frac{\de}{\de x^i}\frac{\de}{\de \xi_i}$. 
\end{Thm}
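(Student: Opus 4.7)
The plan is to define $\Delta$ by the stated Darboux-coordinate formula in each chart and then prove that this definition is canonical (independent of the chosen Darboux chart). Once chart-independence is verified, the grading and the relation $\Delta^2=0$ follow from the same check performed within any single chart.

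First, fix a Darboux chart $(x^i,\xi_i)$ with $\omega=\sum_i\delta\xi_i\,\delta x^i$ and $\deg(x^i)+\deg(\xi_i)=-1$. Every half-density is uniquely of the form $\mu = f(x,\xi)\,|Dx\,D\xi|^{1/2}$, where $|Dx\,D\xi|^{1/2}$ denotes the coordinate half-density, and I set $\Delta\mu := \bigl(\sum_i\partial_{x^i}\partial_{\xi_i}f\bigr)\,|Dx\,D\xi|^{1/2}$. Each operator $\partial_{x^i}\partial_{\xi_i}$ has degree $+1$, and the graded commutation of mixed partial derivatives gives $\Delta^2=0$ in this chart.

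Second, for chart-independence, let $\phi$ be a symplectomorphism relating two Darboux charts and set $J:=\Ber(\partial\phi)$. The same half-density has coefficient functions related by $f = J^{1/2}\,(f'\circ\phi)$. Expanding $\sum_i\partial_{x^i}\partial_{\xi_i}(J^{1/2}g)$ by the graded Leibniz rule, the contributions of first derivatives of $J^{1/2}$ combine, via the canonical-transformation relations satisfied by $\phi$, into the change-of-variables image of $\sum_i\partial_{x'^i}\partial_{\xi'_i}g$. The remaining piece is controlled by the \emph{Khudaverdian identity} $\sum_i\partial_{x^i}\partial_{\xi_i}(J^{1/2})=0$, whose validity for every canonical Jacobian $J$ is precisely what is needed.

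For the Khudaverdian identity itself I would argue infinitesimally: by connectedness of the group of symplectomorphisms near the identity it suffices to check it for the Jacobian $J_t$ of the flow $\phi_t=\exp(tX_H)$ of an even degree-$0$ Hamiltonian $H$. The key algebraic input is that, in Darboux coordinates, the divergence of $X_H$ with respect to the coordinate volume equals $2\Delta_0 H$, where $\Delta_0 := \sum_i\partial_{x^i}\partial_{\xi_i}$; differentiating the cocycle identity $J_{t+s}=J_t\cdot(J_s\circ\phi_t)$ and using $\Delta_0^2=0$ as an operator on functions then yields $\Delta_0(J_t^{1/2})=0$ for all $t$. The main obstacle is Khudaverdian's identity itself, which encodes a nontrivial compatibility between Berezinian calculus and second-order superdifferential operators. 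A conceptually cleaner alternative, due to Severa, bypasses this computation by exploiting the canonical trivialization of $\Ber(V\oplus \Pi V^*)$ for any super vector space $V$: applied to the tangent spaces of $\calY$, this yields a canonical (up to sign) half-density directly from the odd-symplectic structure, turning the coordinate half-density into an intrinsic object and making chart-independence tautological.
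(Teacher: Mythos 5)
The paper does not prove this statement: it is imported from the cited references, so there is no in-text argument to compare against. Judged on its own terms, your main route is the standard Khudaverdian argument and is sound in outline: define $\Delta$ in a Darboux chart, check oddness, degree $+1$ and $\Delta^2=0$ there, and reduce canonicity to the identity $\sum_i\partial_{x^i}\partial_{\xi_i}(J^{1/2})=0$ for the Berezinian of a canonical transformation. Three points need tightening. First, a degree-preserving flow of symplectomorphisms is generated by an \emph{even, degree-zero} vector field, hence (since the bracket has degree $+1$) by a Hamiltonian that is \emph{odd of degree $-1$}, not ``even of degree $0$'' as you write; compare the $H_t\in C^\infty(\calL_t)_{-1}$ appearing in Theorem~\ref{thm: BV Stokes for pushforward}. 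Second, the reduction to flows only covers transition maps tangent to the identity; the linear part of a general Darboux change must be treated separately (there the Berezinian is constant and invariance of $\sum_i\partial_{x^i}\partial_{\xi_i}$ under linear canonical transformations is a direct check). Third, the assertion that the first-order cross terms ``combine into the change-of-variables image'' of the primed operator is exactly where Khudaverdian's computation lives; as written it is a claim, not an argument.

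Your description of Severa's alternative rests on a false premise. For a super vector space $V$, $\Ber(V\oplus\Pi V^*)$ is canonically isomorphic to $\Ber(V)^{\otimes 2}$ --- canonically a \emph{square}, but not canonically trivial --- and consequently an odd symplectic supermanifold carries no canonical half-density. If it did, the canonical $\Delta$ on half-densities would transport to a canonical $\Delta$ on functions, contradicting the need for a choice of compatible Berezinian in the definition immediately following this theorem in the paper. What the linear algebra of $V\oplus\Pi V^*$ does give canonically is the identification of half-densities on $\calY$ restricted to a Lagrangian with densities on that Lagrangian, which is what makes the BV integral well defined. Severa's actual construction is of a different nature: he realizes half-densities as a page of a spectral sequence built from the de Rham complex of $(\calY,\omega)$ and obtains $\Delta$ as the induced differential, whence canonicity and $\Delta^2=0$ are automatic. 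So your primary route can be completed into a correct proof, but the ``conceptually cleaner alternative'' as you state it would not.
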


\begin{Def} We say that a Berezinian $\mu$ on $\calY$ is {\sf compatible} with the odd symplectic structure $\omega$, if $\Delta \mu^{\frac12}=0$ with $\Delta$ the canonical BV Laplacian.
\end{Def}

\begin{Rem}
Given a compatible Berezinian $\mu$ on $(\calY,\omega)$,
one can construct a $\mu$-dependent BV Laplacian on functions on $\calY$ (as opposed to half-densities), $\Delta_\mu:\; C^\infty(\calY)\rightarrow C^\infty(\calY)$ defined by $\mu^{\frac12}\Delta_\mu f=\Delta(\mu^{\frac12}f)$ for any $f\in C^\infty(\calY)$. See \cite{SchwBV} for details.
\end{Rem}

Given a Lagrangian submanifold $\calL\subset \calY$, a half-density $\xi$ on $\calY$ can be restricted to a 1-density $\xi|_\calL$ on $\calL$, which can in turn be integrated over $\calL$. The BV integral is the composition
$$\int_\calL: \quad \Dens^{\frac12}(\calY)\xrightarrow{\bullet|_\calL} \Dens(\calL)\xrightarrow{\int}\mathbb{C},\qquad \xi\mapsto \int_\calL \xi|_\calL.$$

\begin{Thm}[Batalin-Vilkovisky-Schwarz, \cite{BV81,SchwBV}]
\label{thm: BV Stokes}
\begin{enumerate}[(i)]
\item For every half-density $\xi$ on $\calY$ and every Lagrangian submanifold
$\calL\subset \calY$, one has
$$\int_{\calL} \Delta\xi=0$$
assuming convergence of the integral.
\item For a half-density $\xi$ on $\calY$ satisfying $\Delta\xi=0$ and a smooth family of Lagrangian submanifolds $\calL_t\subset \calY$ parametrized by $t\in [0,1]$, one has
$$ \int_{\calL_0} \xi = \int_{\calL_1}\xi $$
assuming convergence of $\int_{\calL_t}\xi$ for all $t\in[0,1]$.\footnote{In fact, in \cite{SchwBV} a stronger version of this statement is proven.
}
\end{enumerate}

\end{Thm}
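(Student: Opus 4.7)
My plan for part (i) is a local computation combined with a partition-of-unity argument. Using the graded Darboux theorem, cover $\calY$ by charts in which $\omega = \sum_i \delta x^i \, \delta \xi_i$, and decompose $\xi = \sum_\alpha \xi_\alpha$ with each $\xi_\alpha$ supported in a single chart. Because $\Delta$ is local, it suffices to prove $\int_{\calL}\Delta\xi_\alpha = 0$ chart by chart. Within a Darboux chart, a suitable graded-symplectic change of coordinates puts the portion of $\calL$ meeting the support of $\xi_\alpha$ in standard form as the graph of $\delta S$ for some generating function $S(x)$. Writing $\xi_\alpha = \rho(x,\xi)\,|dx\,d\xi|^{1/2}$ so that $\Delta \xi_\alpha = \bigl(\sum_i \partial_{x^i}\partial_{\xi_i}\rho\bigr)|dx\,d\xi|^{1/2}$, a direct calculation shows that the restriction of $\Delta \xi_\alpha$ to $\calL$ is the exterior derivative of a compactly supported $(n-1)$-density on $\calL$; Stokes' theorem then yields the desired vanishing.

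For part (ii), the plan is to reduce to part (i) via a Hamiltonian-flow argument. By the graded analogue of Weinstein's Lagrangian neighborhood theorem, after subdividing $[0,1]$ if necessary, the isotopy $\{\calL_t\}$ can be realized as the image of $\calL_0$ under the flow of a time-dependent Hamiltonian vector field $X_{H_t}$. Then
\[
\frac{d}{dt}\int_{\calL_t}\xi = \int_{\calL_t} L_{X_{H_t}}\xi.
\]
The key identity is that, for the canonical BV Laplacian on half-densities,
\[
L_{X_H}\xi \;=\; \Delta(H\xi) - H\,\Delta\xi,
\]
expressing $L_{X_H}$ as the graded commutator $[\Delta, H\cdot\,]$ acting on $\xi$. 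Since $\Delta\xi = 0$ by assumption, this gives $L_{X_{H_t}}\xi = \Delta(H_t\xi)$, and part (i) now yields $\int_{\calL_t}L_{X_{H_t}}\xi = 0$ for each $t$, proving the claim.

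The main obstacle I anticipate is the careful bookkeeping of signs and parities when transcribing the classical symplectic facts --- the Darboux theorem, the Hamiltonian realization of Lagrangian isotopies, and the commutator identity $L_{X_H} = [\Delta, H\cdot\,]$ on half-densities --- into the $\bbZ$-graded supermanifold setting. Each of these is standard in the even, ungraded setting and well known to experts in graded symplectic geometry, but the signs in the BV context (odd $\omega$, degree $-1$) must be tracked with care. A secondary technical point is to verify that the partition-of-unity decomposition and the subdivision of $[0,1]$ preserve the convergence hypotheses stated in the theorem; this is routine when $\xi$ decays rapidly and requires only minor care otherwise.
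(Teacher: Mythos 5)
The paper does not prove this theorem itself --- it is quoted from \cite{BV81,SchwBV} --- so the relevant comparison is with the standard arguments and with the paper's own proof of the closely related Theorem \ref{thm: BV Stokes for pushforward}. Your part (i) is the standard Schwarz argument: Darboux charts, partition of unity, bring $\calL$ locally to the coordinate Lagrangian, and observe that $\bigl(\sum_i\partial_{x^i}\partial_{\xi_i}\rho\bigr)\big|_{\calL}$ is a divergence. That is correct (one small caution: a Lagrangian need not be a graph over the $x$-coordinates even locally, so the "standard form" should be the conormal-type Lagrangian obtained after permuting which conjugate coordinates play the role of the base; this is cosmetic). Your part (ii) rests on the identity $L_{X_H}\xi=\Delta(H\xi)-(-1)^{|H|}H\,\Delta\xi$, i.e.\ $L_{X_H}=[\Delta,H\cdot\,]$ on half-densities; this is exactly the computation the paper performs in the proof of Theorem \ref{thm: BV Stokes for pushforward}(ii), equation (\ref{e:BV pushforward infin symplectomorphism}), where it appears in the form $\partial_t\int_{\calL_t}\mu^{1/2}f=\int_{\calL_t}\mu^{1/2}\bigl((f,H_t)+f\,\Delta_\mu H_t\bigr)=\int_{\calL_t}\mu^{1/2}\Delta_\mu(fH_t)$ after choosing a compatible Berezinian. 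So the route is the intended one.

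The one genuine gap is the reduction of an arbitrary smooth Lagrangian isotopy to a Hamiltonian one. Subdividing $[0,1]$ localizes in time but does nothing to the obstruction, which is spatial: in a Weinstein neighborhood $T^*[-1]\calL_t$ the infinitesimal deformation is a closed one-form of internal degree $-1$ on $\calL_t$, and it is $\delta H_t$ for a global $H_t\in C^\infty(\calL_t)_{-1}$ only if its flux class in $H^1(\calL_t)$ (in that degree) vanishes. For families with nonzero flux your commutator argument does not literally apply, and one must either restrict the statement to Hamiltonian isotopies (which is all the paper ever uses --- cf.\ the "Moreover" clause of Theorem \ref{thm: BV Stokes for pushforward}(ii)) or invoke the stronger invariance result of \cite{SchwBV} that the paper's footnote alludes to. Stating this hypothesis, or closing the flux case separately, would make the proof complete.
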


\subsubsection{BV pushforward}
\label{sec: BV pushforward}
Assume that $(\calY,\omega)$ is a direct product of two odd-symplectic manifolds $(\calY',\omega')$ and $(\calY'',\omega'')$, i.e. $\calY=\calY'\times\calY''$, $\omega=\omega'+\omega''$. Then the space of half-densities on $\calY$ factorizes as
$$\HDens(\calY)=\HDens(\calY')\widehat\otimes \HDens(\calY'').$$
BV integration in the second factor, over a Lagrangian submanifold $\calL\subset \calY''$, defines a pushforward map on half-densities
\begin{equation}\label{e:BV pushforward}
\int_{\calL}:\quad \HDens(\calY)\xrightarrow{\mathrm{id}\otimes\int_{\calL}} \HDens(\calY').
\end{equation}
This map is also known as the fiber BV integral.\footnote{Here $\calY'$ is a model for ``slow fields'', or ``zero-modes'', or ``classical backgrounds'', or ``residual fields'' in the effective action formalism.}
The version of Theorem \ref{thm: BV Stokes} in the context of BV pushforwards is as follows.
\begin{Thm}
\label{thm: BV Stokes for pushforward}
\begin{enumerate}[(i)]
\item \label{thm: BV Stokes for pushforward, (i)}
For $\xi$ a half-density on $\calY$,
$$\int_{\calL}\Delta\xi=\Delta'\int_{\calL}\xi$$.
\item \label{thm: BV Stokes for pushforward, (ii)}
For $\calL_t\subset \calY''$ a smooth family of Lagrangian submanifolds parametrized by $t\in [0,1]$, and a half-density $\xi$ on $\calY$ satisfying $\Delta \xi=0$, one has
\begin{equation}\label{e:BV pushforward L dependence}
\int_{\calL_1}\xi-\int_{\calL_0}\xi=\Delta' \Psi
\end{equation}
for some $\Psi\in \HDens(\calY')$. Moreover, if $\calL_{t+\epsilon}$ is given, in the first order in $\epsilon$, as the flow in time $\epsilon$ of a Hamiltonian vector field $(\bullet,H_t)$ with $H_t\in C^\infty(\calL_t)_{-1}$, then $\Psi$ in (\ref{e:BV pushforward L dependence}) is given by
$$\Psi=\int_0^1 dt \int_{\calL_t}\xi H_t.$$
\end{enumerate}
\end{Thm}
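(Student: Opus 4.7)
The plan is to reduce both parts to the (non-parametric) BV--Schwarz Theorem \ref{thm: BV Stokes} applied fiberwise over $\calY'$, exploiting the product structure $\calY = \calY' \times \calY''$ with $\omega = \omega' + \omega''$. In Darboux coordinates compatible with this product, the canonical BV Laplacian decomposes as $\Delta = \Delta' + \Delta''$, where $\Delta'$ and $\Delta''$ are the canonical BV Laplacians of the two factors, acting through the tensor decomposition $\HDens(\calY) = \HDens(\calY') \widehat\otimes \HDens(\calY'')$. Since the pushforward in \eqref{e:BV pushforward} is by definition $\mathrm{id} \otimes \int_\calL$, it tautologically commutes with $\Delta'$; this is the key algebraic input throughout.

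For part (i), I would decompose $\int_\calL \Delta\xi = \int_\calL \Delta'\xi + \int_\calL \Delta''\xi$. The first summand equals $\Delta' \int_\calL \xi$ by the observation above. For each fixed $y' \in \calY'$, the second summand is the integral of $\Delta''$ of a half-density on $\calY''$ over the Lagrangian $\calL$, and therefore vanishes by Theorem \ref{thm: BV Stokes}(i).

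For part (ii), the plan is to establish the pointwise-in-$t$ identity
\[
\frac{d}{dt} \int_{\calL_t}\xi \;=\; \pm\,\Delta'\!\int_{\calL_t}\xi\, H_t
\]
and then integrate over $t \in [0,1]$. Extending $H_t$ locally to a neighborhood of $\calL_t$ in $\calY''$, let $X_{H_t}$ be its Hamiltonian vector field with respect to $\omega''$; as in the proof of Theorem \ref{thm: BV Stokes}(ii), the $t$-derivative equals $\int_{\calL_t}L_{X_{H_t}}\xi$. The second-order character of $\Delta$ on half-densities yields the Cartan-type identity $L_{X_{H_t}}\xi = \Delta''(H_t\,\xi) \pm H_t\,\Delta''\xi$; upon integration, the first term dies by Theorem \ref{thm: BV Stokes}(i). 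The hypothesis $\Delta\xi = 0$ turns $\Delta''\xi$ into $-\Delta'\xi$, and part (i) applied to $H_t\,\xi$ extracts $\Delta'$ from the integral, producing the formula $\Psi = \pm\int_0^1 dt\int_{\calL_t}\xi\, H_t$.

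The main obstacle is the bookkeeping of signs: the odd parity of $H_t$, the degree $-1$ nature of $\omega$, and the bigrading of $\HDens(\calY') \widehat\otimes \HDens(\calY'')$ all contribute, and fixing the precise sign in the Cartan-type identity requires careful use of Khudaverdian's definition of $\Delta$ on half-densities. A secondary technical point is justifying the passage of $\frac{d}{dt}$ under the integral, which requires a uniform-in-$t$ version of the convergence hypotheses implicit in the statement.
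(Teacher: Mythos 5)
Your proposal is correct and follows essentially the same route as the paper: part (i) via the splitting $\Delta=\Delta'+\Delta''$ together with Theorem \ref{thm: BV Stokes}(i) applied fiberwise, and part (ii) by differentiating $\int_{\calL_t}\xi$ in $t$, identifying the result with the integral of a Lie derivative along the Hamiltonian flow of $H_t$, and converting that Lie derivative into $\Delta$ of $H_t\xi$ (minus a term killed by $\Delta\xi=0$) before pulling out $\Delta'$ via part (i). The only difference is presentational: the paper fixes a compatible Berezinian $\mu$, writes $\xi=\mu^{1/2}f$, and runs the same Leibniz/divergence computation for $\Delta_\mu$ on functions, whereas you phrase it intrinsically on half-densities.
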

\begin{proof}
While (\ref{thm: BV Stokes for pushforward, (i)}) follows immediately from (\ref{thm: BV Stokes}) and from the splitting of Laplacians $\Delta=\Delta'+\Delta''$, part (\ref{thm: BV Stokes for pushforward, (ii)}) is implied by the following calculation. Let $\mu$ be a Berezinian on $\calY$ compatible with $\omega$. Then it defines a BV Laplacian $\Delta_\mu=\mu^{-\frac12}\Delta(\mu^{\frac12}\bullet)$ on functions on $\calY$. Expressing the half-density $\xi$ as $\xi=\mu^{\frac12}f$ with $f$ a function, we have
\begin{multline}\label{e:BV pushforward infin symplectomorphism}
\frac{\partial}{\partial t}\int_{\calL_t} \mu^{\frac12}f=\int_{\calL_t}\mu^{\frac12} ((f,H_t)+f\underbrace{\frac12 \mathrm{div}_\mu (\bullet,H_t)}_{\Delta_\mu H_t})=\\
=\int_{\calL_t} \mu^{\frac12}(\Delta_\mu(f H_t)-\underbrace{\Delta_\mu(f)}_0 H_t)=\Delta'\int_{\calL_t}\mu^{\frac12}f H_t,
\end{multline}

using (\ref{thm: BV Stokes for pushforward, (i)}) and the assumption that $\Delta\xi=0$ or equivalently $\Delta_\mu f=0$.
\end{proof}
We refer the reader to \cite{discrBF,CMcs} for more details.

Theorem \ref{thm: BV Stokes for pushforward} implies in particular that the BV pushforward defines a pushforward map from the cohomology of $\Delta$ to the cohomology of $\Delta'$ dependent on a choice of a Lagrangian $\calL$ modulo \textsf{Lagrangian homotopy}.\footnote{We say that two Lagrangians are Lagrangian homotopic if they can be connected by a smooth family of Lagrangians.}

Of particular interest is the case when the odd-symplectic manifolds $\calY,\calY',\calY''$ are equipped with compatible Berezinians $\mu,\mu',\mu''$ which then give rise to BV Laplacians $\Delta_\mu,\Delta'_{\mu'},\Delta''_{\mu''}$ on functions on the respective manifolds. Assuming that $\mu=\mu'\otimes\mu''$, the BV Laplacians satisfy $\Delta_{\mu}=\Delta'_{\mu'}+\Delta''_{\mu''}$. We can apply the BV pushforward to a half-density of the form $\xi=\EE^{\frac{\ii}{\hbar}\calS}\mu^{1/2}$. It is $\Delta$-closed if and only if $\calS\in C^\infty(\calY)_{0}$ satisfies the \textsf{quantum master equation} (QME):
\begin{equation}\label{e:QME}
\Delta_\mu \EE^{\frac{\ii}{\hbar}\calS}=0 \quad \Leftrightarrow \quad \frac12 (\calS,\calS)-\ii\hbar \Delta_\mu \calS=0.
\end{equation}
\begin{Rem}
Assume that $\calS$ has the form $\calS=\calS^0+\hbar \calS^1+\cdots \in C^\infty(\calY)_0[[\hbar]]$. Then (\ref{e:QME}) implies, by expanding in powers in $\hbar$ and looking at the lowest order term, the {\sf classical master equation} (CME)
\begin{equation}\label{e:CME}
(\calS^0,\calS^0)=0.
\end{equation}
\end{Rem}

\begin{Def}
We define the \textsf{effective BV action} $\calS'\in C^\infty(\calY')_{0}$ via BV pushforward (\ref{e:BV pushforward}):
\begin{equation}\label{e:fiber BV int}
\EE^{\frac{\ii}{\hbar}\calS'}\mu'^{\frac12}\colon =\int_\calL \EE^{\frac{\ii}{\hbar}\calS}\mu^{\frac12}.
\end{equation}
\end{Def}
Theorem \ref{thm: BV Stokes for pushforward} implies the following.
\begin{cor}[\cite{discrBF,CMcs}]
\label{cor: BV pushforward for exp(S) properties}
\begin{enumerate}[(i)]
\item If $\calS\in C^\infty(\calY)_0$ satisfies the quantum master equation
on $\calY$, then $\calS'\in C^\infty(\calY')_0$ defined by (\ref{e:fiber BV int}) satisfies the quantum master equation on $\calY'$.
\item Assume that $\calL_t\subset \calY''$ is a smooth family of Lagrangian submanifolds parametrized by $t\in [0,1]$ and $\calS$ satisfies the quantum master equation on $\calY$. Let $\calS_t$ be the effective BV action  defined using $\calL_t$. Then $\calS_1$ is a \textsf{canonical BV transformation} of $\calS_0$, i.e.
\begin{equation}\label{e:can transf}
\EE^{\frac{\ii}{\hbar}\calS'_1}-\EE^{\frac{\ii}{\hbar}\calS'_0}=\Delta'_{\mu'}\Psi
\end{equation}
for some $\Psi\in C^\infty(\calY')_{-1}$. Infinitesimally, one has
$$\frac{\partial}{\partial t}\calS'_t=(\calS'_t,\phi_t)-\ii\hbar\Delta'_{\mu'}\phi_t$$
where the generator of the infinitesimal canonical transformation is
\begin{equation}\label{e:can tranf generator induced from Lagr homotopy}
\phi_t=\mu'^{-\frac12}\EE^{-\frac{\ii}{\hbar}\calS'_t}\cdot\int_{\calL_t} \mu^{\frac12}\EE^{\frac{\ii}{\hbar}\calS} H_t, 
\end{equation}
with $H_t$ as in (\ref{thm: BV Stokes for pushforward, (ii)}) of Theorem \ref{thm: BV Stokes for pushforward}.
The generator $\Psi$ of the finite canonical transformation (\ref{e:can transf}) is:
$$\Psi=\int_0^1 dt \int_{\calL_t} \mu^{\frac12}\EE^{\frac{\ii}{\hbar}\calS} H_t.$$
\item If $\calS$, $\Tilde\calS$ are solutions of the quantum master equation on $\calY$ differing by a canonical transformation, then the corresponding effective actions $\calS',\Tilde\calS'$ also differ by a canonical transformation on $\calY'$.
\end{enumerate}

\end{cor}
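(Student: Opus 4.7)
The plan is to deduce all three parts from Theorem \ref{thm: BV Stokes for pushforward} by setting $\xi = \mu^{\frac12}\EE^{\frac{\ii}{\hbar}\calS}$ and exploiting the basic identity $\Delta(\mu^{\frac12}f) = \mu^{\frac12}\Delta_\mu f$ that relates the intrinsic half-density Laplacian to the Berezinian-dependent Laplacian on functions, together with the analogous identity on $\calY'$. Under this identification, the quantum master equation (\ref{e:QME}) is exactly the statement that $\xi$ is $\Delta$-closed, and similarly for $\calS'$ on $\calY'$ via the factorization $\mu = \mu'\otimes\mu''$ and the splitting $\Delta = \Delta' + \Delta''$.

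For part (i), I would apply Theorem \ref{thm: BV Stokes for pushforward}(i) to the $\Delta$-closed half-density $\xi$: the pushforward $\int_\calL\xi = \mu'^{\frac12}\EE^{\frac{\ii}{\hbar}\calS'}$ is then $\Delta'$-closed, which after unpacking via the identity above is precisely the QME for $\calS'$ on $\calY'$.

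For part (ii), Theorem \ref{thm: BV Stokes for pushforward}(ii) supplies a half-density
$$\Psi = \int_0^1 dt\int_{\calL_t}\xi\, H_t$$
on $\calY'$ such that $\int_{\calL_1}\xi - \int_{\calL_0}\xi = \Delta'\Psi$. Writing $\Psi = \mu'^{\frac12}\psi$ and dividing both sides by $\mu'^{\frac12}$ yields the finite canonical transformation formula (\ref{e:can transf}), together with the explicit expression for its generator. The infinitesimal statement follows by differentiating $\int_{\calL_t}\xi = \mu'^{\frac12}\EE^{\frac{\ii}{\hbar}\calS'_t}$ in $t$, using formula (\ref{e:BV pushforward infin symplectomorphism}) from the proof of Theorem \ref{thm: BV Stokes for pushforward} together with the Leibniz-type identity for $\Delta_\mu$ acting on a product. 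Unpacking $\Delta_\mu(\EE^{\frac{\ii}{\hbar}\calS'_t}\phi_t)$ and using the QME for $\calS'_t$ from (i) then produces the claimed equation $\partial_t\calS'_t = (\calS'_t,\phi_t) - \ii\hbar\Delta'_{\mu'}\phi_t$ with $\phi_t$ as in (\ref{e:can tranf generator induced from Lagr homotopy}).

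For part (iii), the hypothesis that $\calS$ and $\tilde\calS$ differ by a canonical transformation on $\calY$ rewrites, via the half-density/function dictionary, as $\mu^{\frac12}\EE^{\frac{\ii}{\hbar}\tilde\calS} - \mu^{\frac12}\EE^{\frac{\ii}{\hbar}\calS} = \Delta(\mu^{\frac12}\rho)$ for some $\rho\in C^\infty(\calY)_{-1}$. Pushing both sides along $\int_\calL$ and applying Theorem \ref{thm: BV Stokes for pushforward}(i) to the right-hand side shows that the two effective actions $\calS',\tilde\calS'$ are related by a canonical transformation whose generator $\rho'\in C^\infty(\calY')_{-1}$ is defined by $\mu'^{\frac12}\rho' = \int_\calL \mu^{\frac12}\rho$. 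No step here is genuinely hard; the main obstacle is simply bookkeeping \Ndash tracking parities, ghost numbers, and the translation between $\Delta$ on half-densities and $\Delta_\mu$ on functions \Ndash which is what makes the explicit generator formula in (ii) the most delicate piece of the proof.
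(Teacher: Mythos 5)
Your proposal is correct and is exactly the argument the paper intends: the corollary is stated as an immediate consequence of Theorem~\ref{thm: BV Stokes for pushforward}, obtained by specializing to $\xi=\mu^{\frac12}\EE^{\frac{\ii}{\hbar}\calS}$ and translating between $\Delta$ on half-densities and $\Delta_\mu$ on functions via $\Delta(\mu^{\frac12}f)=\mu^{\frac12}\Delta_\mu f$, precisely as you do. Your handling of the generator in (ii) via formula~(\ref{e:BV pushforward infin symplectomorphism}) and of the pushed-forward generator in (iii) also matches the paper's (unwritten but clearly intended) derivation.
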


As a consequence, the BV pushforward gives a map from solutions of the QME on $\calY$ modulo canonical transformations to solutions of the QME on $\calY'$ modulo canonical transformations. This map depends on the choice of a class of Lagrangians $\calL\subset \calY''$ modulo Lagrangian homotopy:
$$\frac{\mbox{Solutions of QME on}\;\calY}{\mbox{can. transf.}}\qquad\stackrel{[\calL]}{\longrightarrow}\qquad \frac{\mbox{Solutions of QME on}\;\calY'}{\mbox{can. transf.}}.$$

\begin{Rem}\label{rem: hedgehog}
The direct product $\calY=\calY'\times \calY''$ setting for the BV pushforward introduced above admits the following generalization.
Let $\calY''$ be an odd-symplectic manifold.
An odd-symplectic fiber bundle with typical fiber $\calY''$ over an odd-symplectic manifold $\calY'$
 consists of a pair $(\calY,\calY')$ of odd-symplectic manifolds together with a surjective submersion $\pi\colon\calY\to\calY'$ such that each point of $\calY'$ has a neighborhood $\calU$ with a symplectomorphism
 $\phi_{\calU}\colon \pi^{-1}(\calU)\to \calU\times\calY''$. Notice that, by the nondegeneracy of the symplectic forms, on the overlaps of two such neighborhoods
 $\calU_{\alpha}$ and $\calU_\beta$ the transition functions
 $\phi_{\alpha\beta}\colon \pi^{-1}(\calU_\alpha\cap\calU_\beta)\to \pi^{-1}(\calU_\alpha\cap\calU_\beta)$
 are given by symplectomorphisms of $\calY$ constant over $\calY'$. If all these symplectomorphisms are connected to the identity, the BV pushforward may be defined and we call such a fiber bundle a
\textsf{hedgehog}, or a hedgehog fibration.\footnote{
An explanation for this terminology may be found on
\href{https://youtu.be/BhPtkIMEnjk}{YouTube: Hedgehog BV}.
}
\end{Rem}

\begin{Rem}
A more general version of BV pushforward is the following. Suppose we have a coisotropic submanifold
$\calC$ of $\calY$ with a smooth reduction $\underline{\calC}$. A half-density on $\calY$ can then be integrated
to a half density on $\underline{\calC}$. This pushforward is also a chain map for the BV Laplacians.
An example of this is when we have a hedgehog fibration $\calY\to\calY'$; the total space of the fiber bundle
over $\calY'$ consisting of the (locally constant) choice of Lagrangian submanifolds in the hedgehog fibers is
a coisotropic submanifold of $\calY$ with reduction $\calY'$. The hedgehog version, though less general, is more
suitable for applications to field theory as on the one hand we want to fix the reduction and on the other hand the choice of Lagrangian submanifolds is an auxiliary piece of data.
\end{Rem}

\subsubsection{Family version}
\label{sec: family version}
Let $(\calY,\omega)$ be an odd-symplectic manifold as above and let $\calB$ be a $\mathbb{Z}$-graded supermanifold endowed with a degree $+1$ odd differential operator $\Omega$ acting on half-densities on $\calB$ satisfying $\Omega^2=0$.\footnote{In the setting of field theory, $\calB$ will become the space of leaves of a Lagrangian foliation of the space of boundary fields, i.e. the space parameterizing admissible boundary conditions for the path integral over field configurations.}

Let
$\calF=\calB\times\calY$ be the product manifold. Then we have a coboundary operator $\hbar^2 \Delta+\Omega$ acting on
\begin{equation}
\HDens(\calF)=\HDens(\calB)\widehat\otimes\HDens(\calY).
\label{e:half-densities splitting in B and Y}
\end{equation}
Assuming, as in Section \ref{sec: BV pushforward}, that $\calY$ is split as a product of two odd-symplectic manifolds $(\calY',\omega')$ and $(\calY'',\omega'')$, we have a version of the BV pushforward (\ref{e:BV pushforward}) in family over $\calB$:
\begin{equation}\label{e: BV pushforward in family}
\int_{\calL}:\quad  \HDens(\calF)\rightarrow \HDens(\calF')
\end{equation}
where $\calF'=\calB\times \calY'$ and $\calL\subset \calY''$ is a Lagrangian submanifold. Half-densities on $\calF'$ are equipped with a coboundary operator $\hbar^2\Delta'+\Omega$ where $\Delta'$ is the canonical BV Laplacian on $\calY'$.
We have the following family version of Theorem \ref{thm: BV Stokes for pushforward}.

\begin{Thm}\label{thm: BV pushforward in family}
\begin{enumerate}[(i)]
\item \label{thm: BV pushforward in family (i)}
For every Lagrangian $\calL\subset \calY''$ and every $\xi\in\HDens(\calF)$, we have
$$\int_\calL (\hbar^2\Delta+\Omega)\xi=(\hbar^2\Delta'+\Omega)\int_\calL \xi.$$
\item \label{thm: BV pushforward in family (ii)}
For a half-density $\xi$ on $\calF$ satisfying $(\hbar^2\Delta+\Omega)\xi=0$ and a smooth family of Lagrangians $\calL_t\subset \calY''$ parametrized by $t\in [0,1]$, we have
$$\int_{\calL_1}\xi-\int_{\calL_0}\xi=(\hbar^2\Delta'+\Omega)\Psi$$
for some $\Psi\in \HDens(\calF')$. Explicitly, the generator is
$$\Psi=\hbar^{-2}\int_0^1 dt \int_{\calL_t} \xi H_t$$
with $H_t\in C^\infty(\calL_t)_{-1}$ as in (\ref{thm: BV Stokes for pushforward, (ii)}) of Theorem \ref{thm: BV Stokes for pushforward}.
\end{enumerate}

\end{Thm}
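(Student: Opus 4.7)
The plan is to reduce both statements to the non-parametric Theorem~\ref{thm: BV Stokes for pushforward} by exploiting the product structure $\calF=\calB\times\calY'\times\calY''$ and the induced factorization $\HDens(\calF)=\HDens(\calB)\widehat\otimes\HDens(\calY')\widehat\otimes\HDens(\calY'')$. Under this factorization the three operators $\Omega$, $\Delta'$, $\Delta''$ act on mutually distinct tensor factors, so they pairwise (graded-)commute and each commutes with fiber integration over a Lagrangian sitting in a different factor; moreover one has $\Delta=\Delta'+\Delta''$ on $\HDens(\calY)$, so that $\hbar^{2}\Delta+\Omega=\Omega+\hbar^{2}\Delta'+\hbar^{2}\Delta''$.

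For part~(\ref{thm: BV pushforward in family (i)}) I would expand $(\hbar^{2}\Delta+\Omega)\xi$ into these three pieces and integrate over $\calL\subset\calY''$: the $\Omega\xi$ and $\hbar^{2}\Delta'\xi$ terms pull out of $\int_{\calL}$ since they act on factors orthogonal to $\calY''$, while $\int_{\calL}\hbar^{2}\Delta''\xi$ vanishes by Theorem~\ref{thm: BV Stokes for pushforward}(\ref{thm: BV Stokes for pushforward, (i)}) applied fiberwise over $\calB\times\calY'$.

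For part~(\ref{thm: BV pushforward in family (ii)}), the plan is to redo the calculation (\ref{e:BV pushforward infin symplectomorphism}) without using $\Delta\xi=0$. Writing $\xi=\mu^{\frac12}f$ locally for a compatible Berezinian $\mu$ on $\calY$, the same infinitesimal-symplectomorphism plus bracket--Laplacian manipulation will give
$$\frac{\partial}{\partial t}\int_{\calL_t}\xi=\Delta'\int_{\calL_t}\xi H_t-\int_{\calL_t}(\Delta\xi)H_t,$$
where the second term is precisely the one that previously vanished. Under the present hypothesis $(\hbar^{2}\Delta+\Omega)\xi=0$ it becomes $\hbar^{-2}\int_{\calL_t}(\Omega\xi)H_t$; since $H_t$ lives on $\calL_t\subset\calY''$ and $\Omega$ touches only the $\calB$-factor, this further equals $\hbar^{-2}\Omega\int_{\calL_t}\xi H_t$. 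Combining,
$$\frac{\partial}{\partial t}\int_{\calL_t}\xi=(\hbar^{2}\Delta'+\Omega)\!\left(\hbar^{-2}\int_{\calL_t}\xi H_t\right),$$
and integrating in $t$ over $[0,1]$ yields the stated $\Psi$. The one point I expect to require some care is the commutation of $\Omega$ with $\int_{\calL_t}$ (in particular, the check that the local formula $\xi=\mu^{\frac12}f$ can be chosen so that $\Omega$ acts only on the explicit $\calB$-half-density factor); this should follow from the tensor-product form of $\HDens(\calF)$ together with the fact that $\calL_t$ lies entirely in $\calY''$, possibly via a partition-of-unity argument.
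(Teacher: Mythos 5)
Your proposal is correct and follows essentially the same route as the paper: part (i) via the splitting $\Delta=\Delta'+\Delta''$ and the fact that $\Omega$ acts on the $\calB$-factor only, and part (ii) by rerunning the computation \eqref{e:BV pushforward infin symplectomorphism} with $\xi=\mu^{\frac12}f$, keeping the term $-\int_{\calL_t}(\Delta\xi)H_t$ and converting it to $\hbar^{-2}\Omega\int_{\calL_t}\xi H_t$ via the mQME. The commutation of $\Omega$ with the fiber integral that you flag is handled in the paper exactly as you suggest, by taking $\mu$ a Berezinian on $\calY$ alone and absorbing the $\calB$-half-density into $f\in\HDens(\calB)\widehat\otimes C^\infty(\calY)$, so there is no gap.
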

\begin{proof}
Part (\ref{thm: BV pushforward in family (i)}) follows immediately from (\ref{thm: BV Stokes for pushforward, (i)}) of Theorem \ref{thm: BV Stokes for pushforward}
and the fact that the map (\ref{e: BV pushforward in family}) is trivial in the first factor of (\ref{e:half-densities splitting in B and Y}) and hence commutes with $\Omega$. The proof of (\ref{thm: BV pushforward in family (ii)}) is a minor modification of the proof of (\ref{thm: BV Stokes for pushforward, (ii)}): choose a Berezinian $\mu$ on $\calY$ compatible with $\omega$. We can write $\xi=\mu^{\frac12}f$ for some $f\in \HDens(\calB)\widehat\otimes C^\infty(\calY)$. Repeating the calculation (\ref{e:BV pushforward infin symplectomorphism}) in the family setting we have
$$ \frac{\partial}{\partial t}\int_{\calL_t} \mu^{\frac12}f=\int_{\calL_t}\mu^{\frac12}(\Delta_{\mu} (fH_t)-
\Delta_{\mu}(f)
H_t)=
\hbar^{-2}(\hbar^2\Delta'+\Omega)\int_{\calL_t}\xi H_t. $$
Here we used that $\Delta_\mu (f)=\mu^{-\frac12}\Delta\xi=-\hbar^{-2}\mu^{-\frac12}\Omega\xi$.
\end{proof}

\subsubsection{
Case of exponential half-densities and  asymptotics $\hbar\rightarrow 0$}
\label{sec: BV pushforward, case of exp}
Now consider the case when $\calF$ is equipped with a Berezinian $\mathfrak{m}=\mu\cdot \nu$ where $\mu$ is a Berezinian on $\calY$ compatible with $\omega$ and $\nu$ is a Berezinian on $\calB$ (we do not require any compatibility between $\nu$ and $\Omega$), and consider half-densities on $\calF$ of the form
\begin{equation}\label{e:xi=exp}
\xi=\mathfrak{m}^{\frac12}\EE^{\frac{\ii}{\hbar}\calS}
\end{equation}
with $\calS=\calS^0+\hbar \calS^1+\cdots \in C^\infty(\calF)[[\hbar]]$. Using Berezinians $\mu$, $\nu$, we define the BV Laplacian $\Delta_\mu=\mu^{-\frac12}\Delta(\mu^{\frac12}\bullet)$ on $C^\infty(\calY)$ and the coboundary operator $\Omega_\nu=\nu^{-\frac12}\Omega(\nu^{\frac12}\bullet)$ on $C^\infty(\calB)$. Assume that $\Omega_\nu=\sum_{p\geq 0}(-\ii\hbar)^p \Omega_{(p)}$ where $\Omega_{(p)}=\Omega_{(p)}^0+\hbar \Omega_{(p)}^1+\cdots \in \mathrm{Diff}(\calB)[[\hbar]]$ is a differential operator on $\calB$ of order at most $p$. Denote by $\mathrm{Symb}\,\Omega^0_{(p)}\in \Gamma(\calB,S^p T\calB)$ the leading symbol of $\Omega^0_{(p)}$, and set $\mathrm{Symb}\,\Omega^0=\sum_{p\geq 0}\mathrm{Symb}\,\Omega^0_{(p)}$. Viewing $\mathrm{Symb}\, \Omega^0$ as a function on $T^*\calB$, we can  define a function $\mathrm{Symb}\, \Omega^0\circ \delta_\calB \calS^0\in C^\infty(\calF)$  where $\delta_\calB$ is the de Rham differential on $\calB$. Then the 
\textsf{modified quantum master equation} (\textsf{mQME})
\begin{equation}\label{e:mQME for half-density}
(\hbar^2 \Delta+\Omega)\;\mathfrak{m}^{\frac12}\EE^{\frac{\ii}{\hbar}\calS}=0
\end{equation}
can be expanded, as $\hbar\rightarrow 0$, as
\begin{equation}\label{e:mQME 0 order}
\mathfrak{m}^{\frac12} \left(-\frac12 (\calS^0,\calS^0)	+\mathrm{Symb}\, \Omega^0\circ \delta_\calB \calS^0+\mathcal{O}(\hbar)\right)\EE^{\frac{\ii}{\hbar}\calS}=0.
\end{equation}
If $b^\alpha$ are local coordinates on $\calB$, one has
$$\Omega_\nu=\sum_{p\geq 0} (-\ii\hbar)^p \underbrace{\frac{1}{p!}\sum_{\alpha_1,\ldots,\alpha_p}  \Omega^{\alpha_1\cdots\alpha_p}(b;\hbar)\;\frac{\partial}{\partial b^{\alpha_1}}\cdots \frac{\partial}{\partial b^{\alpha_p}}}_{\Omega_{(p)}\in \mathrm{Diff}(\calB)[[\hbar]]}.$$
Then (\ref{e:mQME 0 order}) gives, in the lowest order in $\hbar$, the equation
\begin{equation}\label{e:CME family}
\frac{1}{2}(\calS^0,\calS^0)-
\sum_{p\geq 0} \frac{1}{p!}\sum_{\alpha_1,\ldots,\alpha_p} \Omega^{\alpha_1\cdots\alpha_p}(b;0)\;\frac{\partial \calS^0}{\partial b^{\alpha_1}}\cdots \frac{\partial \calS^0}{\partial b^{\alpha_p}}
=0.
\end{equation}
This equation 
is the replacement of the classical master equation (\ref{e:CME}) in the family setting.

\begin{Rem}
Note that the Poisson bracket $(,)$ on $\calY$ and the symbol $\mathrm{Symb}\,\Omega^0$ do not depend on the choice of Berezinians $\mu$, $\nu$. Thus, equation (\ref{e:CME family}) is also independent of Berezinians.	
\end{Rem}

In analogy with (\ref{e:can transf}), we say that two solutions $\calS_0$, $\calS_1$ of the mQME (\ref{e:mQME for half-density}) differ by a canonical BV transformation, if
\begin{equation}\label{e:can transf family}
\EE^{\frac\ii\hbar \calS_1}-\EE^{\frac\ii\hbar \calS_0}=(\hbar^2\Delta_\mu+\Omega_\nu) \Psi
\end{equation}
for some $\Psi\in C^\infty(\calF)_{-1}$. This is equivalent to having a family $\calS_t$ of solutions of the mQME for $t\in [0,1]$, satisfying
\begin{equation}\label{e:can transf family inf}
\frac{\de}{\de t} \EE^{\frac{\ii}{\hbar}\calS_t}=(\hbar^2\Delta_\mu+\Omega_\nu)\left(\hbar^{-2}\EE^{\frac\ii\hbar \calS_t}\phi_t\right)
\end{equation}
with $\phi_t\in C^\infty(\calF)_{-1}[[\hbar]]$. Note that equation (\ref{e:can transf family inf}) together with the mQME can be packaged into an extended version of the mQME satisfied by $\calS_t+dt\cdot \phi_t
$ viewed as a non-homogeneous differential form on the interval $[0,1]$ with values in functions on $\calF$:
$$
\left(\hbar^2 dt\;\frac{\de}{\de t}+\hbar^2\Delta_\mu+\Omega_\nu\right)\EE^{\frac\ii\hbar (\calS_t+dt\cdot\phi_t)}=0.
$$
In the lowest order in $\hbar$, equation (\ref{e:can transf family inf}) reads
\begin{equation}\label{e:can transf family class}
\frac{\de}{\de t} \calS_t^0= (\calS_t^0,\phi_t^0)-\sum_{p\geq 0}\frac{1}{p!}\sum_{\alpha_1,\ldots,\alpha_p,\beta} \Omega^{\alpha_1\cdots\alpha_p\beta}(b;0)\;\frac{\de \calS_t^0}{\de b^{\alpha_1}}\cdots \frac{\de \calS_t^0}{\de b^{\alpha_p}}\; \frac{\de \phi_t^0}{\de b^{\beta}}.
\end{equation}
Here $\phi_t^0=\phi_t\bmod\hbar$.

\begin{Rem}
One can introduce a sequence of multi-derivations with $p\geq 0$ inputs,
\begin{align}\label{e:Omega-brackets}
&[\bullet,\cdots,\bullet]_{\Omega}:\quad  \underbrace{C^\infty(\calB)\times \cdots\times C^\infty(\calB)}_p\rightarrow  C^\infty(\calB), \\
\nonumber &[f_1,\ldots,f_p]_{\Omega}= \sum_{\alpha_1,\ldots,\alpha_p} \Omega^{\alpha_1\cdots\alpha_p}(b;0)\; \frac{\de f_1}{\de b^{\alpha_1}}\cdots \frac{\de f_p}{\de b^{\alpha_p}},
\end{align}
generated by the symbols $\mathrm{Symb}\,\Omega^0_{(p)}$.\footnote{
As a consequence of $\Omega^2=0$, the operations (\ref{e:Omega-brackets}) define on $C^\infty(\calB)[-1]$ the structure of a curved $L_\infty$ algebra (which is flat if $\Omega^0_{(0)}=0$).
} Then equations (\ref{e:CME family},\ref{e:can transf family class}) can be written, respectively, as
\begin{align*}
\frac12 (\calS^0,\calS^0)-\sum_{p\geq 0}\frac{1}{p!}\; [\underbrace{\calS^0,\ldots,\calS^0}_p]_\Omega = 0,\\
\frac{\de}{\de t}\;\calS^0_t= (\calS^0_t,\phi^0_t)-\sum_{p\geq 0}\frac{1}{p!}\;[\underbrace{\calS^0_t,\ldots,\calS^0_t}_p,\phi_t^0]_\Omega.
\end{align*}
\end{Rem}

Assume again that $(\calY,\omega)$ is a product of odd-symplectic manifolds $(\calY',\omega')$ and $(\calY'',\omega'')$ and that the Berezinian  $\mu$ on $\calY$ is of form $\mu=\mu'\cdot\mu''$ with $\mu',\mu''$ compatible Berezinians on $\calY',\calY''$.

Given a half-density on $\calF$ of the form 
$\xi=\mathfrak{m}^{\frac12}\EE^{\frac\ii\hbar \calS}$, we can apply the pushforward construction (\ref{e: BV pushforward in family}), producing a half-density on $\calF'=\calB\times\calY'$ of form $\xi'=\mathfrak{m}'^{\frac12}\EE^{\frac\ii\hbar\calS'}$ where $\mathfrak{m}'=\mu'\cdot \nu$. The effective BV action $\calS'\in C^\infty(\calF')[[\hbar]]$ can be calculated by stationary phase formula for the integral
$$\EE^{\frac\ii\hbar\calS'}=\int_\calL \mu''^{\frac12}\; \EE^{\frac\ii\hbar \calS}.$$
Here we assume that there is a single simple isolated critical point of $\calS$ on $\calL$. The asymptotics $\hbar\rightarrow 0$ of the integral yields $\calS'$ as a formal power series in $\hbar$ with coefficients given by Feynman diagrams.

Corollary \ref{cor: BV pushforward for exp(S) properties} translates to the family setting in the following way.
\begin{Cor}
\begin{enumerate}[(i)]
\item 
The modified quantum master equation (\ref{e:mQME for half-density}) on $\calS$ implies the mQME on the effective BV action $\calS'$:
$$(\hbar^2\Delta'+\Omega)\; \mathfrak{m}'^{\frac12}\EE^{\frac\ii\hbar\calS'}=0.$$
In particular,  $\calS'^0$ satisfies equation (\ref{e:CME family}) with Poisson bracket $(,)$ on $\calY$ replaced by the one on $\calY'$.
\item If $\calL_t\subset \calY''$ is a family of Lagrangian submanifolds, the respective effective actions $\calS'_t$ satisfy equation (\ref{e:can transf family inf}) on $\calF'$, with the generator $\phi_t$ of the infinitesimal canonical BV transformation given  by (\ref{e:can tranf generator induced from Lagr homotopy}).
\item If $\calS$, $\Tilde\calS$ are solutions of the mQME on $\calF$ related by a canonical transformation (\ref{e:can transf family}), then the respective effective actions $\calS'$, $\Tilde\calS'$ are also related by a canonical transformation on $\calF'$, with generator given by the pushforward $\Psi'=\mathfrak{m'}^{-\frac12}\int_\calL \mathfrak{m}^{\frac12}\Psi$.
\end{enumerate}
\end{Cor}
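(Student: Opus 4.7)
All three parts reduce to Theorem \ref{thm: BV pushforward in family} applied to the exponential half-density $\xi = \mathfrak{m}^{\frac12}\EE^{\frac{\ii}{\hbar}\calS}$, after unpacking the Berezinian factorization $\mathfrak{m}=\nu\cdot\mu'\cdot\mu''$. The key preliminary identifications are that the modified quantum master equation (\ref{e:mQME for half-density}) is precisely $(\hbar^2\Delta+\Omega)\xi=0$, and that the fiber BV integral (\ref{e: BV pushforward in family}) consumes only the $\mu''^{\frac12}$ factor on the fiber $\calY''$, so that $\nu^{\frac12}\mu'^{\frac12}=\mathfrak{m}'^{\frac12}$ passes outside and, by the definition (\ref{e:fiber BV int}), $\int_\calL\xi=\mathfrak{m}'^{\frac12}\EE^{\frac{\ii}{\hbar}\calS'}$.

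For (i), I apply part (\ref{thm: BV pushforward in family (i)}) of the Theorem to $\xi$: the left-hand side vanishes by the mQME for $\calS$, and the right-hand side equals $(\hbar^2\Delta'+\Omega)\mathfrak{m}'^{\frac12}\EE^{\frac{\ii}{\hbar}\calS'}$, which is the mQME for $\calS'$. The leading-order consequence (\ref{e:CME family}) on $\calS'^0$, with the bracket now taken on $\calY'$, then follows from the same $\hbar\to 0$ expansion used to derive (\ref{e:CME family}) from (\ref{e:mQME for half-density}). For (iii), multiplying (\ref{e:can transf family}) by $\mathfrak{m}^{\frac12}$ and using the intertwining $\mathfrak{m}^{\frac12}(\hbar^2\Delta_\mu+\Omega_\nu)\Psi=(\hbar^2\Delta+\Omega)(\mathfrak{m}^{\frac12}\Psi)$ gives $\mathfrak{m}^{\frac12}(\EE^{\frac{\ii}{\hbar}\Tilde\calS}-\EE^{\frac{\ii}{\hbar}\calS})=(\hbar^2\Delta+\Omega)(\mathfrak{m}^{\frac12}\Psi)$; integrating over $\calL$ and invoking part (\ref{thm: BV pushforward in family (i)}) of the Theorem once more pushes the coboundary outside the fiber integral and yields the canonical-transformation relation on $\calF'$ with generator $\Psi'=\mathfrak{m}'^{-\frac12}\int_\calL\mathfrak{m}^{\frac12}\Psi$, as stated.

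Part (ii) is the most substantive. Differentiating the defining identity $\mathfrak{m}'^{\frac12}\EE^{\frac{\ii}{\hbar}\calS'_t}=\int_{\calL_t}\xi$ in $t$ and invoking the infinitesimal form of part (\ref{thm: BV pushforward in family (ii)}) of the Theorem (which appears already inside its proof) gives
\begin{equation*}
\frac{\partial}{\partial t}\int_{\calL_t}\xi \;=\; \hbar^{-2}(\hbar^2\Delta'+\Omega)\int_{\calL_t}\xi\, H_t,
\end{equation*}
where $H_t\in C^\infty(\calL_t)_{-1}$ generates the Lagrangian deformation. Factoring $\mathfrak{m}^{\frac12}=\nu^{\frac12}\mu'^{\frac12}\mu''^{\frac12}$ and pulling $\nu^{\frac12}\mu'^{\frac12}$ outside the fiber integral rewrites $\int_{\calL_t}\xi\, H_t$ as $\mathfrak{m}'^{\frac12}\EE^{\frac{\ii}{\hbar}\calS'_t}\phi_t$ with $\phi_t$ precisely the expression (\ref{e:can tranf generator induced from Lagr homotopy}); dividing through by $\mathfrak{m}'^{\frac12}$ converts $\hbar^2\Delta'+\Omega$ into $\hbar^2\Delta'_{\mu'}+\Omega_\nu$ and produces exactly (\ref{e:can transf family inf}).

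The principal care required throughout is the Berezinian bookkeeping: tracking which factor of $\mathfrak{m}^{\frac12}$ is consumed by the fiber integral and which survives on $\calF'$, and the conversion between operators acting on half-densities versus on functions. Once this accounting is organized, each of (i)--(iii) is an immediate corollary of the corresponding part of Theorem \ref{thm: BV pushforward in family}, so I do not anticipate any genuine obstacle beyond this bookkeeping.
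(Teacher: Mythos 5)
Your proposal is correct and is exactly the route the paper intends: the Corollary is stated as the immediate translation of Corollary~\ref{cor: BV pushforward for exp(S) properties} to the family setting via Theorem~\ref{thm: BV pushforward in family}, applied to $\xi=\mathfrak{m}^{\frac12}\EE^{\frac\ii\hbar\calS}$, and the paper leaves the Berezinian bookkeeping implicit. Your write-up supplies precisely that bookkeeping (the factorization $\mathfrak{m}=\nu\cdot\mu'\cdot\mu''$, the identification $\int_\calL\xi=\mathfrak{m}'^{\frac12}\EE^{\frac\ii\hbar\calS'}$, and the half-density/function intertwining), so there is nothing to add.
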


\begin{Def}
\label{def: BV bundle}
We call a fiber bundle $\calF$ over a base $\calB$ with odd-symplectic fiber $(\calY,\omega)$ a {\sf BV bundle} if the transition functions  of $\calF$ are given by locally constant fiberwise symplectomorphisms.
\end{Def}

Throughout this Section, the direct product $\calF=\calB\times\calY$
can be replaced by a more general BV bundle.
For the family BV pushforward, we can allow $\calF$ to be a BV bundle over $\calB$ with fiber $\calY$ a hedgehog (cf. Remark \ref{rem: hedgehog}; recall that a hedgehog is the same as a BV bundle with an odd-symplectic base, satisfying the extra assumption that the transition functions are homotopic to the identity). 
In this case we have a tower of BV bundles $\calF\rightarrow \calF'\rightarrow \calB$.

\begin{Rem}\label{rem: family pushforward for Omega=0}
In the special case $\Omega=0$, Theorem \ref{thm: BV pushforward in family} holds in a more general setting where $\calF\rightarrow \calB$ is 
a general fiber bundle with fiber $\calY$ a hedgehog (i.e. no requirement on transition functions to be constant on $\calB$). The Lagrangian submanifold $\calL$ in this setting also does not have to be locally constant as a function on $\calB$.
\end{Rem}

\subsubsection{Half-densities on an elliptic complex}
\label{rem: half-densities on an elliptic complex}
For $X=(X^\bullet,d)$ a cochain complex, one can use the canonical isomorphism of determinant lines $\mathrm{Det}\,X^\bullet\cong \mathrm{Det}\,H^\bullet(X)$ to define the space of densities of weight $\alpha\in\mathbb{R}$ on $X$ as
\begin{equation}\label{e: Dens via DetH}
\mathrm{Dens}^\alpha(X)=C^\infty(X)\otimes (\mathrm{Det}\,H^\bullet(X)/\{\pm 1\})^{\otimes -\alpha}.
\end{equation}
Here the second factor represents positive, {\sf constant} (coordinate-independent) $\alpha$-densities on $X$.\footnote{In other words, an $\alpha$-density $\xi$ prescribes a number $\xi(x,\{\chi_i\})$ to an element $x\in X$ and a basis $\{\chi_i\}$ in cohomology $H^\bullet(X)$, in such a way that, for $\{\chi'_i\}$ another basis, related to $\{\chi_i\}$ by a linear transformation $\theta\in GL(H^\bullet(X))$, one has $\xi(x,\{\chi'_i\})=|\mathrm{Ber}\;\theta|^\alpha \xi(x,\{\chi_i\})$. Here $\mathrm{Ber}\;\theta\in\mathbb{R}$ is the Berezinian (superdeterminant) of the linear transformation.
}

In the case of infinite-dimensional elliptic complexes,
(\ref{e: Dens via DetH})
gives a definition of the space of densities,
which is suitable for the setting of field theory on compact manifolds. Here the typical $X$ is the de Rham complex of the space-time manifold tensored with some graded vector space of coefficients, $X=\Omega^\bullet(M)\otimes V$ (which corresponds to abelian $BF$ theory and its perturbations). 
In this case $C^\infty(X)$ in (\ref{e: Dens via DetH}) should be
understood as the space of smooth functions on $X$ in Fr\'echet sense.
In perturbative computations one typically encounters $\hbar$-dependent asymptotic families of functions on $X$ of the form
$$
f_\hbar=\EE^{\frac{\ii}{\hbar}\varphi}\cdot \rho,\quad\mbox{where}\quad \varphi
\in (\Hat S^\bullet X^*)_0 ,\quad
\rho=\rho^0+\hbar \rho^1+\cdots\in \Hat S^\bullet X^* [[\hbar]].
$$
Here $\varphi(\theta)=\sum_{n\geq 0}\int_{M^{n}}\Phi_n\wedge \pi_1^*\theta\wedge\cdots\wedge \pi_n^*\theta$ for $\theta\in X=\Omega^\bullet(M)\otimes V$ a test differential form, and likewise $\rho^j(\theta)=\sum_{n\geq 0}\int_{M^{n}} R^j_n\wedge \pi_1^*\theta\wedge\cdots\wedge \pi_n^*\theta$. Here $\Phi_n$, $R^j_n$ are distributional differential forms (de Rham currents) on $M^n=\underbrace{M\times\cdots\times M}_n$
with values in $S^n V^*$ and $\pi_i: M^n\rightarrow M$ is the projection to the $i$-th copy of $M$.

Note that the Reidemeister-Ray-Singer torsion $\tau(M)$ of $M$ provides a natural reference constant density on $X$ (in the sense above) and thus fixes an isomorphism
\begin{eqnarray*}
C^\infty(X) &\simeq & \mathrm{Dens}^\alpha(X)	\\
f & \mapsto & f\cdot {\tau(M)}^{-\alpha\cdot \mathrm{Sdim}(V)}
\end{eqnarray*}
where $\mathrm{Sdim}(V)=\sum_i (-1)^i \dim V^i$ is the superdimension of the space of coefficients. More generally, instead of $\Omega^\bullet(M)\otimes V$ one can allow $X$ to be the space of differential forms with coefficients in a flat graded vector bundle over $M$.\footnote{This is the case e.g. for perturbative Chern-Simons theory evaluated around a non-trivial flat connection, see \cite{AS} and Remark \ref{rem: globalization} below. The bundle in this case is $\mathrm{ad}(P)[1]$ -- the adjoint of the principal $G$-bundle $P$ carrying the flat connection, with a homological degree shift by $1$.}

\subsection{The quantum BV-BFV formalism}\label{ss:qBVBFV}
The goal of this Section is to propose the definition of perturbative quantum BV-BFV theory. 

Given a classical BV-BFV theory, its
perturbative quantization consists of the following data:
\begin{enumerate}
\item A graded vector space $\calH_\Sigma^\calP$, the {\sf space of states}, associated to each $(d-1)$\ndash manifold $\Sigma$ with a choice of polarization $\calP$ on $\calF^\de_\Sigma$ (to be constructed as a geometric quantization\footnote{Under the assumption that the 1-form $\alpha^\de_\Sigma$ vanishes along $\calP$, the space of states is (a suitable model for) the space of functions on $\calF^\de_\Sigma$ constant in $\calP$-directions. Furthermore, in the case of $\calP$ a real fibrating polarization, the space of states can be identified with the space of functions on the quotient (space of leaves) $\calB_\Sigma^\calP=\calF^\de_\Sigma/\calP$.  
A correction to this picture is that, instead of functions on $\calB^\calP_\Sigma$, we should consider half-densities on $\calB^\calP_\Sigma$, i.e. $\calH^\calP_\Sigma=\HDens(\calB^\calP_\Sigma)$.
More generally, the space of states is the space of $\calP$-horizontal sections of the trivial (since we consider an {\sf exact} boundary BFV theory) prequantum line bundle $L$ over $\calF^\de_\Sigma$, with global connection 1-form $\frac{\ii}{\hbar}\alpha^\de_\Sigma$, tensored with the appropriate bundle of half-densities (see e.g. \cite{BatesWeinstein}).
}
of the symplectic manifold $\calF^\de_\Sigma$).
\item A coboundary operator $\Omega^\calP_\Sigma$ on  $\calH_\Sigma^\calP$, the {\sf quantum BFV operator}, 
    which is a quantization of
    $\calS^\de_\Sigma$.
\item \label{qBVBFV data: residual fields} A finite-dimensional graded manifold $\calV_M$ endowed with a degree $-1$ symplectic form -- the {\sf space of residual fields} -- associated to a $d$\ndash manifold $M$ and a polarization $\calP$ on
$\calF^\de_{\de M}$. We define the graded vector space $\Hat\calH_M^\calP=\calH^\calP_{\de M}\Hat\otimes \HDens(\calV_M)$ endowed with two commuting coboundary operators  $\Hat\Omega^\calP_M=\Omega^\calP_{\de M}\otimes \mathrm{id}$ and $\Hat\Delta^\calP_M=\mathrm{id}\otimes \Delta_{\calV_M}$. Here $\Delta_{\calV_M}$ is the canonical BV Laplacian on half-densities on residual fields.\footnote{In our notational system, objects depending on
{\backgrounds} are decorated with hats.}
\item A state\footnote{As we will presently see, the state $\Hat\psi_M$ is not uniquely defined as it depends on the additional choice of a ``gauge fixing''.}
 $\Hat\psi_M\in\Hat\calH^\calP_M$ which satisfies the \textsf{modified quantum master equation (mQME)}
\begin{equation}\label{e:mQME}
(\hbar^2\Hat\Delta^\calP_M+\Hat\Omega^\calP_M)\Hat\psi_M=0,
\end{equation}
which is the quantum version of \eqref{e:mCME}. 
\end{enumerate}

\begin{Rem}\label{rem: hat H}
The space $\Hat\calH_M^\calP$ results from a partial integration of bulk fields. 
Hence one can think of $\Hat\calH_M^\calP$ as of the space of boundary states with values in
half-densities on the space of {\backgrounds} $\calV_M$. In the case of a real fibrating polarization on the boundary, we have a trivial bundle of
{\backgrounds}
$\calZ_M= \calB^\calP_{\de M}\times\bulkzm \ra \calB^\calP_{\de M}$ with fiber $\bulkzm$. One has then $\Hat\calH_M^\calP=\HDens(\calZ_M)= \calH_{\de M}^\calP \Hat\otimes \HDens(\bulkzm)$. Note that the triviality of the bundle $\calZ_M$ is implicitly built into the data (\ref{qBVBFV data: residual fields}) above.
\end{Rem}

\begin{Rem}[Change of data]\label{r:psidot}
The coboundary operator 
$\Omega^\calP_{\de M}$ and the state $\Hat\psi_M$ are not uniquely defined, but are allowed to change, infinitesimally, as follows\footnote{
The ambiguity stems from the freedom to choose different gauge-fixing Lagrangians in fiber BV integrals which produce the coboundary operators and the state. 
}
\begin{align*}
\frac\dd{\dd t}{\Omega^\calP_{\de M}}&=[\Omega^\calP_{\de M},\tau],\\
\frac\dd{\dd t}\Hat\psi_M&=(\hbar^2\Hat\Delta^\calP_M+\Hat\Omega^\calP_M)\Hat\chi-\Hat\tau\Hat\psi_M,
\end{align*}
where $\Hat\chi$ is an element of $\Hat\calH_M^\calP$,
$\tau$ is an operator on $\calH_{\de M}^\calP$ and $\Hat\tau=\tau\otimes\mathrm{id}$ is its extension to $\Hat\calH_M^\calP$.
\end{Rem}

\begin{Def}\label{r:equivhat}
We say that the space $\Hat\calH_M^\calP$ is {\sf equivalent} to $\Tilde\calH_M^\calP$ if there is a quasi-isomorphism 
of bi-complexes
$I\colon (\Hat\calH_M^\calP,\Hat\Delta^\calP_M,\Hat\Omega^\calP_M)\to (\Tilde\calH_M^\calP,\Tilde\Delta^\calP_M,\Tilde\Omega^\calP_M)$.
\end{Def}

\begin{Rem}
If $\calV_M$ is a point (and thus $\Hat\calH_M^\calP=\calH_M^\calP$ and $\Hat\Delta^\calP_M=0$), we call $\psi_M=\Hat\psi_M$ the boundary state. It satisfies
$\Omega^\calP_{\de M}\psi_M=0$. Its $\Omega^\calP_{\de M}$\ndash cohomology class is called the physical state.
\end{Rem}

An example where this program  has been successfully completed is the one\ndash dimensional Chern--Simons theory \cite{AM}.
Several other examples are presented in the rest of this paper.

\begin{Rem} For $M$ a closed manifold, the boundary space of states is $\calH_{\de M=\varnothing}^\calP=\mathbb{C}$. In this case the state $\Hat \psi_M=\EE^{\frac{\ii}{\hbar}\calS_\mathrm{eff}}$ is the exponential of the BV effective action induced on the space of {\backgrounds} (see \cite{CFvacua,CMcs,AM,BCM} for examples).\footnote{More pedantically, one should write $\Hat\psi_M=\mu^{\frac12}_{\calV_M}\cdot\EE^{\frac\ii\hbar \calS_\mathrm{eff}}$ with $\mu^{\frac12}_{\calV_M}$ a reference half-density on $\calV_M$.}
If additionally there are no {\backgrounds}, i.e. $\Hat\calH_M^\calP=\calH_{\de M}^\calP=\mathbb{C}$, then the state $\Hat\psi_M=\psi_M\in \mathbb{C}$ is the usual partition function.
\end{Rem}

\subsection{Perturbative quantization of classical BV-BFV theories}\label{ss:pertq}
In this Section we outline a quantization scheme which produces a realization of quantum BV-BFV formalism of Section \ref{ss:qBVBFV} out of the data of a classical BV-BFV theory.

In this Section we appeal to the intuition of the finite-dimensional setting. The following discussion is absolutely correct in the finite-dimensional case and provides a motivating construction for the infinite-dimensional case where the formal reasoning has to be checked, e.g., at the level of Feynman diagrams.
Concrete examples will be presented in Sections \ref{s:abeBF} and \ref{sec: BF-like}.

\subsubsection{From classical to quantum modified master equation}
For the purposes of this paper, it is enough to consider the special situation where the polarization $\calP$
is given by a Lagrangian foliation  
with smooth leaf space, denoted by 
$\calB^\calP_\Sigma$, 
and with the property that, for an appropriately chosen local functional $f^\calP_\Sigma$, the restriction of the $1$\ndash form
$\alpha^{\de,\calP}_\Sigma :=\alpha^\de_\Sigma-\delta f^\calP_\Sigma$ to the fibers of $\calP$ vanishes (see Section~\ref{s:pol}).
In this case,
$\calH_\Sigma^\calP$ may be identified, via multiplication by $\EE^{\frac\ii\hbar f^\calP_\Sigma}$,
with the space of 
half-densities
on $\calB^\calP_\Sigma$.

Next we assume that $\Sigma$ is the boundary $\de M$ of $M$.
Notice that we may change the BV action $\calS_M$ to
\[
\calS_M^\calP:=\calS_M + \pi_M^*f^\calP_{\de M},
\]
This way we get a new BV-BFV manifold (simply  replacing
$\calS_M$ and $\alpha^\de_{\de M}$ by $\calS_M^\calP$ and $\alpha^{\de,\calP}_\Sigma$). In particular, we still have
the fundamental BV-BFV equation 
\begin{equation}\label{e:fundP}
\iota_{Q_M}\omega_M = \delta\calS_M^\calP + \pi_M^*\alpha^{\de,\calP}_{\de M}
\end{equation}
and the mCME
\begin{equation}\label{e:mCME'}
\frac{1}{2}\,\iota_Q \iota_Q \omega_M = \pi_M^*\calS^\de_{\de M}.
\end{equation}
Denoting by $p^\calP_{\de M}$ the projection $\calF^\de_{\de M}\to\calB^\calP_{\de M}$, we have a surjective submersion
\be p^\calP_{\de M}\circ\pi_M\colon\quad \calF_M\to\calB^\calP_{\de M}. \label{e: p pi: F to B}\ee
We now assume that we have a section so that we can write
\begin{equation}\label{e:F=YxB}
\calF_M=\calB^\calP_{\de M}\times \calY
\end{equation}
(we actually need this only locally; more generally, we could allow $\calF_M$ to be a BV bundle over $\calB^\calP_{\de M}$, cf. Definition \ref{def: BV bundle}).
\begin{Ass}\label{assump: omega compatible with splitting}
We assume that, in the splitting (\ref{e:F=YxB}), $\omega_M$ is a weakly nondegenerate $2$\ndash form on $\calY$
extended to the product $\calB^\calP_{\de M}\times\calY$.\footnote{
In the setting of local field theory this assumption forces one to choose a
section
$\calB_{\de M}^\calP\rightarrow \calF_M$ of (\ref{e: p pi: F to B}) which extends boundary fields by zero in the bulk, see Remark \ref{rem: extension} below.
}
\end{Ass}
There is no contradiction between this assumption and
$\omega_M$ being weakly nondegenerate on the whole space $\calF_M$ (in the finite-dimensional setting, instead,
the BV-BFV formalism is not consistent with nondegeneracy of $\omega$ on the whole space and one precisely has to assume nondegeneracy along the fibers). We may then write $Q_M=Q_\calY+Q_\calB$ (the decomposition induced by the splitting of the tangent bundle $T\calF_M=T_\calY\calF_M\oplus T_\calB \calF_M$) and $\delta=\delta_\calY+\delta_\calB$.
The fundamental equation \eqref{e:fundP} now splits into two equations:
\begin{subequations}
\begin{align}
\delta_\calY\calS_M^\calP &= \iota_{Q_\calY}\omega_M,\label{e:mCMEa}\\
\delta_\calB\calS_M^\calP&=-\pi_M^*\alpha^{\de,\calP}_{\de M}.\label{e:mCMEb}
\end{align}
\end{subequations}
The first equation implies
$\iota_{Q_\calY} \iota_{Q_\calY}\omega_M=Q_\calY\calS_M^\calP=:(\calS_M^\calP,\calS_M^\calP)$ (on the r.h.s. is the fiberwise BV bracket, defined using the odd-symplectic structure on $\calY$-fiber).
By \eqref{e:mCME'}, which now reads $\frac12\iota_{Q_\calY} \iota_{Q_\calY}\omega_M=\pi_M^*\calS^\de_{\de M}$,
we then have
\begin{equation}\label{e:mCMEf}
\frac12\, (\calS_M^\calP,\calS_M^\calP)=\pi_M^*\calS^\de_{\de M},
\end{equation}
which is the fiberwise version of the modified classical master equation.

To interpret \eqref{e:mCMEb}, we assume we have Darboux coordinates $(b^i,p_i)$ for $\omega^\de_{\de M}$,
where
the $b^i$'s are coordinates on $\calB^\calP_{\de M}$ and the $p_i$'s are coordinates on 
the fiber of $p^\calP_{\de M}\colon\; \calF^\de_{\de M}\rightarrow \calB^\calP_{\de M}$
(which is part of $\calY$),
such that $\alpha^{\de,\calP}_{\de M}=-\sum_i p_i\delta b^i$ (indices may also denote ``continuous'' coordinates here). Then we
have
\begin{equation}\label{e:bp}
\frac\de{\de b_i}\calS_M^\calP = p_i.
\end{equation}
In the infinite-dimensional case, partial derivatives here should be replaced by variational derivatives.
This in particular shows that in a splitting with these properties $\calS_M^\calP$ is linear in the $b_i$\ndash coordinates.
It follows that, if we define $\Omega_{\de M}^\calP$ as the standard ordering quantization of $\calS^\de_{\de M}$, obtained by
replacing each $p_i$ by $-\ii\hbar\frac\de{\de b^i}$,
\begin{equation}\label{e:Omegastandardordering}
\Omega_{\de M}^\calP := \calS^\de_{\de M}\left(b,-\ii\hbar\frac\de{\de b}\right),
\end{equation}
and putting all derivatives to the right, we get
\begin{equation}\label{e:Omegaunp}
\Omega_{\de M}^\calP\EE^{\frac\ii\hbar\calS_M^\calP}=\pi_M^*\calS^\de_{\de M}\cdot\EE^{\frac\ii\hbar\calS_M^\calP}.
\end{equation}

We now assume that $\calY$ has a compatible Berezinian (in the infinite-dimensional case this is formal), so we can define
the BV Laplacian $\Delta$. As usual we have
\[
\Delta \EE^{\frac\ii\hbar\calS_M^\calP} =
\left(
\frac\ii\hbar\Delta \calS_M^\calP + \frac12 \left(\frac\ii\hbar\right)^2(\calS_M^\calP,\calS_M^\calP)
\right)\EE^{\frac\ii\hbar\calS_M^\calP}.
\]
If $\Delta \calS_M^\calP=0$, as is usually assumed, then \eqref{e:mCMEf} and \eqref{e:Omegaunp} imply
the modified quantum master equation
\begin{equation}\label{e:mQMEP}
(\hbar^2\Delta + \Omega_{\de M}^\calP)\;\EE^{\frac\ii\hbar\calS_M^\calP} = 0.
\end{equation}
\begin{Rem}\label{r:mQMEnoDeltazero}
If $\calS_M^\calP$ depends on $\hbar$ and/or $\Delta \calS_M^\calP\not=0$, from the assumption that the QME holds in the
bulk, we get the modified quantum master equation anyway by defining a new boundary action $\Tilde\calS^\de_{\de M}=\calS^\de_{\de M}+O(\hbar)$ via
\[
\pi_M^*\Tilde\calS^\de_{\de M} = \frac12 (\calS_M^\calP,\calS_M^\calP)
-\ii\hbar\Delta \calS_M^\calP
\]
and setting $\Omega_{\de M}^\calP$ to be the standard ordering quantization of $\Tilde\calS^\de_{\de M}$.
\end{Rem}
\begin{Rem}
To make sense of the interpretation of physical states as the cohomology in degree zero of
the operator $\hbar^2\Delta + \Omega_{\de M}^\calP$, we have to assume that it is a coboundary operator.
This is equivalent to the requirement $(\Omega_{\de M}^\calP)^2=0$. If this is not the case, one might still
try to correct $\Omega_{\de M}^\calP$ (and $\calS_M^\calP$)
 with higher order terms in $\hbar$ so as to make it square to zero. There may be cohomological obstructions
 (anomalies) to do that.
 \end{Rem}
As a consequence of the two previous remarks, $\Omega_{\de M}^\calP$ is a quantization
of $\calS^\de_{\de M}$ but not necessarily the one obtained by standard ordering. In Section~\ref{sec: BF-like} we will actually
see examples (notably the Poisson sigma model) where this phenomenon occurs (as $\Delta \calS_M^\calP=0$
is not compatible with the regularization).


\begin{Rem} Using the coordinate reference half-density $\nu^{\frac12}=\prod_i|db^i|^{\frac12}$ on $\calB^\calP_{\de M}$, we can identify
$C^\infty(\calB^\calP_{\de M})\stackrel{\cdot \nu^{\frac12}}{\simeq} \HDens(\calB^\calP_{\de M})$
and thus allow the operator $\Omega^\calP_{\de M}$ to act on half-densities on $\calB^\calP_{\de M}$. Then we can write the equivalent half-density version of (\ref{e:mQMEP}):
\begin{equation}\label{e:mQMEP 1/2dens}
(\hbar^2\Delta + \Omega_{\de M}^\calP)\;\mathfrak{m}^{\frac12}\EE^{\frac\ii\hbar\calS_M^\calP}= 0,
\end{equation}
where $\mathfrak{m}^{\frac12}=\mu^{\frac12}\cdot \nu^{\frac12}$ is the reference half-density on $\calF$ comprised of $\nu^{\frac12}$ and half-density $\mu^{\frac12}$ on $\calY$ corresponding to the chosen Berezinian on $\calY$; $\Delta$ in (\ref{e:mQMEP 1/2dens}) is the canonical BV operator on half-densities on $\calY$.
\end{Rem}

\begin{Rem}
In the setting of local quantum field theory,
the 
modified quantum master equation (\ref{e:mQMEP}) is formal and requires a regularization (in particular
higher order functional derivatives have to be regularized).
However, in some examples (see \cite{AM},\cite{cell_ab_BF}) one can replace the continuum theory by a cellular model, with finite-dimensional space of fields, where equation \eqref{e:mQMEP} holds directly.
\footnote{One can indeed say that the discretization is the regularization here. An important point in the cellular examples of \cite{AM},\cite{cell_ab_BF} is that a cellular aggregation (the inverse of subdivision) corresponds to a fiber BV integral, and therefore these discretizations are {\sf exact}: one does not have to take an asymptotical subdivision with mesh tending to zero to recover the state/partition function of the theory as a limit -- any cellular structure on the space-time manifold gives the correct result outright.}
\end{Rem}

\subsubsection{The state}
The state is now produced by a perturbative BV pushforward in a family over $\calB^\calP_{\de M}$. For this we have to assume that $\calY\to\calV_M$ is a hedgehog, where
$\calV_M$ denotes the {\sf space of {\backgrounds}}, which we assume to be finite-dimensional. For simplicity of notations, and also because this is the case in all the examples we discuss in this paper, we assume that actually $\calY= \calV_M\times \calY''$ and
$\calF_M=\calB^\calP_{\de M}\times \calV_M\times \calY''$
The gauge fixing then consists in choosing a Lagrangian submanifold $\calL$ in $\calY''$.
We set $\calZ_M=\calB^\calP_{\de M}\times \calV_M$ (the {\sf bundle of {\backgrounds}} over $\calB^\calP_{\de M}$)
and denote $\Tilde\calZ_M=\calZ_M\times \calL$.
We define the space $\widehat \calH^\calP_M=\HDens(\calZ_M)=\HDens(\calB^\calP_{\de M})\Hat\otimes \HDens(\calV_M)$ and the BV Laplacian $\widehat\Delta^\calP_M=\mathrm{id}\otimes\Delta_{\calV_M}$, as in Remark \ref{rem: hat H}.

\begin{Ass}\label{assump: isolated crit points} For any $\phi\in \calZ_M$, the restriction of the action $\calS_M^\calP$ to $\calL_\phi=\{\phi\}\times\calL$ has isolated critical points on $\calL_\phi$.
\end{Ass}

We finally define the state $\Hat\psi_M$ as the
perturbative (Feynman diagram) computation of the family BV pushforward from $\calF_M$ to $\calZ_M$:
\begin{equation}\label{e:psi}
\Hat\psi_M(\phi) = \EE^{\frac\ii\hbar f^\calP_{\de M}}\int_{\calL} \EE^{\frac\ii\hbar\calS_M}=
\int_{\calL} \EE^{\frac\ii\hbar\calS_M^\calP},\qquad \phi\in\calZ_M.
\end{equation}
In the finite-dimensional setting, it now follows from the preceding discussion that $\Hat\psi_M$ solves the modified QME (\ref{e:mQME}):
\[
(\hbar^2\Delta_{\calV_M} + \Omega_{\de M}^\calP)\;\Hat\psi_M =0.
\]
In the infinite-dimensional setting, where integration is replaced by Feynman diagram computations, this equation is only expected to hold  
and requires an independent proof.

\begin{Rem}
\label{rem:poset of realizations}
If $\calV'_M$ is a different choice of the space of {\backgrounds} and $\calV_M$ fibers over $\calV'_M$ as a hedgehog, then 
$\calZ'_M=\calB^\calP_{\de M}\times \calV'_M$ is a BV subbundle of $\calZ_M$ and the corresponding quantum BV-BFV theories are equivalent in the sense of Definition \ref{r:equivhat}, with the map $I$ given by the BV pushforward from $\calZ_M$ to $\calZ'_M$ (in a family over $\calB^\calP_{\de M}$). Generally, one can have a partially ordered set of realizations of the space of {\backgrounds}, with partial order given by hedgehog fibrations acting on states by BV pushforwards (cf.~the setting of cellular $BF$ theory of \cite{cell_ab_BF} where one can vary cell decompositions $T$ in the bulk while keeping the decomposition on the boundary $T_\de$ unchanged; different $T$'s correspond to different choices of the space of {\backgrounds} $\calV_{M,T}$; cellular aggregations $T\rightarrow T'$ correspond to hedgehog fibrations/BV pushforwards). The poset of realizations has a minimal (final) object, corresponding to the minimal choice of the space of {\backgrounds} $\calV^\mathrm{min}_M$ for which Assumption \ref{assump: isolated crit points} can be satisfied by a judicious choice of $\calL$. In the case of abelian $BF$ theory, $\calV^\mathrm{min}_M$ is expressed in terms of de Rham cohomology of $M$, see Section \ref{sec: abBF backgrounds}.
\end{Rem}

\begin{Rem}\label{rem: extension}
In the typical situation of local field theory, we have $\calF_M=\Gamma(M,E)$, $\calB_{\de M}^{\calP}=\Gamma(\de M,E')$ -- spaces of smooth sections of graded vector bundles $E,E'$ over $M$, $\de M$, respectively, with the odd-symplectic structure given by $\omega_M=\int_M \langle \delta x, \delta x\rangle$. Here $\langle,\rangle$ is a fiberwise inner product on $E$ with values in densities on $M$.
Assumption \ref{assump: omega compatible with splitting} and equation \eqref{e:bp} imply that the extension of the boundary fields $\sigma: \calB_{\de M}^\calP\rightarrow \calF_M$ has been done by discontinuously extending them by zero outside the boundary.\footnote{
One can write the action for a general extension and make sure, by integrating by parts, that no derivative of the extension appears in the action (this is certainly possible if the theory is written in the first order formalism). Then we see that the discontinuous extension by zero is enforced by (\ref{e:bp}).
}
A more formal way consists in 
choosing a sequence of regular extensions $\sigma_n:\; \calB_{\de M}^{\calP}\rightarrow \calF_M$ that converges to the discontinuous one as $n\rightarrow\infty$. Each element of this sequence defines a state $\Hat\psi_n$
that in general will not satisfy the mQME.\footnote{Notice that the choice of a good splitting, compatible with Assumption \ref{assump: omega compatible with splitting} and leading to
\eqref{e:bp}, is a sufficient but not necessary condition for the formalism to work. For example, if we treat by BV a theory with no symmetries, then $\Omega_{\de M}^\calP$ will be zero, which puts us in the setting of Remark \ref{rem: family pushforward for Omega=0}. A change of extension is equivalent to a 
$\calB^\calP_{\de M}$-dependent translation
on the space of bulk fields $\calY$, and, in particular, mQME for a good splitting implies mQME for arbitrary splitting.}
\end{Rem}

\begin{Rem}\label{r:S_0+S_pert}
In many examples the action has the form
\begin{equation}\label{e:S_0+S_pert}
\calS=\calS_0+\calS_\mr{pert},
\end{equation}
a sum of a ``free'' (quadratic) part and a ``perturbation''. The splitting carries over to the cohomological vector field and the boundary BFV action. Then a choice of gauge-fixing data for the free theory can also be used for the perturbed theory with action (\ref{e:S_0+S_pert}), under certain ``smallness'' assumption on the perturbation. E.g. 
one can scale the perturbation $\calS_\mr{pert}$ with a parameter $\epsilon$ and calculate the path integral (\ref{e:psi}) by perturbation theory in $\epsilon$, instead of looking for $\epsilon$-dependent critical points of the perturbed action and calculating their stationary phase contributions as series in $\hbar$. For example, the Poisson sigma model is a perturbation of the 2-dimensional abelian $BF$ theory, and one can use the gauge-fixing for the latter to define the perturbation theory (cf. e.g. \cite{CFdq}). Likewise, one can use gauge-fixing for abelian Chern-Simons theory to define the perturbation theory for the non-abelian Chern-Simons (cf. e.g. \cite{CMcs}).
In this context, one first considers \eqref{e:mQMEP} for the free theory. The functional integral \eqref{e:psi}
for $\calS_0$ defines the unperturbed state $\Hat\psi_{M,0}(\phi)$ which satisfies the mQME for the operator
 $\Omega^\calP_{\de M,0}$. One then computes the state $\Hat\psi_M(\phi)$ for the whole theory perturbatively and looks
 for a deformation $\Omega_{\de M}^\calP$ of $\Omega^\calP_{\de M,0}$ so that the mQME is satisfied. The further
 condition that this deformation squares to zero must be checked separately, and there might be obstructions for it to
 be satisfied.
\end{Rem}

\begin{Rem} In the case of Chern-Simons theory with gauge group $G$ on a closed 3-manifold $M$, the gauge-fixing of Remark \ref{r:S_0+S_pert} corresponds to choosing a Riemannian metric on $M$. The metric induces the Hodge--de Rham decomposition of differential forms into exact, harmonic and $\dd^*$-exact (coexact) forms. We set $\Tilde \calZ_M=\Omega^\bullet_\mathrm{coclosed}(M,\mathfrak{g})[1]$ with $\mathfrak{g}$ the Lie algebra of $G$. Then $\calZ_M=H^\bullet(M,\mathfrak{g})[1]$, the $\mathfrak{g}$-valued de Rham cohomology of $M$ represented by harmonic forms. For every sufficiently small harmonic 1-form $a_\mathrm{harm}$, there is an isolated critical point of the  Chern-Simons action on the subspace $a_\mathrm{harm}+\Omega^1_\mathrm{coexact}(M,\mathfrak{g})$. But only if $a_\mathrm{harm}$ satisfies the (homotopy) Maurer-Cartan equation on cohomology, the corresponding critical point will be a flat connection. We refer the reader to \cite{CMcs} for details.
\end{Rem}

\begin{Rem}\label{rem: globalization}
The framework described above assumes that one can introduce a {\sf global} gauge-fixing. A more general 
technique is to allow a family, parametrized by a  choice $x_0$ of ``background'' (or ``reference'') solution of the
Euler--Lagrange equations,  of local gauge-fixings, in a formal neighborhood of $x_0$ (e.g. one can have an $x_0$-dependent splitting (\ref{e:S_0+S_pert}) and infer the local gauge-fixing as in Remark \ref{r:S_0+S_pert}). This produces a family of ``local states'' --- a horizontal section of the vector bundle of local states over the base (the space of allowed
$x_0$'s), with respect to a version of the flat Grothendieck connection on the base. In this framework, the global state is this family. See \cite{BCM} for details on how this technology applies to the Poisson sigma model on a closed surface, where one has a family of gauge-fixings for fields in the neighborhood of a constant map to the Poisson manifold (thus the parameter of the family here is the value of the constant map). The treatment of non-abelian Chern-Simons theory by Axelrod-Singer \cite{AS} is also very much in this vein, where $x_0$ is the background flat connection. Since in \cite{AS} the background flat connection is assumed to be acyclic, there is no need for formal-geometric gluing with the Grothendieck connection, as the base of the family is a discrete set. See Appendix \ref{appendix: semi-classics and eff actions} for further discussion of the matter.
\end{Rem}

\subsubsection{Transversal polarizations}
A special case of gauge-fixing occurs when the polarization $\calP$ on $\calF^\de_{\de M}$ is transversal to the Lagrangian submanifod $\calL_M:=\pi_M(\calEL_M)$,
where $\calEL_M$ is the zero locus of $Q_M$ (the ``Euler--Lagrange space''). In this case, one may take $\calB^\calP_{\de M}=\calL_M$ and $\calZ_M=\underline{\calEL}_M$.\footnote{Here we consider the {\sf fiberwise} coisotropic reduction $\underline{\calEL}_M$ which is a symplectic fiber bundle over $\calL_M$. It is different from the full ``$Q$-reduction'' $\calEL_M/Q_M$ (which is a bundle over the reduction $\calL_M/Q_{\de M}$) and from the coisotropic reduction of the total space of $\calEL_M$ in $\calF_M$ (called the ``symplectic $\calEL$-moduli space'' in \cite{CMR}). The reduction $\underline{\calEL}_M$ can be seen as an appropriate BV extension of the space of gauge equivalence classes of solutions of equations of motion, with gauge transformations
acting trivially on boundary fields.
See \cite{CMR} for details.}
The fibers of $\underline{\calEL}_M$ are the moduli spaces of the vacua of the theory. Note that, by this construction, we have a preferred (``minimal'') choice of $\calZ_M$.

Despite having this preferred choice,
it is convenient to allow for more general $\calZ_M$'s as they are useful for gluing. Also, it is convenient to consider
polarizations that are not transversal to $\calL_M$, as we will see in the following.

\subsubsection{Gluing}
\label{sss:gluing}
If a $d$\ndash manifold $M$ with boundary is cut along a $(d-1)$\ndash submanifold $\Sigma$ into components $M_1$ and $M_2$ (i.e. $M=M_1\cup_\Sigma M_2$), then we can obtain
the state $\Hat\psi_M$ from the states $\Hat\psi_{M_1}$ and $\Hat\psi_{M_2}$.
The product of the spaces of {\backgrounds} $\calV_{M_1}\times \calV_{M_2}$ is a hedgehog fibration over $\calV_M$, and
the gluing formula has the structure
\begin{equation}\label{e: gluing formula}
\Hat\psi_M=P_* \left(\Hat\psi_{M_1}\underset\Sigma\ast \Hat\psi_{M_2}\right)
\end{equation}
where $\underset\Sigma\ast$ denotes the pairing in $\calH_\Sigma^\calP$ and $P_*$ stands for the BV pushforward corresponding to 
$P\colon\calV_{M_1}\times \calV_{M_2}\rightarrow \calV_M$.
Observe that \eqref{e:mQME} is automatically satisfied.

Also note that it is convenient to choose two different, transversal polarizations $\calP_1$ and $\calP_2$ to define the states $\Hat\psi_{M_1}$ and $\Hat\psi_{M_2}$. If we can realize
$\calF^\de_\Sigma$ as $\calB^{\calP_1}_\Sigma\times\calB^{\calP_2}_\Sigma$, then (for simplicity we ignore the distinction between functions and half-densities) the pairing is the integral
over $\calF^\de_\Sigma$ of the product of a function on $\calB^{\calP_1}_\Sigma$ times a function on $\calB^{\calP_2}_\Sigma$ times the Segal--Bargmann kernel $\EE^{\frac\ii\hbar(f^{\calP_2}_\Sigma-f^{\calP_1}_\Sigma)}$. The latter term may be used
to define the perturbative computation of the pairing.

To explain (\ref{e: gluing formula}), one can consider gluing at the level of exponentials of actions. For simplicity we assume that $\de M_1=\Sigma=(\de M_2)^\mathrm{opp}$ (i.e. the glued manifold $M$ is closed); the discussion generalizes straightforwardly to $M$ with boundary. Let $b^i,b'_i$ be Darboux coordinates on $\calF^\de_\Sigma$ such that the polarizations $\calP_1,\calP_2$ are spanned by vector fields $\frac{\de}{\de b'_i}$ and $\frac{\de}{\de b^i}$, respectively.
Thus the $b^i$ are coordinates on $\calB:=\calB_\Sigma^{\calP_1}$ and the $b'_i$ are coordinates on $\calB':=\calB_\Sigma^{\calP_2}$.
We assume additionally that $\alpha_{\Sigma}^{\de,\calP_1}=-\sum_i b'_i\, \delta b^i$ and $\alpha_{\Sigma}^{\de,\calP_2}=\sum_i b^i\, \delta b'_i$; then in these coordinates we have $\EE^{\frac\ii\hbar (f_\Sigma^{\calP_2}-f_\Sigma^{\calP_1})}=\EE^{-\frac\ii\hbar \langle b, b'\rangle}$. The spaces of fields decompose as
$\calF_{M_1}=\calY_1\times \calB_{(1)}=(\Tilde\calY_1\times \calB'_{(1)})\times \calB_{(1)}$,\quad $\calF_{M_2}=\calY_2\times \calB'_{(2)}=(\Tilde\calY_2\times \calB_{(2)})\times \calB'_{(2)}$ and
$\calF_M=\calY_1\times \calY_2$.
The subscripts $(1)$, $(2)$ are here to distinguish between the copies of $\calB$, $\calB'$ appearing in $\calF_{M_1}$ and $\calF_{M_2}$. Then we have the identity
\begin{equation} \label{e:gluing for exp(S)}
\EE^{\frac\ii\hbar \calS^{\calP_1}_{M_1}}\mathfrak{m}_1^{\frac12}\;\underset\Sigma\ast\;
\EE^{\frac\ii\hbar \calS^{\calP_2}_{M_2}}\mathfrak{m}_2^{\frac12}=
\EE^{\frac\ii\hbar \calS_{M}}\mathfrak{m}^{\frac12}.
\end{equation}
Here the notations are: $\mathfrak{m}_1^{\frac12}=\mu_1^{\frac12}\cdot |db_{(1)}|^{\frac12}$, $\mathfrak{m}_2^{\frac12}=\mu_2^{\frac12}\cdot |db'_{(2)}|^{\frac12}$, $\mathfrak{m}^{\frac12}=\mu_1^{\frac12}\cdot \mu_2^{\frac12}$ with $\mu_1^{\frac12}$, $\mu_2^{\frac12}$ reference half-densities on $\calY_1$, $\calY_2$; the operation $\underset\Sigma\ast$ is defined as the pairing
$$ \Psi_1\underset\Sigma\ast\Psi_2:=\int_{\calB_{(1)}\times \calB'_{(2)}} \EE^{-\frac\ii\hbar \langle b_{(1)}, b'_{(2)}\rangle} |db_{(1)}|^{\frac12} |db'_{(2)}|^{\frac12}\; \Psi_1 \Psi_2. $$
The integral over $b_{(1)},b'_{(2)}$ in (\ref{e:gluing for exp(S)}) is Gaussian (since the actions are linear in the integration variables, by (\ref{e:bp})) and boils down to evaluating the integrand at the critical point which, due to (\ref{e:bp}), is given by $b_{(1)}=b_{(2)}$, $b'_{(2)}=b'_{(1)}$. Thus (\ref{e:gluing for exp(S)}) comes from
$$\left.\calS^{\calP_1}_{M_1}+\calS^{\calP_2}_{M_2}-\langle b_{(1)},b'_{(2)}\rangle \right|_{b_{(1)}=b_{(2)},b'_{(2)}=b'_{(1)}}
=\left.\calS_{M_1}+\calS_{M_2} \right|_{b_{(1)}=b_{(2)},b'_{(2)}=b'_{(1)}}
= \calS_M$$
which is simply the statement of additivity of the action with respect to gluing.
Performing the BV pushforwards $\calY_1\times \calY_2\rightarrow \calV_{M_1}\times \calV_{M_2}\rightarrow \calV_M$ in (\ref{e:gluing for exp(S)}), we obtain the gluing formula (\ref{e: gluing formula}).

\begin{Rem}\label{r:SBOmega}
We assume that the states are $(\hbar^2\Delta+\Omega)$\ndash closed and that, on the boundary component where we glue,
the $\Omega$ for one polarization is the Segal--Bargmann transform with kernel $\EE^{\frac\ii\hbar (f_\Sigma^{\calP_2}-f_\Sigma^{\calP_1})}=\EE^{-\frac\ii\hbar \langle b, b'\rangle}$
of the $\Omega$ for the other polarization.\footnote{This is automatically satisfied if $\Omega$ is constructed
as in equation \eqref{e:Omegastandardordering}. It is also satisfied in all the examples considered in Sections~\ref{s:abeBF} and \ref{sec: BF-like}, also in the presence of quantum corrections. This is essentially due to locality: the quantum corrections may be seen as arising from the standard quantization of a modified BFV boundary action.}
As a consequence of Theorem~\ref{thm: BV pushforward in family}
the glued state will also be $(\hbar^2\Delta+\Omega)$\ndash closed. Moreover, if we change one state by
an $(\hbar^2\Delta+\Omega)$\ndash exact term, the glued state will also change by an $(\hbar^2\Delta+\Omega)$\ndash exact term, e.g. if $\Hat\psi_{M_1}$ is shifted by $(\hbar^2\Delta_{\calV_{M_1}}+\Omega^{\calP_1
}_{\de M_1})\;\Hat\alpha_{M_1}$ with $\Hat\alpha_{M_1}$ some degree $-1$ element of $\Hat\calH_{M_1}^{\calP_1}$, then the glued state (\ref{e: gluing formula}) gets shifted by $(\hbar^2\Delta_{\calV_M}+\Omega_{\de M})\;P_*(\Hat\alpha_{M_1}\ast_\Sigma \Hat\psi_{M_2})$. Here we suppress in the notation the polarizations on the boundary components of $M$, only denoting explicitly the polarization on the gluing interface $\Sigma$; BV pushforward $P_*$ and the pairing $*_\Sigma$ are as in (\ref{e: gluing formula}).
\end{Rem}

\begin{Rem}\label{rem: changing P by attaching a cylinder}
The gluing procedure may also be used to change the polarization by the use of cylinders.
Namely, suppose that that we have a boundary component $\Sigma$ on which we choose a polarization $\calP_1$
to compute the state. If we want to get the state in a polarization $\calP'$, we glue in a cylinder
$\Sigma\times I$, $I$ an interval, with polarization $\calP'$ on one side and a polarization $\calP_2$ transversal to
$\calP_1$ on the other side, the one we glue in. In a topological field theory it does not matter which interval we take.
In a non-topological theory, one has to take the limit for the length of the interval going to zero; an alternative procedure consists in putting on the cylinder a theory that is topological in the interval direction and has the same BFV boundary structure. A canonical way to do this is by the AKSZ formalism \cite{AKSZ} with source $T[1]I$ and target the BFV manifold associated to $\Sigma$ (notice that in this version of the AKSZ model the target is usually infinite-dimensional).
We call this construction the generalized Segal-Bargmann transform.
\end{Rem}

\begin{Rem} The possibility to pass between different polarizations of $\calF^\de_\Sigma$ via (generalized) Segal-Bargmann transform leads, infinitesimally, to a projectively flat connection $\nabla_\mr{H}$ on the vector bundle of spaces of states $\calH_\Sigma^\calP$ over the space of polarizations $\mathfrak{P}_\Sigma$ -- the generalized Hitchin connection -- so that the parallel transport of $\nabla_\mr{H}$ is the Segal-Bargmann transform $\calH_\Sigma^{\calP_1}\to \calH_\Sigma^{\calP_2}$. E.g.~in the case of Chern-Simons theory, the moduli space of conformal structures $\calM^\mr{conf}_\Sigma$ on a surface $\Sigma$ embeds into $\mathfrak{P}_\Sigma$, and the pullback of $\nabla_\mr{H}$ to  $\calM^\mr{conf}_\Sigma$ is the Hitchin connection on the bundle of WZW conformal blocks over the moduli of conformal structures (see e.g.~\cite{ADPW}). In the case of perturbed $BF$ theories that are the focus of this paper, we prefer to work with a discrete subset $\mathfrak{P}_\Sigma^{A,B}$ of $\mathfrak{P}_\Sigma$ consisting of $2^{\#\pi_0(\Sigma)}$ points which correspond to choosing either $\frac{\delta}{\delta\sfA}$ or $\frac{\delta}{\delta\sfB}$ polarization (see Section \ref{s:pol}) on each connected component of $\Sigma$. In this situation we do not have infinitesimal transitions between points of $\mathfrak{P}_\Sigma^{A,B}$ and so it does not make sense to speak of the connection $\nabla_\mr{H}$, only of the (finite) Segal-Bargmann transform between the polarizations.
\end{Rem}

\begin{Rem}\label{rem: Fubini}
Note that our proof of the gluing formula (\ref{e: gluing formula}) implicitly uses Fubini theorem which is automatic for finite-dimensional integrals and which we expect to hold for path integrals representing states in field theory.
We follow this heuristics to derive the gluing formulae for the states and the propagators in abelian $BF$ theory (see Section \ref{ss:gluing} and Appendix \ref{a:comp}). However, these formulae can be proved to hold a posteriori (see Theorem \ref{thm: gluing of propagators} for propagators and Section \ref{ss:reducing backgrounds} for states). This immediately implies the gluing formulae for expectation values as they are determined by states and propagators.
Finally note that, as a consequence, gluing in perturbation theory (for $BF$-like theories of Section \ref{sec: BF-like}) also automatically holds once we have proved it to hold for states and propagators of the unperturbed theory.
\end{Rem}

\section{Abelian $BF$ theory}\label{s:abeBF}
Here we recollect basic notions on the BV-BFV formalism for the abelian $BF$ theory \cite{SchwBF}, which occurs as the unperturbed part in  many AKSZ \cite{AKSZ} theories, but also in  quantum mechanics and in Yang--Mills theory in the first-order formalism.

Fix a dimension $d$ and an integer $k$. The $d$\ndash dimensional abelian
$BF$ theory (with shift $k$) associates to a compact $d$\ndash manifold $M$ (possibily with boundary) the space of fields
$\calF_M=\Omega^\bullet(M)[k]\oplus\Omega^\bullet(M)[d-k-1]$. Using the customary notation $\sfA\oplus \sfB\in \Omega^\bullet(M)[k]\oplus\Omega^\bullet(M)[d-k-1]$ for the fields,
we have the following odd-symplectic form, action and cohomological vector field on $\calF_M$:
\begin{align}
\omega_M &= \int_M \delta\sfB\,\delta\sfA,\label{e:omega BF}\\
\calS_M &= \int_M \sfB\,\dd\sfA,\\
Q_M &= (-1)^d \int_M  \dd\sfB\,\frac{\delta}{\delta\sfB}  +\dd\sfA\,\frac{\delta}{\delta\sfA},
\end{align}
where $\delta$ denotes the de~Rham differential on $\calF_M$, $\dd$ the de~Rham differential on $M$, and we omit the wedge symbols.

\begin{Rem}
One way to read the formulae above is to understand $\sfA,\sfB$ as arguments. A more formal way, which helps understanding grading conventions, consists in
viewing $\sfA$ and $\sfB$ as maps $\sfA\colon \calF_M\rightarrow \Omega^\bullet(M)$, $\sfB\colon\calF_M\rightarrow \Omega^\bullet(M)$ obtained by composing the projections from $\calF_M$  to the first and second summand with the shifted identity maps $\Omega^\bullet(M)[k]\rightarrow \Omega^\bullet(M)$ and  $\Omega^\bullet(M)[d-k-1]\rightarrow \Omega^\bullet(M)$, respectively.
The intrinsic degree (``ghost number'') of the $p$-form components $\sfA^{(p)},\sfB^{(p)}$ (i.e. $\sfA$, $\sfB$ composed with the projection $\Omega^\bullet(M)\rightarrow \Omega^p(M)$) corresponding to the $\mathbb{Z}$-grading on $\calF_M$ is $k-p$ for $\sfA^{(p)}$ and $d-k-1-p$ for $\sfB^{(p)}$.
\end{Rem}

\begin{Rem}
If $k=1$, one simply speaks of abelian $BF$ theory. In this case the degree zero component $A$ of $\sfA$ is a $1$\ndash form, which can also be thought of as a connection for a line bundle. The action restricted to the degree zero fields $A$ and $B$---the latter being now a $(d-2)$\ndash form---is just
$\int_M BF$, where $F=\dd A$ is the curvature of $A$. This explains the name $BF$ theory.
\end{Rem}

The exact  BFV manifold
$(\calF^\de_\Sigma,\omega^\de_\Sigma=\delta\alpha^\de_\Sigma,Q^\de_\Sigma)$ assigned to
a $(d-1)$\ndash dimensional compact manifold $\Sigma$ is given by
$\calF^\de_\Sigma=\Omega^\bullet(\Sigma)[k]\oplus\Omega^\bullet(\Sigma)[d-k-1]$ and
\begin{align*}
\alpha^\de_\Sigma &= (-1)^d\int_\Sigma \sfB\,\delta\sfA,\\
Q^\de_\Sigma &= (-1)^d\int_\Sigma
\dd\sfB\,\frac{\delta}{\delta\sfB}  + \dd\sfA\,\frac{\delta}{\delta\sfA},
\end{align*}
where we denote again a field by
$\sfA\oplus\sfB\in \Omega^\bullet(\Sigma)[k]\oplus \Omega^\bullet(\Sigma)[d-k-1]$
(or regard
$\sfA, \sfB$ as maps $\calF^\de_\Sigma\rightarrow\Omega^\bullet(\Sigma)$).
The BFV action is
$$\calS^\de_\Sigma = \int_\Sigma \sfB\,\dd\sfA.$$

Finally, the surjective submersion $\pi_M\colon\calF_M\to\calF^\de_{\de M}$ is just given by the restriction of forms to the boundary.


\subsection{Polarizations}\label{s:pol}
Let $\de M$ be the disjoint union of the two compact (possibly empty) manifolds $\de_1 M$ and $\de_2 M$,
so $\calF^\de_{\de M}=\calF^\de_{\de_1 M}\times \calF^\de_{\de_2 M}$.
We consider polarizations $\calP$ on $\calF^\de_{\de M}$ given as direct products of polarizations on each factor.

On $\de_1 M$ we choose the $\frac{\delta}{\delta\sfB}$\ndash polarization and identify the quotient (space of leaves of the associated foliation) with $\calB_1:=\Omega^\bullet(\de_1 M)[k]$,
whose coordinates are the $\sfA$\ndash fields.
On $\de_2 M$ we choose the $\frac{\delta}{\delta \sfA}$\ndash polarization and identify the quotient with $\calB_2:=\Omega^\bullet(\de_2 M)[d-k-1]$,
whose coordinates are the $\sfB$\ndash fields.\footnote{One can alternatively call these two polarizations the $\sfA$\ndash\  and $\sfB$\ndash {\sf representations}, respectively, by analogy with the coordinate and momentum representations in quantum mechanics.} Then $\calB^\calP_{\de M}=\calB_1\times\calB_2$.
We have to subtract the differential of
\[
f^\calP_{\de M}=(-1)^{d-k}\int_{\de_2 M}\sfB\sfA,
\]
from $\alpha^\de_{\de M}$ to get the adapted BFV $1$\ndash form
\[
\alpha^{\de,\calP}_{\de M}=(-1)^d\int_{\de_1 M}\sfB\,\delta\sfA + (-1)^{k}\int_{\de_2 M} \delta\sfB\,\sfA.
\]
We then get the modified action
\[
\calS^\calP_M=\int_M \sfB\,\dd\sfA +(-1)^{d-k}\int_{\de_2 M}\sfB\sfA.
\]

We will denote by $\bA$ the coordinate on $\calB_1$ and by $\bB$ the coordinate on $\calB_2$ and
by $\tbA$ and $\tbB$ some prescribed extensions of these fields to $\calF_M$. We write the fields in $\calF_M$ as
\begin{equation}\label{e:splitting of fields into bdry and fluct}
\begin{split}
\sfA&=\tbA + \Hat\sfA,\\
\sfB&=\tbB+\Hat\sfB,
\end{split}
\end{equation}
where $\Hat\sfA$ is required  to restrict to zero on $\de_1 M$, whereas  $\Hat\sfB$ is required  to restrict to zero on $\de_2 M$.
This is our choice of a section of $\calF_M\to\calB^{\calP}_{\de M}$.
See Section~\ref{ss:ext} for a further discussion.
We then get
\begin{equation}\label{e:SPAB}
\calS^\calP_M=\int_M \left(\tbB\,\dd\tbA + \tbB\,\dd\Hat\sfA + \Hat\sfB\,\dd\tbA + \Hat\sfB\,\dd \Hat\sfA \right)
+(-1)^{d-k}\int_{\de_2 M}\left(\bB\tbA + \bB\Hat\sfA\right).
\end{equation}

\subsection{Residual fields}\label{sec: abBF backgrounds}
We now focus on the last bulk term $\Hat\calS_M:=\int_M \Hat\sfB\,\dd \Hat\sfA$.
Because of the boundary conditions on $\Hat\sfA$ and $\Hat\sfB$, its variations have no boundary terms. Its critical
points are given by $\dd\Hat\sfA=\dd\Hat\sfB=0$. As $\calZ_M$ we now choose an embedding of the appropriate cohomologies. Namely, for $i=1,2$, let us define the subcomplexes
\[
\Omega^\bullet_{\DD i}(M):=\{\gamma\in\Omega^\bullet(M) : \iota_i^*\gamma=0\}
\]
of $\Omega^\bullet(M)$, where $\iota_i$ is the inclusion map of $\de_i M$ into $M$. (Here $\DD$ stands for Dirichlet.)
Observe that the corresponding cohomologies $H^\bullet_{\DD1}(M)$ and $H^\bullet_{\DD2}(M)$
are canonically paired by integration over $M$.\footnote{We have canonical identification with cohomology of pairs $H^\bullet_{\DD1}(M)=H^\bullet(M,\de_1 M)$, $H^\bullet_{\DD2}(M)=H^\bullet(M,\de_2 M)$.}
Hence
$$\calV_M:=H^\bullet_{\DD1}(M)[k]\oplus H^\bullet_{\DD2}(M)[d-k-1]$$
is a finite-dimensional BV manifold.
Using Poincar\'e duality, we may also write $\calV_M=T^*[-1](H^\bullet_{\DD1}(M)[k])=T^*[-1](H^\bullet_{\DD2}(M)[d-k-1])$.
This is the
space of {\backgrounds}.
In the notations of Section~\ref{ss:pertq}, we have
\be \calZ_M=\calV_M\times\calB^\calP_{\de M}\label{e: Z=V x B}\ee
as a trivial bundle. According to our construction (cf.\ Remark \ref{rem: hat H}), the space $\Hat\calH^\calP_M$ is $\HDens(\calZ_M)$.

To define the 
BV Laplacian on $\calV_M$
pick a basis $\{[\chi_i]\}$ of $H^\bullet_{\DD1}(M)$ and its
dual basis $\{[\chi^i]\}$ of $H^\bullet_{\DD2}(M)$ with chosen representatives $\chi_i$ and $\chi^i$
in $\Omega^\bullet_{\DD1}(M)$ and $\Omega^\bullet_{\DD2}(M)$. In particular, we have
$\int_M\chi^i\chi_j=\delta^i_j$.
We write
\begin{align*}
\sfa &= \sum_i z^i\chi_i,\\
\sfb &= \sum_i z^+_i\chi^i,
\end{align*}
where $\{z^i,z^+_i\}$ are canonical coordinates on $\calV_M$ with BV form
\[
\omega_{\calV_M}=\sum_i (-1)^{k+(d-k)\cdot \deg z^i} \delta z^+_i\delta z^i.
\]
Notice that $\deg z^i = k- \deg\chi_i$ and $\deg z^+_i=-\deg z^i-1$.
The BV operator on $\calV_M$ is
\begin{equation}\label{e:Delta}
\Delta_{\calV_M} = \sum_i  (-1)^{k+(d-k)\cdot \deg z^i} \frac\de{\de z^i}\frac\de{\de z^+_i}.
\end{equation}

\subsubsection{Boundary components and \pseudovacua}
Our choice of {\pseudovacua} depends on which components of the boundary we choose as $\de_1 M$ and  $\de_2 M$.

If $\de M$ is connected, there are only two choices: $(\de_1 M=\de M,\de_2 M=\emptyset)$ and $(\de_1 M=\emptyset,\de_2 M=\de M$). The first yields
$\calV_M=H^\bullet(M,\de M)[k]\oplus H^\bullet(M)[d-k-1]$, the second
$\calV_M=H^\bullet(M)[k]\oplus H^\bullet(M,\de M)[d-k-1]$. The two are not BV symplectomorphic to each other
(unless $2k=d-1$).

If $\de M$ is not connected, there are more choices which yield  other, generally inequivalent, moduli spaces.
For example, take $M=\Sigma\times[0,1]$ where $\Sigma$ is a compact $(d-1)$\ndash manifold. Besides  the choices
$(\de_1 M=\de M,\de_2 M=\emptyset)$ and $(\de_1 M=\emptyset,\de_2 M=\de M$),
which yield $\calV_M=T^*[-1](H^\bullet(\Sigma)[d-k-1])$ and $\calV_M=T^*[-1](H^\bullet(\Sigma)[k])$,
we now also have
$\de_1 M=\Sigma\times\{0\},\de_2 M=\Sigma\times\{1\}$ and
$\de_1 M=\Sigma\times\{1\},\de_2 M=\Sigma\times\{0\}$, both of which yield $\calV_M=\{0\}$.

\subsection{The propagator}\label{ss:prop}
We now write
\begin{equation}\label{e:splitting}
\begin{split}
\Hat\sfA &= \sfa + \alpha,\\
\Hat\sfB &=\sfb + \beta,
\end{split}
\end{equation}
where the fluctuation $\alpha$ is required  to restrict to zero on $\de_1 M$, whereas
the fluctuation $\beta$ is required  to restrict to zero on $\de_2 M$. Notice that we have
$\Hat\calS_M=\int_M\beta\,\dd\alpha$. We regard it as a quadratic function
on $\Omega^\bullet_{\DD1}(M)[k]\oplus\Omega^\bullet_{\DD2}(M)[d-k-1]$. Notice that critical points are closed forms.

We now have to fix a Lagrangian subspace $\calL$
of a symplectic complement of $\calV_M$
on which $\Hat\calS_M$ has an {\sf isolated} critical point at the origin
(i.e. $\dd$
has no kernel).
This
can be done, for example, using
the Hodge theory
on manifolds with boundary \cite{Fr,Mo,CDGM}.
Namely, we pick a metric
on $M$ through which we define the Hodge star operator. We assume that the metric has a product structure near the boundary.\footnote{\label{footnote: product structure} In other words, there is a diffeomorphism $\phi$ between a 
neighborhood $U$ of $\de M$ in $M$ and $\de M\times [0,\epsilon)$ for some $\epsilon>0$, such that $\phi|_{\de M}=\mathrm{id}_{\de M}$ and  the the metric on $M$ restricted to $U$ has the form $\phi^*(g_{\de M}+dt^2)$. 
Here $g_{\de M}$ is some Riemannian metric on the boundary and $t\in [0,\epsilon)$ is the 
vertical coordinate on $\de M\times [0,\epsilon)$. 
}
This yields a scalar product on $\Omega^\bullet(M)$, $(\gamma,\lambda):=\int_M\gamma\,*\!\lambda$, and
the Hodge dual $\dd^*$ of the de~Rham differential. We define
\begin{equation}\label{e:Lgf}
\calL=((\dd^*\Omega^{\bullet+1}_{\text{N}2}(M))
\cap\Omega_{\DD1}^\bullet(M))[k]
\oplus
((\dd^*\Omega^{\bullet+1}_{\text{N}1}(M))
\cap\Omega_{\DD2}^\bullet(M))[d-k-1]
\end{equation}
where
\[
\Omega^\bullet_{\text{N}i}(M):=\{\gamma\in\Omega^\bullet(M)\colon \iota_{\de_i M}^**\!\gamma=0\}
\]
is the space of Neumann forms relative to $\de_iM$. The restriction of $\Hat\calS_M$ to $\calL$ is nondegenerate.
In Appendix~\ref{a:Hodge}, see Lemma~\ref{l:ortho},
we show that
$\calL$ is Lagrangian in the complement
of $H^\bullet_{\DD1}(M)[k]\oplus H^\bullet_{\DD2}(M)[d-k-1]$ which, thanks to \eqref{e:HHHarm} and \eqref{e:HarmhHarm}, is embedded into
$\Omega^\bullet_{\NN2,\DD1}(M)[k]\oplus\Omega^\bullet_{\NN1,\DD2}(M)[d-k-1]$ as the space of $(\dd,\dd^*)$\ndash closed forms.

In the notations of Section \ref{ss:pertq}, the coisotropic subbundle $\Tilde \calZ_M$ of $\calF_M\ra \calB^\calP_{\de M}$, generating $\calZ_M$ as its fiberwise reduction, is
$$\Tilde \calZ_M=\calZ_M\times\calL$$
with $\calZ_M$ as in (\ref{e: Z=V x B}).

The propagator can then be explicitly constructed generalizing the construction by Axelrod and Singer \cite{AS} for the boundaryless case. As a different option, 
one can use a topologically constructed propagator following the
philosophy of
\cite{K,BC,SchwTQFT}.

More concretely, we are interested in the integral kernel $\eta$ (a.k.a.~parametrix) of
the chain contraction $K$ of the space of forms $\Omega_{\DD1}^\bullet(M)$ onto the cohomology $H^\bullet_{\DD1}(M)$, which is related to the gauge-fixing Lagrangian by
\begin{equation}\label{e:L via K}
\calL=\mathrm{im}(K)[k]\oplus \mathrm{im}(K^*)[d-k-1].
\end{equation}
One possible strategy is to
choose the Hodge-theoretic chain contraction $K\colon \Omega^\bullet_{\DD1}(M)\to\Omega^{\bullet-1}_{\DD1}(M)$ given by $K=\dd^*/(\Delta_\mathrm{Hodge}+P_\mathrm{Harm})$ where $P_\mathrm{Harm}$ is the projection to (ultra-)harmonic forms (we refer the reader to Appendix \ref{a: Hodge propagator} for details). This choice corresponds, via (\ref{e:L via K}), to the gauge-fixing subspace (\ref{e:Lgf}). 

Being the integral kernel of the inverse of an elliptic operator (composed with $\dd^*$), the propagator $\eta$ restricts to a smooth form away from the diagonal of $M\times M$. 
If we
define $$C_2^0(M)=\{(x_1,x_2)\in M : x_1\not=x_2\}$$
and denote by $\iota_\DDD$ the inclusion of
$$\DDD:=\{x_1\times x_2\in (\de_1 M\times M)\cup (M\times \de_2 M) : x_1\not=x_2\}$$
into $C_2^0(M)$,
we then have $\eta\in\Omega^{d-1}(C^0_2(M),\DDD)$,\footnote{
In fact, the Hodge-theoretic
propagator outlined above satisfies stronger boundary conditions: ultra-Dirichlet (see Appendix \ref{a:Hodge} for the definition) on $\de_1 M$ in the first argument and ultra-Dirichlet on $\de_2 M$ in the second argument, and also ultra-Neumann on $\de_2 M$ in the first argument and ultra-Neumann on $\de_1 M$ in the second argument, see Section \ref{a: Hodge propagator subsubsec}. The same is true for the propagator constructed in Appendix \ref{a:prop}.
} with
\begin{equation}
\label{e: Omega D}
\Omega^\bullet(C_2^0(M),\DDD)=\{\gamma\in\Omega^\bullet(C_2^0):\iota_\DDD^*\gamma=0\}.
\end{equation}
Its properties 
are defined by the formula
\begin{equation}\label{e:eta}
\eta =\frac1{T_M}\frac{(-1)^{kd}}{\ii\hbar}\int_\calL \EE^{\frac\ii\hbar \Hat\calS_M} \pi_1^*\alpha\,\pi_2^*\beta,
\end{equation}
with
\begin{equation}\label{e:T}
T_M = \int_\calL \EE^{\frac\ii\hbar \Hat\calS_M}.
\end{equation}
In (\ref{e:eta}), we denote by $\pi_1,\pi_2$ the projections from $M\times M$ to its first and second factor, and, by abuse of notations, also the corresponding restricted maps $C_2^0(M)\to M$.

\subsubsection{On $T_M$ and torsions}\label{sec: T and torsions}
First we comment on the Gaussian functional integral (\ref{e:T}) which has to be prescribed a mathematical meaning using an appropriate regularization procedure.

In the case $\de M=\emptyset$ and with forms on $M$ taken with coefficients in an acyclic $O(m)$-local system $E$, Schwarz showed in \cite{SchwBF} that $T_M$, understood via zeta-function regularization, is the Ray--Singer torsion (or its inverse, depending on $k$) of the complex $\Omega^\bullet(M,E)$: $T_M=\tau_{RS}(M,E)^{(-1)^{k-1}}$. In the present case, we should think of it as a generalization to the relative complexes (one relevant model being the complex $\Omega^\bullet_{\Hat\DD1\Hat\NN2}(M)$, cf. Appendix \ref{a:Hodge}).

Since we consider forms on $M$ with trivial coefficients, and the trivial local system is not acyclic, $T_M$ is not a number, but a constant (i.e. not depending on a point in $\calV_M$) complex-valued half-density on $\calV_M$, defined up to a sign:\footnote{For the purposes of this paper we are working with partition functions as defined up to a sign. The problem of fixing this sign is akin to fixing the sign of Reidemeister torsion, which requires the introduction of additional orientation data, cf. \cite{TuraevTor}.}
$$T_M\in \mathbb{C}\otimes \mathrm{Dens}^{\frac12}_\mathrm{const}(\calV_M)/\{\pm 1\}\cong \mathbb{C}\otimes\left(\mathrm{Det}\,H^\bullet_{\DD1}(M)\right)^{(-1)^{k-1}}/\{\pm 1\}$$
where $\mathrm{Det}\, H^\bullet_{\DD1}(M)$ is the determinant line of de Rham cohomology of $M$ relative to $\de_1 M$ and, by convention, for $l$ a line, $l^{-1}=l^*$ is the dual line.
A choice of basis  $\{[\chi_i]\}$ in $H^\bullet_{\DD1}(M)$ induces a trivialization of the determinant line $\phi\colon\mathrm{Det}\,H^\bullet_{\DD1}(M)\xrightarrow{\simeq}\mathbb{R}$, which makes $\phi_*T_M\in \mathbb{C}/\{\pm 1\}$ a number (defined up to sign). Choosing a different basis $\{[\Tilde\chi_i]\}$ in $H^\bullet_{\DD1}(M)$ induces a different trivialization $\Tilde\phi$ of the determinant, and one has the transformation property
$$\Tilde\phi_* T_M = (\mathrm{Ber}\,\theta)^{(-1)^k} \phi_* T_M$$
where $\theta$ is the transformation matrix between the two bases, $[\Tilde\chi_i]=\sum_j\theta^i_j [\chi_j]$
and $\mathrm{Ber}\,\theta$ is it's Berezinian (superdeterminant).

The BV integral (\ref{e:T}) does not depend on the choice of $\calL$ (cf. independence of Ray-Singer torsion on the choice of Riemannian metric).

By comparison with the result of \cite{cell_ab_BF} in the combinatorial setting, $T_M$ is expressed in terms of the Reidemeister torsion $\tau(M,\de_1 M)\in \mathrm{Det}\, H^\bullet_{\DD1}(M)/\{\pm 1\}$ as
\begin{equation}\label{e:T via tau}
T_M=\xi\cdot \tau(M,\de_1 M)^{(-1)^{k-1}}
\end{equation}
where the factor $\xi$, originating in the normalization of the integration measure, compatible with gluing, is
\begin{equation}\label{e:xi}
\xi=(2\pi\hbar)^{\sum_{j=0}^d \left(\frac{(-1)^k}{4}+\frac{1}{2} j (-1)^{j-1}\right)\dim H^j_{\DD1}(M)}\cdot \left(e^{-\frac{\pi i}{2}}\hbar\right)^{\sum_{j=0}^d \left(\frac{-(-1)^k}{4}+\frac{1}{2} j (-1)^{j-1}\right)\dim H^j_{\DD1}(M)}\in \mathbb{C}
\end{equation}
Note that, by Milnor's duality theorem for torsions, (\ref{e:T via tau}) can also be written as $T_M=\xi\cdot \tau(M,\de_2 M)^{(-1)^{d-k}}$.

\begin{Rem} In (\ref{e:T via tau}) we use the Reidemeister torsion. On the other hand, the analytic (Ray-Singer) torsion, as defined via zeta-function regularized determinants of Hodge-de Rham Laplacians, is known to differ from the Reidemeister torsion by the factor $2^{\frac{1}{4}\chi(\de M)}$ with $\chi(\de M)$ the Euler characteristic of the boundary (in the case of a product metric near the boundary), see \cite{Luck,Vishik}. This means that the normalization of the functional integral measure in (\ref{e:T}) corresponding to the zeta-function regularization procedure is not the one compatible with discretization and gluing as in \cite{cell_ab_BF}.
\end{Rem}

\begin{Rem}
To be completely
pedantic, we should also include in $T_M$ the factors $\tau(\de_1 M)^{\frac{(-1)^{k-1}}{2}}$ and $\tau(\de_2 M)^{\frac{(-1)^{d-k}}{2}}$, coming from the fact that $T_M$ is also a constant half-density on boundary fields and identification between half-densities and functions is via multiplication by an appropriate power of torsion, cf. Section \ref{rem: half-densities on an elliptic complex}. Note that, for gluing, these boundary torsion factors coming from the two sides of the gluing interface cancel each other due to the relation $\tau(\Sigma)^{(-1)^{k-1}}\cdot \tau(\Sigma)^{(-1)^{d-k}}=1$ for $\Sigma$ a closed $(d-1)$-manifold, arising from Milnor's duality theorem.
\end{Rem}

\subsubsection{Properties of propagators}
\label{sec: properties of propagators}
For the computations, it is also useful to define
\begin{equation}\label{e:etahat}
\Hat\eta :=\frac1{T_M}\frac{(-1)^{kd}}{\ii\hbar}\int_\calL \EE^{\frac\ii\hbar \Hat\calS_M} \pi_1^*\Hat\sfA\,\pi_2^*\Hat\sfB=
\eta + \frac{(-1)^{kd}}{\ii\hbar}\sum_{ij} z^i\pi_1^*\chi_i\,z^+_j\pi_2^*\chi^j.
\end{equation}
By calculating
$\int_\calL \Delta\left(\EE^{\frac\ii\hbar \Hat\calS_M} \pi_1^*\Hat\sfA\,\pi_2^*\Hat\sfB\right)$ in two different ways (taking $\Delta$ out by the chain map property of BV pushforwards -- Theorem \ref{thm: BV Stokes for pushforward}, or by computing the integrand directly), we get the relation
  $(-1)^d\dd\Hat\eta=\frac\hbar\ii\Delta_{\calV_M}\Hat\eta$, which implies
\begin{equation}\label{e:deta}
\dd\eta = (-1)^{d-1}\sum_i (-1)^{d\cdot\deg\chi_i}\pi_1^*\chi_i\,\pi_2^*\chi^i.
\end{equation}
Notice that in the case $\de M=\emptyset$ the sum defines a representative of the Euler class of $M$.

The other characteristic property of $\eta$ is that its integral on the $(d-1)$\ndash cycle
given by  fixing one of the two arguments in $C_2^0(M)$ and letting the other vary on a small $(d-1)$\ndash sphere centered on the first one is normalized to $\pm1$.\footnote{
More precisely, the integral is $+1$, if we fix the second argument and vary the first. In the opposite case, the integral is $(-1)^d$.
} (As a consequence, if the first point is fixed on the boundary, then either the propagator is identically zero due to boundary conditions (\ref{e: Omega D}), or otherwise the integral over the relative cycle given by second point varying on a small {\it half-sphere} is $\pm 1$.)

Instead of using the Hodge-theoretic propagator of Appendix \ref{a: Hodge propagator}, one can construct a ``soft'' propagator along the lines
of \cite{BC,C,CR}. More precisely, one may use the construction for boundaryless manifolds to produce the propagator for manifolds with boundary by a version of the method of image charges, see Appendix~\ref{a:prop}. The soft propagator does not correspond to the gauge-fixing Lagrangian (\ref{e:Lgf}), but to another one, constructed via (\ref{e:L via K}) for the chain contraction
\begin{equation}\label{e:K via eta}
\begin{array}{cccc}
K_\mathrm{soft}\colon & \Omega^\bullet_{\DD1}(M) &\rightarrow & \Omega^{\bullet-1}_{\DD1}(M) \\
& \alpha &\mapsto & (\pi_1)_*(\eta\wedge\pi_2^*(\alpha))
\end{array}
\end{equation}


\begin{Rem}[Change of data]\label{r:cod}
Notice that, once we have fixed representatives $\chi_i$'s and $\chi^i$'s, still $\eta$ is only defined up to the differential
of a form $\lambda\in\Omega^{(d-2)}(C_2^0(M))$. We may also change the representatives $\chi_i$'s and $\chi^i$'s
by exact forms and also perform a change of basis. The latter corresponds to a linear BV transformation of $\calV_M$.
If we denote the former change by
\begin{subequations}\label{e:chidot}
\begin{align}
\dot\chi_i&=\dd\sigma_i, &\sigma_i&\in\Omega^{\deg\chi_i-1}_{\DD1}(M),\\
\dot\chi^i&=\dd\sigma^i, & \sigma^i&\in\Omega^{\deg\chi^i-1}_{\DD2}(M),
\end{align}
\end{subequations}
then we get
\begin{equation}\label{e:etadot}
\dot\eta=\dd\lambda + (-1)^{d-1}\sum_i (-1)^{d\cdot\deg\chi_i}\pi_1^*\sigma_i\,\pi_2^*\chi^i
+(-1)^{d-1}\sum_i (-1)^{(d-1)\cdot\deg\chi_i}\pi_1^*\chi_i\,\pi_2^*\sigma^i.
\end{equation}
Cf.~the classification of infinitesimal deformations of gauge-fixing data for BV pushforwards into types I, II, III in \cite{CMcs}.
\end{Rem}

\begin{Rem}
To study the properties of Feynman diagrams in theories that are perturbations of abelian $BF$ theories,
it is useful to consider the ASFM compactifications of configuration spaces \cite{AS,FM}. The propagator, see Appendix~\ref{a:prop}, extends to the compactification $C_2(M)$, which is a smooth manifold with corners, as a smooth form.
\end{Rem}

\begin{Rem} For $M$ closed, the Hodge propagator of Appendix \ref{a: Hodge propagator} has the property
\begin{equation}\label{e:T-sym}
T^*\eta=(-1)^d\eta
\end{equation}
where the map $T:C_2(M)\to C_2(M)$ sends $(x_1,x_2)$ to $(x_2,x_1)$, which corresponds to the chain contraction $K$ being skew self-adjoint. If $M$ has boundary, one has instead
\begin{equation}\label{e:T-sym with bdry}
T^*\eta=(-1)^d\eta^\mr{op}
\end{equation}
where $\eta^\mr{op}$ stands for the propagator (corresponding to the same metric on $M$) with opposite boundary conditions. For soft propagators, see Appendix \ref{a:prop}, this $T$-symmetry property is not automatic but can always be achieved. In Section \ref{ss:dbling} 
we explain how to recover this property that might have been spoiled by the gluing procedure. Another property that is automatic for the Hodge propagator is
\begin{equation}\label{e:eta eta =0}
(\pi_2)_*(\pi_{12}^*\eta\wedge \pi_{23}^*\eta)=0
\end{equation}
where $\pi_{12},\pi_{23}\colon C_3(M)\to C_2(M)$ are the projections induced from taking the first or the last pair of points in a triple $(x_1,x_2,x_3)$ and $\pi_2: C_3(M)\to M$ takes the middle point in a triple. Property (\ref{e:eta eta =0}) corresponds the property $K^2=0$ of the Hodge chain contraction. Properties (\ref{e:T-sym},\ref{e:T-sym with bdry}) and (\ref{e:eta eta =0}) are useful for simplifications in perturbation theory, but our treatment does not rely on having them.
\end{Rem}

\subsection{Choosing the extensions}\label{ss:ext}
Let us choose the extensions $\Tilde\bA$ and $\Tilde\bB$ of the boundary values $\bA$ and $\bB$ in such a way that the extension $\Tilde\bA$
has support in a neighborhood $\calN_1$ of $\de_1M$ and the extension $\Tilde\bB$ has support in a neighborhood $\calN_2$ of $\de_2M$ with $\calN_1\cap\calN_2=\emptyset$. Then (\ref{e:SPAB}) becomes
\begin{equation}\label{e:S in splitting}
\calS^\calP_M=\int_M \left( \tbB\,\dd\Hat\sfA + \Hat\sfB\,\dd\tbA + \Hat\sfB\,\dd \Hat\sfA \right)
+(-1)^{d-k}\int_{\de_2 M}\bB\Hat\sfA
\end{equation}
and the BV odd-symplectic form (\ref{e:omega BF}) becomes
$$\omega_M=\int_M \left(\delta\Hat\sfB\; \delta\Hat\sfA+\delta\tbB\;\delta\Hat\sfA+\delta\Hat\sfA\;\delta\tbA\right).$$
From the latter equation, we see that, in order to comply with Assumption \ref{assump: omega compatible with splitting},
we are forced to choose the
discontinuous extension in which $\tbA$ and $\tbB$ drop to to zero immediately outside the boundary (cf. Remark \ref{rem: extension}) -- only then does $\omega_M$ become independent of the boundary fields $\bA,\bB$ and attain the form $\omega_M=\int_M \delta\Hat\sfB\;\delta\Hat\sfA$.
The de Rham differential of $\tbA$ in (\ref{e:S in splitting}) is not defined, but this
problem is easily remedied if we integrate by parts
\[
\calS^\calP_M=\int_M \left( \tbB\,\dd\Hat\sfA + (-1)^{d-k}\dd\Hat\sfB\,\tbA + \Hat\sfB\,\dd \Hat\sfA \right)
+(-1)^{d-k}\left(\int_{\de_2 M}\bB\Hat\sfA -\int_{\de_1 M}\Hat\sfB\bA\right).
\]
The action for the discontinuous extension is then simply
\begin{equation}\label{e:action}
\calS^\calP_M=\int_M  \Hat\sfB\,\dd \Hat\sfA
+(-1)^{d-k}\left(\int_{\de_2 M}\bB\Hat\sfA -\int_{\de_1 M}\Hat\sfB\bA\right).
\end{equation}

Thus, with discontinuous extension of boundary fields, Assumption \ref{assump: omega compatible with splitting} and equation (\ref{e:bp}) are satisfied. On the other hand, if we would have chosen a generic extension, the formalism of Section \ref{sec: BV pushforward in families} would not apply, and we would produce partition functions that are not guaranteed to satisfy mQME and may change uncontrollably under a change of gauge-fixing.

\subsection{The state}\label{ss:state}
Using the splitting \eqref{e:splitting}, we may rewrite \eqref{e:action} as
the sum of the quadratic part in fluctuations, the residual part and the source term:
\[
\calS^\calP_M =\Hat S_M + \calS_M^\vac + \calS_M^\text{source},
\]
with
\begin{align*}
\Hat\calS_M&=\int_M\beta\,\dd\alpha,\\
\calS_M^\vac&=(-1)^{d-k}\left(\int_{\de_2 M}\bB\sfa -\int_{\de_1 M}\sfb\bA\right),\\
\calS_M^\text{source}&=
(-1)^{d-k}\left(\int_{\de_2 M}\bB\alpha -\int_{\de_1 M}\beta\bA\right).
\end{align*}
To compute the state we just have to perform the Gaussian integral over the fluctuations $\alpha$ and $\beta$.
Using the notations of Section~\ref{ss:prop}, we get
\begin{equation}\label{e:state}
\Hat\psi_M= T_M\, \EE^{\frac\ii\hbar\calS^\text{eff}_M},
\end{equation}
with the \textsf{effective action}
\begin{equation}\label{e:effS}
\calS^\text{eff}_M=
(-1)^{d-k}\left(\int_{\de_2 M}\bB\sfa -\int_{\de_1 M}\sfb\bA\right)
-(-1)^{d+kd}\int_{\de_2M\times\de_1M}\pi_1^*\bB\,\eta\,\pi_2^*\bA.
\end{equation}
By \eqref{e:Delta} and \eqref{e:deta}, we immediately see that $\Hat\psi_M$ satisfies the mQME \eqref{e:mQME}
with $\Hat\Delta^\calP_M$  given by
$\Delta_{\calV_M}$ acting on the fibers of $\calZ_M=\calV_M\times\calB_M^\calP$
and
with $\Hat\Omega^\calP_M$ the standard quantization of $\calS^\de_{\de M}$ relative to the chosen polarization, acting
on the base of $\calZ_M$:
\[
\Hat\Omega^\calP_M= \ii\hbar (-1)^d \left(
\int_{\de_2M}
\dd\bB\,\frac{\delta}{\delta\bB}  + \int_{\de_1M}\dd\bA\,\frac{\delta}{\delta\bA}\right).
\]

\begin{Rem}[Change of data]
Under the change of data \eqref{e:chidot} and \eqref{e:etadot}, the operator $\Hat\Omega^\calP_M$ does not change,
whereas the state $\Hat\psi_M$ changes as in Remark~\ref{r:psidot} with $\tau=0$ and
$\Hat\chi=
\Hat\psi_M\cdot
\Hat\zeta$ with
\begin{multline*}
\Hat\zeta= \left(\frac\ii\hbar\right)^2 \Big(
\sum_i(-1)^{\deg z^i}\int_{\de_2M}\bB z^i\sigma_i-
\sum_i(-1)^{d-k-\deg z^i}\int_{\de_1M}z^+_i\sigma^i\bA+\\
+(-1)^{d-k+kd}\int_{\de_2M\times\de_1M}\pi_1^*\bB\,\lambda\,\pi_2^*\bA\Big).
\end{multline*}
\end{Rem}

\subsubsection{The space of states}\label{sss:ss}
What is left to describe is the space of states $\Hat \calH^\calP_M$. To do this we first introduce the following vector spaces associated to a $(d-1)$\ndash manifold. For an integer $l$ and a nonnegative integer $n$,
we define $\calH_{\Sigma,l}^n$
as the vector space of $n$\ndash linear functionals on $\Omega^\bullet(\Sigma)[l]$ of the form
\[
\Omega^\bullet(\Sigma)[l]\ni\bD\mapsto
\int_{
\left.\Sigma\right.^n
}\gamma\,\pi_1^*\bD\dots\pi_n^*\bD,
\]
multiplied by ${\tau(\Sigma)}^{\frac{(-1)^{l-1}}{2}}$ (cf. Section \ref{rem: half-densities on an elliptic complex}).
Here $\gamma$
is a distributional form on
$\Sigma^n$;
$\tau(\Sigma)$ is the Reidemeister torsion of $\Sigma$. We then define
\[
\calH^\calP_{\de M}=\prod_{n_1,n_2=0}^\infty
\calH_{\de_2M,d-k-1}^{n_2}
\,\Hat\otimes\,
\calH_{\de_1M,k}^{n_1}
\]
and
\[
\Hat\calH^\calP_M=\calH^\calP_{\de M}\,\Hat\otimes\, \HDens(\calV_M).
\]
This is our model for the space of half-densities on $\calZ_M$.
In this description states are regarded as families in the parameter $\hbar$. Perturbative calculations of partition functions and expectation values of observables for (possibly perturbed) $BF$ theory yield asymptotic states
of the form $$T_M\cdot \EE^{\frac\ii\hbar \calS_M^\mathrm{eff}}\cdot\sum_{j\geq 0}\hbar^j\sum_{n_1,n_2\geq 0}\int_{(\de_1 M)^{n_1}\times (\de_2 M)^{n_2}}R^j_{n_1 n_2}(\sfa,\sfb)\; \pi_{1,1}^*\bA\cdots\pi_{1,n_1}^*\bA\; \pi_{2,1}^*\bB\cdots \pi_{2,n_2}^*\bB$$
where the coefficients $R^j_{n_1n_2}(\sfa,\sfb)$ are distributional forms on $(\de_1 M)^{n_1}\times (\de_2 M)^{n_2}$ with values in half-densities on $\calV_M$. Here $T_M$ is as in (\ref{e:T via tau}), whereas $\calS^\mathrm{eff}_M$ should, in the case of a perturbed $BF$ theory, be replaced by the corresponding zero-loop effective action.

We will compute some examples of states arising as expectation values of observables
in Section~\ref{ss:expv}.


\subsection{Gluing}\label{ss:gluing}
Suppose two manifolds with boundary $M_1$ and $M_2$ have a common boundary component $\Sigma$
($\Sigma\subset\de M_1$ and $\Sigma^\text{opp}\subset\de M_2$, where $\Sigma^\text{opp}$ denotes
$\Sigma$ with the opposite orientation). We want to get the state $\Tilde\psi_M$ for the glued manifold
$M=M_1\cup_\Sigma M_2$ by pairing the states $\Hat\psi_{M_1}$ and $\Hat\psi_{M_2}$.
(More precisely, we start from a manifold with boundary $M$ and cut it along a codimension-one submanifold $\Sigma$ into two manifolds with boundary $M_1$ and $M_2$.)

This pairing is better suited to functional integral computations if we choose transverse polarizations on
$\calF^\de_\Sigma$ viewed as a space of boundary fields coming from $M_1$ or $M_2$. More precisely,
we fix the boundary decompositions
$\de M_1=\de_1 M_1\sqcup \de_2 M_1$ and $\de M_2=\de_1 M_2\sqcup \de_2 M_2$
in such a way that $\Sigma\subset\de_1 M_1$ and $\Sigma^\text{opp}\subset\de_2M_2$. Denoting by $\bA_1^\Sigma$ and $\bB_2^\Sigma$ the coordinates on $\Omega^\bullet(\Sigma)[k]$ and $\Omega^\bullet(\Sigma)[d-k-1]$, respectively,
we get
\begin{equation}\label{e:gluedpsi}
\Tilde\psi_M=\int_{\bA_1^\Sigma,\bB_2^\Sigma}
\EE^{\frac\ii\hbar(-1)^{d-k}\int_\Sigma\bB_2^\Sigma\bA_1^\Sigma}\,\Hat\psi_{M_1}\,\Hat\psi_{M_2}
\end{equation}
as a half-density 
on $\Tilde\calZ_M=\Tilde\calV_M\times\calB^\calP_{\de M}$,
with $\Tilde\calV_M=\calV_{M_1}\times\calV_{M_2}$. Notice that we have
$\de_1M=(\de_1M_1\setminus\Sigma)\cup\de_1M_2$, $\de_2M=\de_2M_1\cup(\de_2M_2\setminus\Sigma)$ and
\begin{multline*}
\calB^\calP_{\de M}=\\
\Omega^\bullet(\de_1M_1\setminus\Sigma)[k]\oplus\Omega^\bullet(\de_2M_1)[d-k-1]
\oplus\Omega^\bullet(\de_1M_2)[k]\oplus\Omega^\bullet(\de_2M_2\setminus\Sigma)[d-k-1]\\
\ni \bA_1'\oplus\bB_1\oplus\bA_2\oplus\bB_2'.
\end{multline*}
The integral may be explicitly computed and yields
\[
\Tilde\psi_M = T_{M_1}\,T_{M_2}\,\EE^{\frac\ii\hbar \Tilde\calS^\text{eff}_{M}}
\]
with
\begin{multline*}
\Tilde\calS^\text{eff}_{M}=
-(-1)^{d-k}\int_\Sigma\sfb_1\sfa_2
+(-1)^{d+kd}\int_{\Sigma\times\de_1M_2}\pi_1^*\sfb_1\,\eta_2\,\pi_2^*\bA_2
-(-1)^{d+kd}\int_{\de_2M_1\times\Sigma}\pi_1^*\bB_1\,\eta_1\pi_2^*\sfa_2-\\
-(-1)^{kd}\int_{\de_2M_1\times\Sigma\times\de_1M_2}
\varpi_1^*\bB_1\,p_1^*\eta_1\,p_2^*\eta_2\,\varpi_3^*\bA_2+\\
+(-1)^{d-k}\left(\int_{\de_2 M_2\setminus\Sigma}\bB_2'\sfa_2 +\int_{\de_2 M_1}\bB_1\sfa_1-
\int_{\de_1 M_2}\sfb_2\bA_2 - \int_{\de_1 M_1\setminus\Sigma}\sfb_1\bA_1'
\right)-\\
-(-1)^{d+kd}\left(\int_{\de_2M_1\times(\de_1M_1\setminus\Sigma)}\pi_1^*\bB_1\,\eta_1\,\pi_2^*\bA_1'
+\int_{(\de_2M_2\setminus\Sigma)\times\de_1M_2}\pi_1^*\bB_2'\,\eta_2\,\pi_2^*\bA_2\right),
\end{multline*}
where $\sfa_i$ and $\sfb_i$, $i=1,2$, are the corresponding $\sfa$ and $\sfb$ variables on $M_i$, and
$\eta_i$ denotes the propagator for $M_i$. In the fourth contribution we also used pullbacks by the following
projections:
\begin{alignat*}{2}
\varpi_1&\colon& \de_2M_1\times\Sigma\times\de_1M_2 &\mapsto \de_2M_1\\
\varpi_3&\colon& \de_2M_1\times\Sigma\times\de_1M_2 &\mapsto \de_1M_2\\
p_1&\colon& \de_2M_1\times\Sigma\times\de_1M_2 &\mapsto \de_2M_1\times\Sigma\\
p_2&\colon& \de_2M_1\times\Sigma\times\de_1M_2 &\mapsto \Sigma\times\de_1M_2.
\end{alignat*}

The propagator $\Tilde\eta$ on $M$ can also be obtained by pairing the states on $M_1$ and $M_2$, see Section~\ref{ss:tildeprop}.

\subsubsection{Reducing the \pseudovacua}\label{ss:reducing backgrounds}
We now wish to reduce the space of {\pseudovacua} by integrating out those appearing in the term
$ \int_\Sigma\sfb_1\sfa_2$. We will refer to them as redshirt \pseudovacua.
More precisely, let
\begin{alignat*}{2}
\tau_1&\colon H^\bullet_{\DD2}(M_1)&\to H^\bullet(\Sigma)\\
\tau_2&\colon H^\bullet_{\DD1}(M_2)&\to H^\bullet(\Sigma)
\end{alignat*}
be the restriction maps induced by the inclusion of $\Sigma$ into $M_1$ and $M_2$.
We denote by $L_1$ ($L_2$) the image of $\tau_1$ ($\tau_2$). We now choose sections
\begin{alignat*}{2}
\sigma_1&\colon L_1 &\to H^\bullet_{\DD2}(M_1)\\
\sigma_2&\colon L_2 &\to H^\bullet_{\DD1}(M_2)
\end{alignat*}
of $\tau_1$ and $\tau_2$.
We will also need the orthogonal complements $L_1^\perp,L_2^\perp \subset H^\bullet(\Sigma)$ with respect to the Poincar\'e pairing on $H^\bullet(\Sigma)$. By Lefschetz duality, $L_i^\perp$ is the image of  $H^\bt(M_i,\de_i M_i\backslash\Sigma)$ in $H^\bullet(\Sigma)$ for $i=1,2$.\footnote{\label{footnote: L^perp}
Indeed, for $[\gamma]\in H^j(\Sigma)$ and $[\alpha]\in H^{d-1-j}_{\DD2}(M_1)$, we have $\langle [\gamma],\tau_1[\alpha] \rangle_\Sigma=\langle B_1[\gamma], [\alpha] \rangle$ where  $\langle,\rangle_\Sigma$ is the Poincar\'e pairing on $H^\bullet(\Sigma)$ and  $\langle,\rangle$ is the Lefschetz pairing between $H^{j+1}_{\DD1}(M_1)$ and $H^{d-1-j}_{\DD2}(M_1)$; $B_1$ is a map in the long exact sequence
$\cdots\rightarrow H^\bt(M_1,\de_1 M_1\backslash\Sigma)\xrightarrow{r_1}H^\bullet(\Sigma)\xrightarrow{B_1} H^{\bullet+1}_{\DD1}(M_1)\rightarrow \cdots$. Therefore, due to nondegeneracy of Lefschetz pairing, $L_1^\perp=\ker B_1=\mathrm{im}(r_1)$. Case of $L_2^\perp$ is treated similarly.
}

Next,
we choose a complement $L_1^\times$ of $L_1\cap L_2^\perp$ in $L_1$ and a complement $L_2^\times$ of $L_1^\perp\cap L_2$ in $L_2$.
Finally, 
denoting
$H^{\bullet}_{\DD2}(M_1)^\#=\ker \tau_1$ and $H^\bullet_{\DD1}(M_2)^\#=\ker\tau_2$,
we end up with the decompositions
\begin{align*}
H^\bullet_{\DD2}(M_1) &= \sigma_1(L_1\cap L_2^\perp)\oplus \sigma_1(L_1^\times) \oplus H^{\bullet}_{\DD2}(M_1)^\#\\
H^\bullet_{\DD1}(M_2) &= \sigma_2(L_1^\perp\cap L_2)\oplus \sigma_2(L_2^\times) \oplus H^{\bullet}_{\DD1}(M_2)^\#
\end{align*}
We use the notations $\sfb_1=\sfb_1^\cap+\sfb_1^\times+\sfb_1^\#$ and
$\sfa_2=\sfa_2^\cap+\sfa_2^\times+\sfa_2^\#$ for the corresponding decompositions of the
{\backgrounds}.
To fix notations for the following, we set
\begin{align*}
H^\bullet_{\DD2}(M_1)' &= \sigma_1(L_1\cap L_2^\perp) \oplus H^{\bullet}_{\DD2}(M_1)^\#  = \tau_1^{-1}(L_1\cap L_2^\perp)\\
H^\bullet_{\DD1}(M_2)' &= \sigma_2(L_1^\perp\cap L_2) \oplus H^{\bullet}_{\DD1}(M_2)^\#  = \tau_2^{-1}(L_1^\perp\cap L_2)
\end{align*}
\begin{align*}
H^\bullet_{\DD1}(M_1)^\circ &= (\sigma_1(L_1\cap L_2^\perp))^*\oplus (H^{\bullet}_{\DD2}(M_1)^\#)^* \subset
H^\bullet_{\DD1}(M_1)=(H^\bullet_{\DD2}(M_1) )^*\\
H^\bullet_{\DD2}(M_2)^\circ &= (\sigma_2(L_1^\perp\cap L_2))^*\oplus (H^{\bullet}_{\DD1}(M_2)^\#)^* \subset
H^\bullet_{\DD2}(M_2)=(H^\bullet_{\DD1}(M_2) )^*
\end{align*}
and
\begin{align*}
\Tilde H^\bullet_{\DD1}(M_1,M_2) &:= H^\bullet_{\DD1}(M_1)^\circ \oplus H^\bullet_{\DD1}(M_2)' \\
\Tilde H^\bullet_{\DD2}(M_1,M_2) &:= H^\bullet_{\DD2}(M_1)' \oplus H^\bullet_{\DD2}(M_2)^\circ
\end{align*}
Notice that classes in $\sigma_1(L_1\cap L_2^\perp)$ and $\sigma_2(L_1^\perp\cap L_2)$ can be extended to the other manifold. The other summands in
the $\Tilde H$'s contain classes that restrict to zero on $\Sigma$ and which can then also be extended. Thus, we get
maps
\begin{subequations}\label{e:h}
\begin{alignat}{2}
h_1 & \colon \Tilde H^\bullet_{\DD1}(M_1,M_2) &\to H^\bullet_{\DD1}(M)\\
h_2 & \colon \Tilde H^\bullet_{\DD2}(M_1,M_2) &\to H^\bullet_{\DD2}(M)
\end{alignat}
\end{subequations}
We will return to this in Section~\ref{ss:redpv}, where we will prove that $h_1$ and $h_2$ are isomorphisms.

Notice that 
we have
$\int_\Sigma\sfb_1\sfa_2=\int_\Sigma\sfb_1^\times\sfa_2^\times$. By writing
\begin{align*}
\sfb_1^\times &= z^{+\times}_{1i}\,\chi^i_{1\times},\\
\sfa_2^\times &= z^i_{2\times}\,\chi^\times_{2i},
\end{align*}
with $\{[\chi^i_{1\times}]\}$ a basis of $\sigma_1(L_1^\times)$ and
$\{[\chi^\times_{2i}]\}$ a basis of $\sigma_2(L_2^\times)$, we also get
\[
\int_\Sigma\sfb_1\sfa_2= (-1)^{k\cdot\deg\chi_{2\times}^i}\;z^{+\times}_{1i}\, z^j_{2\times}\,\Lambda^i_j
\]
with
\begin{equation}\label{e:Lambda}
\Lambda^i_j =
\int_\Sigma \chi^\times_{2j}\,\chi^i_{1\times}.
\end{equation}
Note that the matrix $\Lambda$ is invertible.\footnote{
This is equivalent to the nondegeneracy of the restriction of Poincar\'e pairing  on $H^\bullet(\Sigma)$ to $L_1^\times \otimes L_2^\times\rightarrow \mathbb{R}$. To prove the latter, assume the opposite, i.e. that there is a nonzero $[\alpha]\in L_1^\times$ such that for any $[\beta]\in L_2^\times$, one has $\langle [\alpha],[\beta] \rangle_\Sigma= 0$. Then $[\alpha]$ is orthogonal to the whole $L_2$, since $[\alpha]$ being in $L_1$ is certainly orthogonal to $L_1^\perp\cap L_2$. Hence $[\alpha]\in L_1\cap L_2^\perp$, which is a contradiction to $[\alpha]\in L_1^\times$. Thus we have shown that the left kernel of the pairing $L_1^\times \otimes L_2^\times\rightarrow \mathbb{R}$ vanishes. Vanishing of the right kernel is shown similarly, which finishes the proof of nondegeneracy.
}

We now reduce the space of {\pseudovacua} by integrating over the zero section $\calL^\times$ of
$
T^*[-1](\sigma_1(L_1^\times)[d-k-1]\oplus\sigma_2(L_2^\times)[k])$.
Namely, we integrate out all the $z^{+\times}_{1i}$ and $z^i_{2\times}$ coordinates, the redshirt \pseudovacua, and set
their canonically conjugate variables to zero.
This way we obtain the state
\[
\Check\psi_M = \int_{\calL^\times} \Tilde\psi_M
\]
as a function on $\Check\calZ_M=\Check\calV_M\times\calB^\calP_{\de M}$ with
\begin{equation}\label{e:checkV}
\Check\calV_M = \Tilde H^\bullet_{\DD1}(M_1,M_2)[k]\oplus \Tilde H^\bullet_{\DD2}(M_1,M_2)[d-k-1].
\end{equation}
We denote by $\Check\sfa_1$, $\Check\sfa_2=\sfa_2^\cap+\sfa_2^\#$,
$\Check\sfb_1= \sfb_1^\cap+\sfb_1^\#$ and $\Check\sfb_2$ the corresponding variables. We represent them as $(\dd,\dd^*)$-closed differential forms on $M_1$, $M_2$ 
with appropriate Dirichlet/Neumann boundary conditions, as in (\ref{e:HHHarm},\ref{e:HarmhHarm}).

The integral over $\calL^\times$ can be easily computed and yields
\[
\Check\psi_M = \Check T_M\,\EE^{\frac\ii\hbar\Check\calS^\text{eff}_{M}}
\]
with
\begin{equation}\label{e:T glued}
\Check T_M =\Xi\cdot\frac{T_{M_1}T_{M_2}}{\Ber\Lambda},
\end{equation}
where $\Ber\Lambda$ denotes the Berezinian of $\Lambda$,
and
\begin{multline*}
\Check\calS^\text{eff}_{M}=
(-1)^{d+kd}\left(\int_{\Sigma\times\de_1M_2}\pi_1^*\Check\sfb_1\,\eta_2\,\pi_2^*\bA_2
-\int_{\de_2M_1\times\Sigma}\pi_1^*\bB_1\,\eta_1\pi_2^*\Check\sfa_2\right)-\\
-(-1)^{kd}\int_{\de_2M_1\times\Sigma\times\de_1M_2}
\varpi_1^*\bB_1\,p_1^*\eta_1\,p_2^*\eta_2\,\varpi_3^*\bA_2+\\
+(-1)^{d-k}\left(\int_{\de_2 M_2\setminus\Sigma}\bB_2'\Check\sfa_2 +\int_{\de_2 M_1}\bB_1\Check\sfa_1
-
\int_{\de_1 M_2}\Check\sfb_2
\bA_2 - \int_{\de_1 M_1\setminus\Sigma}\Check\sfb_1\bA_1'
\right)-\\
-(-1)^{d+kd}\left(\int_{\de_2M_1\times(\de_1M_1\setminus\Sigma)}\pi_1^*\bB_1\,\eta_1\,\pi_2^*\bA_1'
+\int_{(\de_2M_2\setminus\Sigma)\times\de_1M_2}\pi_1^*\bB_2'\,\eta_2\,\pi_2^*\bA_2\right)-\\
-\sum_{ij}
(-1)^{d+kd+\deg\chi^\times_{2i}}
V^i_j\,\,
\left(
\int_{\de_2M_1\times\Sigma}\pi_1^*\bB_1\,\eta_1\,\pi_2^*\chi^\times_{2i}-
(-1)^{k+kd}\int_{\de_2 M_2\setminus\Sigma}\bB_2'\chi^\times_{2i}\right)\cdot\\
\cdot\left(
\int_{\Sigma\times\de_1M_2}\pi_1^*\chi^j_{1\times}\,\eta_2\,\pi_2^*\bA_2+
(-1)^{d+k+kd+(d+1)\cdot\deg\chi^\times_{2i}}
\int_{\de_1 M_1\setminus\Sigma}\chi^j_{1\times}\bA_1'
\right).
\end{multline*}
Here we denoted by $V$ the inverse of the matrix $\Lambda$ defined in \eqref{e:Lambda}.

The factor
\begin{equation}\label{e:Xi}
\Xi=(2\pi i)^{\frac12 \dim (\calL^\times)^\mathrm{even}}\cdot \left(\frac{i}{\hbar}\right)^{\frac12 \dim (\calL^\times)^\mathrm{odd}}
=\frac{\xi_M}{\xi_{M_1}\xi_{M_2}} \quad \in \mathbb{C}
\end{equation}
with $\xi$ as in (\ref{e:xi}) appears in (\ref{e:T glued}) because of the $2\pi$, $i$ and $\hbar$ factors coming from the Gaussian integral over a superspace. (The last equality in (\ref{e:Xi}) is non-obvious; we refer the reader to \cite{cell_ab_BF} for details).

From now on we will denote the boundary fields on $M$ by $\Check\bA$ and $\Check\bB$.
The restriction of $\Check\bA$ to $\de_1M_1\setminus\Sigma$ is what we denoted so far by $\bA_1'$, whereas
the restriction of $\Check\bA$ to $\de_1 M_2$ is what we denoted so far by $\bA_2$.
Similarly, restriction of $\Check\bB$ to $\de_2M_1$ is what we denoted so far by $\bB_1$, whereas
the restriction of $\Check\bB$ to $\de_2 M_2\setminus\Sigma$ is what we denoted so far by $\bB_2'$.

For the {\pseudovacua} we will adopt the collective notation $\Check\sfa$ and $\Check\sfb$.
The restriction of $\Check\sfa$ to $M_2$ is what we denoted so far by $\Check\sfa_2$. On the other hand,
the restriction of $\Check\sfa$ to $M_1$ is the sum $\Check\sfa_1 + \sfa_2^\text{ext}$. The extension
$\sfa_2^\text{ext}$ of $\Check\sfa_2$ to $M_1$ is defined by
\[
\int_{M_1}\gamma\sfa_2^\text{ext}=(-1)^{d+(d-1)\cdot\deg\gamma}\int_{M_1\times\Sigma}
\pi_1^*\gamma\,\eta_1\,\pi_2^*\Check\sfa_2=(-1)^{d+(d-1)\cdot\deg\gamma}\int_{M_1\times\Sigma}
\pi_1^*\gamma\,\eta_1\,\pi_2^*\sfa_2^\cap,
\]
where $\gamma$ is a form on $M_1$.
Similarly,
the restriction of $\Check\sfb$ to $M_1$ is what we denoted so far by $\Check\sfb_1$. On the other hand,
the restriction of $\Check\sfb$ to $M_2$ is the sum $\Check\sfb_2 + \sfb_1^\text{ext}$. The extension
$\sfb_1^\text{ext}$ of $\Check\sfb_1$ to $M_2$ is defined by
\[
\int_{M_2}\sfb_1^\text{ext}\mu=(-1)^{d+k+kd}\int_{\Sigma\times M_2}
\pi_1^*\Check\sfb_1\,\eta_2\,\pi_2^*\mu=(-1)^{d+k+kd}\int_{\Sigma\times M_2}
\pi_1^*\sfb_1^\cap\,\eta_2\,\pi_2^*\mu,
\]
where $\mu$ is a form on $M_2$.

With these notations and with the explicit form for the glued propagator $\Check\eta$ of
Appendix~\ref{We now do the final step in computing the propagator on $M$ for the reduced  space of pseudovacua},
we finally
get
\[
\Check\calS^\text{eff}_{M}=
(-1)^{d-k}\left(\int_{\de_2 M}\Check\bB\Check\sfa -\int_{\de_1 M}\Check\sfb\Check\bA\right)
-(-1)^{d+kd}\int_{\de_2M\times\de_1M}\pi_1^*\Check\bB\,\Check\eta\,\pi_2^*\Check\bA,
\]
which, upon the change of notations, coincides with the one in \eqref{e:effS}.

Observe that $\Check T_M$ is equal to $T_M$, by the gluing properties of Reidemeister torsions (cf. e.g. \cite{Milnor}).
This implies $\Check\psi_M=\Hat\psi_M$.

\begin{Rem}\label{rem:a ext, b ext}
Residual fields $\Check{a}$, $\Check{b}$, as constructed above, are represented by closed forms on $M$ which are smooth away from $\Sigma$ but generally discontinuous through $\Sigma\subset M$; however they have a well-defined smooth pullback to $\Sigma$.
\end{Rem}

\begin{Rem}
Representatives of the cohomology $H^\bullet_{\DD1}(M)$, $H^\bullet_{\DD2}(M)$ constructed via the extension defined above are exactly the ones appearing in the differential of the glued propagator of Appendix~\ref{We now do the final step in computing the propagator on $M$ for the reduced  space of pseudovacua}, as in (\ref{e:deta}). This can be checked either by a brute force calculation, or, more concisely, via homological perturbation theory (see \cite{cell_ab_BF}).
\end{Rem}



\subsubsection{The reduced space of \pseudovacua}\label{ss:redpv}
\begin{Lem}
Maps $h_1,h_2$ defined in (\ref{e:h}) are isomorphisms.
\end{Lem}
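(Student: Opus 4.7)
The plan is to prove that $h_1$ is an isomorphism by constructing a two-sided inverse via restriction, using the relative Mayer--Vietoris sequence and Lefschetz duality on $M_1$; the argument for $h_2$ is obtained by swapping the roles of $\DD1\leftrightarrow\DD2$ and $M_1\leftrightarrow M_2$.

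First I would write down the relative Mayer--Vietoris long exact sequence for $M=M_1\cup_\Sigma M_2$ with respect to $\de_1 M=(\de_1 M_1\setminus\Sigma)\sqcup\de_1 M_2$:
\[
\cdots\to H^{\bullet-1}(\Sigma)\xrightarrow{\delta}H^\bullet_{\DD1}(M)\xrightarrow{\rho}H^\bullet(M_1,\de_1 M_1\setminus\Sigma)\oplus H^\bullet_{\DD1}(M_2)\xrightarrow{\phi}H^\bullet(\Sigma)\to\cdots,
\]
where $\phi(a,b)=a|_\Sigma-b|_\Sigma$. By the footnote on $L_i^\perp$, the restriction $r_1\colon H^\bullet(M_1,\de_1 M_1\setminus\Sigma)\to H^\bullet(\Sigma)$ has image $L_1^\perp$, so $\operatorname{im}\phi=L_1^\perp+L_2$; by the nondegeneracy of the Poincar\'e pairing on $\Sigma$ the cokernel of $\phi$ in degree $\bullet-1$ is dual to $L_1\cap L_2^\perp$ in degree $d-\bullet$, which matches exactly the $(\sigma_1(L_1\cap L_2^\perp))^*$-summand of $H^\bullet_{\DD1}(M_1)^\circ$ under the Lefschetz isomorphism $H^\bullet_{\DD1}(M_1)\cong H^{d-\bullet}_{\DD2}(M_1)^*$.

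Second I would construct the candidate inverse $g_1$ by restriction. Given $[\gamma]\in H^\bullet_{\DD1}(M)$, the class $[\gamma|_{M_2}]\in H^\bullet_{\DD1}(M_2)$ is well-defined; since $\gamma|_\Sigma=r_1([\gamma|_{M_1}])\in L_1^\perp$ and $L_2^\times\cap L_1^\perp=\{0\}$, the decomposition $H^\bullet_{\DD1}(M_2)=\sigma_2(L_1^\perp\cap L_2)\oplus\sigma_2(L_2^\times)\oplus H^\bullet_{\DD1}(M_2)^\#$ forces the $\sigma_2(L_2^\times)$-component to vanish, placing $[\gamma|_{M_2}]\in H^\bullet_{\DD1}(M_2)'$. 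For the $M_1$-component, subtract from $[\gamma|_{M_1}]$ the extension to $M_1$ of the $\sigma_2(L_1^\perp\cap L_2)$-part of $\gamma|_{M_2}$; the remainder lies in $\ker r_1=\iota(H^\bullet_{\DD1}(M_1))$, where $\iota\colon H^\bullet_{\DD1}(M_1)\to H^\bullet(M_1,\de_1 M_1\setminus\Sigma)$ comes from the LES of the triple $(M_1,\de_1 M_1,\de_1 M_1\setminus\Sigma)$. A representative modulo $\ker\iota=\operatorname{im}\bigl(H^{\bullet-1}(\Sigma)\to H^\bullet_{\DD1}(M_1)\bigr)$ can then be pinned down to lie in $H^\bullet_{\DD1}(M_1)^\circ=\operatorname{ann}\sigma_1(L_1^\times)$, giving the first component of $g_1([\gamma])$.

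Third I would verify $g_1\circ h_1=\operatorname{id}$ by a direct check on each of the four summands of $\Tilde H^\bullet_{\DD1}(M_1,M_2)$, using the defining extension rules (extend by zero from $H^\bullet_{\DD1}(M_1)^\circ$, extend along $L_1^\perp$ for $\sigma_2(L_1^\perp\cap L_2)$, extend by zero for $H^\bullet_{\DD1}(M_2)^\#$). Conversely, $h_1\circ g_1=\operatorname{id}$ follows because reassembling the restrictions yields a representative differing from $\gamma$ by a form vanishing on $\de_1 M\cup\Sigma$ whose restrictions to $M_1$ and $M_2$ are exact, and by Mayer--Vietoris (running the argument backwards) such a form is exact on $M$ relative to $\de_1 M$. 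The main obstacle is showing that the $H^{\bullet-1}(\Sigma)$-ambiguity in the first component of $g_1$ is absorbed exactly by passing to $H^\bullet_{\DD1}(M_1)^\circ$; this amounts, via Lefschetz duality on $M_1$, to identifying $\iota(H^\bullet_{\DD1}(M_1))$ with $(H^\bullet_{\DD2}(M_1)^\#)^*\subset H^\bullet_{\DD1}(M_1)$ as annihilators of $\sigma_1(L_1)$, which is precisely the dualization built into the definition of $H^\bullet_{\DD1}(M_1)^\circ$.
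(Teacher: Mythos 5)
Your overall route (relative Mayer--Vietoris for $M=M_1\cup_\Sigma M_2$ plus Lefschetz duality on $M_1$, followed by an explicit inverse built from restrictions) differs from the paper's, which uses the long exact sequence of the triple $(M,\,M_2\cup\de_1M_1,\,\de_1M)$: after excision this collapses to a three-term sequence $H^\bullet_{\DD1}(M_1)\xrightarrow{\kappa}H^\bullet_{\DD1}(M)\xrightarrow{\lambda}H^\bullet_{\DD1}(M_2)$ whose connecting map $\rho$ factors as $B_1\circ\tau_2$, so that $\ker\rho=H^\bullet_{\DD1}(M_2)'$ and $H^\bullet_{\DD1}(M_1)/\mathrm{im}(\rho)\cong H^\bullet_{\DD1}(M_1)^\circ$ fall out directly, with no inverse to construct. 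Your bookkeeping in the first step is consistent with this, but the construction of $g_1$ in the second step has a genuine gap.

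The problem is the claim that the $\ker\iota$-ambiguity of the lift is ``absorbed exactly by passing to $H^\bullet_{\DD1}(M_1)^\circ$.'' One has $\ker\iota=\mathrm{im}(B_1)=B_1(H^{\bullet-1}(\Sigma))$, of dimension $\dim L_1$ (since $\ker B_1=L_1^\perp$), whereas $H^\bullet_{\DD1}(M_1)^\circ=\mathrm{Ann}(\sigma_1(L_1^\times))$ has codimension only $\dim L_1^\times=\dim L_1-\dim(L_1\cap L_2^\perp)$. Using the adjunction $\langle\alpha,B_1\gamma\rangle=\langle\tau_1\alpha,\gamma\rangle_\Sigma$ one checks that $\mathrm{im}(B_1)\cap H^\bullet_{\DD1}(M_1)^\circ=(\sigma_1(L_1\cap L_2^\perp))^*$, which is nonzero whenever $L_1\cap L_2^\perp\neq0$. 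So $H^\bullet_{\DD1}(M_1)^\circ$ is not transverse to $\ker\iota$ and the representative cannot be ``pinned down''; worse, for $v$ in the summand $(\sigma_1(L_1\cap L_2^\perp))^*$ the class $h_1(v)$ (extension by zero) restricts to $0$ in $H^\bullet(M_1,\de_1M_1\setminus\Sigma)$ and to $0$ in $H^\bullet_{\DD1}(M_2)$, so an inverse built purely from restrictions annihilates this summand rather than recovering it. That summand is visible only through the Mayer--Vietoris connecting map $\delta\colon H^{\bullet-1}(\Sigma)\to H^\bullet_{\DD1}(M)$, i.e.\ the $\mathrm{coker}(\phi)$ piece you correctly identify in your first step but then never feed into the recipe for $g_1$. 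This is precisely why the paper quotients by $\mathrm{im}(\rho)=B_1(L_2^\times)$ rather than by all of $\mathrm{im}(B_1)$: invertibility of the matrix $\Lambda$ gives $B_1(L_2^\times)\cap\mathrm{Ann}(\sigma_1(L_1^\times))=0$, so there $H^\bullet_{\DD1}(M_1)^\circ$ genuinely is a complement. To repair your argument you would need to extract the $(\sigma_1(L_1\cap L_2^\perp))^*$-component of $g_1([\gamma])$ from the cochain-level gluing data along $\Sigma$ (i.e.\ from $\delta$), or else drop the explicit inverse and argue by exactness and duality as the paper does.
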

\begin{proof}
We will consider $h_1$; the case of $h_2$ is treated similarly.

Recall that for a triple of topological spaces $X\supset Y\supset Z$ one has the long exact sequence of cohomology of the triple
\begin{equation}\label{e:LES of a triple}
\cdots\rightarrow H^\bullet(X,Y)\rightarrow H^\bullet(X,Z)\rightarrow H^\bullet(Y,Z)\rightarrow H^{\bullet+1}(X,Y)\rightarrow\cdots
\end{equation}

Consider the triple $X=M$, $Y=M_2\cup \de_1 M_1$, $Z=\de_1 M$. Then the sequence (\ref{e:LES of a triple}) becomes
\begin{multline}\label{e:LES}
\cdots\rightarrow H^\bullet(M,M_2\cup \de_1 M_1)\xrightarrow{\kappa} H^\bullet(M,\de_1 M) \xrightarrow{\lambda}\\
\xrightarrow{\lambda} H^\bullet(M_2\cup \de_1 M_1,\de_1 M)
\xrightarrow{\rho} H^{\bullet+1}(M,M_2\cup \de_1 M_1)\rightarrow\cdots
\end{multline}
Note that, by excision property of cohomology, we have $H^\bullet(M,M_2\cup \de_1 M_1)=H^\bullet(M_1,\de_1 M_1)$ and $H^\bullet(M_2\cup \de_1 M_1,\de_1 M)=H^\bullet(M_2,\de_1 M_2)$. Thus (\ref{e:LES}) becomes
\begin{equation}\label{e:LES2}
\cdots\rightarrow H^\bullet_{\DD1}(M_1)\xrightarrow{\kappa} H^\bullet_{\DD1}(M) \xrightarrow{\lambda} H^\bullet_{\DD1}(M_2)\xrightarrow{\rho}H^{\bullet+1}_{\DD1}(M_1)\rightarrow\cdots
\end{equation}
Therefore for the cohomology of $M$ we have
\begin{equation} \label{e:H(M)}
H^\bullet_{\DD1}(M)\simeq\mathrm{im}(\lambda)\oplus \mathrm{im}(\kappa)= \ker \rho\oplus \frac{H^\bullet_{\DD1}(M_1)}{\mathrm{im}(\rho)}.
\end{equation}
Note that the connecting homomorphism $\rho$ in (\ref{e:LES2}) factorizes as $H^\bullet_{\DD1}(M_2)\xrightarrow{\tau_2}H^\bullet(\Sigma)\xrightarrow{B_1}H^{\bullet+1}_{\DD1}(M_1)$  (with $B_1$ as in Footnote \ref{footnote: L^perp}). This implies
$$\ker \rho=\tau_2^{-1}(\ker B_1)=\tau_2^{-1}(L_1^\perp\cap L_2)=H^\bullet_{\DD1}(M_2)'.$$
For the image of $\rho$ we have $\mathrm{im}(\rho)=B_1(L_2)=B_1(L_2^\times)\subset H_{\DD1}(M_1)$. Its annihilator in $H_{\DD2}^\bullet(M_1)$ is
$$
\mathrm{Ann}(\mathrm{im}\,\rho)=\{[\alpha]\in H_{\DD2}^\bullet(M_1)\,:\, \underbrace{\langle [\alpha], B_1[\gamma] \rangle}_{=\langle \tau_1[\alpha],[\gamma]\rangle_\Sigma}=0 \;\; \forall [\gamma]\in L_2\}
=\tau_1^{-1}(L_2^\perp)=\tau_1^{-1}(L_1\cap L_2^\perp).
$$
Therefore, for the second term in (\ref{e:H(M)}) we have
$$\frac{H^\bullet_{\DD1}(M_1)}{\mathrm{im}(\rho)}=\left( \mathrm{Ann}(\mathrm{im}\rho) \right)^*=\left(\tau_1^{-1}(L_1\cap L_2^\perp)\right)^*= H^\bullet_{\DD1}(M_1)^\circ. $$
Thus we have constructed the isomorphism
$$H^\bullet_{\DD1}(M)\simeq H^\bullet_{\DD1}(M_1)^\circ\oplus H^\bullet_{\DD1}(M_2)'.$$
By inspection of the construction, it is precisely the inverse of $h_1$ of (\ref{e:h}).
\end{proof}



\newcommand{\Szero}{\mathcal{S}_{M,0}}
\newcommand{\Spert}{\mathcal{S}_{M,\text{pert}}}

\section{$BF$-like theories}\label{sec: BF-like}
In this Section we consider interacting theories that deform abelian $BF$ theories.
This means first that as unperturbed theory we consider $n$ copies
of an abelian $BF$ theory,
\[
\Szero = \sum_{i=1}^n \int_M \sfB_i\,\dd\sfA^i,
\]
with $\sfA^i\oplus \sfB_i\in \Omega^\bullet(M)[k_i]\oplus\Omega^\bullet(M)[d-k_i-1]$ for some choice of $k_i$.
Equivalently, we may define $\calF_M=(\Omega^\bullet(M)\otimes V[1])\oplus(\Omega^\bullet(M)\otimes V^*[d-2])$
where $V$ is a graded vector space and write\footnote{We recover the previous notation if
we pick a graded basis $\sfe^i$ of $V$ and its dual basis $\sfe_i$, set $k_i=1-|\sfe_i|$
and write
$\sfA = \sum_{i=1}^n \sfe^i\sfA_i$, $\sfB=\sum_{i=1}^n(-1)^{1-k_i}\sfB^i\sfe_i$.}
\[
\Szero =  \int_M \braket{\sfB}{\dd\sfA},
\]
where $\braket{\ }{\ }$ denotes the canonical pairing between $V^*$ and $V$. The whole Section~\ref{s:abeBF}
can now be extended with obvious modifications.

Next we consider an interacting term
that is the integral of a density\ndash valued function $\calV$ of the fields $\sfA$ and $\sfB$,
\[
\Spert = \int_M \calV(\sfA,\sfB),
\]
such that $\calS_M:=\Szero+\Spert$ solves the classical master equation for $M$ without boundary. We view $\Spert$ as a ``small'' perturbation (cf. Remark \ref{r:S_0+S_pert}). We further require that $\calV$ depends only on the fields, but not on their derivatives.
We consider three examples:
\begin{Exa}[Quantum mechanics]\label{e:QM}
This is the case when $d=1$ and $V=W[-1]$, with $W$ concentrated in degree zero. We denote by $P$ and $Q$
the degree-zero zero forms components of $\sfB$ and $\sfA$, respectively. We choose a volume form
$\dd t$ on $M$ and a function $H$ on $T^*W$. We then set
$\calV(\sfA,\sfB):=H(\sfA,\sfB)\,\dd t = H(Q,P)\,\dd t$. We then have
\[
\calS_M=\int_M\left(
\sum_i P_i\dot Q^i + H(Q,P)
\right)\dd t,
\]
the classical action of mechanics in Hamilton's formalism.
\end{Exa}
\begin{Exa}[AKSZ theories \cite{AKSZ}]\label{exa-AKSZ}
In this case we assume that we are given a function $\Theta$ on $T^*[d-1](V[1])=V[1]\oplus V^*[d-2]$ that has
degree $d$ and Poisson commutes with itself with respect to the canonical graded Poisson structure on
the shifted cotangent bundle. We then set $\calV(\sfA,\sfB)$ to be the top degree part
of $\Theta(\sfA,\sfB)$.
Notice that this is a special case of the construction in \cite{AKSZ}, where the target is not assumed to be a shifted cotangent bundle but just a general graded symplectic manifold with symplectic form of
degree $d-1$. We have three particular cases of interest:
\begin{description}
\item[$BF$ theories] Here we assume $V=\frg$ to be a Lie algebra and set $\Theta=\frac12\braket{b}{[a,a]}$ with
$a\in V[1]$ and $b\in V^*[d-2]$.
\item[Split Chern--Simons theory] If we are given a Lie algebra $\frg$ with an invariant pairing, we can define a
function $\Theta$ of degree $3$ on $\frg[1]$ by $\Theta=\frac16({a},{[a,a]})$.
This fits with our setting if $d=3$
and we have a decomposition of $\frg$, as a vector space, $\frg=V\oplus W$ where $V$ and $W$ are maximally isotropic subspaces. The pairing allows identifying $W$ with $V^*$.
\item[The Poisson sigma model] If $(P,\pi)$ is a Poisson manifold, the Poisson sigma model on $M$
has as its space of fields
\[
\calF_M=\Map(T[1]M,T^*[1]P)
\]
and $\Theta$
is the Poisson bivector field $\pi$ regarded
as a function of degree $d=2$ on $T^*[1]P$. This fits with our setting if $P$ is a vector space $W$ and we set
$V=W[-1]$. More generally, we may perturb the general Poisson sigma model around a constant map
$x\colon M\to P$ and we fit again in our setting with $V=T_xP[-1]$.
\end{description}
\end{Exa}
\begin{Exa}[2D Yang--Mills theory]\label{r:twoDYM}
The classical action of Yang--Mills (YM) theory can be written in the first order formalism as
$\int_M \left(\braket B{F_A} +\frac12g^2(B,{*B})\right)$ where $A$ is a connection on a principal
$G$\ndash bundle over $M$, $F_A$ its curvature,
$B$ a $(d-2)$\ndash form of the coadjoint type, $(\ , \ )$ a nondegenerate, invariant pairing
on the dual $\frg^*$ of the
Lie algebra $\frg$ of $G$,
$*$ the Hodge star for some reference metric, and
$g$ a coupling constant. This action looks like a perturbation of $BF$ theory, with $V=\frg$,
but for $d>2$ the perturbation
$\int_M  (B,{*B})$ breaks the symmetry; 
hence the corresponding BV theory is not
a perturbation of the BV version of $BF$ theory. This is due to the fact that
one of the symmetries of $BF$ theory consists in adding the covariant derivative of a $(d-3)$\ndash form to $B$.
However, for $d=2$ this symmetry is absent, so indeed in two dimensions  YM theory is a perturbation of $BF$ theory. We can write
the corresponding BV action as
\[
\calS_M=\int_M\left(
\braket{\sfB}{\dd\sfA}+
\frac12\braket{\sfB}{[\sfA,\sfA]}+
\frac12g^2 v (B,B)
\right)
\]
where $v$ is the volume form associated to the fixed metric on $M$  
and $B$ denotes the degree zero zero\ndash form in $\sfB$.
More generally, for any coad-invariant function $f$ on $\frg^*$, the BV action
\[
\calS_M=\int_M\left(
\braket{\sfB}{\dd\sfA}+
\frac12\braket{\sfB}{[\sfA,\sfA]}+
v f(B)
\right)
\]
solves the classical master equation on a two\ndash manifold $M$ without boundary and perturbs $BF$ theory.
Notice
that, by degree reasons, we have
\[
\calV(\sfA,\sfB) = \frac12\braket{\sfB}{[\sfA,\sfA]} + v f(B) = \frac12\braket{\sfB}{[\sfA,\sfA]} + v f(\sfB).
\]
We call this theory the \textsf{generalized two\ndash dimensional YM theory}.
\end{Exa}

Notice that, whereas the AKSZ theories of Example~\ref{exa-AKSZ} are topological, quantum mechanics and
YM theory are not.

\begin{Rem}
YM theory in 4 dimensions can also be regarded as a perturbation of a $BF$\ndash like theory \cite{Costello}.
The main difference is that the $\dd$ operator appearing in the unperturbed term is not the de~Rham differential.
This changes the propagator, but the algebraic structure is the same as the one considered in this paper.
\end{Rem}

\subsection{Perturbative expansion}\label{s:BF-like pert}
The assumption that $\calV(\sfA,\sfB)$ does not depend on derivatives of the field implies that
the space of boundary fields on a $(d-1)$\ndash manifold $\Sigma$
is exactly the same as for the unperturbed theory,
$\calF^\de_\Sigma=(\Omega^\bullet(\Sigma)\otimes V[1])\oplus(\Omega^\bullet(\Sigma)\otimes V^*[d-2])$, with the
same symplectic structure $\omega^\de_\Sigma= \delta\alpha^\de_\Sigma$ and
\[
\alpha^\de_\Sigma = (-1)^{d}\int_\Sigma \braket{\sfB}{\delta\sfA}.
\]
On the other hand the perturbation may affect the boundary cohomological vector field $Q^\de_\Sigma$
and the boundary action $\calS^\de_\Sigma$.
\begin{Rem}\label{r:deAKSZ}
In the case of an AKSZ theory, one has \cite{CMR}
\[
\calS^\de_\Sigma = \int_\Sigma \left(
\braket{\sfB}{\dd\sfA} + \Theta(\sfA,\sfB)
\right).
\]
\end{Rem}
\begin{Rem}\label{r:de2YM}
In the case of the generalized two\ndash dimensional YM theory, the non-AKSZ term $vf(B)$ produces a vertical term in $Q_M$. Hence,
$Q^\de_{\de M}$ is the same as for $BF$ theory. As a consequence,
\[
\calS^\de_\Sigma = \int_\Sigma \left(
\braket{\sfB}{\dd\sfA} + \Theta(\sfA,\sfB)
\right)=
\int_\Sigma \left(
\braket{\sfB}{\dd\sfA} + \frac12\braket{\sfB}{[\sfA,\sfA]}
\right).
\]
\end{Rem}
We then proceed as in Section~\ref{s:abeBF} and choose polarizations as in subsection~\ref{s:pol}. Notice that
the term to be added to the action to make it compatible with the polarization now reads
\[
f^\calP_{\de M}=(-1)^{d-1}\int_{\de_2 M}\braket\sfB\sfA.
\]
We denote again by $\bA$ the coordinate on $\calB_1$ and by $\bB$ the coordinate on $\calB_2$, which we have to extend by zero in the bulk.
We have
\begin{align*}
\sfA&= \sfa + \alpha,\\
\sfB&=\sfb + \beta,
\end{align*}
where $\sfa$ and $\sfb$ denote the residual fields, and $\alpha$ and $\beta$ denote the fluctuations.
For the unperturbed part we proceed exactly as in Section~\ref{s:abeBF}, getting
\[
\calS^\calP_M =\Hat\calS_{M,0} + \Hat\calS_{M,\text{pert}}  + \calS_M^\vac + \calS_M^\text{source}
\]
with
\begin{align*}
\Hat\calS_{M,0}&=\int_M\braket\beta{\dd\alpha},\\
\Hat\calS_{M,\text{pert}} &= \calV(\sfa+\alpha,\sfb+\beta),\\
\calS_M^\vac&=(-1)^{d-1}\left(\int_{\de_2 M}\braket\bB\sfa -\int_{\de_1 M}\braket\sfb\bA\right),\\
\calS_M^\text{source}&=
(-1)^{d-1}\left(\int_{\de_2 M}\braket\bB\alpha -\int_{\de_1 M}\braket\beta\bA\right).
\end{align*}
The propagator is determined, exactly like in the abelian case, by $\Hat\calS_{M,0}$. The perturbation term
$\Hat\calS_{M,\text{pert}}$ has to be Taylor expanded around zero and produces the interaction vertices.
In addition we have univalent vertices on the boundary. The Feynman diagrams of the theory with boundary then also contain edges connecting to the boundary.

Ultimately, the perturbative expansion for the state  takes the form
\begin{equation}\label{e: psi pert}
\hat\psi_M=\prod_{i=1}^nT_M^{(k_i)}\cdot \exp\left(\frac\ii\hbar\sum_\Gamma \frac{(-\ii\hbar)^{\mathrm{loops}(\Gamma)}}{|\mathrm{Aut}(\Gamma)|}\int_{C_\Gamma}\omega_\Gamma(\bA,\bB;\sfa,\sfb)\right)
\end{equation}
where $T_M^{(k)}$ is as in (\ref{e:T via tau}), for the field grading shift $k$. In the exponential, we sum over connected Feynman diagrams -- connected oriented graphs $\Gamma$ -- with
\begin{itemize}
\item $n\geq 0$ bulk vertices in $M$ decorated by ``vertex tensors''
$\left.\frac{\de^{s+t}}{\de \sfA_{i_1}\cdots\de \sfA_{i_s}\de\sfB^{j_1}\cdots\de\sfB^{j_t}}\right|_{\sfA=\sfB=0}\calV(\sfA,\sfB)$
where $s,t$ are the out- and in-valencies of the vertex,
\item $n_1\geq 0$ boundary vertices on $\de_1 M$ with single incoming half-edge and no outgoing half-edges decorated by $\bA_i$ evaluated at the point (vertex location) on $\de_1 M$,
\item $n_2\geq 0$ boundary vertices on $\de_2 M$ with single outgoing half-edge and no incoming half-edges decorated by $\bB^i$ evaluated at the point on $\de_2 M$,
\item edges are decorated with the propagator $\eta\cdot\delta^i_j$, with  $\eta$ same as in Section \ref{ss:prop},\footnote{More generally, if the shifts $k_i$ are different for different field components, we put $\eta^{(k_i)}\cdot \delta^i_j$ on the edge, where $\eta^{(k)}$ is the propagator for abelian $BF$ theory with field grading shift $k$.}
\item loose half edges (leaves) are allowed and are decorated with the residual fields $\sfa_i$ (for out-orientation), $\sfb^i$ (for in-orientation).
\end{itemize}
The differential form $\omega_\Gamma(\bA,\bB;\sfa,\sfb)$ on the compactified configuration space $C_\Gamma$ of points on $M$ (with $n$ bulk points, $n_1$ points on $\de_1$ and $n_2$ points on $\de_2$) is the wedge product of the decorations above, with field component indices $i$ contracted according to the combinatorics of $\Gamma$. Note that $\omega_\Gamma$ is a polynomial in boundary and residual fields of order determined by the numbers of boundary vertices and leaves in $\Gamma$.

\begin{Rem}[Short loops]
The perturbative expansion has potential singularities when we contract a fluctuation $\alpha$ with a fluctuation
$\beta$ in the same interaction vertex (short loops).
In AKSZ theories, short loops are absent if
 a unimodularity condition of the target structure is satisfied.\footnote{If the Euler characteristic of $M$ vanishes,
 one does not even have
to impose the unimodularity condition and one can simply disregard short loops.
 This is why, e.g., the Poisson sigma model is well defined on the upper half plane and on the torus for every Poisson structure.
Notice that short loops contributions are needed for the (modified) quantum master equation to hold. To match
\eqref{e:deta}, one has to assign a $(d-1)$\ndash form $\eta_\text{sl}$ to a short loop on $M$ such that
\[
\dd\eta_\text{sl} = (-1)^{d-1}\sum_i (-1)^{d\cdot\deg\chi_i}\chi_i\,\chi^i.
\]
Notice that the right hand side is precisely exact when the Euler characteristic of $M$ vanishes.}
\end{Rem}

Formally the gluing procedure is exactly as in subsection~\ref{ss:gluing}. The integral over the boundary fields forces the matching of the boundary vertices. Next one  has to integrate over the redshirt residual fields.
\begin{Prop}[Gluing]\label{p:gluing4}
Let $M$ be cut along a codimension-one submanifold $\Sigma$ into $M_1$ and $M_2$. Let $\psi_{M_1}$ and $\psi_{M_2}$ be the states
for $M_1$ and $M_2$ with a choice of residual fields and propagators and transverse ($\bA$ vs.\ $\bB$) polarizations on $\Sigma$. Then the gluing of $\psi_{M_1}$ and $\psi_{M_2}$ is
the state $\psi_M$ for $M$ with the consequent choice of residual fields and propagators.
\end{Prop}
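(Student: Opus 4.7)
The plan is to prove the gluing formula at the level of Feynman diagrams by reducing everything to the abelian case, already established in Section~\ref{ss:gluing} and Section~\ref{ss:reducing backgrounds}. The key point is that the perturbation $\Hat\calS_{M,\mathrm{pert}}=\int_M\calV(\sfA,\sfB)$ depends only on fields, not derivatives, so the underlying quadratic structure (propagator, residual fields, boundary pairing) is exactly that of abelian $BF$ theory, and the interaction vertices $\calV$ are local bulk insertions that are insensitive to the cutting locus $\Sigma$.

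\medskip

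\noindent\textbf{Step 1 (unwind the perturbative definition).} Expand $\psi_{M_1}$ and $\psi_{M_2}$ as in \eqref{e: psi pert}: each is a product of torsion factors times the exponential of a sum over connected Feynman graphs with bulk vertices decorated by derivatives of $\calV$, boundary univalent vertices decorated by $\bA$ or $\bB$, leaves decorated by residual fields $\sfa,\sfb$, and edges carrying the abelian $BF$ propagator. Then form the pairing \eqref{e:gluedpsi}, i.e.\ multiply $\psi_{M_1}$ and $\psi_{M_2}$ by $\EE^{\frac\ii\hbar(-1)^{d-1}\int_\Sigma\braket{\bB_2^\Sigma}{\bA_1^\Sigma}}$ and integrate over $(\bA_1^\Sigma,\bB_2^\Sigma)$. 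Since the boundary fields appear only in the univalent boundary vertices placed on $\Sigma$ (and in the Segal--Bargmann kernel), this integral is Gaussian: the stationary-phase evaluation turns pairs of boundary insertions from the two sides into a $\Sigma$-localized edge, and the resulting combinatorial sum is exactly that of Feynman graphs on $M$ whose edges are allowed to cross $\Sigma$, with the edge propagator obtained by composing $\eta_1$ and $\eta_2$ through the cut.

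\medskip

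\noindent\textbf{Step 2 (integrate out redshirt residual fields).} After Step~1, the state is expressed on the product space of residual fields $\calV_{M_1}\times\calV_{M_2}$. Apply the BV pushforward $P_*$ along the hedgehog fibration $\calV_{M_1}\times\calV_{M_2}\to\calV_M$ exactly as in Section~\ref{ss:reducing backgrounds}: the $\int_\Sigma\braket{\sfb_1}{\sfa_2}$ coupling is Gaussian in the redshirt components $\sfb_1^\times,\sfa_2^\times$, so integrating them out produces (i)~the expected Berezinian factor $\Xi/\Ber\Lambda$ combining with $T_{M_1}T_{M_2}$ to give $T_M$ by the gluing formula for torsions, and (ii)~the ``extensions by $\eta$'' $\sfa_2^{\mathrm{ext}}$, $\sfb_1^{\mathrm{ext}}$ identifying residual fields on $M_1\cup_\Sigma M_2$ with classes in $\calV_M$ via the isomorphisms $h_1,h_2$ of Section~\ref{ss:redpv}. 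The leaves of graphs on $M_i$ decorated by $\sfa_i,\sfb_i$ are thereby reinterpreted as leaves on $M$ decorated by $\Check\sfa,\Check\sfb$ in the sense of Remark~\ref{rem:a ext, b ext}.

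\medskip

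\noindent\textbf{Step 3 (identify propagators and assemble).} Combine Steps 1 and 2: every edge originally entirely inside $M_1$ or $M_2$ keeps its propagator $\eta_1$ or $\eta_2$; every edge crossing $\Sigma$ picks up a composition $\eta_1*\eta_2$ (possibly mediated by a redshirt-induced bilinear); and the matrix $V=\Lambda^{-1}$ gluing inserts turns out to reconstruct precisely the glued propagator $\Check\eta$ of Appendix~\ref{We now do the final step in computing the propagator on $M$ for the reduced  space of pseudovacua}. This identification is exactly what was verified for the abelian theory, since for any fixed graph $\Gamma$ on $M$ the gluing of propagators can be done edge by edge, independently of the interaction vertex decorations (which are point-local). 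Matching of boundary-vertex combinatorics across $\Sigma$ is immediate since each $\Sigma$-insertion has valence one, so it contributes one $\bA$ or $\bB$ factor absorbed by the Gaussian integral. Summing over all graphs on $M_1\cup M_2$ therefore reproduces the sum over all graphs on $M$, proving $P_*(\psi_{M_1}\underset\Sigma\ast\psi_{M_2})=\psi_M$.

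\medskip

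\noindent\textbf{Main obstacle.} The delicate point is Step~3: verifying that the edge-wise gluing of propagators, $\eta_1$ with $\eta_2$ composed through $\Sigma$ and corrected by the redshirt matrix $V$, does indeed reproduce the propagator $\Check\eta$ on $M$ for \emph{every} edge of \emph{every} Feynman diagram, uniformly in the positions of the interaction vertices (including when vertices approach $\Sigma$ or when multiple edges pass through $\Sigma$ simultaneously). Invoking the compatibility with configuration-space compactifications and the result already proven for abelian $BF$ (see Remark~\ref{rem: Fubini}, Theorem on gluing of propagators, and Appendix~\ref{a:comp}), this reduces to the linear (abelian) statement, which closes the argument.
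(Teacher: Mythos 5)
Your proposal is correct and follows essentially the same route as the paper's own (much terser) proof sketch: reduce to the abelian case by noting that the Gaussian integration over the boundary fields on $\Sigma$ matches boundary vertices and composes propagators through the cut, then perform the BV pushforward over the redshirt residual fields to recover the glued torsion and the glued propagator $\Check\eta$ of Appendix~\ref{We now do the final step in computing the propagator on $M$ for the reduced  space of pseudovacua}. Your Steps 1--3 are a faithful (and more detailed) expansion of exactly those ingredients, so no further comment is needed.
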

\begin{proof}[Sketch of the proof]
The gluing of the prefactors (the torsions) and the BV pushforward on the redshirt residual fields (Mayer--Vietoris)
are as in the abelian theory. The explicit integration over the boundary fields and the redshirt residual fields has the effect to produce the $M$\ndash propagators out of the $M_1$\ndash\ and $M_2$\ndash propagators
(see Appendix~\ref{a:comp}).
\end{proof}

\subsubsection{The full state}\label{sss:fullstate}
The state as described above---to which we will refer as the \textbf{principal part} of the state---is all what we need for gluing purposes. However, it may be incorrect as for the modified quantum master equation.
The problem lies in the fact that in general $\Omega$ will contain higher functional derivatives and one has to be careful in defining them appropriately.

\newcommand{\vev}[1]{\left\langle#1\right\rangle}

Let us start the discussion with the present field theory version of \eqref{e:bp}. We focus on the $\de_1M$ boundary where we work in the $\bA$\ndash representation
(the $\de_2 M$ boundary is treated analogously). There the base coordinate $b$ is $\bA$, whereas the fiber coordinate $p$ is
$\iota^*_{\de_1 M}\sfB=\iota^*_{\de_1 M}(\beta+\sfb)$. In the following we will refer to $\bA$ (and similarly
to $\bB$) as to a \textbf{base boundary field}.
Equation \eqref{e:bp} works indeed. To make this more precise,
we average the functional derivatives at a point by a test form $F$ (a smooth differential form
possibly depending on residual and on base boundary fields). We have
\[
\int_{\de_1 M} F^i\,\frac\delta{\delta\bA^i}\, \calS^\calP_M =(-1)^d\int_{\de_1 M} (\beta_i+\sfb_i)\,F^i.
\]
To move to \eqref{e:mQMEP} we have to assume that a higher functional derivative with respect to $\bA$ applied to $\EE^{\frac\ii\hbar\calS_M^\calP}$ will produce
multiplication by the corresponding power of $\iota^*_{\de_1 M}(\beta+\sfb)$. This also works with the naive definition of a higher functional derivative. For example,
\[
\int_{\de_1 M} F^{ij}\,\frac{\delta^2}{\delta\bA^i\delta\bA^j}\,\EE^{\frac\ii\hbar\calS_M^\calP}=\left(\frac\ii\hbar\right)^2
\EE^{\frac\ii\hbar\calS_M^\calP}\int_{\de_1 M} (\beta_i+\sfb_i)(\beta_j+\sfb_j)\,F^{ij}.
\]
Problems arise when we move to the functional integration. The point is that the right hand side of the above equation now involves a quadratic vertex at the boundary.
To be more precise, the principal part of the state 
can be written as $Z\vev{\EE^{\frac\ii\hbar(\calS_M^\text{res}+\calS_M^\text{source})}}$, where $Z$ is the product of torsions and $\vev{\ }$ denotes the expectation value for
the bulk theory. The problem is that a higher functional derivative of this expectation value may differ from the expectation value of the higher functional derivative, 
for the latter also includes Feynman diagrams that remain connected after removing the boundary vertex corresponding to the insertion of the higher
power of $\beta$.

The way out is to define the higher functional derivative in a way that agrees with the naive expectation above but does not have this problem; at the same time one has to define the product of functionals (as in the exponential) appropriately. This is easily achieved by introducing \textbf{composite fields}    (as e.g. in \cite{Ans,Collins}) as higher powers of $\bA$ at a point and regarding
higher functional derivatives as first-order functional derivatives with respect to the corresponding composite field. To make this fit with the naive expectation where a higher functional derivative concentrates the fields on some diagonal, we should also understand the product of integrals as containing the diagonal contributions for the corresponding composite field. Namely, we set
\begin{multline*}
\int_{\de_1 M} u_i\,\bA^i\bullet\int_{\de_1 M} v_j\,\bA^j:=\\
(-1)^{|\bA^i|(d-1+|v_j|)+|u_i|(d-1)}\left(
\int_{C_2(\de_1 M)} \pi_1^*u_i\pi_2^*v_j\,\pi_1^*\bA^i\pi_2^*\bA^j+
\int_{\de_1 M} u_iv_j [\bA^i\bA^j]
\right),
\end{multline*}
where $u$ and $v$ are smooth differential forms depending on bulk and residual fields and $[\bA^i\bA^j]$ is our notation for the composite field.
Now the operator $\int_{\de_1 M} F^{ij}\,\frac{\delta^2}{\delta\bA^i\delta\bA^j}$ has to be interpreted as
$\int_{\de_1 M} F^{ij}\,\frac{\delta}{\delta[\bA^i\bA^j]}$, so we get
\[
\int_{\de_1 M} F^{ij}\,\frac{\delta^2}{\delta\bA^i\delta\bA^j}\,
\left(\int_{\de_1 M} u_i\,\bA^i\bullet\int_{\de_1 M} v_j\,\bA^j\right)
=
\int_{\de_1 M} u_iv_j F^{ij}
\]
in accordance with our naive expectation.

We now formalize the above construction.
For a multi-index $I=(i_1,\dots,i_p)$,
the symbol $\left[\bA^{I}\right]$, or equivalently $\left[\bA^{i_1}\cdots\bA^{i_p}\right]$, denotes a new
composite field of degree
$k-(p-1)(d-1)$ where $k$ is the sum of the degrees of $\bA^{i_1},\dots,\bA^{i_p}$ (one way to remember this is to think of the composite field as being obtained
by integrating the $\bA$ fields around the point where we evaluate the composite field). 
The functional derivative $\frac{\delta^p}{\delta\bA^{i_1}\cdots\delta\bA^{i_p}}$ is interpreted as
$\frac\delta{\delta\left[\bA^{i_1}\cdots\bA^{i_p}\right]}$. Analogously we consider composite $\bB$\ndash fields and their corresponding functional derivatives.
These operators act on the algebra generated by linear combinations of expressions of the form
\[
\int_{C_{m_1}(\de_1 M)\times C_{m_2}(\de_2 M)} 
L_{I_1\cdots I_2\cdots }^{J_1\cdots J_2\cdots}
\, \pi_{1}^*\left[\bA^{I_1}\right]\cdots
\pi_{m_1}^*\left[\bA^{I_{m_1}}\right]
\, \pi_{1}^*\left[\bB_{J_1}\right]\cdots
\pi_{m_2}^*\left[\bB_{J_{m_2}}\right],
\]
where the $L$s are smooth differential forms depending on the fluctuations and on the residual fields.
The product (denoted by $\bullet$) of two expressions as above is obtained by adding all the possible ways
of restricting to a diagonal in the product of the spaces; whenever we do that, the $\bA$s or the $\bB$s from the different brackets are put together. We give
one more example: 
\begin{multline*}
\int_{C_2(\de_1M)} L_{ij}\,\pi_1^*[\bA^i]\pi_2^*[\bA^j] \bullet \int_{\de_1M} \sfb_k\,[\bA^k]=
\int_{C_3(\de_1M)} \pi_{12}^*L_{ij}\pi_3^*\sfb_k\,\pi_1^*[\bA^i]\pi_2^*[\bA^j] \pi_3^*[\bA^k] +\\+
(-1)^{|\bA^j||\bA^k|}\int_{C_2(\de_1M)} L_{ij}\pi_1^*\sfb_k\,\pi_1^*[\bA^i\bA^k]\pi_2^*[\bA^j] +
\int_{C_2(\de_1M)} L_{ij}\pi_2^*\sfb_k\,\pi_1^*[\bA^i]\pi_2^*[\bA^j\bA^k].
\end{multline*}
With this piece of notation, where $\EE_\bullet$ is the exponential defined by the $\bullet$\ndash product,
we now have that
\[
\int_{\de_1 M} F^I\frac{\delta^{I}}{\delta A^I}\,\vev{\EE_\bullet^{\frac\ii\hbar(\calS_M^\text{res}+\calS_M^\text{source})}}
=\vev{\int_{\de_1 M} F^I\frac{\delta^{I}}{\delta A^I}\,
\EE_\bullet^{\frac\ii\hbar(\calS_M^\text{res}+\calS_M^\text{source})}},
\]
where $F$ is a smooth differential form depending on residual, base boundary and (possibly) composite fields.

\newcommand{\bpsi}{\boldsymbol{\psi}}

Finally, we come to the correct definition of the state, which we call the full state
and write in boldface: 
\[
\hat\bpsi_M = Z \vev{\EE_\bullet^{\frac\ii\hbar(\calS_M^\text{res}+\calS_M^\text{source})}},
\]
where we just have replaced the exponential with the $\bullet$\ndash exponential. In terms of Feynman diagrams we now have additional boundary vertices of higher valency. The combinatorics may be simplified by observing that,
for any form $\gamma$,
\[
\EE_\bullet^{\sum_i\int_{\de_1M}\bA^i\gamma_i}=
\EE^{\sum_{I:|I|>0}\frac{\epsilon_I}{|I|!}\int_{\de_1M}[\bA^I]\gamma_I},
\]
where on the right hand side we have the usual exponential and $\epsilon_I$ is a sign, implicitely determined by
\[
A^{i_1}\gamma_{i_1}\cdots A^{i_p}\gamma_{i_p}=
\epsilon_{i_1\cdots i_p}A^{i_1}\cdots A^{i_p}\gamma_{i_1}\cdots\gamma_{i_p}.
\]
We have an analoguous expression for $\bB$.

Note that, when gluing states we do not see the composite fields (the proof of this statement relies on the explicit formula for the glued propagators). For this purposes it is enough to consider the {principal part}
$\hat\psi_M$ of the state.

In abelian $BF$ theory, $\Omega$ contains functional derivatives up to the first order. For this reason
we did not bother
introducing
the $\bullet$\ndash exponential. Note that the
full state is just
$\Hat\bpsi_M= T_M\, \EE_\bullet^{\frac\ii\hbar\calS^\text{eff}_M}$,
whereas its principal part was $\Hat\psi_M= T_M\, \EE^{\frac\ii\hbar\calS^\text{eff}_M}$.
For perturbed $BF$ theories, the full state however is in general not just the bullet exponential of the effective
action appearing in the principal part.


The strategy for checking the modified quantum master equation as well as the fact that $\Omega$ squares to zero simply relies on computing boundary contributions in the compactified configuration spaces appearing in the Feynman diagram expansion for the state. Before doing this, we make the definition of the space of states and
its algebra of differential operators more precise (essentially, the only addition to the above, is the possibility of products of composite fields, for these contributions are generated by the application of differential operators).

\subsubsection{The space of states}\label{s:4.1.1 states}
In Section~\ref{sss:ss} we gave a description of the space of states for (possibly perturbed) $BF$ theories.
Now we have to refine the structure of the distributional
forms $R^j_{n_1n_2}(\sfa,\sfb)$ to allow for a proper definition of the higher functional derivatives
with respect to $\bA$ and $\bB$ that may appear in $\Omega$ following the discussion of
Section~\ref{sss:fullstate}. We then come to the following definition:
%
%
%
A \textbf{regular functional} on the space of base boundary fields 
is a linear combination of expressions of the form
\[
\int_{C_{m_1}(\de_1 M)\times C_{m_2}(\de_2 M)} \!\!\!\!\!
L_{I_1^1\cdots I_1^{r_1}I_2^1\cdots I_2^{r_2}\cdots}^{J_1^1\cdots J_1^{l_1}J_2^1\cdots J_2^{l_2}\cdots}
\, \pi_{1}^*\prod_{j=1}^{r_1}\left[\bA^{I_1^j}\right]\cdots
\pi_{m_1}^*\prod_{j=1}^{r_{m_1}}\left[\bA^{I_{m_1}^j}\right]
\, \pi_{1}^*\prod_{j=1}^{l_1}\left[\bB_{J_1^j}\right]\cdots
\pi_{m_2}^*\prod_{j=1}^{l_{m_2}}\left[\bB_{J_{m_2}^j}\right]
,
\]
where the $I_i^j$ and $J_i^j$ are (target) multi-indices and
$L_{I_1^1\cdots I_1^{r_1}I_2^1\cdots I_2^{r_2}\cdots}^{J_1^1\cdots J_1^{l_1}J_2^1\cdots J_2^{l_2}\cdots}$
is a smooth differential form on the product of compactified configuration spaces $C_{m_1}(\de_1 M)$ and $C_{m_2}(\de_2 M)$
depending on the residual fields.

We assume that at each point in the configuration space there is a field insertion (otherwise we may integrate that point out and get a new $L$); i.e., we
have the conditions
\begin{gather*}
|I^1_s| + |I^2_s| +\dots+|I^{r_s}_s|>0 \text{ for all }s=1,\dots,m_1,\\
|J^1_s| + |I^2_s| +\dots+|J^{l_s}_s|>0 \text{ for all }s=1,\dots,m_2.
\end{gather*}
The space of the states is the span of the regular functionals (multiplied by $T_M$).

We may extend the bullet product to the regular functionals. Notice that the derivative with respect to
a residual field satisfies the Leibniz rule also with respect to the bullet product.

\begin{Rem}
Note that we have only allowed insertions of $\bA$ and $\bB$ in the states but not of their derivatives.
If we only consider states that may appear from the bulk and from the application of $\Omega$ to them,
it is enough to work with this restricted definition: applying $\Omega$ will produce terms containing
$\dd\bA$ and $\dd\bB$, but it is always possible to integrate by parts and move all the derivatives on the coefficients
(see below).
\end{Rem}

\subsubsection{Operators}\label{s:4.1.2 operators}
We now come to the class of operators we consider acting on the space of states defined above.

One term of $\Omega$ that is always present, as we work in perturbation theory, is the one corresponding to
 abelian $BF$ theory, i.e., the one that
acts by the de~Rham differential (times $\ii\hbar\,(-1)^d$) on $\bA$ and $\bB$  as well as on all
composite fields. We will denote it by $\Omega_0$.
Integrating by parts, we may rewrite the result as an allowed state. Namely,
on a regular functional as above
we get a term wih $L$ replaced by $\dd L$ plus all the terms corresponding to the boundary of the configuration space. As $L$ is smooth, its restriction to the boundary is also smooth and can be integrated on the fibers yielding a smooth form on the base configuration space; the bracketings at the related points are instead put together at the collapsing vertex. For example:
\[
\Omega_0\int_{\de_1 M} L_{IJ}\,[\bA^I][\bA^J]=\pm \ii\hbar\int_{\de_1 M} \dd L_{IJ}\,[\bA^I][\bA^J],
\]
\begin{multline*}
\Omega_0\int_{C_2(\de_1M)} L_{IJK}\, \pi_1^*([\bA^I][\bA^J])\pi_2^*[\bA^K]=\\=
\pm\ii\hbar \int_{C_2(\de_1M)} \dd L_{IJK}\, \pi_1^*([\bA^I][\bA^J])\pi_2^*[\bA^K] \pm\ii\hbar
\int_{\de_1 M} \underline{L_{IJK}}\, [\bA^I][\bA^J][\bA^K] ,
\end{multline*}
with $\underline{L_{IJK}}=\pi_*^\de  L_{IJK}$, where $\pi^\de\colon\de C_2(\de_1M)\to \de_1M$ is the canonical projection. Notice that for any two
regular functionals $S_1$ and $S_2$ we have $\Omega_0(S_1\bullet S_2)=\Omega_0(S_1)\bullet S_2\pm S_1\bullet\Omega_0(S_2)$.

The other generators that we allow are 
products of expressions of of the form
\[
\int_{\de_1 M} L^J_{I^1\cdots I^r}\,\left[\bA^{I^1}\right]\cdots\left[\bA^{I^r}\right]\,\frac{\delta^{|J|}}{\delta\bA^J}
\]
or
\[
\int_{\de_2 M} L_I^{J^1\cdots J^l}\,\left[\bB_{J^1}\right]\cdots\left[\bB_{J^l}\right]\,\frac{\delta^{|I|}}{\delta\bB_I},
\]
where the $L$'s are smooth differential forms on the boundary. We call these  expressions \textbf{simple} operators.
Each of the factors in a product of operators
acts independently on a state. The action of a simple operator on a regular functional is
defined by pairing a derivative with the corresponding composite field in all possible ways. If there are no derivatives (i.e., if $|I|=0$ or $|J|=0$), then the factor
is just $\bullet$\ndash multiplied with the rest. Example:
\[
\left(\sum_{ijk}\int_{\de_1M}L^{ij}_k\,[\bA^k]\frac{\delta^2}{\delta\bA^i\delta\bA^j}\right)
\left(
\sum_r\int_{\de_1M} \sfb_r[\bA^r]\bullet
\sum_s\int_{\de_1M} \sfb_s[\bA^s]
\right)=\pm
\int_{\de_1M}L^{ij}_k \sfb_i\sfb_j[\bA^k].
\]

The algebra of differential operators that we consider is generated by
products of $\Omega_0$ and simple operators. Note that the composition
of two products of simple operators is again a sum of
products of simple operators. This composition 
is easy to describe: each factor acts on a product either by multiplication (in the graded symmetric algebra) or by pairing the multiple derivative with
a corresponding composite field. Restricted to the $\bA$\ndash representation, this algebra is
the space of  $(S_m\times S_n)$\ndash invariants and coinvariants
$\oplus_{m,n} P(m,n)_{S_m \times S_n}$, where
in our case $P$ is the prop envelope of the endomorphism operad of $SV$ tensored with $\Omega^\bullet(\de_1M)$
(with the condition that the arguments must be in $S^+V$).
It was shown in \cite{MeVa} that for a general (dg) operad this construction yields a (dg) associative algebra.
In the $\bB$\ndash representation we get the same description with $V$ replaced by $V^*$
and $\de_1M$ replaced by $\de_2M$.

Example of a composition of products of simple operators (here $\odot$ is the graded commutative product of simple operators):
{\small
\begin{multline*}
\Big(
\underbrace{\int_{\de_1 M}(L_1)^i \frac{\delta}{\delta\bA^i}}_{\lambda_1}
\odot
\underbrace{\int_{\de_1 M}(L_2)^{jk} \frac{\delta^2}{\delta\bA^j\delta\bA^k}}_{\lambda_2}
\Big)\circ
\Big(
\underbrace{\int_{\de_1 M}(N_1)_{pqr}^{s}[\bA^p][\bA^q\bA^r]\frac{\delta}{\delta\bA^s}}_{\nu_1}
\odot \underbrace{\int_{\de_1 M}(N_2)_t[\bA^t]}_{\nu_2}
\Big)\\
= \pm\int_{\de_1 M}(L_1)^i (L_2)^{jk} (N_1)_{ijk}^s\frac{\delta}{\delta \bA^s} \odot
\int_{\de_1 M}(N_2)_t[\bA^t] \pm
\int_{\de_1 M}(L_2)^{jk}(N_1)_{pjk}^s[\bA^p]\frac{\delta}{\delta\bA^s} \odot
\int_{\de_1 M}(L_1)^i (N_2)_i \\
\pm  
\lambda_1
\odot \int_{\de_1 M} (L_2)^{jk}(N_1)_{pjk}^s[\bA^p]\frac{\delta}{\delta\bA^s} \odot \nu_2 
\pm
\lambda_2 
\odot
\int_{\de_1 M} (L_1)^i(N_1)^s_{iqr}[\bA^q\bA^r]\frac{\delta}{\delta \bA^s} \odot \nu_2 
\\
\pm  \lambda_2 
\odot
\nu_1 
\odot
\int_{\de_1 M} (L_1)^i(N_2)_i
+
\lambda_1\odot\lambda_2\odot\nu_1\odot\nu_2.
\end{multline*}
}

We call an operator  \textbf{principal} if it is simple
and
each field insertion is linear (i.e.,
$|I^1|=\dots=|I^r|=1$ or $|J^1|=\dots=|J^l|=1$) or if it is a multiple of $\Omega_0$.
Notice that, on a boundary $\Sigma$, $\Omega_0$ can be viewed as the standard quantization of
\[
\calS^\de_{\Sigma,0}=\int_\Sigma
\braket{\sfB}{\dd\sfA}.
\]
By analogy, we will say that the principal operator
\[
((-1)^d\ii\hbar)^{|J|}\int_{\Sigma} L^J_{i^1\cdots i^r}\,\left[\bA^{i^1}\right]\cdots\left[\bA^{i^r}\right]\,\frac{\delta^{|J|}}{\delta\bA^J}
\]
is the standard quantization, in the $\bA$\ndash representation, of
\[
\int_{\Sigma} L^J_{I}\,\sfA^{I}\,\sfB_J,
\]
where we grouped the indices $i^1,\dots,i^r$ into the multi-index $I$. Similarly, we will say that
\[
((-1)^d\ii\hbar)^{|I|}\int_{\Sigma} L_I^{j^1\cdots j^l}\,\left[\bB_{j^1}\right]\cdots\left[\bB_{j^l}\right]\,\frac{\delta^{|I|}}{\delta\bB^I},
\]
is the standard quantization, in the $\bB$\ndash representation, of
\[
\int_{\Sigma} L^J_{I}\,\sfB_{J}\,\sfA^I,
\]
where we grouped the indices $j^1,\dots,j^r$ into the multi-index $J$.

The $\Omega$ we will get in the modified QME is a linear combination of simple operators. We will call the underlying linear combination of principal terms
its \textbf{principal part}. In most examples we will focus on the principal part only. By the above notation it can be written as the standard quantization of some boundary functional.

\subsection{The modified QME}\label{s:mQME}
In all these theories $\Omega$ may be explicitly obtained by the usual techniques about integrals on
compactified configurations spaces (see, e.g., \cite{CMcs}).
Under the assumption of ``unimodular'' perturbations and ``tractable'' contributions from hidden faces in the bulk we have the following
\begin{Thm}[The mQME]
There is a quantization $\Omega$ of $S^\de$ that squares to zero and such that the modified quantum master equation (mQME) is satisfied. This $\Omega$ is completely determined by graph contributions at the boundary of compactified configuration spaces.
\end{Thm}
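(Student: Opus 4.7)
The plan is to analyze the action of the naive operator $\hbar^2\Delta + \Omega_0$ on the perturbative state \eqref{e: psi pert}, diagram by diagram, and then to absorb the leftover terms into a correction $\Omega-\Omega_0$ that is dictated by the boundary strata of the compactified configuration spaces $C_\Gamma$. Concretely, for each connected Feynman graph $\Gamma$ contributing to $\hat\psi_M$, I would write
\[
(\hbar^2\Delta_{\calV_M}+\Omega_0)\int_{C_\Gamma}\omega_\Gamma
\;=\;\int_{C_\Gamma}\dd\,\omega_\Gamma\;-\;(\text{short-loop anomalies}),
\]
where $\dd\omega_\Gamma$ receives contributions from $\dd\eta=\pm\sum\pi_1^*\chi_i\,\pi_2^*\chi^i$ (which pairs an edge into two residual-field insertions and reproduces $\hbar^2\Delta_{\calV_M}$), and from $\Omega_0$ acting on boundary-vertex decorations (which, after integration by parts, hits the coefficient of the edge landing on $\de M$). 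The first step is to check that these two bookkeeping moves really assemble $\hbar^2\Delta+\Omega_0$ applied to the sum over $\Gamma$.

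Second, I would invoke Stokes' theorem on the Axelrod--Singer--Fulton--MacPherson compactification $C_\Gamma$ to rewrite each $\int_{C_\Gamma}\dd\omega_\Gamma$ as a sum over codimension-one faces. These faces split into three types: (i) hidden faces, where a subset of bulk points collapses to a single bulk point; (ii) boundary faces, where a collection of bulk points approaches and collides with a boundary point, or several boundary points collapse on $\de M$; (iii) principal faces where only two points collide in the interior. The principal faces cancel in pairs thanks to the classical master equation for $\calS_{M,\mathrm{pert}}$ (Jacobi-type identity on the vertex tensors) together with $\eta^2=0$-type properties, or more generally by grouping them using the CME for the target. The hidden faces of type (i) are the genuinely delicate input: under the assumption of ``unimodular'' perturbation (so that short loops contribute tame multiples of the Euler form) and ``tractable'' hidden-face contributions (so that higher-order collapses either vanish by degree or recombine into total divergences), these either die on the nose or can be reabsorbed as $\hbar$-corrections to interaction vertices. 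The corresponding statements are pointwise local in $M$, so they can be checked in the boundaryless model and then transferred here verbatim.

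Third, I define $\Omega$ by collecting exactly the type (ii) contributions: a configuration of $k$ bulk points cascading onto a single boundary point $x\in\de_i M$ produces, after fiber integration, a local differential form on $\de_i M$ times a polynomial in the composite fields $[\bA^I]$ or $[\bB_J]$ and in a multiple derivative with respect to the opposite boundary field at $x$; this is manifestly a simple operator in the sense of Section~\ref{s:4.1.2 operators}. Summing these assemblies over all cascading patterns gives the full $\Omega$, and by construction its principal part is the standard-ordering quantization of $\calS^\de_{\de M}$ (plus possibly $O(\hbar)$ corrections when hidden-face regularization demands it, cf.\ Remark~\ref{r:mQMEnoDeltazero}). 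The mQME $(\hbar^2\Hat\Delta^\calP_M+\Omega)\Hat\psi_M=0$ is then, term by term in $\Gamma$, the statement that the sum of type (ii) faces matches $(\Omega-\Omega_0)\Hat\psi_M$ with the opposite sign, which holds by the very definition of $\Omega$.

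Finally, to establish $\Omega^2=0$ I would run the same Stokes argument on the configuration space associated to two independent cascades meeting on $\de M$, i.e.\ on the compactification of configurations that have collapsed already to the boundary and are now undergoing a second-order collapse. The codimension-one faces there are of two types: two cascades telescoping (giving the composition $\Omega\circ\Omega$) and internal principal/hidden faces within a single cascade (which vanish by the bulk argument already used). Equality with zero of the total boundary integral then yields $\Omega^2=0$, possibly up to an anomaly that is again controlled by the unimodularity hypothesis. The main obstacle in the whole scheme is precisely this hidden-face analysis: one must certify, model by model in the examples of Section~\ref{sec: BF-like}, that the target data (Lie bracket, Poisson bivector, Hamiltonian, Yang--Mills potential) are unimodular enough to kill short-loop anomalies, since otherwise $\Omega$ fails to square to zero and an obstruction class appears.
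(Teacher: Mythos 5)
Your proposal follows essentially the same route as the paper: the mQME is obtained by applying Stokes' theorem on the compactified configuration spaces $C_\Gamma$, sorting the codimension-one faces into bulk principal faces (cancelled by the classical master equation), hidden faces (killed by vanishing theorems under the unimodularity/tractability hypotheses), and boundary-collapse faces (which define $\Omega_{\text{pert}}$), and then $\Omega^2=0$ is established by running the same argument on the quotient configuration spaces $\underline{C}_{\Gamma'}$ of the collapsed subgraphs. The one step you assert rather than prove is that the resulting $\Omega$ is actually a \emph{quantization of $\calS^\de$}: you write that ``by construction its principal part is the standard-ordering quantization of $\calS^\de_{\de M}$,'' but this requires computing the coefficient of each single-bulk-vertex boundary collapse and showing it equals $1$; the paper does this in Lemma~\ref{l:Omegahigher} by a further Stokes argument on the boundary face of a two-bulk-vertex graph, which yields the recursion $g_{k+1}=g_k$ together with $g_1=1$ from the normalization of the propagator. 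Without that normalization argument (and the accompanying classification of the surviving higher-order wheel contributions) the identification of the principal part of $\Omega$ with the standard quantization of the BFV action is not automatic.
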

We split the proof this result into three Lemmata.

%
\begin{Lem}\label{lemma: mQME}
The modified QME is satisfied with $\Omega=\Omega_0+\Omega_\text{pert}$,
where $\Omega_0$ is the standard
quantization of the unperturbed boundary action and $\Omega_\text{pert}$ is
determined by the boundary configuration space integrals.
\end{Lem}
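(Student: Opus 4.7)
The plan is to verify the modified quantum master equation graph-by-graph in the Feynman expansion \eqref{e: psi pert} of $\Hat\psi_M$, invoking Stokes' theorem on the Axelrod--Singer--Fulton--MacPherson compactified configuration spaces $C_\Gamma$. First, I would compute $(\hbar^2\Delta_{\calV_M} + \Omega_0)\Hat\psi_M$ contribution by contribution. The torsion prefactor $\prod_i T_M^{(k_i)}$ is annihilated by both operators (it is constant in boundary fields and represents a canonical half-density on $\calV_M$ compatible with $\Delta$; cf.\ Section~\ref{sec: T and torsions}). For each graph $\Gamma$, the action of $\Omega_0$ on a boundary-field insertion at a univalent boundary vertex $v$ yields the de~Rham differential applied to $\bA$ or $\bB$ at $v$; integrating by parts shifts this operator onto the adjacent propagator $\eta$ attached to $v$. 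The action of $\hbar^2\Delta_{\calV_M}$ pairs two leaves decorated by dual residual fields via the BV Laplacian \eqref{e:Delta}, producing the completeness sum $\sum_i \pi_1^*\chi_i\,\pi_2^*\chi^i$.

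Next, I would invoke the defining identity \eqref{e:deta}, $\dd\eta = (-1)^{d-1}\sum_i (-1)^{d\cdot\deg\chi_i}\pi_1^*\chi_i\,\pi_2^*\chi^i$, to rewrite each leaf-pairing contribution produced by $\hbar^2\Delta_{\calV_M}$ as the ``closure'' of two loose half-edges into a single internal edge decorated with $\dd\eta$. Combined with the $\Omega_0$-contribution (which, after integration by parts, hits external propagators by $\dd\eta$ as well), and using that $\dd$ kills interaction vertex tensors coming from $\calV$, this assembles for each graph $\Gamma$ into the total de~Rham differential $\dd\omega_\Gamma$ of the integrand on $C_\Gamma$. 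Stokes' theorem then gives
\[
\int_{C_\Gamma}\dd\omega_\Gamma \;=\; \int_{\partial C_\Gamma}\omega_\Gamma.
\]
The boundary $\partial C_\Gamma$ decomposes into three types of strata: (a) \emph{principal bulk faces}, where exactly two bulk vertices collide in the interior $M^{\circ}$; (b) \emph{hidden bulk faces}, where three or more bulk vertices collide in $M^{\circ}$; (c) \emph{boundary faces}, where a connected subgraph of bulk vertices collapses to a point on $\partial_1 M$ or $\partial_2 M$.

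Finally, I would show that the contributions from (a) and (b) cancel, and \emph{define} $\Omega_\text{pert}$ from the contributions of type (c). Faces of type (a) organize naturally into pairs of graphs differing by the contraction of a single edge, whose sum reproduces the vertex tensor corresponding to $\tfrac12(\calS_M,\calS_M)$ evaluated at the collision point; this vanishes by the bulk classical master equation satisfied by $\calS_M = \Szero + \Spert$ on a closed manifold, provided the ``unimodularity'' assumption eliminates the short-loop anomalies arising from coincident $\alpha$--$\beta$ contractions at a single vertex. Faces of type (b) vanish by the ``tractable hidden faces'' hypothesis. Faces of type (c), in which a subgraph collapses to a point of $\partial M$, extract a local differential operator on the boundary acting on what remains of $\Gamma$; packaging the sum of these operators (over all possible collapsing subgraphs and all connected components of $\partial M$), and re-expressing the resulting higher functional derivatives in the $\bA$- and $\bB$-representations via the composite-field formalism of Sections~\ref{sss:fullstate} and \ref{s:4.1.2 operators}, \emph{defines} $\Omega_\text{pert}$ precisely so that $(\hbar^2\Delta_{\calV_M} + \Omega_0 + \Omega_\text{pert})\Hat\psi_M = 0$.

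The main obstacle is the control of the hidden faces (b): while tractability is assumed by hypothesis, verifying it in concrete examples (Poisson sigma model, 2D Yang--Mills, split Chern--Simons) requires nontrivial dimensional and symmetry arguments that are genuinely theory-specific. A secondary but nonnegligible difficulty is the precise bookkeeping of signs, valencies and automorphism factors $|\mathrm{Aut}(\Gamma)|$ needed to identify the boundary-face sum with an element of the algebra of simple operators described in Section~\ref{s:4.1.2 operators}, i.e., a bona fide quantization of $\calS^\de_{\partial M}$ in the chosen polarization.
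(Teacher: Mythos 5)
Your proposal is correct and follows essentially the same route as the paper's proof: Stokes' theorem on the compactified configuration spaces $C_\Gamma$, with $\dd$ acting on boundary fields giving $\Omega_0$, $\dd$ acting on propagators giving $\hbar^2\Delta_{\calV_M}$ via \eqref{e:deta}, two-vertex bulk collapses cancelling by the classical master equation, hidden faces vanishing by assumption, and $\Omega_\text{pert}$ being defined by the boundary-collapse strata. The only cosmetic difference is that you organize the computation starting from $(\hbar^2\Delta+\Omega_0)\Hat\psi_M$ and integrate by parts, whereas the paper starts from $\int_{C_\Gamma}\dd\omega_\Gamma$ and identifies the pieces, which amounts to the same bookkeeping.
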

\begin{proof}[Sketch of the proof and construction of $\Omega_\text{pert}$]
Let $\Gamma$ be a Feynman graph (a disjoint union of $\geq 1$ graphs of the type appearing in the exponential in (\ref{e: psi pert} )) and $\omega_\Gamma$ the corresponding differential form
over the compactified configuration space $C_\Gamma$. Consider Stokes theorem
$\int_{C_\Gamma} \dd\omega_\Gamma = \int_{\de C_\Gamma} \omega_\Gamma$. The left hand side contains terms where $\dd$ acts on an $\bA$ or a $\bB$ and terms where $\dd$ acts on the 
propagator. The former correspond to the action of $\frac1{\ii\hbar}\Omega_0$, the latter when summed over graphs $\Gamma$ assemble, due to (\ref{e:deta}), to the action of $-\ii\hbar\Delta_{\calV_M}$ on the state.
The right hand side contains three classes of terms:
\begin{enumerate}
\item Integrals over boundary components where two vertices collapse in the bulk. The combinatorics of the Feynman diagrams in the expansion ensures that these terms cancel out when we sum over all the diagrams.\footnote{This cancellation relies on the assumption that the perturbed action satisfies the classical master equation, which is equivalent to $\sum_{i=1}^n \pm \frac{\de }{\de \sfA_i}\calV(\sfA,\sfB)\cdot \frac{\de }{\de \sfB^i}\calV(\sfA,\sfB)=0$, which in turn implies a relation on contractions of pairs of vertex tensors.}
\item Integrals over boundary components where more than two vertices collapse in the bulk (``hidden faces''). The usual arguments---vanishing theorems---ensure the vanishing of all these terms apart, possibly, for faces where all the vertices of a connected component of a graph collapse. In all the above mentioned theories, with the exception of Chern--Simons theory, also these terms vanish. In Chern--Simons theory, they may possibly survive, but can be compensated by a framing dependent term (see \cite{AS} and \cite{BC}).
\item \label{lm mQME (3)} Terms where two or more (bulk and/or boundary) vertices collapse together at the boundary 
or a single bulk vertex hits the boundary. The integral on
such a boundary face splits into an integral over a subgraph $\Gamma'$ of $\Gamma$ corresponding
to the collapsed vertices and an integral over $\Gamma/\Gamma'$, the graph obtained by identifying all the vertices
in $\Gamma'$ and forgetting the edges inside $\Gamma'$.
We define the action of $\frac\ii\hbar\Omega_\text{pert}$ by the sum of the boundary contributions of the  $\Gamma'$'s.
If we now sum over all graphs $\Gamma$, all these terms will give $\frac\ii\hbar\Omega_\text{pert}$ applied to the state.
\end{enumerate}
As a result we get the mQME.
\end{proof}

\begin{Rem}
In QM we clearly have $\Omega=0$, by degree reasons.
\end{Rem}
\begin{Rem}
In the (generalized) two\ndash dimensional YM theory, the term $vf(\sfB)$ does not contribute to
$\Omega_\text{pert}$, for the restriction of $v$ to the boundary is zero. As a consequence,
$\Omega$ for the (generalized) two\ndash dimensional YM theory is the same as for $BF$ theory.
\end{Rem}

\begin{Lem}
$\Omega$ squares to zero.
\end{Lem}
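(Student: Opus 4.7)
The plan is to mimic the boundary-face analysis from the proof of Lemma~\ref{lemma: mQME}, but applied one dimension lower, to the reduced configuration spaces $\underline{C}_{\Gamma'}\subset \mathbb{H}^d$ of the collapsed subgraphs $\Gamma'$ that define the coefficients of $\Omega_\text{pert}$.

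First I would decompose
\[
\Omega^2 = \Omega_0^2 + \{\Omega_0,\Omega_\text{pert}\} + \Omega_\text{pert}^2.
\]
The term $\Omega_0^2$ vanishes on the nose, since $\Omega_0$ acts (up to a prefactor) as the de~Rham differential on every base and composite boundary field. The remaining two contributions have a common diagrammatic meaning: a double collapse at the boundary, possibly mediated by a de~Rham derivative, and can be regrouped as a single sum over pairs of subgraphs producing integrals over boundary strata of configuration spaces in the upper half-space.

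Next, I would apply Stokes' theorem on each $\underline{C}_{\Gamma'}$, where $\Gamma'$ is a subgraph already born from one collapse. Its codimension-one boundary strata fall into four classes, each disposed of exactly as in Lemma~\ref{lemma: mQME}:
\begin{itemize}
\item principal bulk faces, cancelling pairwise by the CME satisfied by the perturbation $\calV$;
\item hidden bulk faces, vanishing by the unimodularity/tractability hypothesis;
\item faces where a subsubgraph collapses further at $\de\mathbb{H}^d$; summed over all $\Gamma'$, these reassemble into $\Omega_\text{pert}^2$ and into the ``split'' pieces of $\{\Omega_0,\Omega_\text{pert}\}$ in which one linear insertion survives;
\item the $\dd$-action on propagators inside the integrand, producing via \eqref{e:deta} the remaining part of $\{\Omega_0,\Omega_\text{pert}\}$ and the scaling-face contributions.
\end{itemize}
The total sum of all these contributions vanishes by Stokes, and by the above regrouping it equals $\Omega^2$.

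The hard part will be twofold: (i) careful sign- and combinatorial bookkeeping to match the boundary strata of $\underline{C}_{\Gamma'}$ with the prop-algebra compositions of simple operators from Section~\ref{s:4.1.2 operators}, including the correct counting of how far a subsubgraph's external legs pair with composite boundary fields versus with free derivatives; and (ii) an upper-half-space analogue of the bulk vanishing hypothesis, needed to kill ``boundary hidden faces'' where three or more vertices collide at a single point of $\de\mathbb{H}^d$. In all the examples of Sections~\ref{s:abeBF}--\ref{sec: BF-like}, the boundary hidden faces vanish by direct dimensional counting on $\underline{C}_{\Gamma'}$, so (ii) does not obstruct the argument; in Chern--Simons-type situations it would potentially need to be compensated by framing-dependent corrections, as is familiar from \cite{AS,BC}.
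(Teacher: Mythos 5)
Your overall strategy is the paper's: apply Stokes' theorem to the forms $\sigma_{\Gamma'}$ on the reduced configuration spaces $\underline{C}_{\Gamma'}$ that define the coefficients of $\Omega_\text{pert}$, cancel bulk principal faces by the CME, kill bulk hidden faces by vanishing theorems, and read off $\Omega_\text{pert}^2$ from iterated boundary collapses. However, your fourth item contains a concrete misstep. On $\underline{C}_{\Gamma'}$ one is working with the \emph{limiting} propagators at a point of collapse on the boundary, and these are closed: the residual-field term on the right-hand side of \eqref{e:deta} is a smooth form on $M\times M$ and does not survive the rescaled limit in which all points collapse to a single boundary point (the paper states this explicitly in the proof of Lemma~\ref{l:Omegahigher}: ``all propagators are closed as we are near the boundary''). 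So there is no contribution from $\dd$ acting on propagators, and in particular no $\chi_i\otimes\chi^i$ insertion; attributing ``the remaining part of $\{\Omega_0,\Omega_\text{pert}\}$'' to such terms is incorrect, and would moreover wrongly make $\Omega^2$ depend on the residual fields. Note also that this item is not a codimension-one boundary stratum at all --- $\dd$ acting on the integrand is the \emph{left-hand} side of Stokes' theorem, and $\underline{C}_{\Gamma'}$ already has translations and scalings modded out, so there are no separate ``scaling faces.''

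The correct accounting is: the left-hand side $\int_{\underline{C}_{\Gamma'}}\dd\sigma_{\Gamma'}$ consists \emph{only} of terms where $\dd$ hits a field insertion $\bA$ or $\bB$, and these assemble (after summing over $\Gamma'$) into the graded commutator $\Omega_0\Omega_\text{pert}+\Omega_\text{pert}\Omega_0$ --- all of it, not a ``split'' piece. The right-hand side then consists of your first three classes only, and the third class alone produces $\Omega_\text{pert}^2$. Together with $\Omega_0^2=0$ this gives $\Omega^2=0$. A smaller remark on your point (ii): faces where three or more vertices collide at a single boundary point are not ``boundary hidden faces'' to be killed by a vanishing theorem; collapses at the boundary of arbitrarily many (bulk and/or boundary) vertices are exactly the strata that define $\Omega_\text{pert}$ and hence, iterated, the composition $\Omega_\text{pert}^2$. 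The only hidden faces requiring a vanishing hypothesis are the bulk ones, which you already handle in your second class.
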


\begin{proof}[Sketch of the proof]
This can be done again by the same techniques as in the previous Lemma.
Namely,
let $\Gamma'$ be a graph appearing in the definition of $\Omega_\text{pert}$ and $\sigma_{\Gamma'}$
the corresponding differential form
---a product of
the propagators  $\eta$ and the boundary fields
$\bA$ or $\bB$ ---
over the compactified configuration space $\underline{C}_{\Gamma'}$, 
obtained
by modding out translations along the boundary and scalings.
Consider again Stokes theorem
$\int_{\underline C_{\Gamma'}} \dd\sigma_{\Gamma'} = \int_{\de \underline C_{\Gamma'}} \sigma_{\Gamma'}$.
The left hand side contains only terms where $\dd$ acts on an $\bA$ or a $\bB$, which correspond
to the action of $\frac1{\ii\hbar}\Omega_0$. The right hand side contains again three classes of terms.
The first class contains the terms where two vertices collapse in the bulk (the bulk is now a neighborhood of a point in the boundary); these terms cancel out when we sum over all graphs. The second class contains the terms
where more than two vertices collapse in the bulk; these terms do not contribute by the usual vanishing
theorems. Finally, the third class contains terms when two or more (bulk and/or boundary) vertices collapse together at the boundary or a single bulk vertex hits the boundary.
When we sum over all graphs, 
these terms yield the action of
$\frac\ii\hbar\Omega_\text{pert}$. 
This shows that $\Omega_0\Omega_\text{pert}+\Omega_\text{pert}\Omega_0+\Omega_\text{pert}^2$ vanishes.
Since we know that $\Omega_0^2=0$, we conclude that $\Omega^2=0$.
\end{proof}

\begin{Lem}\label{l:Omegahigher}
$\Omega$ is given by the canonical quantization of $S^\de$ plus (possibly) higher
order corrections. More precisely, the canonical quantization of $S^\de$ corresponds to $\Omega_0$ plus the contributions of\/ $\Omega_\text{pert}$ corresponding to exactly one bulk point approaching the boundary.
\end{Lem}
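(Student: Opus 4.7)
The equality $\Omega=\Omega_0+\Omega_{\text{pert}}$ is already established by the previous lemma, so the only content here is to identify which part of $\Omega_{\text{pert}}$ corresponds to the standard-ordering quantization of the interaction piece of $\calS^\de_{\de M}$, and to observe that the remainder is a higher-order correction. By Remarks~\ref{r:deAKSZ} and \ref{r:de2YM}, one has
\[
\calS^\de_{\de M}=\int_{\de M}\braket{\sfB}{\dd\sfA}+\int_{\de M}\Theta_\text{bdry}(\sfA,\sfB),
\]
where $\Theta_\text{bdry}$ is read off from $\calV$. The standard quantization of the first summand in the $\bA$-representation on $\de_1M$ (and the $\bB$-representation on $\de_2M$) is by definition $\Omega_0$, so the task reduces to checking that the standard ordering quantization of $\int_{\de M}\Theta_\text{bdry}$ equals the single-bulk-vertex part of $\Omega_{\text{pert}}$.

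First I would recall, from the construction of $\Omega_{\text{pert}}$ in the proof of Lemma~\ref{lemma: mQME}, that its generating graphs $\Gamma'$ correspond to boundary strata of compactified configuration spaces where vertices collapse to a single boundary point $p\in\de_iM$. Among these, the strata where a \emph{single} bulk vertex $v$ hits the boundary form the simplest class: the reduced configuration space $\underline{C}_{\Gamma'}$ is a point (up to the position of $p$ on $\de_iM$), and $\Gamma'$ consists of one interaction vertex carrying the tensor $\partial^{s+t}\calV/\partial\sfA^{s}\partial\sfB^{t}$ together with its half-edges, which must be traced out. Using the boundary behaviour of the propagator $\eta$ recorded in \eqref{e: Omega D}--Section~\ref{sec: properties of propagators}, each outgoing half-edge on a vertex collapsing to $\de_1M$ terminates either at a nearby $\bA$-decorated boundary vertex---producing an $\bA$-insertion at $p$ in the limit---or at a short leaf carrying $\sfa$; and each incoming half-edge becomes, in the blow-up, a functional derivative $(-\ii\hbar)\,\delta/\delta\bA$ at $p$. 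The dual statement holds on $\de_2M$ in the $\bB$-representation.

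Next I would verify that the combinatorial and orientation factors coming from the blow-up fiber integral reproduce the standard-ordering prescription. The relevant half-ball fiber $\underline{C}_{\{v,\,\text{leaves}\}}$ carries, in the limit, trivial limiting propagators connecting $v$ to its boundary leaves; integrating over this fiber yields exactly the constant ``1'' vertex factor times the arrangement of $\bA$'s on the left and derivatives $\delta/\delta\bA$ on the right (because the derivatives come from the half-edges that must still be contracted with the \emph{rest} of the graph, i.e.\ with external data). Summed over $s,t$ and over graphs with a single bulk vertex, this reassembles $\Theta_\text{bdry}(\bA,-\ii\hbar\,\delta/\delta\bA)$ with all derivatives to the right, i.e.\ precisely $\Omega^\calP_{\de M}$ in the sense of \eqref{e:Omegastandardordering} applied to $\Theta_\text{bdry}$.

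Finally I would package the remaining contributions. Graphs $\Gamma'$ in which two or more bulk vertices collapse together at the boundary are genuinely multi-vertex configurations; each such $\Gamma'$ carries at least one loop or at least one extra bulk vertex, so its contribution is suppressed by a factor $(-\ii\hbar)^{\mathrm{loops}(\Gamma')}$ relative to the single-vertex terms (in the AKSZ case with unimodular target) or else contributes higher-order $\hbar$ corrections to the boundary action as in the Poisson sigma model; these are precisely the ``higher order corrections'' of the statement. The main subtlety, which I expect to be the principal obstacle, is the ordering and sign bookkeeping in Step~3: one has to check that the standard prescription ``all $\delta/\delta\bA$ to the right'' is exactly enforced by which half-edges of $v$ are declared ``external to $\Gamma'$'' (becoming derivatives via the collapse) versus ``already contracted with boundary insertions'' (becoming $\bA$'s via the boundary condition). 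This amounts to a careful analysis of the graded orientation of the blow-up fiber, analogous to the one performed in \cite{CMcs}.
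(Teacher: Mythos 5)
Your overall strategy coincides with the paper's: isolate the single\nobreakdash-bulk\nobreakdash-vertex boundary strata, show they assemble into the standard\nobreakdash-ordering quantization of the interaction part of $\calS^\de_{\de M}$, and relegate multi\nobreakdash-vertex collapses to higher\nobreakdash-order corrections. But there is a genuine gap at the one place where the lemma has real content. You assert that ``integrating over this fiber yields exactly the constant `1' vertex factor,'' attributing it to ``trivial limiting propagators.'' This is not true on its face and is precisely what must be proved: the fiber is the configuration of one bulk point and $k$ boundary points modulo horizontal translations and scalings, a $k(d-1)$\nobreakdash-dimensional space over which one integrates the product of $k$ limiting propagators (each a nontrivial closed $(d-1)$\nobreakdash-form near the boundary --- e.g.\ the Kontsevich propagator for $d=2$). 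Call this number $g_k$. Only $g_1=1$ follows directly from the normalization of $\eta$ on a small hemisphere. Without knowing $g_k=1$ for all $k$, you would only conclude that $\Omega$ contains a term of the same \emph{shape} as the standard quantization of $\calV$, but with unknown universal coefficients multiplying each monomial --- which is strictly weaker than the lemma. The paper closes this gap with a separate Stokes argument: it introduces an auxiliary graph with \emph{two} bulk vertices and $k$ boundary vertices, with one edge from bulk vertex $1$ to each boundary vertex and to bulk vertex $2$; since all propagators are closed near the boundary, the integral of $\dd$ of the corresponding form vanishes, and the only two nonvanishing boundary strata are the collapse of the two bulk points (contributing $g_k$) and the descent of bulk vertex $2$ to the boundary (contributing $g_{k+1}$), whence $g_{k+1}=g_k$ and so $g_k=1$ for all $k$. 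Your proposal contains no substitute for this induction; identifying the ``main subtlety'' as orientation/sign bookkeeping misdirects attention from the actual crux.

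Two smaller inaccuracies: on $\de_1M$ the residual field $\sfa$ lies in $\Omega^\bullet_{\DD1}(M)$ and hence vanishes at the boundary, so a collapsing $\sfA$\nobreakdash-leg cannot terminate ``at a short leaf carrying $\sfa$''; only the fluctuation contributes, and each $\sfA$\nobreakdash-leg is therefore forced to connect by a propagator to a boundary vertex $\braket\bA\beta$ (this is what makes the collapsed graph exactly a $(k+1)$\nobreakdash-point star). Also, the paper rules out composite\nobreakdash-field insertions in these single\nobreakdash-vertex strata because a multiple edge vanishes by dimension (for $d>2$) or parity (for $d=2$); your proposal does not address why the single\nobreakdash-vertex collapse produces only linear insertions.
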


\begin{proof}[Sketch of the proof] 
Consider, e.g., the $\de_1$ boundary (the $\de_2$ case is treated similarly). Here we are in the $\bA$ representation. In a boundary term
of the type stated in the Lemma, there will be one bulk vertex coming from 
$\calV(\sfA,\sfB)$
and boundary vertices $\braket\bA\beta$ (in $d>1$ there are no contributions
from composite fields as in this particular case they would correspond to a multiple edge which vanishes
by dimensional reasons, for $d>2$, or by parity reasons, for $d=2$).
The bulk $\sfA$s actually only contribute with $\alpha$ as $\sfa$ vanishes on the boundary. So a monomial term
of degree $k$ in $\sfA$ in 
$\calV(\sfA,\sfB)$
will actually yield a boundary graph with $k$ boundary vertices and
with $k$ propagators joining
the bulk vertex to each boundary vertex. The integration is over the configuration space of these $k+1$ vertices
modulo 
horizontal translations (i.e. translations tangent to the boundary) and scalings. All the $\bA$ fields are grouped in the
integration along the boundary. The $\sfB$s in 
$\calV$
correspond to applying
$(-1)^d\ii\hbar\frac\delta{\delta \bA}$ to the rest of the state (as we explained above, if this results in
 a higher functional derivative, it has to be interpreted as the
first order functional derivative with respect to the corresponding composite field).
What remains to be shown is that the coefficients are equal to $1$. This is obvious for $k=0$.
For $k\ge1$,
denote by $g_k$  the result of the integration of the graph with $k$ boundary vertices
(notice that each $g_k$ is a number as we are integrating a $k\times(d-1)$\ndash form
on a $k\times(d-1)$\ndash dimensional space).
The simplest one, $g_1$, corresponds to one bulk vertex
and one boundary vertex joined by a an edge. We fix the horizontal translations by fixing the boundary point and we fix the scalings on the bulk point. The integral yields $1$ precisely because the propagator is normalized. Next, one shows that all other graphs yield the same contribution. This is an application of Stokes' theorem again. Consider a graph with $2$ bulk and $k$ boundary vertices, $k\ge1$,
and exactly one edge joining the bulk vertex 1 to each boundary vertex and to the bulk vertex 2. We take the differential
of the corresponding form and integrate over the corresponding boundary configuration space. Notice that all propagators are closed as we are near the boundary, so we just get an equality between the boundary contributions. There are actually two of them: the first is when the two bulk points collapse together, and this yields $g_k$; the second is
when the bulk point 2 goes to the boundary, and this yields $g_{k+1}$. So we have $g_{k+1}=g_k$ for all $k$,
which, together with $g_1=1$, yields $g_k=1$ for all $k$.

\end{proof}

\begin{Rem}\label{r:changeprop}
If we choose a different propagator, the higher order corrections might change leading to a different, but equivalent,
$\Omega$.
\end{Rem}

\begin{Rem}\label{rem: univ coeff}
Using results from \cite{BC, CR} one sees that the possible higher order corrections depend on global forms, possibly appearing in the action, and on universal coefficients that are 
invariant polynomials of the curvature of the connection used in the construction of the propagator.
The universal coefficients are Chern--Weil representatives of certain universal polynomials, with real coefficients, in the Pontryagin classes of the pull-back of the tangent bundle of $M$ to $\de M$. Note that, by the stability property, these Pontryagin classes coincide in cohomology, $H^{4j}(\de M)$, with classes of the tangent bundle of $\de M$, since $TM|_{\de M}=T\de M\oplus N\de M$ and the last term (the normal bundle to the boundary) is a trivial rank 1 bundle. This implies that, up to equivalence as in Remark~\ref{r:changeprop},
the boundary operator $\Omega$ does not depend on the bulk.
\end{Rem}

The principal part of the operator $\Omega_\mr{pert}$ (see the end of Section~\ref{s:4.1.2 operators})
constructed in the proof of Lemma \ref{lemma: mQME} has the following general structure:
\begin{multline*}
\Omega_\mr{pert}=\sum_{n,k\geq 0}\sum_{\Gamma'_1} \frac{(-\ii\hbar)^{\mr{loops}(\Gamma'_1)}}{|\mr{Aut}(\Gamma'_1)|} \int_{\de_1 M} \left(\sigma_{\Gamma'_1}\right)_{i_1\cdots i_{n}}^{j_1\cdots j_{k}}\bA^{i_1}\cdots \bA^{i_{n}}\left((-1)^d\ii\hbar \frac{\delta}{\delta \bA^{j_1}}\right)\cdots \left((-1)^d\ii\hbar \frac{\delta}{\delta \bA^{j_{k}}}\right)\\ +
\sum_{n,k\geq 0}\sum_{\Gamma'_2} \frac{(-\ii\hbar)^{\mr{loops}(\Gamma'_2)}}{|\mr{Aut}(\Gamma'_2)|} \int_{\de_2 M} \left(\sigma_{\Gamma'_2}\right)_{j_1\cdots j_{k}}^{i_1\cdots i_{n}}\bB_{i_1}\cdots \bB_{i_{n}}\left((-1)^d\ii\hbar \frac{\delta}{\delta \bB_{j_1}}\right)\cdots \left((-1)^d\ii\hbar \frac{\delta}{\delta \bB_{j_{k}}}\right)
\end{multline*}
where $\Gamma'_1$ runs over graphs with
\begin{itemize}
\item $n$ vertices on $\de_1 M$ of valence $1$ with adjacent half-edges oriented inwards and decorated with boundary fields $\bA_{i_1},\ldots,\bA_{j_{n}}$, all evaluated at the point of collapse $x\in \de_1 M$,
\item $k$ inward leaves decorated with variational derivatives in boundary fields $$(-1)^d\ii\hbar\frac{\delta}{\delta \bA_{j_1}},\ldots, (-1)^d\ii\hbar\frac{\delta}{\delta \bA_{j_k}}$$
at the point of collapse,
\item no outward leaves (graphs with them do not contribute).
\end{itemize}
The form $\sigma_{\Gamma'_1}$ on $\de_1 M$ is the universal coefficient of
Remark \ref{rem: univ coeff} and is obtained as the integral, over the compactified configuration space $\underline{C}_{\Gamma'_1}$ (with translations along boundary and scalings modded out), of the product of limiting propagators at the point of collapse and vertex tensors.
The graphs $\Gamma'_2$ correspond to a collapse at a point $y\in \de_2 M$; the Feynman rules for them are similar, but with opposite orientations for boundary vertices and leaves, and with multiplications and derivations in the field $\bB$ instead of $\bA$.

Note that $\Omega$ does not depend on residual fields.

\begin{Rem}[Change of data]
Using the same techniques \cite{CMcs}, one can show that under a change of data, see Remark~\ref{r:cod}, the state changes
consistently: $\frac\dd{\dd t}\psi = (\hbar^2\Delta+\Omega)(\psi\zeta)$, where $\zeta$ can be computed explicitly in
terms of Feynman diagrams.
\end{Rem}

\begin{Rem}[Open problem]
When we glue two states $\psi_{M_1}$ and $\psi_{M_2}$ as in Proposition~\ref{p:gluing4} we get a new state $\psi_M$.
All three states satisfy the mQME as they are Feynman diagram expansions of the theory. This shows that there is
a relation between the operators $\Omega_1$ and $\Omega_2$ on the gluing submanifold $\Sigma$ regarded as a boundary component
of $M_1$ or of $M_2$. Namely, the pairing of $\psi_{M_2}$ with $(\Hat\Omega_1-\Omega_2)\psi_{M_2}$ vanishes,
where $\Hat\Omega_1$ is the functional Fourier transform of $\Omega_1$. Notice that in the pairing we only see the principal parts. This leads then to the conjecture that
$\Hat\Omega_1^\text{princ} = \Omega_2^\text{princ}$.
\end{Rem}

\subsection{The doubling trick}\label{ss:dbling}
On a manifold without boundary one can choose the propagator to be symmetric, up to a sign, under the exchange of $\alpha$ and $\beta$. The boundary polarizations however break this symmetry. This asymmetry persists after gluing, even if at the end we have a closed manifold. One can obviate this as follows.
First we add an additional abelian $BF$ theory with the same field content:
\[
\calS_{M,\text{double}}(\sfA,\sfB,\Check\sfA,\Check\sfB) := \Szero(\sfA,\sfB)+\Spert(\sfA,\sfB) +
\Szero(\Check\sfA,\Check\sfB).
\]
The states for this theory are tensor products of the states for the $(\sfA,\sfB)$\ndash theory
with the states for the abelian $(\Check\sfA,\Check\sfB)$\ndash theory, and we know the latter explicitly. In particular, on
a closed manifold, the partition function of the doubled theory will differ from the one in the original theory just by a multiple of the torsion of $M$. Moreover, the expectation values of
$(\sfA,\sfB)$\ndash observables will be the same for the two theories.
Next we make the change of variables:\footnote{{}In our setting the space of fields is a vector space. In more general
settings, $\sfA$ and $\Check\sfA$ contain a connection in degree zero, so the space of fields is affine.
In this case, $\sfA_1$ will still belong to the same affine space, whereas $\sfA_2$ will belong to its tangent space.}
\begin{align*}
\sfA &= \sfA_1 + \sfA_2, &  \Check\sfA &= \sfA_1 - \sfA_2,\\
\sfB &= \sfB_1 + \sfB_2, & \Check\sfB &= \sfB_1 - \sfB_2,
\end{align*}
We now have
\begin{multline*}
\calS_{M,\text{double}}(\sfA_1,\sfB_1,\sfA_2,\sfA_2)=
2\Szero(\sfA_1,\sfB_1)+
2\Szero(\sfA_2,\sfB_2)+
\Spert(\sfA_1 + \sfA_2,\sfB_1 + \sfB_2).
\end{multline*}
The final step in this construction is the choice a polarization. Our choice will be to choose opposite
polarizations for the fields of type $1$ and those of type $2$. To stick to the notations of
subsection~\ref{s:pol}, on $\de_1 M$ we choose the $\frac{\delta}{\delta\sfB_1}\times\frac{\delta}{\delta\sfA_2}$\ndash polarization and on $\de_2 M$ we choose the $\frac{\delta}{\delta\sfA_1}\times\frac{\delta}{\delta\sfB_2}$\ndash polarization.
We then proceed with the splittings of the fields into boundary, residual and fluctuation fields.
Notice that the propagators for the theories $1$ and $2$ will have opposite boundary conditions and will be
$\frac12$ of the propagators considered before (because of the factor $2$ in front of the $\Szero$'s).
On the other hand, to construct the Feynman diagrams we will always have to contract a factor
$\alpha_1+\alpha_2$ from one vertex with a factor $\beta_1+\beta_2$ from another vertex. This will produce
the average of the two propagators computed in Section~\ref{s:abeBF}
with the two opposite boundary conditions.


\subsection{Quantum mechanics}
We start with the simple case of quantum mechanics, see Example~\ref{e:QM}.
In this case, $\calF^\de=T^*W$ and, by degree reasons, we have $S^\de=0$ and
$\Omega=0$ (as the only connected zero dimensional manifold is a point, we do not write it explicitly as an
index).\footnote{More generally, we could take as target a superspace endowed with BFV data in addition
to a Hamiltonian function. In this case,
$S^\de$ and $\Omega$ may not be trivial.} Also we take $M$ to be the interval $[t_1,t_2]$.

The simplest way to compute QM is with the mixed polarization: namely, we take $\de_1M=\{t_1\}$ and
$\de_2M=\{t_2\}$ (or vice versa). In this case there are no residual fields and we have
$\eta(s,t)=\Theta(s-t)$, with $\Theta$ the Heaviside function. We also have $T_M=1$ (with $T_M$ as in Section \ref{sec: T and torsions}).
If $H=0$, we then simply have
\[
\Psi_{[t_1,t_2],0}=\EE^{-\frac i\hbar\sum_i p_iq^i},
\]
where we use the notation $q=\bA$ and $p=\bB$. Notice that this state is the representaion of the identity operator.
One can easily compute $\langle P_r(\tau)\rangle_0=\EE^{-\frac i\hbar\sum_i p_iq^i}p_r$ and $\langle Q^s(\tau)\rangle_0=\EE^{-\frac i\hbar\sum_i p_iq^i}q^s$ for all $\tau\in(t_1,t_2)$.
Let $\tau_1,\tau_2$ be such that $t_1<\tau_1<\tau_2<t_2$. We then have
\begin{subequations}\label{e:starord}
\begin{align}
\langle Q^s(\tau_2)P_r(\tau_1)\rangle_0 &= \EE^{-\frac i\hbar\sum_i p_iq^i}(q^s p_r+\ii\hbar\,\delta_r^s),\\
\langle P_s(\tau_2)Q^r(\tau_1)\rangle_0 &= \EE^{-\frac i\hbar\sum_i p_iq^i} p_s q^r,\\
\langle P_s(\tau_2)P_r(\tau_1)\rangle_0 &= \EE^{-\frac i\hbar\sum_i p_iq^i}p_sp_r,\\
\langle Q^s(\tau_2)Q^r(\tau_1)\rangle_0 &= \EE^{-\frac i\hbar\sum_i p_iq^i} q^sq^r.
\end{align}
\end{subequations}
Hence, if $f$ and $g$ are functions on $T^*W$, we have
\[
\langle f(Q(\tau_2),P(\tau_2))\;g(Q(\tau_1),P(\tau_1))\rangle_0=\EE^{-\frac i\hbar\sum_i p_iq^i} f\star g(q,p)
\]
where $\star$ is the star product defined by the 
ordering \eqref{e:starord}, i.e. $f\star g=f\,\EE^{\sum_i\ii\hbar \frac{\overleftarrow{\de}}{\de q^i}\frac{\overrightarrow{\de}}{\de p_i}}g$. Finally, if we have a Hamiltonian
function $H$, we may write $\int_M H(Q,P)\dd t$ as a limit of Riemann sums. Taking the expectation value and computing the limit finally yields
\[
\Psi_{[t_1,t_2]}=\EE^{-\frac i\hbar\sum_i p_iq^i}\EE_\star^{\frac\ii\hbar (t_2-t_1)H}(q,p).
 \]

We may also work in the $\bA$\ndash representation on both sides. In this case, we have residual fields
\[
\sfa = zv,\quad \sfb=z^+,
\]
with $v\in\Omega^1([t_1,t_2])$ satisfying $\int_{t_1}^{t_2}v=1$. Notice that $\deg z=-1$ and $\deg z^+=0$ and that $\Delta=-\sum_r\frac{\de^2}{\de z^r\de z^+_r}$.
The corresponding propagator is then
$\eta(s,t)=\Theta(s-t)+\psi(s)$ with $\psi(s)=-\int_{t_1}^{s}v$. It follows that
\[
\Psi_{[t_1,t_2],0}=\EE^{\frac i\hbar\sum_i z^+_i(q_2^i-q_1^i)},
\]
where $q_1$ and $q_2$ denote $\bA$ at $\{t_1\}$ and at $\{t_2\}$.
Notice that we can make a BV integration on residual fields  by choosing the Lagrangian subspace $\{z=0\}$.
The integration over $z^+$ yields, up to a normalization constant, $\delta(q_2-q_1)$, which is
the $q$\ndash representation of the identity operator.
We can now compute $\langle P_r(\tau)\rangle_0=\EE^{\frac i\hbar\sum_i z^+_i(q_2^i-q_1^i)}z^+_r$ and
\begin{multline*}
\langle Q^s(\tau)\rangle_0=\EE^{\frac i\hbar\sum_i z^+_i(q_2^i-q_1^i)}(q_1^s + (q_1-q_2)^s\psi(\tau))=\\=
\EE^{\frac i\hbar\sum_i z^+_i(q_2^i-q_1^i)}q_1^s-
\ii\hbar\Delta(\EE^{\frac i\hbar\sum_i z^+_i(q_2^i-q_1^i)}z^s\psi(\tau)),
\end{multline*}
for all $\tau\in(t_1,t_2)$. Similarly, for $t_1<\tau_1<\tau_2<t_2$, we get
\begin{align*}
\langle Q^s(\tau_2)P_r(\tau_1)\rangle_0 &= \EE^{\frac i\hbar\sum_i z^+_i(q_2^i-q_1^i)}(q_1^s z^+_r+\ii\hbar\delta_r^s)
-\ii\hbar\Delta(\EE^{\frac i\hbar\sum_i z^+_i(q_2^i-q_1^i)} z^s z^+_r\psi(\tau_2))
,\\
\langle P_s(\tau_2)Q^r(\tau_1)\rangle_0 &= \EE^{\frac i\hbar\sum_i z^+_i(q_2^i-q_1^i)}z^+_s q_1^r
-\ii\hbar\Delta(\EE^{\frac i\hbar\sum_i z^+_i(q_2^i-q_1^i)}z^+_s z^r\psi(\tau_1))
,\\
\langle P_s(\tau_2)P_r(\tau_1)\rangle_0 &= \EE^{\frac i\hbar\sum_i z^+_i(q_2^i-q_1^i)} z^+_s z^+_r,\\
\langle Q^s(\tau_2) Q^r(\tau_1)\rangle_0 &=\EE^{\frac i\hbar\sum_i z^+_i(q_2^i-q_1^i)} q_1^s q_1^r
-\ii\hbar\Delta(\EE^{\frac i\hbar\sum_i z^+_i(q_2^i-q_1^i)}(q_1-q_2)^sz^r\psi(\tau_1)\psi(\tau_2)).
\end{align*}
More generally, we have
\[
\langle f(Q(\tau_2),P(\tau_2))\; g(Q(\tau_1),P(\tau_1))\rangle_0= \EE^{\frac i\hbar\sum_i z^+_i(q_2^i-q_1^i)}
f\star g(q_1,z^+)-\ii\hbar\Delta(\cdots).
\]
If we integrate over $z^+$, with $z=0$, we finally get
\[
\int_{z=0} \dd z^+\,\langle f(Q(\tau_2),P(\tau_2))\; g(Q(\tau_1),P(\tau_1))\rangle_0=
f\star g\left(q_2,-\ii\hbar\frac\de{\de q_2}\right)\delta(q_2-q_1).
\]
Finally,
\[
\int_{z=0} \dd z^+\,\Psi_{[t_1,t_2]}=\EE_\star^{\frac\ii\hbar (t_2-t_1)H}\left(q_2,-\ii\hbar\frac\de{\de q_2}\right)\delta(q_2-q_1).
\]

\subsection{Nonabelian $BF$ theories}\label{sec:non-ab BF}
We continue with the case of nonabelian $BF$ theories for a Lie algebra $\frg$, see Example~\ref{exa-AKSZ}.
The bulk BV action is
\[
\calS_M=\int_M\left(
\braket{\sfB}{\dd\sfA}+
\frac12\braket{\sfB}{[\sfA,\sfA]}\right)
\]
and, since this is an AKSZ theory, the boundary BFV action has the same form:
\[
\calS^\de_\Sigma=\int_\Sigma\left(
\braket{\sfB}{\dd\sfA}+
\frac12\braket{\sfB}{[\sfA,\sfA]}\right).
\]
The standard quantization is then
\begin{multline}\label{e:BFOmegastand}
\Omega_\text{stand} = \int_{\de_2M}\left(
(-1)^d\ii\hbar\sum_a\dd\bB_a\,\frac{\delta}{\delta\bB_a}-
\sum_{a,b,c} f^a_{bc} \frac{\hbar^2}2
\bB_a\frac\delta{\delta\bB_b}\frac\delta{\delta\bB_c}
\right)+\\+
 \int_{\de_1M}\left(
(-1)^d\ii\hbar\sum_a \dd\bA^a\,\frac{\delta}{\delta\bA^a}+
\frac12\sum_{a,b,c} f^a_{bc}\,(-1)^d\ii\hbar\,\bA^b\bA^c\frac\delta{\delta\bA^a}
\right),
\end{multline}
where we have introduced a basis for the Lie algebra and denoted the corresponding structure constants by
$f^a_{bc}$. One can easily check that $\Omega_\text{stand}^2=0$.

\begin{Lem}
If $d$ is even, then the principal part of $\Omega$ is $\Omega_\text{stand}$. If $d$ is odd,
then the principal part of $\Omega$ is the standard quantization of
\[
\Tilde \calS^{\de M} = \calS^{\de M} -\ii\hbar \sum_{j=0}^{\left[\frac{d-3}4\right]}
\int_{\de M} \gamma_{j} \tr\ad_{\sfA}^{d-4j},
\]
where $\gamma_{j}$ is a closed $4j$\ndash
form on $\de M$ which is
an invariant polynomial, with universal coefficients, of the curvature of the connection used in the construction of the propagator.
\end{Lem}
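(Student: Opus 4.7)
The plan is to continue the Feynman-diagrammatic analysis of Section \ref{s:mQME}. By Lemma \ref{l:Omegahigher}, one only needs to examine collapsed subgraphs $\Gamma'$ with at least two bulk vertices, and identify their universal coefficients. I would first classify such $\Gamma'$ subject to the principal-part condition (each boundary vertex univalent and no leaves). The cubic $BF$ vertex has one $\sfB$-leg and two $\sfA$-legs, and propagators pair $\sfA$ with $\sfB$; a half-edge count shows that a connected $\Gamma'$ on $\de_1 M$ with $n$ bulk vertices and $m$ boundary vertices (and no leaves) must satisfy $m=n$ with exactly $n$ internal edges, organizing the $\sfB$-legs into a functional digraph of out-degree one. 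Connected such digraphs are cycles decorated by trees.

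The next step is to show that only pure $n$-cycles (the wheels $W_n$) contribute. The Jacobi identity for $\frg$, or equivalently the antisymmetry $f^a_{bc}=-f^a_{cb}$ combined with the symmetry of the two $\sfA$-legs at each vertex, yields IHX-type relations on the collapsed graphs as in \cite{AS,BC}. These allow any tree-decorated cycle to be rewritten as a signed sum of pure wheels, so after summation one is left with wheel contributions only. Each wheel of length $n$ contracts the structure constants cyclically to give $\tr(\ad_{\bA}^{n})$, while its universal coefficient $\sigma_{W_n}$, by form-degree counting, is a form on $\de M$ of degree $d-n$, obtained by fiber integration of the product of the $n$ internal propagators over the compactified configuration space $\underline{C}_{W_n}$.

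For the vanishing when $n$ is even, the wheel $W_n$ admits a reflection automorphism reversing the cyclic order of vertices. Under this $\bbZ_2$ action the Lie-algebra side contributes $(-1)^n$ from the $n$ applications of antisymmetry of $f^a_{bc}$, the product of propagators transforms by $(-1)^{nd}$ via the boundary-face analogue of \eqref{e:T-sym with bdry}, and the induced permutation of integration variables contributes a further sign from the orientation change of the reduced configuration space. Collecting these signs, one finds the integrand is odd under the reflection precisely when $n$ is even, so $\sigma_{W_n}=0$ in that case. Combined with the Chern--Weil constraint of Remark \ref{rem: univ coeff}, which forces $\sigma_{W_n}$ to be a $4j$-form and hence $n=d-4j$, this shows that for $d$ even every admissible $n$ is even and all corrections vanish, leaving the principal part of $\Omega$ equal to $\Omega_\text{stand}$.

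For $d$ odd, $n=d-4j$ is odd and generically nonzero; Remark \ref{rem: univ coeff} then identifies $\sigma_{W_n}$ with a closed $4j$-form $\gamma_j$ given by an invariant polynomial in the curvature of the connection entering the propagator, yielding the stated form of the correction. The tadpole $n=1$ does not arise for unimodular $\frg$, so the sum stops at $j=[(d-3)/4]$. The main technical obstacle I expect is the careful sign bookkeeping in the reflection argument, together with the IHX reduction of tree-decorated cycles to pure wheels; both are standard but lengthy and parallel the Chern--Simons computations of \cite{AS,CMcs}.
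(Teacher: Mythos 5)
Your overall skeleton (classify the collapsed boundary graphs, reduce to wheels, identify the coefficients as Chern--Weil forms) matches the paper's, but two of your intermediate steps do not hold up. The main gap is the reduction of tree-decorated cycles to pure wheels via ``IHX-type relations.'' IHX is a relation among the \emph{Lie-algebra weight factors} of different graphs; it does not relate their configuration-space integrals, which are different forms on $\de M$ for different graph topologies. So even granting an IHX identity among the $\frg$-tensors, you cannot conclude that the total contribution of a tree-decorated cycle equals that of a signed sum of pure wheels. What is actually true (and what the paper proves) is that tree-decorated cycles \emph{vanish}: a terminal tree vertex has both of its $\sfA$-legs attached directly to boundary vertices; integrating those boundary vertices over $\de_1M$ removes the corresponding edges by the normalization of the propagator, leaving a univalent bulk vertex whose single remaining $(d-1)$-form cannot saturate the $d$-dimensional integration, so the integral is zero. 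The same integration-of-boundary-vertices step is also what identifies the coefficient of a surviving wheel with the pure loop integral $\beta_k$ --- a step you need and do not supply.

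Two smaller issues. First, ``no leaves'' is not part of the principal-part condition: principal operators may carry several functional derivatives $\delta/\delta\bA$, i.e.\ the collapsed graph may have $\sfB$-leaves. Graphs with $b\ge1$ such leaves and at least two bulk vertices must be excluded by the dimension count $d(b-1)+t\le 0$, which forces $b=0$ (wheels) or $b=1,\ t=0$ (the standard cubic term); you skip this. Second, your reflection argument for the vanishing of even wheels is a legitimate alternative in principle, but it relies on the $T$-symmetry of the propagator restricted to the boundary face, which for soft propagators is not automatic (cf.\ \eqref{e:T-sym with bdry}), and the sign bookkeeping is left undone. The paper avoids this entirely: $\beta_k$ is an invariant polynomial in the curvature of an $SO(d)$-connection, hence vanishes unless its degree $d-k\equiv 0\bmod 4$ (in particular whenever $d-k$ is odd), and the Stokes relation $\dd\beta_{2s+1}=\pm\beta_{2s}$ on the boundary face then kills the even wheels when $d$ is even. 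That route is both shorter and independent of any symmetry normalization of the propagator.
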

\begin{proof}[Sketch of the proof]
As the interaction is cubic, the vertices are at most trivalent. Notice that if the boundary diagram contains a univalent bulk vertex, then the integral is zero by dimensional reasons unless this is the only vertex, in which case we get a contribution to $\Omega_\text{stand}$. This means that in the boundary graph we only have bivalent and trivalent bulk vertices.
We now use the following convention: edges in the graph are oriented pointing from the $A$ vertex to the $B$ vertex.
Notice that the trivalent vertex has one incoming and two outgoing arrows, so it increases the number of outgoing arrows.

On $\de_2M$ we then have outgoing arrows from the boundary and the bulk vertices are either bivalent, with one incoming and one outgoing arrow, or trivalent. Thus, the only possibility is to have only the bivalent vertices and they have to be arranged in a loop.

On $\de_1M$ we have instead arrows pointing to the boundary and the bulk vertices are either bivalent, with two outgoing arrows, or trivalent. Suppose that the graph has $b$ bivalent bulk vertices, $t$ trivalent bulk vertices and
$m$ boundary vertices. By arrow conservation we have $2b+t=m$. Moreover, the total number of arrows is
$(3t+2b+m)/2=2t+2b$. This implies that the form degree is $(2t+2b)(d-1)$. The dimension of the boundary space
is $d(b+t)+(d-1)m-d=(3d-2)b+(2d-1)t-d$. If the dimension is larger than the form degree, then the integral vanishes.
Since the difference between form degree and dimension is $d(1-b)-t$, we get
$d(b-1)+t\le0$. This cannot hold if $b>1$. For $b=1$ we get $t=0$, which is a contribution to $\Omega_\text{stand}$.
Hence we are left with $b=0$ --- i.e., no bivalent vertices --- and $t\le d$. This means that the graph is a wheel from which trees depart to hit the boundary. We claim that this graph vanishes unless each vertex in the wheel is directly connected to a boundary vertex. In fact, if this is not the case, there will be a bulk vertex not in the wheel with two emanating edges that hit the boundary. Integrating a boundary vertex removes the corresponding edge, by normalization of the propagator. Hence, integrating these two boundary vertices leaves a univalent vertex, so the integral vanishes.
Finally consider the wheels with each vertex directly attached to a boundary vertex. Again, integrating the boundary vertices removes the corresponding edges. Hence, the contribution of each such wheel is the same as the contribution of the corresponding loop, as on $\de_2M$.

In figure~\ref{fig:wheelloop} we give an example of a loop and the corresponding wheel that might give a nontrivial contribution.
\begin{figure}[htbp] 
   \centering
\includegraphics[width=3in]{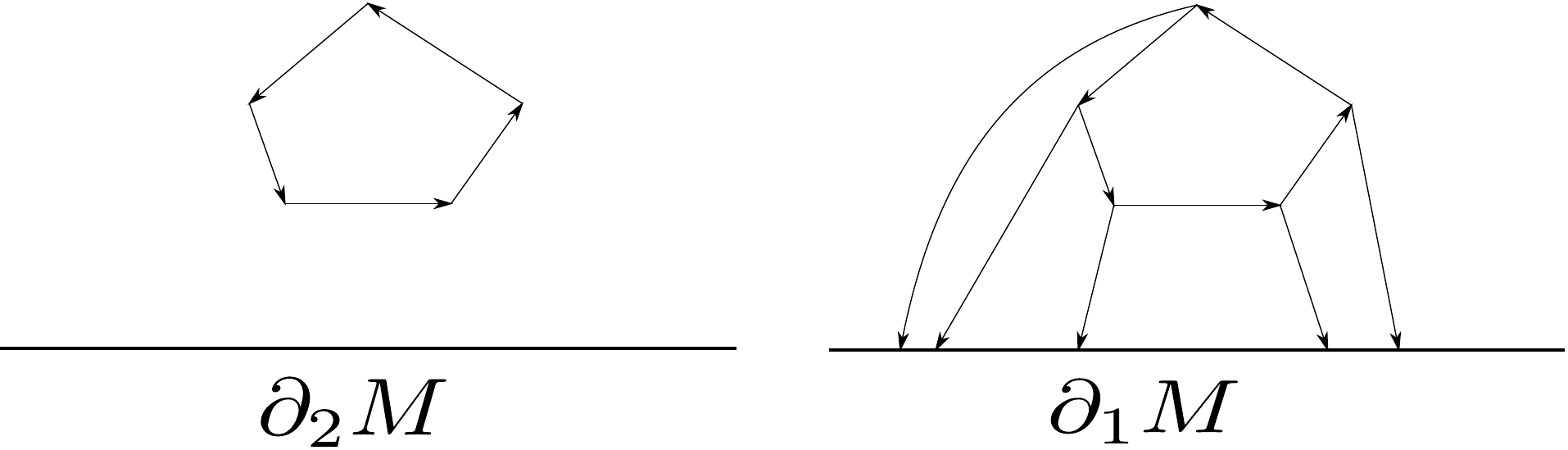}
   \caption{An example of a loop (only the internal edges, not leaves, of the collapsed graph are shown) and the corresponding wheel}
   \label{fig:wheelloop}
\end{figure}

Let us denote by $\beta_k$ the $(d-k)$\ndash form on $\de M$ obtained by integrating the loop with
$k$ vertices. Since the restriction of the propagators to these boundary faces is closed, by Stokes theorem we
get $\dd\beta_{2s+1}=\pm\beta_{2s}$ $\forall s$.

As in 
Remark~\ref{rem: univ coeff},
we now have to recall, see \cite{BC, CR},  that $\beta_k$ is an invariant polynomial in
the curvature of the connection used to define the propagator (if we define the propagator by Hodge decomposition, the connection is the Levi-Civita connection for the chosen metric). In particular, $\beta_k=0$ if its degree is
odd. Moreover, using compatibility of the connection with reduction of the structure group of the tangent bundle to $SO(d)$, we have that $\beta_k$ can be nonzero only if $d-k=0\bmod 4$. The coefficients in the polynomial are universal.

If $d$ is even, then $\deg\beta_{2s+1}$ is odd. This then implies that $\beta_{2s+1}=0$ and
hence also $\beta_{2s}=0$ for all $s$.

If $d$ is odd, then $\beta_{2s}=0$ for all $s$ by the same reason. Moreover, $\beta_{2s+1}$ is zero unless $2s+1=d\bmod 4$. We then denote $\gamma_j:=\beta_{d-4j}$ the potentially non-vanishing polynomials.
%
%
%
\end{proof}

\begin{Rem}\label{rem: gammas change with connection}
If we change the connection in the construction, each polynomial  $\gamma_{j}$ changes
by an exact form $\dd\sigma_j$. Hence, $\Tilde \calS^{\de M}$ changes by
\[
\left\{
\Tilde \calS^{\de M},
\ii\hbar\sum_{j=0}^{\left[\frac{d-3}{4}\right]}
\int_{\de M} \sigma_j \tr\ad_{\sfA}^{d-4j}\right\},
\]
where $\{\ ,\ \}$ is the Poisson bracket associated to $\omega^\de_{\de M}$,
so we see explicitly that we get an equivalent $\Omega$.
\end{Rem}
\begin{Rem}
We do not know if the characteristic classes $\gamma_{j}$ in odd dimension are non zero.
They might vanish if, e.g., we had a vanishing Lemma that ensures that bivalent vertices with consecutive arrows yield
zero. (This is easily shown to be true in two dimensions.)
\end{Rem}
\begin{Rem}
Notice that $\gamma_{0}$ is a closed zero-form. Moreover, this constant is universal (possibly zero).
Denoting it by $c_d$, we get a  contribution
$c_d\int_{\de M}  \tr\ad_{\sfA}^{d}$. Notice that this is the only contribution for $d=3$ and for $d=5$.
In higher odd dimensions there may be other contributions as well.
\end{Rem}
\begin{Exa}
We first consider the example when $M$ is a ball and we work in the $\bB$\ndash representation.
If we denote the propagators as arrows joining $\alpha$ to $\beta$, then we have arrows issuing from the boundary.
The only vertex that reduces the number of arrows corresponds to a term $\beta\sfa\sfa$. Because of the boundary conditions the residual fields $\sfa$ are concentrated in cohomology degree $0$. Hence we get univalent
vertices which vanish upon integration. There are two vertices that preserve the number of arrows: $\sfb\sfa\sfa$ and
$\beta\alpha\sfa$. The first just gives an insertion of residual fields. The second produces loops. However, since $\sfa$ is in degree zero, the form degree of a loop with $n$ vertices is
$n(d-1)$; the dimension of the configuration space is however $nd$, so the integral vanishes. In conclusion,
the state for a ball in nonabelian $BF$ theory is the same as for $\dim\frg$ copies of abelian $BF$ theory plus
the insertion  $\sfb\sfa\sfa$.
In particular,
the effective action in $d$ dimensions reads
\begin{multline*}
\calS^\text{eff}_M(\bB,z,z^+) = (-1)^{d-1} \int_{S^{d-1}} \braket\bB \sfa +
\int_M \frac12\braket{\sfb}{[\sfa,\sfa]}=\\
=(-1)^d\sum_a z^a\int_{S^{d-1}} B_a^{d-1}
+ \frac12\sum_{a,b,c}\,f^a_{bc} z^+_az^bz^c,
\end{multline*}
where we have written $\sfa=z1$ and $\sfb=z^+v$ for a normalized volume form on $M$
with $z\in\frg$ and $z^+$ valued in $\frg^*$. By $B^{d-1}$ we denote the $(d-1)$\ndash form component
of $\bB$ (which has ghost number $-1$).
\end{Exa}

The same computation in the $\bA$\ndash polarization is much more involved as in this case nontrivial graphs appear.
This case may be obtained from the previous one using the generalized Segal--Bargmann transform, which is nontrivial as requires
considering the cylinder $S^{d-1}\times I$ with $\bA$\ndash polarization on both boundary components.

\subsection{2D Yang--Mills theory}
As explained in Example~\ref{r:twoDYM}, the (generalized) two\ndash dimensional YM theory may be treated
as a perturbation of $BF$ theory with the same Lie algebra $\frg$. As the perturbation does not affect the boundary,
we get that $\Omega=\Omega_\text{stand}$ as in \eqref{e:BFOmegastand}.

\subsubsection{Examples}
For simplicity we focus on the abelian case $\frg=\bbR$. The vertices are given by the Taylor expansion
of $f$, $f(x)=\sum_{k=0}^\infty \frac1{k!} f^{(k)}x^k$.

We first consider the example when $M$ is a disk and we work in the $\bB$\ndash representation.
In the bulk we expand $\sfB=\sfb+\beta$. As $\sfb$ is concentrated in form degree $2$, we get
$\int_M vf(\sfB)=\int_M vf(\beta)=\sum_{k=0}^\infty \frac1{k!} f^{(k)}\int_M v\beta^k$.
Each $\alpha$ on the boundary can be paired to a $\beta$ in the interaction. The graphs contributing to the state
are stars with one bulk vertex, with coefficient $vf^{(k)}$, and $k$ boundary vertices. If we denote
by $\alpha_k$ the $k$\ndash form on $(S^1)^k$ obtained by integrating the bulk vertex of such a graph,
we get
\begin{multline}\label{e:YMB}
\calS^\text{eff}_M(\bB,z,z^+) = -\int_{S^1} \bB z+
\sum_{k=0}^\infty \frac1{k!}f^{(k)}\int_{(S^1)^k}\alpha_k\,\pi_1^*\bB\cdots\pi_k^*\bB=\\=
z\int_{S^1} B^1 +
\sum_{k=0}^\infty \frac1{k!}f^{(k)}\int_{(S^1)^k}\alpha_k\,\pi_1^*B^0\cdots\pi_k^*B^0,
\end{multline}
where $B^i$ denotes the $i$\ndash form component of $\bB$.

Next we consider the same example but in the $\bA$\ndash representation. In this case
$\sfb$ is concentrated in form degree $0$. On the other hand, there are no $\alpha$s to pair the $\beta$s.
If we write $\sfb=z^+1$, with $\deg z^+=0$, we get the effective action
\begin{equation}\label{e:YMA}
\calS^\text{eff}_M(\bA,z,z^+)=\int_{S^1}z^+\bA + Vf(z^+)=
Vf(z^+)+ z^+\int_{S^1}A_1
\end{equation}
with $V:=\int_{M}v$ the area of the disk and $A_1$ the $1$\ndash form component of $\bA$ (i.e., the classical field).

One can pass from one polarization to the other by the generalized Segal--Bargmann transform, see Remark~\ref{rem: changing P by attaching a cylinder}.  
To do this we have to consider the cylinder $S^1\times I$ with the topological theory corresponding to the 2D generalized YM theory. This is just $BF$ theory.

Suppose we start from the $\bB$\ndash representation. Then we should consider the cylinder with $\bA$\ndash representation on both end sides. We denote the boundary fields by $\Tilde\bA$ and $\bA$ to distinguish the two boundary components. We write the residual fields as
\[
\sfa = w^1\chi_1 + w u,\qquad \sfb = w^+1+ w^+_1\chi^1,
\]
with $u$ a two-form 
and $\chi_1$, $\chi^1$ one\ndash forms forming a basis in the cohomologies together with $1$.
The effective action reads
\[
\calS^\text{eff}_{S^1\times I}(\Tilde\bA,\bA,w,w^1,w^+,w^+_1)=
w^+\int_{S^1}\Tilde A_1 - w^+_1\int_{S^1}\chi^1\Tilde A_0
-w^+\int_{S^1} A_1 + w^+_1\int_{S^1}\chi^1 A_0.
\]
We now pair the $\bA$ variables with the $\bB$ variables in \eqref{e:YMB}. This yields, after integration,
the exponent
\[
w^+\int_{S^1}\Tilde A_1 - w^+_1\int_{S^1}\chi^1\Tilde A_0-
zw^+_1+Vf(w^+).
\]
We now take the Lagrangian subspace $\{w_1=0,\ z^+=0\}$; integrating out $z$ and $w^+_1$,
and using $\int_{(S^1)^k}\alpha_k=V$,\footnote{In fact, integrating the boundary vertices just removes the edges form the graph;
at the end we are left with $\int_Mv=V$.}
yields the exponent
\[
Vf(w^+)+w^+\int_{S^1}\Tilde A_1
\]
which is \eqref{e:YMA} with a relabeling of the variables.

Next we start from the $\bA$\ndash representation. We then consider the cylinder with $\bB$\ndash representation on both end sides. We now denote the boundary fields by $\Tilde\bB$ and $\bB$ to distinguish the two boundary components.
We write the residual fields as
\[
\sfa=w1+w^1\chi_1,\qquad\sfb=w^+_1\chi^1+w^+u.
\]
We have the effective action
\[
\calS^\text{eff}_{S^1\times I}(\Tilde\bB,\bB,w,w^1,w^+,w^+_1)=
-\int_{S^1}\Tilde B^1w-\int_{S^1}\Tilde B^0w^1\chi_1
+\int_{S^1}B^1w+\int_{S^1} B^0w^1\chi_1.
\]
We pair the $\bB$ variables with the $\bA$ variables in \eqref{e:YMA}. This yields, after integration,
the exponent
\[
-\int_{S^1}\Tilde B^1w-\int_{S^1}\Tilde B^0w^1\chi_1+
Vf(z^+)-z^+w^1,
\]
where we have used $\int_{S^1}\chi_1=1$. We now choose the Lagrangian subspace
$\{z=0,\ w^+_1=0\}$ and integrate out $z^+$ and $w^1$. This yields the exponent
\begin{equation}\label{e:YMBm}
-\int_{S^1}\Tilde B^1w+Vf\left(-\int_{S^1}\Tilde B^0\chi_1\right),
\end{equation}
which differs from \eqref{e:YMB} but actually just by a BV transformation. Recall that
$\int_{(S^1)^k}\alpha_k=(-1)^k V$.  This shows that $\alpha_k$ and
$\beta_k:=(-1)^k V\pi_1^*\chi_1\cdots\pi_k^*\chi_1$ are in the same cohomology class.
Let $\tau_k$ be a path of $k$\ndash forms interpolating between $\alpha_k$ and $\beta_k$; e.g.,
$\tau_k(t)=(1-t)\alpha_k+t\beta_k$, $t\in[0,1]$. We have that $\dot\tau_k=\dd\gamma_k$, for some
$(k-1)$\ndash form $\gamma_k$. We define
\[
\calS^\text{eff}_M(\bB,z,z^+;t) = z\int_{S^1} B^1 +
\sum_{k=0}^\infty \frac1{k!}f^{(k)}\int_{(S^1)^k}\tau_k(t)\,\pi_1^*B^0\cdots\pi_k^*B^0,
\]
Notice that the exponent computed above, equation \eqref{e:YMBm}, with a relabeling of the variables
is $\calS^\text{eff}_M(\bB,z,z^+;1)$,
whereas $\calS^\text{eff}_M(\bB,z,z^+)$ is $\calS^\text{eff}_M(\bB,z,z^+;0)$. We now have
\[
\frac\dd{\dd t}\EE^{\frac\ii\hbar\calS^\text{eff}_M(\bB,z,z^+;t)}=
\EE^{\frac\ii\hbar\calS^\text{eff}_M(\bB,z,z^+;t)}
\frac\ii\hbar
\sum_{k=0}^\infty \frac1{k!}f^{(k)}\int_{(S^1)^k}\dd\gamma_k\,\pi_1^*B^0\cdots\pi_k^*B^0.
\]
Observe that
\[
\int_{(S^1)^k}\dd\gamma_k\pi_1^*B^0\cdots\pi_k^*B^0=
(-1)^k\int_{(S^1)^k}\gamma_k\dd(\pi_1^*B^0\cdots\pi_k^*B^0)=
\frac{(-1)^{k+1}}{\ii\hbar}\Omega\int_{(S^1)^k}\gamma_k\pi_1^*B^0\cdots\pi_k^*B^0.
\]
Since $\Omega\calS^\text{eff}_M(\bB,z,z^+;t)=0$ for all $t$ and all the terms involved are
$\Delta$\ndash closed, we have
\[
\frac\dd{\dd t}\EE^{\frac\ii\hbar\calS^\text{eff}_M(\bB,z,z^+;t)}=
(\hbar^2\Delta+\Omega)\left(
\frac{(-1)^{k+1}}{\hbar^2}\EE^{\frac\ii\hbar\calS^\text{eff}_M(\bB,z,z^+;t)}
\sum_{k=0}^\infty \frac1{k!}f^{(k)}
\int_{(S^1)^k}\gamma_k\pi_1^*B^0\cdots\pi_k^*B^0
\right),
\]
which shows that \eqref{e:YMB} and \eqref{e:YMBm} are equivalent.

\subsection{Split Chern--Simons theory}
The split Chern--Simons theory, see Example~\ref{exa-AKSZ}, can be treated like the nonabelian $BF$ theory; cf. \cite{CMW} for an example of a perturbative calculation.
There are more vertices and what causes more problem is the presence of possibly nonvanishig hidden face contributions, which however can be dealt with using framing (see \cite{BC,CMcs}).

The principal part of the boundary operator $\Omega$ might now have additional contributions to the canonical quantization of
$S^\de$. By dimensional reasons and by the same argument as in Section \ref{sec:non-ab BF}, the corrections are given by
cubic terms with universal numerical coefficients. Hence, the principal part of $\Omega$ will be the canonical quantization of the boundary Chern--Simons action for a
possibly deformed Lie algebra.
We will return to this example in a future paper (for low order results see \cite{CMW}).

\newcommand{\bX}{{\mathbb{X}}}

\subsection{The Poisson sigma model}
The Poisson sigma model, see Example~\ref{exa-AKSZ}, is important in connection to deformation quantization
\cite{Kdq,CFdq}. 
It is also a deformation of abelian $BF$ theory.
Its fields are usually denoted by $\sfX$ and $\sfeta$ instead of $\sfA$ and $\sfB$.
For a source two\ndash manifold $M$ and target $\bbR^n$, we have $\sfX\in\Omega^\bullet(M)\otimes\bbR^n$
and $\sfeta\in\Omega^\bullet(M)\otimes(\bbR^n)^*[1]$.
Given a Poisson bivector
field $\pi$ on $\bbR^n$, the BV action reads
\[
\calS_M=\int_M\left(
\sum_{i=1}^n\sfeta_i\dd\sfX^i+
\frac12 \sum_{i,j=1}^n \pi^{ij}(\sfX)\sfeta_i\sfeta_j
\right).
\]
As an AKSZ theory its boundary BFV action has the same form:
\[
\calS^\de_\Sigma=\int_\Sigma\left(
\sum_{i=1}^n\sfeta_i\dd\sfX^i+
\frac12 \sum_{i,j=1}^n \pi^{ij}(\sfX)\sfeta_i\sfeta_j
\right).
\]
We will denote by $\bX$ and $\bE$ the boundary fields corresponding to $\sfX$ and $\sfeta$, respectively.
The standard quantization of $\calS^\de_\Sigma$ in the $\bX$\ndash  representation is a second-order differential operator, whereas in the $\bE$\ndash representation it is in general of unbounded order (unless
$\pi$ is polynomial).

For the quantization of the PSM one has to pick a background, i.e., a constant map
$x\colon M\to\bbR^n$, and expand around it (by abuse of notation we will write $x$ also for the image of this map).
In the standard quantization of $\calS^\de_\Sigma$ in the $\bE$\ndash representation we Taylor-expand
$\pi$ around $x$, thus getting in general a formal power series in $\sfX$.

Recall that the quantization of the PSM on the upper half plane \cite{CFdq} yields Kontsevich's star product \cite{Kdq}.
This is an associative product on $C^\infty(\bbR^n)[[\ii\hbar]]$. We write
\[
f\star g = fg+\sum_{I,J} 
B^{IJ}\,\frac{\de^{|I|}}{\de x^I}f\,\frac{\de^{|J|}}{\de x^J}g=
fg-\frac{\ii\hbar}2\sum_{ij}\pi^{ij}\,\frac{\de f}{\de x^i}\frac{\de g}{\de x^j} + O(\hbar^2),
\]
where $I$ and $J$ are multi-indices (and $i$ and $j$ are indices) and $B^{IJ}=0$
if $|I|=0$ or $|J|=0$.
\begin{Lem}\label{l:PSMErep}
In the $\bE$\ndash representation, we have
\[
\Omega = \Omega_0 + \int_{\Sigma}
\sum_{IJKRS}
\frac{(-\ii\hbar)^{|K|-|I|-|J|+1}}
{(|K|+|R|+|S|)!}
\,
\de_KB^{IJ}(x)\,[\bE_I\bE_R][\bE_J\bE_S]\,
\frac{\delta^{|K|+|R|+|S|}}{\delta\bE_K\delta\bE_R\delta\bE_S}.
\]
\end{Lem}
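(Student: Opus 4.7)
The plan is to identify $\Omega_{\mathrm{pert}}$ explicitly for the PSM by applying the general Feynman-graph construction of Lemma \ref{lemma: mQME} and recognizing the resulting universal boundary integrals as the coefficients of Kontsevich's star product from \cite{Kdq, CFdq}. The first step is to expand the interaction $\frac{1}{2}\pi^{ij}(\sfX)\sfeta_i\sfeta_j$ as a Taylor series around the chosen constant background $x$, producing vertices
\[
\frac{1}{2\,|K|!}\,\de_K\pi^{ij}(x)\,\sfX^K\sfeta_i\sfeta_j
\]
with $|K|$ outgoing $\sfX$-legs and two incoming $\sfeta$-legs. Feynman graphs for the state then consist of these vertices connected by propagators, with boundary vertices carrying insertions of $\bE$ (from the $\sfeta$-polarization, i.e.\ the $\bX$-representation would have $\bX$ on the boundary; here in the $\bE$-representation it is $\bE$ that sits on the boundary while $\sfX$-legs hitting $\Sigma$ are converted to $(-1)^d\ii\hbar\,\delta/\delta\bE$, as explained in Section \ref{s:4.1.2 operators}).

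Next I would analyze the boundary collapses producing $\Omega_{\mathrm{pert}}$. When a subgraph $\Gamma'$ collapses at a point $p\in\Sigma$, blowing up at $p$ produces a configuration space on the upper half-plane $\mathbb{H}$; the corresponding integrand is precisely a PSM Feynman integral on $\mathbb{H}$ of the type studied in \cite{CFdq}. The two $\sfeta$-legs at each vertex split according to which of the two ``halves'' of the collapsed configuration they connect to, so that the collapsed subgraph organizes into two groups of boundary insertions, accounting for the factorization $[\bE_I\bE_R]\,[\bE_J\bE_S]$. The $\sfX$-legs that hit $\Sigma$ at the collapse point are replaced by functional derivatives $\delta/\delta\bE_K$, the $\sfX$-legs that hit $\Sigma$ away from the collapse point become composite-field legs absorbed into $[\bE_I\bE_R]$ and $[\bE_J\bE_S]$, and internal edges contribute the usual propagator factors. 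By the main result of \cite{CFdq}, summing over the PSM graphs on $\mathbb{H}$ with two special boundary clusters yields exactly the coefficients $B^{IJ}$ of Kontsevich's star product; additional $\sfX$-legs emerging from the vertex-derivatives $\de_K\pi$ are assembled by the Leibniz rule into $\de_K B^{IJ}(x)$.

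Finally I would assemble the contributions. Grouping graphs by the multi-indices $(I,J,K,R,S)$, applying the standard combinatorial identities that relate graph automorphism groups to the symmetry factors $|K|!$ in the Taylor expansion and to the $(|K|+|R|+|S|)!$ overall symmetrization of the mixed clusters, and tracking the $\hbar$-power as (number of internal edges) $-$ (number of loops) $-1$ (the overall $\ii\hbar$ in the definition of $\Omega_{\mathrm{pert}}$), I would read off the prefactor $(-\ii\hbar)^{|K|-|I|-|J|+1}$ from the corresponding book-keeping in \cite{CFdq}. Integrating the remaining ``horizontal'' position of the collapse along $\Sigma$ produces the outer $\int_\Sigma$ and gives the stated formula.

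The main obstacle I anticipate is the combinatorial and sign bookkeeping in the last step: matching the symmetry factors of collapsed PSM graphs (with ordered/unordered boundary insertions and with the Taylor-expansion multiplicities $1/|K|!$) to the factors in Kontsevich's formula, and correctly tracking the powers of $\ii\hbar$ through the identification of internal edges with propagator insertions. Routine dimension counting for boundary faces and the already-established vanishing of hidden faces (from Lemma \ref{lemma: mQME}) ensure that no diagrams other than the Kontsevich-type ones survive, so the work is purely combinatorial once the identification with \cite{CFdq} is in place.
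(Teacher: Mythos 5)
Your overall strategy coincides with the paper's: apply the boundary-collapse analysis of Lemma \ref{lemma: mQME}, recognize the collapsed configuration-space integrals as Kontsevich graphs on the upper half-plane, and read off the star-product coefficients $B^{IJ}$ from \cite{CFdq}. However, there is a genuine gap at the step where you declare that the collapsed integrand ``is precisely a PSM Feynman integral on $\mathbb{H}$ of the type studied in \cite{CFdq}.'' This is the non-trivial geometric input of the lemma and it does not come for free. The restriction of the bulk propagator to the boundary face of $C_2$ is a \emph{global angular form} on the sphere (here: circle) bundle of the tangent bundle, which in general differs from the standard fiberwise angle form by a connection-dependent term $\theta$ (an $\mathfrak{so}(2)$-valued one-form on the frame bundle). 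If that term survived, the weights of the collapsed graphs would be connection-dependent and would \emph{not} be Kontsevich's weights --- this is exactly what happens in the odd-dimensional non-abelian $BF$ case, where the surviving wheel contributions produce the characteristic-class corrections $\gamma_j$. The paper's proof therefore begins by showing that for the image-charge propagator on a two-manifold with boundary the $\theta$-term drops out (it cancels between the charge and its image), so that the limiting propagator near a boundary collapse is exactly Kontsevich's $\frac{1}{2\pi}\dd\arg\frac{z-w}{\bar z-w}$. Without this argument the identification with $B^{IJ}$ is unjustified.

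A secondary, smaller issue: you attribute the factorization into two clusters $[\bE_I\bE_R][\bE_J\bE_S]$ to ``the two $\sfeta$-legs at each vertex splitting into two halves.'' That is not the mechanism. The reason there are exactly two boundary points in every surviving collapsed graph is the dimension count you defer to the end: with $n$ bulk and $m$ boundary vertices in the collapse, the form degree is $2n$ (two propagators per bulk vertex, none attached to the $\bE$-insertions in this polarization) while the dimension of the boundary stratum is $2n+m-2$, forcing $m=2$. This count is what produces the bidifferential (star-product) structure; it is short, but it is the load-bearing step and should be carried out rather than labelled routine, since it is also what rules out all other collapse types in this representation.
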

Note that $\Omega^2=0$ follows 
 from the associativity of the star product. Also notice that
the principal part of $\Omega$ is
the standard quantization of
\[
\Tilde\calS^\de_\Sigma=\int_\Sigma\left(
\sum_{i=1}^n\sfeta_i\dd\sfX^i+
\frac12 \sum_{i,j=1}^n \Pi^{ij}(\sfX)\sfeta_i\sfeta_j
\right),
\]
where
\[
\Pi^{ij} = \frac{B^{ij}-B^{ji}}{-\ii\hbar}=\frac{x^i\star x^j-x^j\star x^i}{-\ii\hbar} =\pi^{ij}+O(\hbar).
\]

\begin{proof}[Sketch of the proof]
The main remark is that the propagator in a boundary face near the boundary is Kontsevich's propagator.
To see this recall that the propagator on a closed two\ndash manifold $M$
restricts
to the boundary $\de C_2(M)=STM$, with 
$ST$ denoting the sphere bundle of the tangent bundle, to a global angular form $\gamma$.
By choosing a Riemannian metric, we may view
$STM$ as $O(M)\times_{SO(2)}S^1$, where $O$ denotes the orthogonal frame bundle.
The pullback of $\gamma$ to $O(M)\times S^1$ is $\omega-\theta$, where $\omega$ is the normalized
invariant volume form on $S^1$ and $\theta$ some metric connection
(regarded as an $\mathfrak{so}(2)$\ndash valued $1$\ndash form on $O(M)$). The propagator for a manifold
with boundary is constructed by the method of image charges, see Appendix~\ref{a:prop}. Hence, $\theta$ drops
out and $\omega$ gets replaced by Kontsevich's propagator (notice that in higher dimension connection
dependent terms in the propagator survive).

We use the following convention: edges in the graph are oriented pointing from the
$\sfeta$\ndash vertex to the $\sfX$\ndash vertex.

In the $\bE$-representation we have arrows pointing to the boundary and the bulk vertices have two outgoing arrows. If we have $n$ bulk vertices and $m$ boundary vertices,
then the form degree is $2n$, whereas the dimension is $2n+m-2$. Since the propagators do not depend on boundary variables,
we must have equality between dimension and degree for the integral not to vanish: hence, $m=2$.
The resulting graphs are the same as in Kontsevich's star product.
The edges that leave the graph do either correspond to derivatives of the coefficients or get directly attached to a boundary vertex.
\end{proof}

To deal with the $\bX$\ndash representation, we have have to consider graphs on the upper half plane
with opposite boundary conditions as in \cite{CFdq}. These boundary conditions have been considered in
\cite{CFbranes}. In the present setting, we define
\[
\Hat\pi = \sum_{Kij}\frac1{|K|!}\,\theta_i\theta_j\,\de_K\pi^{ij}(x)\,\frac{\de^{|K|}}{\de\theta_K},
\]
where $K$ is a multi-index and the $\theta$s are the coordinates on $\bbR^n[1]$. Since $\pi$ is Poisson,
$\Hat\pi$ is a MC element in the graded Lie algebra of multivector fields on $\bbR^n[1]$. The Poisson sigma model on the upper half plane with the boundary conditions as in \cite{CFbranes} produces a (curved) $A_\infty$\ndash structure
on $C^\infty(\bbR^n[1])[[\ii\hbar]][(\ii\hbar)^{-1}]$ that quantizes $\Hat\pi$.
We write
\begin{multline*}
\mu_k(\phi_1,\dots,\phi_k)=\phi_1\phi_2\,\delta_{k2}+
\sum_{I_1\dots I_k}
A_{I_1\dots I_k}
\,\de^{I_1}\phi_1\cdots\de^{I_k}\phi_k=\\=
\phi_1\phi_2\,\delta_{k2}+
-\frac12\frac{(\ii\hbar)^{k-1}}{k!}\sum_{iji_1\dots i_k}
\theta^i\theta^j\,
\de_{i_1}\cdots\de_{i_k}\pi^{ij}(x)\,
\de^{i_1}\phi_1\cdots\de^{i_k}\phi_k
+ O(\hbar^k)
,
\end{multline*}
where $I_1,\dots, I_k$ are multi-indices and $i,j,i_1,\dots, i_k$ are indices,
and $A_{I_1\dots I_k}=0$ if $|I_r|=0$ for some $r$.
Derivatives
with an upper (multi)index refer to the $\theta$\ndash coordinates:
$\de^i:=\frac\de{\de\theta_i}$. Note that $A_{I_1\dots I_k}$ is a function of $\theta$ (and of
the background $x$).
\begin{Lem}
In the $\bX$\ndash representation, we have
\begin{multline*}
\Omega = \Omega_0 -
\sum_{k=0}^\infty\frac1{k!}
\int_{\Sigma}
\sum_{LI_1\dots I_kR_1\dots R_k}
\frac{(\ii\hbar)^{|L|-(|I_1|+\dots +|I_k|)+1}}{
(|L|+|R_1|+\dots+|R_k|)!
}\,\cdot\\
\cdot\de^L {A_{I_1\dots I_k}}_{|_{\theta=0}}\,
[X^{I_1}X^{R_1}]
\cdots
[X^{I_k}X^{R_k}]\,
\frac{\delta^{|L|+|R_1|+\dots+|R_k|}}
{\delta\bX^L\delta\bX^{R_1}\cdots\delta\bX^{R_k}}.
\end{multline*}
\end{Lem}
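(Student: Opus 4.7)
The strategy parallels the proof of Lemma \ref{l:PSMErep}: I would use the same orientation convention for propagators (from the $\sfeta$-vertex to the $\sfX$-vertex) and exploit the fact, already used there, that the restriction of the propagator to a boundary face of $C_2(M)$ equals Kontsevich's propagator up to the relevant boundary conditions. The essential structural difference is that in the $\bX$-representation the boundary vertices carry $\bX$-insertions, so all arrows point \emph{into} boundary vertices; a bulk vertex still has exactly two outgoing arrows (from the $\sfeta_i\sfeta_j$ factor in the interaction), but may have an arbitrary number of incoming arrows (from Taylor-expanding $\pi^{ij}(\sfX)$ around the background $x$). Consequently, the dimensional counting that in the $\bE$-representation forced $m=2$ boundary vertices here allows arbitrarily many collapsing boundary vertices, and one obtains more complicated graphs. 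I would first classify the admissible collapsed subgraphs $\Gamma'$ by the usual degree-equals-dimension balance on $\underline{C}_{\Gamma'}$, noting that any ``excess'' outgoing arrows not absorbed by the collapsed subgraph must become external leaves, which will give the $\frac{\delta}{\delta \bX}$-derivatives in $\Omega$.

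Next, I would identify the contributing graphs with those of \cite{CFbranes}. Since the propagator on a boundary face of $C_2(M)$ with $\bX$-boundary conditions on both collapsing points is exactly the propagator used in \cite{CFbranes} for the upper half-plane with the opposite boundary conditions of \cite{CFdq}, the integrals over $\underline{C}_{\Gamma'}$ assemble, after summing over graphs of a given number $k$ of collapsing boundary vertices, into the coefficients $A_{I_1\ldots I_k}(\theta, x)$ of the $A_\infty$-operation $\mu_k$. The Taylor-expansion of $A_{I_1\ldots I_k}$ in the shifted target coordinates $\theta$ around $0$ accounts for the multi-index $L$ and the corresponding derivatives $\frac{\delta^{|L|}}{\delta \bX^L}$ at the point of collapse (these arise from external leaves attached directly to the single bulk connected component of $\Gamma'$ via Taylor-expanding $\pi^{ij}(\sfX)$). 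The remaining multi-indices $R_1,\dots,R_k$ parametrize the composite-field pieces $[X^{I_j}X^{R_j}]$, where the $I_j$-part pairs with the $|I_j|$ incoming edges of the $j$-th collapsed boundary vertex inside $\Gamma'$ and the $R_j$-part contributes further derivatives paired with the rest of the state. The combinatorial factor $1/k!$ appears upon summing over orderings of the $k$ collapsing boundary vertices, and the power $(\ii\hbar)^{|L|-(|I_1|+\cdots+|I_k|)+1}$ follows from standard loop-counting for $\Gamma'$ and the $\hbar$-assignments to propagators and interaction vertices, as in Lemma \ref{l:PSMErep}.

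Finally, I would argue that $\Omega^2=0$ follows automatically from the $A_\infty$-relations satisfied by $\{\mu_k\}$, which in turn are the graphical manifestation of codimension-one boundary strata of the reduced configuration spaces; this is the precise analogue of how associativity of $\star$ implies $\Omega^2=0$ in the $\bE$-representation. The principal obstacle in executing this plan is bookkeeping: matching the combinatorial factors, signs, and powers of $\ii\hbar$ between our graph expansion and the normalization used in \cite{CFbranes}, particularly because their formulas are written in an honest $A_\infty$ language on $C^\infty(\bbR^n[1])[[\ii\hbar]][(\ii\hbar)^{-1}]$, whereas ours are written as standard-ordering-like quantizations with composite-field insertions. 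Once this dictionary is fixed at the level of a single $\mu_k$, the summation over all collapsed subgraphs gives the stated formula for $\Omega$.
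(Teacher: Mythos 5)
Your proposal is correct and takes essentially the same route as the paper's (very short) proof: the first part of the proof of Lemma~\ref{l:PSMErep} (the boundary restriction of the propagator being Kontsevich's propagator) carries over verbatim, and the collapsing graphs are identified with those of \cite{CFbranes} defining the curved $A_\infty$\ndash structure, with $\Omega^2=0$ following from the $A_\infty$\ndash relations. One small slip: with the paper's convention (edges oriented from the $\sfeta$\ndash vertex to the $\sfX$\ndash vertex), the arrows in the $\bX$\ndash representation point \emph{out of} the boundary vertices (where the contracted fluctuation is $\beta$), not into them, though this does not affect your structural conclusions about which graphs contribute or where the $\frac{\delta}{\delta\bX}$\ndash derivatives come from.
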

Note that $\Omega^2=0$ follows 
 from the $A_\infty$\ndash relations.
\begin{proof}[Sketch of the proof]
The first part of the proof of Lemma~\ref{l:PSMErep} carries over. For the second part,
specific for the chosen representation, we just have to observe that the graphs we obtain
are those appearing in \cite{CFbranes} to define the (curved)  $A_\infty$\ndash structure.
\end{proof}

\subsubsection{Example}
Consider $M$ the disk, $\pi$ a constant Poisson structure structure and
$\de_1 M = \de M=S^1$; i.e., we work in $\bE$\ndash representation. We denote
by $z$ and $z^+$ the coefficients, in $\bbR^n$, for the residual fields.
The effective action is easily computed as
\[
S^\text{eff}_{S^1}(\bE,z,z^+) = \sum_{i=1}^n\int_{S^1} \bE_i z^i +
\frac12 \sum_{i,j=1}^n \pi^{ij}\int_{C_2(S^1)} \pi_1^*\bE_i\,\zeta\, \pi_2^*\bE_j+
\sum_{i,j=1}^n \pi^{ij} z^+_i \int_{S^1} \bE_i \tau,
\]
where $\tau\in\Omega^1(S^1)$ is the result of the integral over the bulk vertex of the graph
with one bulk vertex connected to one boundary vertex and
$\zeta\in\Omega^0(C_2(S^1))$ is the result of the integral over the bulk vertex of the graph
with one bulk vertex connected to two boundary vertices. Notice that $\int_{S^1}\tau=1$ and that
$\zeta$ is a propagator for $S^1$ satisfying $\dd\zeta=\pi_1^*\tau-\pi_2^*\tau$.
It is not difficult to check that $\EE^{\frac\ii\hbar S^\text{eff}_{S^1}(\bE,z,z^+)}$ is
$(\hbar^2\Delta+\Omega)$\ndash closed with
\[
\Omega = \int_{S^1}\left(
\ii\hbar\sum_{i=1}^n \dd \bE_i\frac\delta{\delta\bE_i}+
\frac12 \sum_{i,j=1}^n \pi^{ij} \bE_i\bE_j
\right).
\]

\subsubsection{The deformation quantization of the relational symplectic groupoid}
In the applications to deformation quantization \cite{Kdq,CFdq,CFbranes} one imposes boundary conditions,
for example $\sfeta=0$ if no branes are present.

Let $D_n$ denote the disk with the boundary $S^1$ split into $2n$ intervals $I$ intersecting only at the end points and
with the boundary condition $\sfeta=0$ on alternating intervals. The remaining $n$ intervals are free, so
the space of boundary fields is $\calF^\de_{D_n}=(\calF^\de_I)^n$ with
\[
\calF^\de_I =\Omega^\bullet(I)\otimes\bbR^n\oplus\Omega_0^\bullet(I)\otimes(\bbR^n)^*[1],
\]
with $\Omega_0^\bullet(I)$ denoting the subcomplex of forms whose restriction to the end points is zero.
We will denote by $\calH$ the vector space that quantizes $\calF^\de_I$ in one of the two usual polarizations.

We may then view the state $m_x$ associated to $D_3$ perturbing around a constant solution $X=x$
as a linear map $\calH\otimes\calH\to\calH$.
There are two inequivalent ways to cut $D_4$ into gluings of two $D_3$s. {}From this we see that
$m_x$ defines an associative structure in the $(\hbar^2\Delta+\Omega)$\ndash cohomology for $D_4$.
This provides a way of defining the deformation quantization of the relational symplectic
groupoid of \cite{CCrsg}. 

\newcommand{\bbC}{{\mathbb{C}}}
To compare this result with the deformation quantization of the Poisson manifold $W$, we have to consider also $D_1$. We view
the state $\sigma_x$ associated to it as a linear map $\calH\to\bbC[[\epsilon]]$, with $\epsilon={\ii\hbar}/2$.
If $f$ is a function on $W$, we may also take the expectation value of $f(X(u_0))$ where $u_0$ is a
point in the interior of the interval with the boundary condition. We denote the result by
$\tau_xf$. We may view $\tau_x$ as a linear map $\calC^\infty(W)\otimes\bbC[[\epsilon]]\to\calH$.
Kontsevich's star product is then obtained by composition:
\[
f\star g(x) = \sigma_x(m_x(\tau_xf\otimes\tau_xg)).
\]
\begin{Rem}
Notice that the classical field $X$ on the boundary defines a path in the target $W$. Thus,
if we work in the $\bX$\ndash representation, the degree zero part of $\calH$ is
$\Fun(PW)\otimes\bbC[[\epsilon]]$, where $\Fun(PW)$ denotes a convenient space of functions on the path space
$PW$ of $W$. There is a canonical inclusion $\iota\colon W\to PW$ that maps a point to a constant map with that value. We may regard $\sigma$ as a deformation of $\iota^*\colon\Fun(PW)\to\calC^\infty(W)$.
Given $\nu\in\Omega^1(I)$ with $\int_I\nu=1$, we also have a map
$p\colon PW\to W$, $X\mapsto\int_IX\nu$. We may then regard $\tau$ as a deformation of
$p^*\colon\calC^\infty(W)\to \Fun(PW)$
with $\nu$ the result of integrating the free boundary vertex of the graph with one edge joining
the free boundary vertex to $u_0$.
\end{Rem}



\appendix

\section{The Hodge decomposition for manifolds with boundary}\label{a:Hodge}
In this Appendix we describe a form of Hodge decomposition for manifolds with boundary that in particular shows that \eqref{e:Lgf}
is a gauge fixing. In this Section $M$ is a smooth compact Riemannian manifold with boundary, with the metric having product structure near the boundary (cf. Footnote \ref{footnote: product structure}). We denote by $*$ the Hodge operator and by by $\dd^*$ the corresponding adjoint of the de~Rham differential. We call a form ultra-harmonic if it closed with respect to both $\dd$ and $\dd^*$.\footnote{Notice that  this implies that the form is harmonic,
but, in the presence of a boundary, this is a stronger condition.} We denote by $\hHarm^\bullet(M)$ the space of ultra-harmonic forms on $M$.

\subsection{Ultra-Dirichlet and Ultra-Neumann forms}
For the following construction we need a refinement of the notion of Dirichlet and Neumann forms. Let $M$ be a compact manifold with boundary $\de M$. We fix a given boundary component $\de_i M$.
\begin{Def}
We say that a differential form $\mu$ on $M$ is ultra-Dirichlet relative to $\de_iM$ if the pullbacks to $\de_iM$ of all the even normal derivatives of $\mu$ and the pullbacks of all the odd normal derivatives of $*\mu$ vanish. Similarly, we say that $\mu$ is ultra-Neumann
relative to $\de_iM$ if the pullbacks to $\de_iM$ of all the even normal derivatives of $*\mu$ and the pullbacks of all the odd normal derivatives of $\mu$ vanish. We denote by $\Omega^\bullet_{\hDD i}(M)$ and by $\Omega^\bullet_{\hNN i} (M)$ the spaces of ultra-Dirichlet and
ultra-Neumann forms, respectively. Notice that they are subcomplexes both for $\dd$ and for $\dd^*$.\footnote{This property relies on having a product metric near the boundary.}
\end{Def}
Near the boundary component $\de_iM$, we can write a form $\mu$ as
\[
\mu=\alpha+\lambda\dd t,
\]
where $t$ is the normal coordinate, and $\alpha$ and $\lambda$ are $t$\ndash dependent forms on $\de_i M$.
With this notation, $\mu$ is ultra-Dirichlet if and only if $\left(\frac\dd{\dd t}\right)^n_{|_{t=0}}\alpha=0$ for $n=0,2,4,\dots$ and
$\left(\frac\dd{\dd t}\right)^n_{|_{t=0}}\lambda=0$ for $n=1,3,5,\dots$. It is ultra-Neumann if and only if $\left(\frac\dd{\dd t}\right)^n_{|_{t=0}}\lambda=0$ for $n=0,2,4,\dots$ and
$\left(\frac\dd{\dd t}\right)^n_{|_{t=0}}\alpha=0$ for $n=1,3,5,\dots$.
In the following we are going to need the following formulae:
\begin{align}
\dd\mu&= \dd'\alpha + (\dot\alpha+\dd'\lambda)\dd t,\label{se:dmu}\\
*\mu &= *'\lambda + (*'\alpha)\dd t,\label{se:starmu}\\
\dd^*\mu &= (\dd^{*\prime}\alpha+\dot\lambda) + (\dd^{*\prime}\lambda)\dd t,\label{se:dstarmu}
\end{align}
where $\dd'$ is the de~Rham differential on $\de_i M$, $*'$ is the Hodge operator for the induced metric, $\dd^{*\prime}$ is the formal adjoint of $\dd'$, and the dot denotes  the derivative with respect to $t$. These formulae immediately imply the following
\begin{Lem}
An ultra-harmonic Dirichlet form is ultra-Dirichlet and an ultra-harmonic Neumann form is ultra-Neumann.
\end{Lem}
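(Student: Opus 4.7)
The plan is to exploit the product structure of the metric near $\de_i M$ to turn the ultra-harmonic equations into a pair of second-order ODEs in the normal coordinate $t$ (one for $\alpha$, one for $\lambda$) with coefficients involving only the intrinsic operators $\dd'$, $\dd^{*\prime}$ and $\Delta'=\dd'\dd^{*\prime}+\dd^{*\prime}\dd'$ on $\de_i M$. The boundary conditions (Dirichlet or Neumann) then supply the needed initial data at $t=0$, and iteration of the ODEs forces the vanishing of all the remaining normal jets.

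More concretely, I would first observe that ultra-harmonicity implies ordinary harmonicity: $\dd\mu=\dd^*\mu=0$ gives $\Delta\mu=0$. Using the product structure, $\Delta=\Delta'-\frac{\dd^2}{\dd t^2}$ in the collar, so writing $\mu=\alpha+\lambda\,\dd t$ we obtain the two decoupled second-order equations
\begin{equation*}
\ddot\alpha=\Delta'\alpha,\qquad \ddot\lambda=\Delta'\lambda
\end{equation*}
in a collar neighborhood of $\de_i M$. Differentiating these $k$ times yields $\alpha^{(2k)}=(\Delta')^k\alpha$, $\alpha^{(2k+1)}=(\Delta')^k\dot\alpha$ and likewise for $\lambda$. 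Thus all the even normal derivatives of $\alpha$ and $\lambda$ at $t=0$ are determined by $\alpha|_{t=0}$ and $\lambda|_{t=0}$ respectively, and all the odd ones by $\dot\alpha|_{t=0}$ and $\dot\lambda|_{t=0}$.

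Next, from \eqref{se:dmu} and \eqref{se:dstarmu} the conditions $\dd\mu=0$ and $\dd^*\mu=0$ are equivalent (in the collar) to
\begin{equation*}
\dd'\alpha=0,\quad \dot\alpha+\dd'\lambda=0,\quad \dd^{*\prime}\alpha+\dot\lambda=0,\quad \dd^{*\prime}\lambda=0.
\end{equation*}
In the Dirichlet case, $\alpha|_{t=0}=0$ by definition, so by the ODE all even derivatives of $\alpha$ vanish at $t=0$; moreover $\dot\lambda|_{t=0}=-\dd^{*\prime}\alpha|_{t=0}=0$, and so by the ODE for $\lambda$ all odd derivatives of $\lambda$ vanish at $t=0$. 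This is precisely the ultra-Dirichlet condition. The Neumann case is symmetric: Neumann gives $\lambda|_{t=0}=0$ (via \eqref{se:starmu}), forcing all even derivatives of $\lambda$ to vanish at $t=0$, and $\dot\alpha|_{t=0}=-\dd'\lambda|_{t=0}=0$ then forces all odd derivatives of $\alpha$ to vanish at $t=0$, yielding ultra-Neumann.

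The argument is essentially mechanical once the product-metric decomposition is in place; the only subtle point is the use of the product structure in order to write $\Delta=\Delta'-\partial_t^2$ without cross-terms, so that the normal jets at $t=0$ are genuinely determined recursively by the tangential data. This is where the standing assumption on the metric near the boundary is being used; without it the recursion acquires inhomogeneous correction terms and the clean conclusion fails.
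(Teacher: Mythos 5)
Your proof is correct and is essentially the argument the paper intends: the lemma is stated as an immediate consequence of the collar formulae \eqref{se:dmu}--\eqref{se:dstarmu}, and your derivation simply makes the recursion explicit. Routing the even/odd jets through the second-order ODE $\ddot\alpha=\Delta'\alpha$, $\ddot\lambda=\Delta'\lambda$ rather than iterating the first-order system $\dot\alpha=-\dd'\lambda$, $\dot\lambda=-\dd^{*\prime}\alpha$ directly is only a minor repackaging, and your closing remark correctly identifies the product-metric assumption as the point where the clean recursion (and the componentwise splitting $\Delta=\Delta'-\partial_t^2$) is used.
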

With a bit more work, we also have the following
\begin{Lem}\label{l:fromND to ultra}
Fix a  a neighborhood $U_i$  of a boundary component $\de_iM$. Let $\mu\in\Omega^k(M)$ for some $0\leq k\leq d$. The following statements hold:
\begin{enumerate}
\item \label{l:fromND to ultra (1)} If $\dd\mu=0$, then there is a $\nu\in\Omega^{k-1}_{\DD i}$ with support in $U_i$ such that $\mu-\dd\nu\in\Omega^k_{\hNN i}$.
Moreover,
\begin{enumerate}
\item \label{l:fromND to ultra 1a} if $\mu\in\Omega^k_{\NN i}$, then $\dd\nu\in\Omega^k_{\NN i}$;
\item if $\mu\in\Omega^k_{\DD i}$, then one can choose $\nu$ as above such that in addition $\mu-\dd\nu\in\Omega^k_{\hDD i}$
\end{enumerate}
\item \label{l:fromND to ultra (2)} If $\dd^*\mu=0$, then there is a $\nu\in\Omega^{k+1}_{\NN i}$ with support in $U_i$ such that $\mu-\dd^*\nu\in\Omega^k_{\hDD i}$.
Moreover,
\begin{enumerate}
\item if $\mu\in\Omega^k_{\DD i}$, then $\dd^*\nu\in\Omega^k_{\DD i}$;
\item if $\mu\in\Omega^k_{\NN i}$, then one can choose $\nu$ as above such that in addition $\mu-\dd^*\nu\in\Omega^k_{\hNN i}$
\end{enumerate}
\end{enumerate}
\end{Lem}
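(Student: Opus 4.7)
The plan is to work in the collar neighborhood $U_i \cong \de_i M \times [0,\epsilon)$ granted by the product-metric hypothesis, with $t$ the normal coordinate and $r\colon U_i\to\de_i M$ the projection. The key observation is that, since $\dd\mu = 0$ and $U_i$ deformation-retracts onto $\de_i M$, de~Rham homotopy invariance yields an $\eta\in\Omega^{k-1}(U_i)$ with $\mu = r^*(\iota_i^*\mu) + \dd\eta$, where $\iota_i\colon\de_i M\hookrightarrow M$ is the inclusion. Moreover $\eta$ can be arranged so that $\iota_i^*\eta = 0$: starting from any such $\eta_0$ (whose boundary pullback is automatically closed), the replacement $\eta := \eta_0 - r^*(\iota_i^*\eta_0)$ satisfies both $\dd\eta = \dd\eta_0$ and $\iota_i^*\eta = 0$. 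The upshot is that $r^*(\iota_i^*\mu)$ is manifestly ultra-Neumann on $U_i$: in the decomposition of the excerpt its $\alpha$-part is $\iota_i^*\mu$, constant in $t$, and its $\lambda$-part vanishes identically, so every odd $t$-derivative of $\alpha$ and every even $t$-derivative of $\lambda$ vanishes at $t = 0$.

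To globalize, I will pick a cutoff $\chi\colon[0,\epsilon)\to[0,1]$ equal to $1$ on $[0,\epsilon/3)$ and vanishing outside $[0,2\epsilon/3)$, and set $\nu := \chi\cdot\eta$, extended by zero outside $U_i$. Then $\operatorname{supp}\nu\subset U_i$ by construction, and $\iota_i^*\nu = \chi(0)\,\iota_i^*\eta = 0$, so $\nu\in\Omega^{k-1}_{\DD i}$. On the region where $\chi\equiv 1$---a full neighborhood of $\de_i M$---one has $\mu - \dd\nu = \mu - \dd\eta = r^*(\iota_i^*\mu)$, which is ultra-Neumann by the previous observation. Because the ultra-Neumann condition depends only on the Taylor expansion at $t = 0$, we conclude $\mu - \dd\nu\in\Omega^k_{\hNN i}$, proving part~(\ref{l:fromND to ultra (1)}).

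For the refinements: near $\de_i M$ we have $\dd\nu = \dd\eta = \mu - r^*(\iota_i^*\mu)$, so using the notation of the excerpt its restriction to $t=0$ equals $(\alpha_0 - \alpha_0) + \lambda_0\,\dd t = \lambda_0\,\dd t$. If $\mu\in\Omega^k_{\NN i}$ then $\lambda_0 = 0$, so $\dd\nu$ vanishes at $\de_i M$ and is in particular Neumann, giving (\ref{l:fromND to ultra 1a}). If $\mu\in\Omega^k_{\DD i}$ then $\iota_i^*\mu = 0$, so $r^*(\iota_i^*\mu) = 0$ and $\mu - \dd\nu$ actually vanishes on a neighborhood of $\de_i M$, hence belongs trivially to $\Omega^k_{\hDD i}$ as well.

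Finally I will deduce part~(\ref{l:fromND to ultra (2)}) from part~(\ref{l:fromND to ultra (1)}) by Hodge duality: the identity $\dd^* = \pm *\dd *$ converts $\dd^*\mu = 0$ into $\dd(*\mu) = 0$ for the complementary-degree form $*\mu$, while the Hodge star interchanges the Dirichlet and Neumann conditions together with their ultra-versions (the latter because ultra-$\DD$ and ultra-$\NN$ are defined symmetrically under $\mu\leftrightarrow *\mu$, cf.\ \eqref{se:starmu}). Applying part~(\ref{l:fromND to ultra (1)}) with its refinements to $*\mu$ yields some $\tilde\nu$, and then $\nu := \pm *\tilde\nu$ inherits all the properties needed for (\ref{l:fromND to ultra (2)}), (2a), (2b). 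The main obstacle I anticipate is purely bookkeeping: one must verify that the Hodge star and the chain of conditions $\NN i\leftrightarrow \DD i$, $\hNN i\leftrightarrow \hDD i$ transport cleanly across the collar splitting, which is precisely what the product-metric hypothesis guarantees, and that the cutoff $\chi$ can be chosen small enough that its support lies within the collar where the local primitive $\eta$ is defined.
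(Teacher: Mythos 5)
Your proposal is correct and is essentially the paper's own argument in different clothing: the paper constructs $\nu$ by solving $\dot\gamma=\lambda$ on the collar (i.e.\ $\gamma=\int_0^t\lambda$, which is exactly the fiberwise de~Rham homotopy operator you invoke), cuts off with a bump function, observes that $\mu-\dd\nu$ has vanishing $\dd t$-component and is $t$-independent near the boundary (your $r^*(\iota_i^*\mu)$), handles the Dirichlet refinement by noting the difference then vanishes on a whole neighborhood, and deduces part~(\ref{l:fromND to ultra (2)}) by applying the Hodge star throughout. No gaps; the only cosmetic difference is that you phrase the primitive via homotopy invariance and normalize $\iota_i^*\eta=0$ explicitly, whereas the paper imposes $\gamma|_{t=0}=0$ directly.
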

\begin{proof}
For (\ref{l:fromND to ultra (1)}), we pick a
$t$\ndash dependent form $\gamma$ on $\de_i M$ to be determined below. We pull it back to a neighborhood of $\de_iM$ and multiply it by a bump function supported in $U_i$ and equal to $1$ in a neighborhood of $\de_iM$. This will define $\nu$.
In the latter neighborhood we have
$\dd\nu=\dd'\gamma+\dot\gamma\dd t$, so $\mu-\dd\nu=(\alpha-\dd'\gamma)+(\lambda-\dot\gamma)\dd t=:\alpha'+\lambda'\dd t$.
This shows that we can choose $\gamma$ so that $\lambda'=0$. Since $\mu$ is closed, this automatically implies that $\dot\alpha'=0$.
In particular, this shows that $\mu-\dd\nu\in\Omega_{\hNN i}$.
This immediately implies (\ref{l:fromND to ultra 1a}).
If $\mu$ is Dirichlet, then $\alpha|_{t=0}=0$. By choosing $\gamma$ with
$\gamma|_{t=0}=0$, we get $\alpha'|_{t=0}=0$ which, together with $\dot\alpha'=0$, implies that $\alpha'=0$. In conclusion,
$\mu-\dd\nu$ vanishes in a whole neighborhood of $\de_iM$ and in particular is ultra-Dirichlet.

Statement (\ref{l:fromND to ultra (2)}) follows from (\ref{l:fromND to ultra (1)}) by applying Hodge star $\ast: \Omega^\bt \ra \Omega^{d-\bt}$ to all objects and renaming $\ast \mu \mapsto \mu$, $\ast \nu \mapsto\nu$, $k\mapsto d-k$.
\end{proof}

Now, as in Section~\ref{s:abeBF}, we split the boundary of $M$ into two disjoint components $\de_1M$ and $\de_2M$.
The above Lemma can be used in a neighborhood of each boundary component. In particular, we may choose the neighborhoods $U_1$ and $U_2$ to be disjoint. We thus get isomorphisms
\begin{subequations}\label{e:HHrelative}
\begin{align}
H^\bullet_{\hNN1,\hDD2}(M)&\simeq H^\bullet(M,\de_2M)=H^\bullet_{\DD 2}(M),\label{e:HHrelative 1}\\
H^\bullet_{\hNN2,\hDD1}(M)&\simeq H^\bullet(M,\de_1M)=H^\bullet_{\DD 1}(M),\label{e:HHrelative 2}
\end{align}
\end{subequations}
with $H^\bullet_{\hNN1,\hDD2}(M)$ the de~Rham cohomology of $\Omega^\bullet_{\hNN 1}(M)\cap\Omega^\bullet_{\hDD 2}(M)$,
and $H^\bullet_{\hNN2,\hDD1}(M)$ the de~Rham cohomology of $\Omega^\bullet_{\hNN 2}(M)\cap\Omega^\bullet_{\hDD 1}(M)$.\footnote{
In the case of (\ref{e:HHrelative 1}), the map $i_*\colon H^\bullet_{\hNN1,\hDD2}(M)\to H^\bullet_{\DD 2}(M)$ is induced by the inclusion $i\colon \Omega^{\bullet,\mathrm{closed}}_{\hNN1,\hDD2}\to \Omega^{\bullet,\mathrm{closed}}_{\DD 2}$ while the map in the opposite direction $j\colon H^\bullet_{\hNN1,\hDD2}(M)\leftarrow H^\bullet_{\DD 2}(M)$ sends a cohomology class $[\mu]$ of $\mu\in\Omega^{\bullet,\mathrm{closed}}_{\DD 2}$ to the class of the form $\mu-d\nu\in \Omega^{\bullet,\mathrm{closed}}_{\hNN1,\hDD2}$, constructed using (\ref{l:fromND to ultra (1)}) of Lemma \ref{l:fromND to ultra}, in $H^\bullet_{\hNN1,\hDD2}(M)$. These two maps are obviously mutually inverse. One point that requires a comment is that $j$ is well-defined (or, equivalently, that $i_*$ is injective): if $\alpha\in \Omega^n_{\hNN1,\hDD2}(M)$ is exact, i.e. $\alpha=d\beta$ with $\beta\in \Omega^{n-1}_{\DD2}(M)$, then one can find another primitive $\gamma\in \Omega^{n-1}_{\hNN1,\hDD2}(M)$ such that $\alpha=d\gamma$. To construct such $\gamma$, choose a smooth map $\Phi\colon [0,1]\times M\rightarrow M$ such that $\Phi_0=\mathrm{id}_M$, $\Phi_\tau$ the identity on $\de M$ for any $\tau\in [0,1]$, and such that normal derivatives of $\Phi_1$ of all orders vanish on the boundary. Then we construct the primitive as $\gamma=\int_0^1 \Phi^*\alpha + \Phi_1^*\beta$; it satisfies the required boundary conditions. The second isomorphism (\ref{e:HHrelative 2}) is constructed similarly.
}

\subsection{Doubling the manifold (twice)}\label{ss:doubling}
\label{a:Hodge doubling twice}
Pick a second copy of $M$ with opposite orientation and glue it to $M$ along $\de_1M$. This defines a new compact Riemannian manifold with boundary, which we denote by $M'$. On this manifold we can define an (orientation reversing) involution $S_1$ that maps a point in one copy of $M$ to the same point in the other copy.

We now repeat the operation with $M'$ by gluing it to a second copy of itself with opposite orientation along the whole boundary. We now get a compact closed Riemannian manifold $M''$. We can extend the involution $S_1$ to it, but we can also define a new (orientation reversing) involution $S_2$ that maps a point in one copy of $M'$ to the same point in the other copy. Notice that, by construction, the metric on $M''$ is invariant under $S_1$ and $S_2$. As a consequence, pullbacks on differential forms,
$S_1^*$ and $S_2^*$, anticommute with $*$ and commute with $\dd$, and hence also commute with $\dd^*$.

We denote by
$\Omega^\bullet_{S_1^e,S_2^o}(M'')$ the $(\dd,\dd^*)$\ndash subcomplex\footnote{By a $(\dd,\dd^*)$\ndash complex we simply mean a $\mathbb{Z}$\ndash graded vector space which is simultaneously a cochain complex with respect to $\dd$ and a chain complex with respect to $\dd^*$. Since $\dd$ and $\dd^*$ do not commute, this is obviously not a bi-complex.} of forms that are even with respect to $S_1^*$ and odd with respect to $S_2^*$.
Similarly, we denote by $\Omega^\bullet_{S_2^e,S_1^o}(M'')$ the  $(\dd,\dd^*)$\ndash subcomplex of forms that are even with respect to $S_2^*$ and odd with respect to $S_1^*$. Setting $\Omega^\bullet_{\hNN i,\hDD j}(M):=\Omega^\bullet_{\hNN i}(M)\cap \Omega^\bullet_{\hDD j}(M)$,
$i\not= j$ in $\{1,2\}$,
we have the following isomorphisms of  $(\dd,\dd^*)$\ndash complexes:
\begin{alignat*}{2}
q_{12}&\colon\Omega^\bullet_{\hNN1,\hDD2}(M)&\to\Omega^\bullet_{S_1^e,S_2^o}(M''),\\
q_{21}&\colon\Omega^\bullet_{\hNN2,\hDD1}(M)&\to\Omega^\bullet_{S_2^e,S_1^o}(M''),
\end{alignat*}
which are obtained by extending the differential forms from $M$ to $M''$.
Thanks to \eqref{e:HHrelative}, we then get the isomorphisms
\begin{subequations}\label{e:HHHarm}
\begin{align}
H^\bullet_{\DD2}(M) &\simeq H^\bullet _{S_1^e,S_2^o}(M'')=\Harm^\bullet _{S_1^e,S_2^o}(M''),\\
H^\bullet_{\DD 1}(M) &\simeq H^\bullet _{S_2^e,S_1^o}(M'')=\Harm^\bullet _{S_2^e,S_1^o}(M''),
\end{align}
\end{subequations}
where $\Harm^\bullet$ denotes the space of harmonic forms and we have used Hodge's theorem on $M''$. Notice that, by the $q_{ij}$'s,
$\Harm^\bullet _{S_1^e,S_2^o}(M'')$ and $\Harm^\bullet _{S_2^e,S_1^o}(M'')$ are the subspaces of ultra-Harmonic forms in $\Omega^\bullet_{\DD2}(M)$
and $\Omega^\bullet_{\DD1}(M)$, respectively. More precisely,
\begin{subequations}\label{e:HarmhHarm}
\begin{align}
q_{12}^{-1}(\Harm^\bullet _{S_1^e,S_2^o}(M'')) &= \hHarm^\bullet_{\NN1,\DD2}(M),\\
q_{21}^{-1}(\Harm^\bullet _{S_2^e,S_1^o}(M'')) &= \hHarm^\bullet_{\NN2,\DD1}(M).
\end{align}
\end{subequations}

\begin{Lem}\label{l:ortho}
Fix two integers $0\leq k,l\leq d$ satisfying $k+l=d$. Then the symplectic orthogonal of
\[
\calL=
\begin{array}{c}
(\dd^*\Omega^{k+1}_{\NN 2}(M))
\cap\Omega^k_{\DD1}(M)\\
\oplus\\
(\dd^*\Omega^{l+1}_{{\NN}1}(M))
\cap\Omega^l_{\DD2}(M)
\end{array}
\]
in $\Omega^k_{\DD1}(M)\oplus\Omega^l_{\DD2}(M)$ is
\[
\begin{array}{c}
\hHarm^k_{\NN2,\DD1}(M)
\oplus (\dd^*\Omega^{k+1}_{{\NN}2}(M))
\cap\Omega^k_{\DD1}(M)\\
\oplus\\
\hHarm^l_{\NN1,\DD2}(M)
\oplus
(\dd^*\Omega^{l+1}_{{\NN}1}(M))
\cap\Omega^l_{\DD2}(M).
\end{array}
\]
\end{Lem}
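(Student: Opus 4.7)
The plan is to establish a Hodge--Morrey--Friedrichs decomposition of each factor $\Omega^k_{\DD 1}(M)$ and $\Omega^l_{\DD 2}(M)$, to verify by Stokes' theorem which of the resulting cross pairings vanish, and finally to prove non-degeneracy of the remaining pairings via the doubling construction of Section~\ref{ss:doubling}. Set $\calL_1=(\dd^*\Omega^{k+1}_{\NN 2}(M))\cap\Omega^k_{\DD 1}(M)$ and $\calL_2=(\dd^*\Omega^{l+1}_{\NN 1}(M))\cap\Omega^l_{\DD 2}(M)$, so that $\calL=\calL_1\oplus\calL_2$.

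First I would show the $L^2$-orthogonal splitting
\[
\Omega^k_{\DD 1}(M) = \hHarm^k_{\NN 2,\DD 1}(M)\;\oplus\;\calL_1\;\oplus\;\dd\Omega^{k-1}_{\DD 1}(M),
\]
and similarly for $\Omega^l_{\DD 2}(M)$. Existence follows from the classical Hodge--Morrey--Friedrichs theorem with mixed boundary conditions; alternatively, it may be deduced by restricting the Hodge decomposition on the closed manifold $M''$ to the $(S_2^e,S_1^o)$-eigenspace via the isomorphisms \eqref{e:HHHarm}--\eqref{e:HarmhHarm} and then extending from $\Omega^\bullet_{\hNN 2,\hDD 1}$ to $\Omega^\bullet_{\DD 1}$ with the help of Lemma~\ref{l:fromND to ultra}. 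Orthogonality of the three summands is a routine Stokes calculation: in each cross pairing the boundary terms split into a piece on $\de_1 M$ killed by the pullback of a Dirichlet factor and a piece on $\de_2 M$ killed by the pullback of the Hodge dual of a Neumann factor.

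Next I would compute the nine blocks of the bilinear pairing $(\alpha,\beta)\mapsto\int_M\alpha\wedge\beta$ between the three summands of $\Omega^k_{\DD 1}$ and those of $\Omega^l_{\DD 2}$. The same Stokes-plus-boundary-condition bookkeeping shows that exactly three blocks can be non-zero: the pairing $\hHarm^k_{\NN 2,\DD 1}\otimes\hHarm^l_{\NN 1,\DD 2}$ (the Lefschetz duality pairing, hence non-degenerate), together with $\calL_1\otimes\dd\Omega^{l-1}_{\DD 2}$ and $\dd\Omega^{k-1}_{\DD 1}\otimes\calL_2$. The vanishing of all other blocks immediately yields one inclusion: $\hHarm^k_{\NN 2,\DD 1}\oplus\calL_1\oplus\hHarm^l_{\NN 1,\DD 2}\oplus\calL_2\subseteq\calL^\perp$.

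The main step is to show that the two surviving off-diagonal blocks are actually non-degenerate; this is what forces the $\dd$-exact pieces to be \emph{absent} from $\calL^\perp$ and hence gives the reverse inclusion. Given $0\ne\dd\phi\in\dd\Omega^{k-1}_{\DD 1}(M)$, I would apply the Hodge decomposition on the closed manifold $M''$ to the $(S_2^e,S_1^o)$-extension of $\dd\phi$, exploiting that $\Delta$, $\dd$, and $\dd^*$ preserve this eigenspace, in order to replace $\phi$ by a primitive $\psi\in\Omega^{k-1}_{\hDD 1,\hNN 2}(M)$ with $\dd\psi=\dd\phi$. Setting $\gamma:=*\psi$, the Hodge star interchanges $\hDD$ with $\hNN$ on each boundary component, so $\gamma\in\Omega^{l+1}_{\hNN 1,\hDD 2}(M)\subset\Omega^{l+1}_{\NN 1}(M)$, and $\dd^*\gamma=\pm{*}\dd\psi\in\Omega^l_{\hNN 1,\hDD 2}(M)\subset\Omega^l_{\DD 2}$, hence $\dd^*\gamma\in\calL_2$. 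The pairing then evaluates to
\[
\int_M\dd\phi\wedge\dd^*\gamma = \pm\int_M\dd\psi\wedge{*}\dd\psi = \pm\|\dd\psi\|^2_{L^2(M)}\ne 0,
\]
establishing non-degeneracy in the first argument; the symmetric statement for the other block follows by interchanging $(k,\DD 1,\NN 2)$ with $(l,\DD 2,\NN 1)$. The hardest aspect will be carefully threading the sign conventions and boundary conditions through the doubled-manifold argument---tracking how $\dd$, $\dd^*$ and $*$ interact with the $S_1$-, $S_2$-equivariance and with the extension isomorphisms $q_{12}$, $q_{21}$---but once this bookkeeping is in place, assembling the four steps yields the claimed equality $\calL^\perp = \hHarm^k_{\NN 2,\DD 1}\oplus\calL_1\oplus\hHarm^l_{\NN 1,\DD 2}\oplus\calL_2$.
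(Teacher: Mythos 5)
Your strategy is genuinely different from the paper's: you first set up the three\ndash term decomposition $\Omega^k_{\DD1}(M)=\hHarm^k_{\NN2,\DD1}(M)\oplus\calL_1\oplus\dd\Omega^{k-1}_{\DD1}(M)$ (which is indeed the weak Hodge decomposition the paper records at the end of Appendix~\ref{a: Hodge propagator}) and then analyse the pairing block by block, whereas the paper never invokes a global decomposition: it takes an arbitrary $\beta$ annihilating $\calL_1$, derives $\dd^*\beta=0$ by testing against bump forms, extracts the Neumann condition on $\de_1M$ from the surviving boundary term, and only then corrects $\beta$ by Lemma~\ref{l:fromND to ultra}(2) into the ultra subcomplex where the Hodge theorem on $M''$ applies. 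Your easy inclusion (the vanishing of the harmonic\ndash vs.\ndash$\calL$ and $\calL_1$\ndash vs.\ndash$\calL_2$ blocks) is correct and matches the paper's ``if'' direction.

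The gap is in your non\ndash degeneracy step. You claim that for $0\neq\dd\phi\in\dd\Omega^{k-1}_{\DD1}(M)$ there is a primitive $\psi\in\Omega^{k-1}_{\hDD1,\hNN2}(M)$ with $\dd\psi=\dd\phi$, obtained by extending to $M''$. But $\dd\phi$ itself generally satisfies neither the ultra conditions nor even the first\ndash order Neumann condition on $\de_2M$, so it has no smooth $(S_2^e,S_1^o)$\ndash equivariant extension to $M''$; and the quasi\ndash isomorphism $\Omega^\bullet_{\hNN2,\hDD1}(M)\hookrightarrow\Omega^\bullet_{\DD1}(M)$ identifies cohomologies but does not make every exact form of the big complex exact in the small one. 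Concretely, take $M=[0,1]$, $\de_1M=\{0\}$, $\de_2M=\{1\}$, $k=1$, $\phi=t^2\in\Omega^0_{\DD1}(M)$: any $\psi$ with $\dd\psi=2t\,\dd t$ has $\psi'(1)=2\neq0$, so $\dd^*(*\psi)\notin\Omega^0_{\DD2}(M)$ and your $\gamma=*\psi$ is not an admissible element of $\Omega^{l+1}_{\NN1}(M)$ with $\dd^*\gamma\in\calL_2$ --- even though $2t\,\dd t$ does pair nontrivially with $\calL_2$ (e.g.\ against $\dd^*(h\,\dd t)$ with $h=\int_0^t(1-s)\,\dd s$). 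So the statement you need is true, but your witness does not produce it. A repair in the spirit of the paper: if $\dd\phi$ annihilated $\calL_2$, the bump\ndash form and boundary\ndash term arguments would give $\dd^*\dd\phi=0$ and $\iota_{\de_2M}^*(*\dd\phi)=0$, whence $\|\dd\phi\|^2=(\phi,\dd^*\dd\phi)+\int_{\de M}\phi\wedge*\dd\phi=0$, a contradiction; some such argument (or a limiting/approximation argument) must replace the exact ultra primitive.
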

\begin{proof}
We have to prove that $\beta\in\Omega^l_{\DD2}(M)$ satisfies $\int_M\beta\,\alpha=0$ for every
$\alpha\in(\dd^*\Omega^{k+1}_{{\NN}2}(M))
\cap\Omega^k_{\DD1}(M)$ if and only if $\beta\in q_{12}^{-1}(\Harm^l_{S_1^e,S_2^o}(M''))\oplus
(\dd^*\Omega^{l+1}_{{\NN}1}(M))
\cap\Omega^l_{\DD2}(M)$.
Similarly, we have to prove that $\alpha\in\Omega^k_{\DD1}(M)$ satisfies $\int_M\beta\,\alpha=0$ for every
$\beta\in(\dd^*\Omega^{l+1}_{\text{N}1}(M))
\cap\Omega^l_{\DD2}(M)$ if and only if $\alpha\in q_{21}^{-1}(\Harm^k_{S_2^e,S_1^o}(M''))\oplus (\dd^*\Omega^{k+1}_{\text{N}2}(M))
\cap\Omega^k_{\DD1}(M)$.

We prove the first statement only, as the proof of the second is identical (by exchanging the role of the boundary indices $1$ and $2$, and interchanging $k$ and $l$).
We start with the (easier) ``if'' part. We write $\alpha=\dd^*\gamma$ with $\gamma\in\Omega^{k+1}_{{\NN}2}(M)$ and $\dd^*\gamma\in\Omega^k_{\DD1}(M)$.
Up to sign, we have that $\int_M\beta\,\alpha$ is
equal to $\int_M (*\beta)\,\dd*\gamma$. Since $\dd^*\beta=0$, this is equal to the boundary term which, up to a sign, is
$\int_{\de M} (*\beta)\,(*\gamma)$. This boundary term vanishes
since $\gamma\in\Omega^{k+1}_{{\NN}2}(M)$ and $\beta\in\Omega^l_{{\NN}1}(M)$.

We now have to prove the ``only if'' part.
Writing $\alpha=\dd^*\gamma$, we have that $\int_M (*\beta)\,\dd*\gamma=0$ for every $\gamma\in\Omega^{k+1}_{\text{N}2}(M)$ with
$\dd^*\gamma\in\Omega^k_{\DD1}(M)$. In particular, we may take $\gamma$ to be a bump form near any point in the bulk and vanishing on the boundary (so that we can integrate by parts). This implies
\[
\dd^*\beta=0.
\]
This in turns implies $\int_{\de M} (*\beta)\,(*\gamma)=0$ for every $\gamma$ as above. Since $\gamma\in\Omega^{k+1}_{\text{N}2}(M)$,
we actually have $\int_{\de_1 M} (*\beta)\,(*\gamma)=0$ for every $\gamma$ as above. If, in a neighborhood of $\de_1 M$, we write $\gamma$
as $\sigma+\lambda\dd t$, we get, as in \eqref{se:dstarmu}, $d^*\gamma=(\dd^{*\prime}\sigma+\dot\lambda) + (\dd^{*\prime}\lambda)\dd t$.
The condition $\dd^*\gamma\in\Omega^k_{\DD1}(M)$ implies that $\dd^{*\prime}\sigma+\dot\lambda$ vanishes on $\de_1M$, but this puts no condition
on the restriction $*'\lambda$
of $*\gamma$ to $\de_1M$. As a consequence, we get that $*\beta$ must vanish on $\de_1M$, i.e.,
\[
\beta\in\Omega^l_{\NN 1}(M)
\]

To summarize, we now know that $\dd^*\beta=0$ and $\beta\in\Omega^l_{\NN1,\DD2}$. Thanks to Lemma~\ref{l:fromND to ultra}, part $(2)$,
picking $\nu$ appropriately near each boundary component, we conclude that there is a $\nu\in\Omega^{l+1}_{\NN1,\NN2}$ with $\dd^*\nu\in\Omega^l_{\DD2}(M)$
such that
$\beta':=\beta-\dd^*\nu$ belongs to $\Omega^l_{\hNN1,\hDD2}(M)$. So $q_{12}(\beta')\in\Omega^l_{S_1^e,S_2^o}(M'')$ and
$\dd^*q_{12}(\beta')=0$. By the Hodge decomposition theorem on $M''$ (which has no boundary), we get
$q_{12}(\beta')\in\Harm^l_{S_1^e,S_2^o}(M'')\oplus \dd^*\Omega^{l+1}_{S_1^e,S_2^o}(M'')$ and hence
$\beta'\in q_{12}^{-1}(\Harm^l_{S_1^e,S_2^o}(M''))\oplus \dd^*\Omega^{l+1}_{\hNN1,\hDD2}(M)$ and, in turn,
\begin{multline*}
\beta\in q_{12}^{-1}(\Harm^l _{S_1^e,S_2^o}(M''))\oplus(\dd^*(\Omega^{l+1}_{\NN1,\NN2}(M)+\Omega^{l+1}_{\hNN1,\hDD2}(M)))\cap\Omega^l_{\DD2}(M)\subset \\
\subset q_{12}^{-1}(\Harm^l_{S_1^e,S_2^o}(M''))\oplus
(\dd^*\Omega^{l+1}_{\text{N}1}(M))
\cap\Omega^l_{\DD2}(M).
\end{multline*}

\end{proof}

\subsection{The Hodge propagator}
\label{a: Hodge propagator}

\subsubsection{Strong and weak Hodge decompositions}
\begin{Def} We say that a cochain complex of real (possibly, infinite-dimensional) vector spaces $(V^\bullet,\dd)$ admits a {\sf strong} Hodge decomposition if it is equipped with a positive inner product $(,)\colon V^j\otimes V^j\rightarrow \mathbb{R}$, $\dd$ has an adjoint $\dd^*\colon V^\bullet\rightarrow V^{\bullet-1}$ with respect to $(,)$ and $V^\bullet$ splits as a direct sum of eigenspaces of the Laplacian $\Delta_\mathrm{Hodge}=\dd\dd^*+\dd^*\dd\colon V^\bullet \rightarrow V^\bullet$. As a consequence, $V^\bullet$ splits as
$$V^\bullet=V^\bullet_\mathrm{Harm}\oplus \dd(V^{\bullet-1})\oplus \dd^*(V^{\bullet+1})$$
with $V^\bullet_\mathrm{Harm}=\ker\Delta_\mathrm{Hodge}\simeq H^\bullet(V)$ the harmonic representatives of cohomology.
\end{Def}

\begin{Def} For a cochain complex $(W^\bullet,\dd)$, we call a {\sf weak} Hodge decomposition a decomposition of the form
\begin{equation}\label{e:weak Hodge}
W^\bullet=\iota(H^\bullet(W))\oplus \dd(W^{\bullet-1})\oplus K(W^{\bullet+1})
\end{equation}
where $\iota\colon H^\bullet(W)\rightarrow W^\bullet$ is a choice of representatives of cohomology, $K\colon W^\bullet\rightarrow W^{\bullet-1}$ is a linear map (the chain contraction) satisfying
$$\dd K+ K \dd= \mathrm{id}-\iota\circ p,\quad K\circ\iota = p\circ K=0$$
with $p\colon W^\bullet\rightarrow H^\bullet(W)$ a choice of projection onto cohomology.
\end{Def}

To a strong Hodge decomposition of $V^\bullet$, one can canonically associate the data of the weak Hodge decomposition of $V^\bullet$, where $\iota$ represents the cohomology class by a harmonic cochain, $p$ takes the cohomology class of the orthogonal projection of the input cochain onto harmonic cochains, and the chain contraction is given by
\begin{equation}\label{e:K Hodge abstract}
K_\mathrm{Hodge}= \dd^*/(\Delta_\mathrm{Hodge}+P_\mathrm{Harm})
\end{equation}
where $P_\mathrm{Harm}=\iota\circ p$ is the orthogonal projection onto harmonic cochains.

\subsubsection{The hierarchy of boundary conditions on differential forms}
\label{a: hierarchy of bc}
Returning to the setting of a Riemannian manifold $M$ with boundary $\de M=\de_1 M\sqcup \de_2 M$, consider the tower of inclusions
\begin{equation}\label{e:tower of bc}
\Omega^\bullet_{\DD1}(M)\supset \Omega^\bullet_{\DD1,\NN2}(M)\supset \Omega^\bullet_{\mathrm{rel}1,\mathrm{abs}2}(M)\supset \Omega^\bullet_{\hDD1,\hNN2}(M).
\end{equation}
Here, following \cite{Ray-Singer,Cheeger}, we say that a form $\alpha$ satisfies {\sf relative} boundary condition on $\de_1 M$ if $\alpha|_{\de_1 M}=\dd^*\alpha|_{\de_1 M}=0$ 
and satisfies {\sf absolute} boundary condition on $\de_2 M$ if $*\alpha|_{\de_2 M}=*\dd\alpha|_{\de_2 M}=0$.
Similarly, we have a tower related to (\ref{e:tower of bc}) by applying the Hodge star to all terms:
\begin{equation}\label{e:tower of bc 2}
\Omega^\bullet_{\DD2}(M)\supset \Omega^\bullet_{\NN1,\DD2}(M)\supset \Omega^\bullet_{\mathrm{abs}1,\mathrm{rel}2}(M)\supset \Omega^\bullet_{\hNN1,\hDD2}(M).
\end{equation}
Note that only the rightmost terms in (\ref{e:tower of bc},\ref{e:tower of bc 2}) are closed with respect to $\dd$ and $\dd^*$. Leftmost terms are closed with respect to $\dd$ but not $\dd^*$, and middle terms are closed with respect to neither (in particular, they are not cochain complexes).

All the graded vector spaces in (\ref{e:tower of bc},\ref{e:tower of bc 2}) 
are equipped with the Hodge inner product $(\alpha,\beta)=\int_M \alpha\wedge *\beta$. On $\Omega^\bullet_{\mathrm{rel}1,\mathrm{abs}2}(M)$
the operators $\dd$ and $\dd^*$ are mutually adjoint, i.e. $(d\alpha,\beta)=(\alpha,\dd^*\beta)$, and the spectral problem for the Laplacian is well-posed, however, as pointed out above, these operators spoil the relative/absolute boundary conditions, i.e. are not endomorphisms of $\Omega^\bullet_{\mathrm{rel}1,\mathrm{abs}2}(M)$.
Moreover, if $\alpha\in \Omega^\bullet_{\mathrm{rel}1,\mathrm{abs}2}(M)$ (or even in $\Omega^\bullet_{\DD1,\NN2}(M)$) is an eigenform of the Laplacian $\Delta_\mathrm{Hodge}$, then it is automatically in $\Omega^\bullet_{\hDD1,\hNN2}(M)$.\footnote{
Indeed, assume that $\alpha\in \Omega^p_{\DD1,\NN2}(M)$ is an eigenform of $\Delta_\mathrm{Hodge}$ with eigenvalue $\lambda$. For $i=1,2$, near $\de_i M$ the Laplacian decomposes as $\Delta_\mathrm{Hodge}=\Delta_\mathrm{Hodge,\de_i}-\frac{\dd^2}{\dd t^2}$ with $t$ the normal coordinate near boundary. Thus,  near $\de_i M$  we have $\alpha=\sum_r \theta_{\de_i,r}^{(p)}\left( a_r \cos(\omega_r^{(p)} t)+b_r \sin(\omega_r^{(p)} t)\right)+ dt\cdot\sum_s \theta_{\de_i,s}^{(p-1)}\left( c_s \cos(\omega_s^{(p-1)} t)+d_s \sin(\omega_s^{(p-1)} t)\right) $ where sums are over the eigenforms $\theta_{\de_i}$ of the boundary Laplacian on $\de_i M$ of degrees $p$ and $p-1$, respectively, with $r,s$ the indices enumerating the boundary spectrum in these degrees. Denoting eigenvalues of the latter by $\mu_{\de_i}$, for the (possibly, imaginary)
frequencies $\omega$ we have $\lambda=(\omega_r^{(p)})^2+\mu_{\de_i,r}^{(p)}= (\omega_s^{(p-1)})^2+\mu_{\de_i,s}^{(p-1)}$. Relative boundary condition on $\de_1 M$ enforces $a_r=d_s=0$, which implies the ultra-Dirichlet condition; similarly, the absolute  boundary condition on $\de_2 M$ enforces $b_r=c_s=0$, which implies the ultra-Neumann condition.
}
The case of $\Omega^\bullet_{\mathrm{abs}1,\mathrm{rel}2}(M)$ vs. $\Omega^\bullet_{\hNN1,\hDD2}(M)$ works analogously.

\subsubsection{The Hodge propagator}
\label{a: Hodge propagator subsubsec}
As follows from the discussion of Sections \ref{a:Hodge doubling twice}, \ref{a: hierarchy of bc}, the complexes $\Omega^\bullet_{\hDD1,\hNN2}(M)$ and $\Omega^\bullet_{\hNN1,\hDD2}(M)$ possess a strong Hodge decomposition, whereas all other terms of (\ref{e:tower of bc},\ref{e:tower of bc 2}) do not. On $\Omega^\bullet_{\hDD1,\hNN2}(M)$ we construct the chain contraction as in (\ref{e:K Hodge abstract}):
\begin{equation}\label{e:K Hodge}
K_\mathrm{Hodge}^{\hDD1,\hNN2}
=\dd^*/(\Delta_\mathrm{Hodge}+P_\mathrm{Harm})\colon\quad
\Omega^\bullet_{\hDD1,\hNN2}(M)\rightarrow \Omega^{\bullet-1}_{\hDD1,\hNN2}(M).\end{equation}
Similarly, on $\Omega^\bullet_{\hNN1,\hDD2}$ we have the chain contraction
$$K_\mathrm{Hodge}^{\hNN1,\hDD2}
=\dd^*/(\Delta_\mathrm{Hodge}+P_\mathrm{Harm})\colon\quad
\Omega^\bullet_{\hNN1,\hDD2}(M)\rightarrow \Omega^{\bullet-1}_{\hNN1,\hDD2}(M).$$

Being the inverse of an elliptic operator (composed with $\dd^*$), the chain contractions above are integral operators
$$K_\mathrm{Hodge}^{\hDD1,\hNN2}= (\pi_1)_*(\eta_\mathrm{Hodge}\wedge \pi_2^*(-)),\qquad
K_\mathrm{Hodge}^{\hNN1,\hDD2}= (\pi_1)_*(\eta'_\mathrm{Hodge}\wedge \pi_2^*(-))$$
with integral kernels $\eta_\mathrm{Hodge}$, $\eta'_\mathrm{Hodge}$ given by smooth $(d-1)$\ndash forms on the configuration space of two points $C_2^0(M)$.
Since the complexes $\Omega^\bullet_{\hDD1,\hNN2}(M)$ and $\Omega^\bullet_{\hNN1,\hDD2}(M)$ are dual to each other by Poincar\'e pairing $\int_M\alpha\wedge\beta$, we have
\begin{equation}\label{e:T eta = eta'}
T^*\eta_\mathrm{Hodge}=(-1)^d\;
\eta'_\mathrm{Hodge}
\end{equation}
where $T: C_2^0(M)\rightarrow C_2^0(M)$ maps $(x_1,x_2)\mapsto (x_2,x_1)$. 
Equation (\ref{e:T eta = eta'}) implies that $\eta_\mathrm{Hodge}$ satisfies boundary conditions $\hDD1,\hNN2$ in the first argument and $\hNN1,\hDD2$ in the second argument; $\eta'_\mathrm{Hodge}$ satisfies the opposite boundary conditions: $\hNN1,\hDD2$ in the first argument and $\hDD1,\hNN2$ in the second argument.

\begin{Def}
We call the form  $\eta_\mathrm{Hodge}\in \Omega^{d-1}(C_2^0(M))$ defined as above, i.e. as the integral kernel of the chain contraction (\ref{e:K Hodge}), the {\sf Hodge propagator} on $M$.
\end{Def}

This is the adaptation of the propagator of Axelrod-Singer \cite{AS} to manifolds with boundary.

Finally, notice that one can use $\eta_\mathrm{Hodge}$ to define the chain contraction of the whole complex $\Omega_{\DD1}^\bullet(M)$, given by the same formula $K_\mathrm{Hodge}=(\pi_1)_*(\eta_\mathrm{Hodge}\wedge\pi_2^*(-))$ (i.e. we extend the domain of $K_\mathrm{Hodge}^{\hDD1,\hNN2}$ by relaxing the boundary conditions from $\hDD1,\hNN2$ to $\DD1$). This defines a weak Hodge decomposition (\ref{e:weak Hodge}) of $\Omega_{\DD1}^\bullet(M)$:
$$ \Omega_{\DD1}^\bullet(M) = \widehat{\mathrm{Harm}}^\bullet_{\DD1,\NN2}(M)\;\oplus \;  \underbrace{\dd\,\Omega^{\bullet-1}_{\DD1}(M)}_{\mathrm{im}(\dd)}  \;\oplus\;
\underbrace{\dd^*\,\Omega^{\bullet+1}_{\NN2}\;\cap \Omega^\bullet_{\DD1}}_{\mathrm{im}(K_\mathrm{Hodge})} $$

\section{Constructing the propagator: ``soft'' method and the method of image charges}\label{a:prop}
Recall that, if $N$ is a closed, compact $d$\ndash manifold, then it is possible to construct a propagator $\eta_N$ on $N$ as in \cite{BC,C,CR}.

Namely, one has first to choose an inclusion $\iota$ of $H^\bullet(N)$ into $\Omega^\bullet(N)$. This determines a representative of the Poincar\'e dual $\chi_\Delta$ of the diagonal $\Delta$ in $N\times N$ and, by restriction, a representative $e_N$ of the Euler class of $N$:
\begin{align*}
\chi_\Delta &= \sum_i (-1)^{d\cdot\deg\chi_i}\pi_1^*\chi^N_i\,\pi_2^*\chi_N^i,\\
e_N &= \sum_i (-1)^{\deg\chi_i}\chi_N^i \chi^N_i,
\end{align*}
where $\pi_1$ and $\pi_2$ are the projections from $N\times N$ to $N$, $\{\chi^N_i\}$ is the image under $\iota$ of a basis of $H^\bullet(N)$ and
$\{\chi^N_i\}$ is the image of the dual basis.

Next one picks a global angular form $\vartheta$ on the sphere bundle $STN$ such
that $\dd\vartheta$ is the pullback of the representative of $e_N$. By explicit construction, one can obtain a $(d-1)$\ndash form $\sigma_N$ with the following properties:
\begin{align*}
\dd\sigma_N &= \pi^*\chi_\Delta,\\
\iota_{\de}^*\sigma_N &= \vartheta,\\
T^*\sigma_N &= (-1)^d\sigma_N,
\end{align*}
where $\pi$ is the projection $C_2(N)\to N\times N$, $\iota_\de$ is the inclusion map $STN=\de C_2(N)\hookrightarrow C_2(N)$, and $T$ is the involution
of $C_2(N)$ the sends $(x,y)$ to $(y,x)$.

It follows that $\eta_N:=(-1)^{d-1}\sigma_N$ is a propagator for the abelian $BF$ theory 
on $N$.

We now want to use the above construction to get a propator for the manifold with boundary $M$ by using a variant of the method of image charges.
First we double it twice to $M''$ as in Appendix~\ref{ss:doubling}. By using the involutions $S_1$ and $S_2$ defined there, we may write
\[
\Omega^\bullet(M'')=\Omega^\bullet_{S_1^e,S_2^e}(M'')\oplus \Omega^\bullet_{S_1^e,S_2^o}(M'')\oplus \Omega^\bullet_{S_1^o,S_2^e}(M'')\oplus \Omega^\bullet_{S_1^o,S_2^o}(M),
\]
and similarly in cohomology.
Notice that, since $S_1$ and $S_2$ are orientation reversing, an $S^e_i$ component is paired to an $S^o_i$ component.
We choose the embedding $\iota\colon H^\bullet(M'')\hookrightarrow\Omega^\bullet(M'')$ to respect this decomposition and construct a propagator $\eta_{M''}$ accordingly.
Next we define
\[
\Breve C_2^0(M''):=\{(x,y)\in M''\times M'' : x\not=y,\ S_1(x)\not=y,\ x\not=S_2(y),\ S_1(x)\not=S_2(y)\}
\]
as a subspace of $C_2^0(M'')$. We extend to $\Breve C_2^0(M'')$ the involutions $S_1$ and $S_2$ as
\begin{align*}
\Breve S_1(x,y)&:=(S_1(x),y),\\
\Breve S_2(x,y)&:=(x,S_2(y)).
\end{align*}
Finally, we denote by $\Breve\eta$ the restriction of the propagator $\eta_{M''}$ to $\Breve C_2^0(M'')$ and define $\eta$ as the extension to the compactification $C_2(M)$
of the restriction to $C_2^0(M)\subset \Breve C_2^0(M'')$ of
\[
\Breve\eta':=
\Breve\eta - \Breve S_1^*\Breve\eta -
\Breve S_2^*\Breve\eta +
\Breve S_1^* \Breve S_2^*\Breve\eta.
\]
It is readily verified that $\eta$ is a propagator on $M$ with respect to the embeddings of $H^\bullet(M,\de_1M)$ and $H^\bullet(M,\de_2 M)$ into
$\Omega^\bullet_{\DD1}(M)$ and $\Omega^\bullet_{\DD2}(M)$ (actually, $\Omega^\bullet_{\hDD1,\hNN2}(M)$ and $\Omega^\bullet_{\hNN1,\hDD2}(M)$)
given by the following forms
\begin{align*}
\chi_i &= 2\iota_M^*\chi_i^{M'',S_1^o,S_2^e},\\
\chi^i &= 2\iota_M^*\chi^i_{M'',S_1^e,S_2^o},
\end{align*}
with $\iota_M$ the inclusion $M\hookrightarrow M''$.

\begin{Rem} The Hodge propagator of Appendix \ref{a: Hodge propagator} is a special case of this construction, corresponding to $\eta_{M''}$ being the Hodge propagator on $M''$.
\end{Rem}

\begin{Rem}[One boundary component]
If we group all the boundary components of $M$ into $\de_1 M$, so $\de_2 M=\emptyset$, the formulae get simplified as follows.
First, we have the decomposition $\Omega^\bullet(M')=\Omega^\bullet_{S_1^e}(M')\oplus \Omega^\bullet_{S_1^o}(M')$, and similarly in cohomology,
where $M'$ is the doubling of $M$ defined in Appendix~\ref{ss:doubling}. We choose the embedding $\iota\colon H^\bullet(M')\hookrightarrow\Omega^\bullet(M')$ to respect this decomposition and construct a propagator $\eta_{M'}$ accordingly. Next we define
\[
\Breve C_2^0(M'):=\{(x,y)\in M'\times M' : x\not=y,\ S_1(x)\not=y\}
\]
as a subspace of $C_2^0(M')$. We extend to $\Breve C_2^0(M')$ the involution $S_1$ as
\[
\Breve S_1(x,y):=(S_1(x),y).
\]
Finally, we denote by $\Breve\eta$ the restriction of the propagator $\eta_{M'}$ to $\Breve C_2^0(M')$ and define $\eta$ as the extension to the compactification $C_2(M)$
of the restriction to $C_2^0(M)\subset \Breve C_2^0(M')$ of
\[
\Breve\eta':=
\Breve\eta - \Breve S_1^*\Breve\eta.
\]
Again, it is readily verified that $\eta$ is a propagator on $M$ with respect to the embedding of $H^\bullet(M,\de_1M)$ into
$\Omega^\bullet_{\DD1}(M)$  (actually, $\Omega^\bullet_{\hDD1}(M)$)
given by the following forms
\begin{align*}
\chi_i &= 2\iota_M^*\chi_i^{M',S_1^o},\\
\chi^i &= \iota_M^*\chi^i_{M',S_1^e},
\end{align*}
with $\iota_M$ the inclusion $M\hookrightarrow M'$.

\end{Rem}

\section{Examples of propagators}
\label{a: exa of prop}
\begin{Exa}[Interval with opposite polarization on the endpoints]
Let $M=[0,1]$ be an interval with coordinate $t$. We set $\de_1 M=\{1\}$, $\de_2 M=\{0\}$. Then the space of {\backgrounds} is empty $\calV_M=0$ and the propagator is
\begin{equation}\label{e:eta interval 12}
\eta(t_1,t_2)=-\Theta(t_2-t_1)\quad \in \Omega^0(C^0_2(M),\DDD)
\end{equation}
with $\Theta(x)=\left\{\begin{array}{ll} 1 & x> 0 \\ 0 & x<0\end{array}\right.$ the step function (which we never have to evaluate at zero, since the diagonal $t_1=t_2$ is removed from the configuration space where $\eta$ is defined). The $\DDD$-boundary condition (\ref{e: Omega D}) simply means that $\eta(t_1,t_2)$ vanishes if either $t_1=1$ or $t_2=0$.
The associated chain contraction of $\Omega_{\DD1}^\bt(M)$ (which is an acyclic complex) is
\begin{equation}\label{e:K [0,1]}
K\colon f+g\,dt \mapsto \int_{0}^1 \eta(t,t_2) g(t_2) dt_2=-\int_t^1 g(t_2)dt_2.
\end{equation}
It satisfies $\dd K+K\dd=\mathrm{id}_{\Omega^\bullet_{\DD1}(M)}$, which is equivalent to $\dd\eta=0$ accompanied by the discontinuity condition \begin{equation}\label{e:eta jump}
\eta(t+0,t)-\eta(t-0,t)=1.
\end{equation}
The propagator (\ref{e:eta interval 12}) is in fact unique 
and does indeed extend to the ASMF compactification, which simply amounts to attaching boundary strata $\{(t+0,t)\,|\, t\in[0,1]\}$ and $\{(t-0,t)\,|\, t\in[0,1]\}$ to $C_2^0$.
\end{Exa}

\begin{Exa}[Interval with same polarization on the endpoints]
\label{exa interval 11}
Consider again the unit interval, but now set $\de_1 M=\{0\}\sqcup\{1\}$ and $\de_2 M=\varnothing$. Then the space of {\backgrounds} is non-empty, since $H^1_{\DD1}(M)=\mathbb{R}=H^0_{\DD2}(M)$ (the other cohomology spaces vanish), and we choose the basis $[\chi_1]=[\dd t]\in H^1_{\DD1}(M)$ and $[\chi^1]=[1]\in H^0_{\DD2}(M)$.  Thus $\calV_M=\mathbb{R}[k-1]\oplus \mathbb{R}[-k]$ and we write the {\backgrounds} as
$\sfa=z^1\cdot \dd t$, $\sfb=z_1^+\cdot 1$ with coordinates $z^1, z_1$ of degrees $k-1$ and $-k$, respectively. We have
\begin{equation}\label{e:eta interval 11}
\eta(t_1,t_2)=\Theta(t_1-t_2)-t_1
\end{equation}
which satisfies the equation
$d\eta=-\dd t_1\wedge 1_{t_2}$ (cf. (\ref{e:deta})), the discontinuity condition (\ref{e:eta jump}) and $\DDD$-boundary condition
$\eta(0,t_2)=\eta(1,t_2)=0$.
\end{Exa}

The case of the interval with both boundary points marked as $\de_2$ works similarly. The propagator in this case is:
\begin{equation}\label{e:eta interval 22}
\eta(t_1,t_2)=-\Theta(t_2-t_1)+t_2.
\end{equation}

\begin{Exa}[Circle]
\label{exa: eta circle}
Let $M=S^1$ be a circle with coordinate $t\in [0,1]$ with points $t=0$ and $t=1$ identified. The basis in cohomology is $[\chi_0]=[1]\in H^0(M)$, $[\chi_1]=[\dd t]\in H^1(M)$; the Poincar\'e-dual basis is $[\chi^0]=[\dd t]\in H^1(M)$, $[\chi^1]=[1]\in H^0(M)$.  Hence $\calV_M=\mathbb{R}[k]\oplus\mathbb{R}[k-1]\oplus \mathbb{R}[-1-k]\oplus\mathbb{R}[-k]$
and the {\backgrounds} are $\sfa=z^0\cdot 1+z^1\cdot dt$, $\sfb=z_0^+\cdot \dd t+z_1^+\cdot 1$ where the coordinates $z^0,z^1,z^+_0,z^+_1$ have degrees $k$, $k-1$, $-1-k$, $-k$ respectively.
The propagator is:
$$\eta(t_1,t_2)=\Theta(t_1-t_2)-t_1+t_2-\frac12$$
it is periodic in $t_1,t_2$ and moreover is a smooth function on the configuration space $C_2^0(S^1)$. It also clearly satisfies the discontinuity condition (\ref{e:eta jump}) and the equation (\ref{e:deta}):
$\dd\eta=-\dd t_1\wedge 1_{t_2}+1_{t_1}\wedge \dd t_2$.
The propagator also satisfies the anti-symmetry property
\begin{equation}\label{e:eta circle antisym}
\eta(t_2,t_1)=-\eta(t_1,t_2).
\end{equation}
\end{Exa}

\begin{Exa}[The 2-sphere]\label{exa: eta sphere}
Let $M=S^2$ be the 2-sphere which we endow with a complex coordinate $z\in\mathbb{C}\cup\{\infty\}$ via stereographic projection. The cohomology is $H^0(M)=H^2(M)=\mathbb{R}$, $H^1(M)=0$ and we choose the basis $[\chi_0]=[1]\in H^0(M)$, $[\chi_1]=[\mu]\in H^2(M)$ with
\begin{equation}\label{e: mu sphere}
\mu=\frac{1}{2\pi}\frac{i\, \dd z\wedge \dd\bar z}{(1+|z|^2)^2}
\end{equation}
the $SO(3)$-invariant volume form on the sphere of total volume $1$. The dual basis is $[\chi^0]=[\mu]\in H^2(M)$, $[\chi^1]=[1]\in H^0(M)$. We have $\calV_M=\mathbb{R}[k]\oplus \mathbb{R}[k-2]\oplus \mathbb{R}[-1-k]\oplus \mathbb{R}[1-k]$, with {\backgrounds} $\sfa=z^0\cdot 1+z^1\cdot \mu$, $\sfb=z_0^+\cdot\mu+z_1^+\cdot 1$; coordinates $z^0,z^1,z^+_0,z^+_1$ have degrees $k$, $k-2$, $-1-k$, $1-k$, respectively. The $SO(3)$-invariant propagator is
\begin{equation}\label{e:eta sphere}
\eta=\frac{1}{2\pi}\; \frac{|1+z_1\bar z_2|^2}{(1+|z_1|^2)\;(1+|z_2|^2)}\;\left(\dd_1\;\mathrm{arg}\left( \frac{z_1-z_2}{1+z_1 \bar z_2}\right)+
\dd_2\;\mathrm{arg}\left( \frac{z_2-z_1}{1+z_2 \bar z_1}\right)\right)
\end{equation}
where $\dd_1=\dd z_1\frac{\de}{\de z_1}+\dd\bar z_1\frac{\de}{\de \bar z_1}$, $\dd_2=\dd z_2\frac{\de}{\de z_2}+\dd\bar z_2\frac{\de}{\de \bar z_2}$ are the de Rham differentials in $z_1$ and $z_2$, respectively.
It is smooth on the configuration space $C_2^0(S^2)$ and extends smoothly to the compactification by the tangent circle bundle of $S^2_\mathrm{diag}$, it satisfies (\ref{e:deta}): $\dd\eta=-\mu_{z_1}\wedge 1_{z_2}-1_{z_2}\wedge \mu_{z_2}$. Instead of the discontinuity property (\ref{e:eta jump}), we have the property
$$\lim_{\epsilon\rightarrow 0} \;\oint_{\phi=0}^{2\pi} \eta(z_1=z_2+\epsilon\cdot e^{i\phi},\; z_2)=1.$$
Moreover, the propagator (\ref{e:eta sphere})
 is symmetric with respect to interchanging $z_1$ and $z_2$:
\begin{equation}\label{e:T eta}
T^*\eta=\eta
\end{equation}
where $T\colon C_2^0(S^2)\rightarrow C_2^0(S^2)$ sends $(z_1,z_2)\mapsto (z_2,z_1)$, cf. (\ref{e:eta circle antisym}).\footnote{Note that one cannot expect such a property for a propagator on a manifold with boundary, as there are different boundary conditions on the two arguments.}
Note also that (\ref{e:eta sphere}) can be obtained as the $SO(3)$-invariant extension of the propagator with $z_2$ fixed to $0$:\footnote{
I.e. we recover the first term of (\ref{e:T eta}) (with $\dd_1$) by pulling back (\ref{e:eta sphere 1-point}) by a $z_2$-dependent M\"obius transformation $F_{z_2}\colon z\mapsto \frac{z-z_2}{\bar z_2\cdot z+1}$ (which is in the image of $SO(3)$ in $PSL(2,\mathbb{C})$). The second term of (\ref{e:eta sphere}) is recovered by enforcing the symmetry (\ref{e:T eta}).
}
\begin{equation}\label{e:eta sphere 1-point}
\eta(z_1,0)=\frac{1}{2\pi}\;\frac{1}{1+|z_1|^2}\;\dd\;\mathrm{arg}(z_1).
\end{equation}
The properties above do not characterize $\eta$ uniquely: one can add to (\ref{e:eta sphere}) a term of the form $\dd\,\Phi(\mathrm{Dist}(z_1,z_2))$ where $\mathrm{Dist}(z_1,z_2)$ is the geodesic distance between the two points with respect to the round metric on $S^2$ and $\Phi$ can be any smooth even function on $\bR/2\pi\mathbb{Z}$.

One can show that (\ref{e:eta sphere}) is in fact the Hodge propagator (cf. Appendix \ref{a: Hodge propagator}) corresponding to the round metric on $S^2$, while shifting $\eta$ by $\dd\,\Phi(\mathrm{Dist}(z_1,z_2))$ destroys this property.
\end{Exa}

\begin{Exa}
Let $M=D$ be a $2$\ndash disk, which we view as the unit disk in the complex plane, or a hemisphere (via stereographic projection) $\{z\in \mathbb{C}\; :\; |z|\leq 1\}$. Set $\de_1 M=\de M$ the boundary circle and $\de_2 M=\varnothing$. We choose the basis vector $[\chi_0]=[2\mu]$ in $H^2_{\DD1}(M)$ and the dual one $[\chi^0]=[1]$ in $H^0_{\DD2}(M)$. Here $\mu$ is given by (\ref{e: mu sphere}); note that $\mu$ has volume $1/2$ on the hemisphere, hence the normalization of the class $[2\mu]$. The space of {\backgrounds} is $\calV_M=\mathbb{R}[k-2]\oplus \mathbb{R}[1-k]$. The propagator can be constructed by the method of Appendix \ref{a:prop} for the propagator (\ref{e:eta sphere}) for the sphere:
$$\eta(z_1,z_2)=\eta_{S^2}(z_1,z_2)-\eta_{S^2}(\bar z_1^{-1},z_2)$$
Here we denoted $\eta_{S^2}$ the propagator (\ref{e:eta sphere}). For the method of image charges, we are using the involution $z\mapsto \bar z^{-1}$ on $S^2$ which has the equator $|z|=1$ as its locus of fixed points.

If instead we assign the boundary circle as $\de_2 M$, the relevant cohomology becomes $H^0_{\DD1}(M)=\mathrm{Span}([1])$, $H^2_{\DD2}(M)=\mathrm{Span}([2\mu])$; the space of {\backgrounds} becomes $\calV_M=\mathbb{R}[k]\oplus \mathbb{R}[-1-k]$. The corresponding propagator is
\begin{equation*}
\eta(z_1,z_2)=\eta_{S^2}(z_1,z_2)-\eta_{S^2}(z_1,\bar z_2^{-1})
\end{equation*}

Another example of a propagator on a disk was considered in \cite{CFvacua}.
\end{Exa}

\subsection{Axial gauge on a cylinder}

The following example comes from the construction of axial gauge-fixing, in the sense of \cite{BCM}, a special case of the construction of tensor product for induction data in homological perturbation theory \cite{discrBF,cell_ab_BF}.

The propagators we construct here are not smooth differential forms on the compactified configuration space, but rather distributional forms on $M\times M$. Properties (\ref{e:deta}) and normalization of the integral over the $(d-1)$-cycle given by one point spanning an infinitesimal sphere around the other point, are replaced by the distributional identity
$d\eta=\delta^{(d)}_{M,\mathrm{diag}}+(-1)^{d-1}\sum_i (-1)^{d\cdot \deg\chi_i}\pi_1^*\chi_i \pi_2^* \chi^i$. Here $\delta^{(d)}_{M,\mathrm{diag}}$ is the distributional $d$-form on $M\times M$ supported on the diagonal, the integral kernel of the identity map $\Omega^\bullet(M)\rightarrow \Omega^\bullet(M)$.

\begin{Exa}[Two distributional propagators on a cylinder]
\label{exa: axial gauge}
Let $\Sigma$ be a closed $(d-1)$- dimensional manifold with $[\chi_{(\Sigma)i}]$ a basis in $H^\bullet(\Sigma)$, $[\chi_{(\Sigma)}^i]$ the dual basis and $\eta_\Sigma\in \Omega^{d-2}(C_2^0(\Sigma))$ a propagator. Let $M=\Sigma\times [0,1]$, with assignments $\de_1 M=\Sigma\times\{1\}$, $\de_2 M=\Sigma\times \{0\}$. Then $H^\bullet_{\DD1}(M)=H^\bullet_{\DD2}(M)=0$ and hence $\calV_M=0$. Then there are the following two distributional propagators on $M$:
\begin{equation}\label{e:eta axial}
\eta^\mathrm{axial}((x_1,t_1),(x_2,t_2)) =  \eta_{[0,1]}(t_1,t_2)\cdot \delta^{(d-1)}(x_1,x_2),
\end{equation}
\begin{multline}\label{e:eta hor}
\eta^\mathrm{hor}((x_1,t_1),(x_2,t_2)) =  -\delta(t_1-t_2)\cdot (\dd t_1-\dd t_2)\cdot\eta_\Sigma(x_1,x_2)+\\
+\sum_i (-1)^{(d-1)\cdot(\deg\chi_{(\Sigma)i}+1)} \eta_{[0,1]}(t_1,t_2)\cdot\chi_{(\Sigma)i}(x_1)\cdot \chi_{(\Sigma)}^i(x_2).
\end{multline}
Here we denote by $t$ the coordinate on $[0,1]$ and $x$ stands for a point of $\Sigma$; $\delta^{(d-1)}(x_1,x_2)$ is the distributional $(d-1)$\ndash form
$\delta^{(d-1)}_{\Sigma,\mathrm{diag}}$;
$\eta_{[0,1]}=-\Theta(t_2-t_1)$ is the propagator (\ref{e:eta interval 12}). The distributional propagators (\ref{e:eta axial},\ref{e:eta hor}) are the integral kernels of the well-defined chain contractions
$$K^\mathrm{axial}= \mathrm{id}_\Sigma \otimes K_{[0,1]},\qquad
K^\mathrm{hor} = K_\Sigma\otimes \mathrm{id}_{[0,1]} + P_{H^\bullet(\Sigma)}\otimes K_{[0,1]}
$$
acting on smooth forms $\Omega^\bullet(M)=\sum_{j=0}^1 \Omega^{\bullet-j}(\Sigma)\Hat\otimes \Omega^{j}([0,1]) $. Here $P_{H^\bullet(\Sigma)}$ is the projection from $\Omega^\bullet(M)$ onto cohomology $H^\bullet(\Sigma)$; $K_\Sigma$ is the chain contraction for $\Sigma$ associated to the propagator $\eta_\Sigma$ via (\ref{e:K via eta}) and $K_{0,1}$ is the chain contraction (\ref{e:K [0,1]}) for the interval. A propagator closely related to (\ref{e:eta hor}), for the case $\Sigma=\mathbb{R}^2$ (which is non-compact and hence outside of the scope of our treatment), was used in \cite{FroehlichKing,Kontsevich_integral,Bar-Natan} for constructing knot invariants. Also note that in the case of $\Sigma$ being a point, both propagators (\ref{e:eta axial},\ref{e:eta hor}) become (\ref{e:eta interval 12}).
\end{Exa}

\begin{Exa}[Cylinder with the same polarization on the top and the base]
As a modification of Example \ref{exa: axial gauge}, we can take $M=\Sigma\times [0,1]$ with $\de_1 M= \de M=\Sigma\times \{0\}\sqcup \Sigma\times \{1\}$ and $\de_2 M=\varnothing$. Then we have $[(-1)^{d-1}\chi_{(\Sigma)i}\cdot \dd t]$ a basis in $H^\bullet_{\DD1}(M)=H^{\bullet-1}(\Sigma)$ and $[\chi_{(\Sigma)}^i]$ the dual basis in $H^{d-\bullet}_{\DD2}(M)=H^{d-\bullet}(\Sigma)$. The space of {\backgrounds} is $\calV_M=H^\bullet(\Sigma)[k-1]\oplus H^\bullet(\Sigma)[d-k-1]$.
The corresponding propagators are:
\begin{equation}\label{e:eta axial 11}
\eta^\mathrm{axial}((x_1,t_1),(x_2,t_2))= \eta^{1-1}_{[0,1]}(t_1,t_2)\cdot\delta^{(d-1)}(x_1,x_2)- \dd t_1\cdot \eta_\Sigma(x_1,x_2),
\end{equation}
\begin{multline}\label{e:eta hor 11}
\eta^\mathrm{hor}((x_1,t_1),(x_2,t_2)) =  -\delta(t_1-t_2)\cdot (\dd t_1-\dd t_2)\cdot \eta_\Sigma(x_1,x_2)+\\
+\sum_i (-1)^{(d-1)\cdot(\deg\chi_{(\Sigma)i}+1)} \eta_{[0,1]}^{1-1}(t_1,t_2)\cdot \chi_{(\Sigma)i}(x_1)\cdot \chi_{(\Sigma)}^i(x_2).
\end{multline}
Here $\eta^{1-1}_{[0,1]}$ is the propagator (\ref{e:eta interval 11}).

The case of the opposite boundary conditions, i.e. $\de_2 M=\Sigma\times \{0\}\sqcup \Sigma\times \{1\}$, $\de_1 M=\varnothing$, works similarly. Now $[\chi_{(\Sigma)}]$ is the basis in $H^\bullet_{\DD1}(M)=H^\bullet(\Sigma)$ and $[\dd t\cdot\chi^i_{(\Sigma)}]$ is the dual basis in $H^{d-\bullet}_{\DD2}(M)= H^{d-1-\bullet}(\Sigma)$. The corresponding space of {\backgrounds} is $\calV_M=H^\bullet(\Sigma)[k]\oplus H^\bullet(\Sigma)[d-k-2]$. Formulae (\ref{e:eta axial 11},\ref{e:eta hor 11}) become
\begin{equation}\label{e:eta axial 22}
\eta^\mathrm{axial}((x_1,t_1),(x_2,t_2))= \eta^{2-2}_{[0,1]}(t_1,t_2)\cdot\delta^{(d-1)}(x_1,x_2)+ \dd t_2\cdot \eta_\Sigma(x_1,x_2),
\end{equation}
\begin{multline}\label{e:eta hor 22}
\eta^\mathrm{hor}((x_1,t_1),(x_2,t_2)) = -\delta(t_1-t_2)\cdot (\dd t_1-\dd t_2)\cdot \eta_\Sigma(x_1,x_2)+\\
+\sum_i (-1)^{(d-1)\cdot(\deg\chi_{(\Sigma)i}+1)} \eta_{[0,1]}^{2-2}(t_1,t_2)\cdot\chi_{(\Sigma)i}(x_1)\cdot \chi_{(\Sigma)}^i(x_2).
\end{multline}
Here $\eta^{2-2}_{[0,1]}$ is the propagator (\ref{e:eta interval 22}).
\end{Exa}

\section{Gluing formula for propagators}
\label{a:comp}

In this Appendix we complement the discussion of gluing of states in abelian $BF$ theory in Section \ref{ss:gluing} by deriving the gluing formula for propagators, first for the convenient non-minimal realization of the space of residual fields (the direct sum of spaces of residual fields for the manifolds being glued), and then for the minimal (reduced) residual fields. In the first case we implicitly use Fubini theorem for the relevant path integrals, representing a path integral for the glued manifold $M=M_1\cup_\Sigma M_2$ as a triple integral: over fields on $M_1$ and $M_2$ with boundary conditions on the interface $\Sigma$ and over the boundary conditions on $\Sigma$.\footnote{This kind of Fubini theorem for path integrals with insertions of local observables,
also for more general field theories, is the grounds for the gluing formula \ref{e: gluing formula}, and is expected to hold. It can be checked directly in the framework of perturbation theory (e.g. for $BF$-like theories of Section \ref{sec: BF-like}) using the calculus of configuration space integrals, cf. Remark \ref{rem: Fubini}.}
We verify by a direct computation that the resulting glued propagator does indeed satisfy the defining properties of a propagator on $M$, as stated in Section \ref{sec: properties of propagators}, -- Theorem \ref{thm: gluing of propagators}.
(Thus, using also the Mayer-Vietoris formula for torsions \cite{Vishik}, one can prove a posteriori the relevant case of Fubini theorem for path integrals.)
Also, in \cite{cell_ab_BF} we give a different derivation of the same gluing formula for propagators in the language of chain contractions, using standard constructions of homological perturbation theory; from the latter point of view, the desired properties of the propagator are satisfied automatically.

\subsection{Expectation values in abelian $BF$ theory}\label{ss:expv}
We expand the discussion in Section~\ref{ss:state}.
We are in particular interested in the expectation values of the
fields $\sfA$ and $\sfB$. In the interior of $M$ they do not differ from $\Hat\sfA$ and $\Hat\sfB$, so
for test forms $\gamma$ and $\mu$ with support away from $\de M$,
we have
\begin{align*}
\langle\int_M\gamma\sfA\rangle &:=\int_\calL \EE^{\frac\ii\hbar\calS^\calP_M} \int_M\gamma\sfA =
\left(
\int_M\gamma\sfa + (-1)^{d+(d-1)\cdot\deg\gamma}\int_{M\times\de_1M}\pi_1^*\gamma\,\eta\,\pi_2^*\bA\right)\Hat\psi_M\\
\langle\int_M\sfB\mu\rangle &:=\int_\calL \EE^{\frac\ii\hbar\calS^\calP_M} \int_M\sfB\mu =
\left(
\int_M\sfb\mu - (-1)^{d-k+kd}\int_{\de_2M\times M}\pi_1^*\bB\,\eta\,\pi_2^*\mu\right)\Hat\psi_M
\end{align*}
Next, we are interested in the expectation value of $\sfA$ and $\sfB$ located at two different points (i.e., we assume the supports of $\gamma$ and $\mu$ to be disjoint),
\begin{multline*}
\langle\int_M\gamma\sfA\;\int_M\sfB\mu\rangle=
\Big[
(-1)^{d\cdot\deg\gamma}\ii\hbar\int_{M\times M} \pi_1^*\gamma\,\eta\,\pi_2^*\mu+\\+
\left(
\int_M\gamma\sfa + (-1)^{d+(d-1)\cdot\deg\gamma}\int_{M\times\de_1M}\pi_1^*\gamma\,\eta\,\pi_2^*\bA\right)\cdot \\
\cdot \left(
\int_M\sfb\mu -(-1)^{d-k+kd}\int_{\de_2M\times M}\pi_1^*\bB\,\eta\,\pi_2^*\mu\right)
\Big]
\,\Hat\psi_M.
\end{multline*}
This is related to the propagator by
\[
\int_{M\times M} \pi_1^*\gamma\,\eta\,\pi_2^*\mu=
\frac1{T_M}\frac{(-1)^{d\cdot\deg\gamma}}{\ii\hbar}
\langle\int_M\gamma\sfA\;\int_M\sfB\mu\rangle_{z=z^+=\bA=\bB=0}.
\]

\subsection{Gluing propagators for nonreduced \pseudovacua}\label{ss:tildeprop}
Using the discussion in Section~\ref{ss:expv}, we can compute the propagator
$\Tilde\eta\in\Omega^{d-1}(C_2(M))$
on $M$ with the choice $\Tilde\calV_M=\calV_{M_1}\times\calV_{M_2}$
of {\pseudovacua} described in Section~\ref{ss:gluing}.\footnote{This propagator may actually be discontinuous through $\Sigma$ (however the pullback to $\Sigma$ is well-defined), but this is not a problem. See also Section~\ref{ss:dbling}}

By $\gamma_i\in\Omega^\bullet(M)[d-k]$ and $\mu_i\in\Omega^\bullet(M)[k+1]$, $i=1,2$, we denote test forms with support in the interior of $M_i$.
We recover the propagator by
computing first, similarly to what we did in \eqref{e:etahat}, a ``state'' $\Check{\Tilde\eta}$ by
\[
\int_{M\times M} \pi_1^*\gamma_i\, \Check{\Tilde\eta}
\,\pi_2^*\mu_j
=
\frac1{T_{M_1}T_{M_2}}\frac{(-1)^{kd}}{\ii\hbar}
\langle \int_{M\times M}\pi_1^*(\gamma_i\sfA)\pi_2^*(\sfB\mu_j) \rangle
\]
and then setting all the boundary and residual fields to zero. This way, we get \[\Tilde\eta=\Check{\Tilde\eta}|_{\bA_1'=\bB_1=\bA_2=\bB_2'=\sfa_1=\sfb_1=\sfa_2=\sfb_2=0}.\]

For $i=j$ (where we assume the supports of $\gamma_i$ and $\mu_i$ to be disjoint), we have
\[
\langle \int_{M\times M}\pi_1^*(\gamma_i\sfA)\pi_2^*(\sfB\mu_i) \rangle=
(-1)^{d\cdot(k+\deg\gamma_i)}\int_{\bA_1^\Sigma,\bB_2^\Sigma}
\EE^{\frac\ii\hbar(-1)^{d-k}\int_\Sigma\bB_2^\Sigma\bA_1^\Sigma}\,
\langle\int_{M_i}\gamma_i\sfA_i\;\int_{M_i}\sfB_i\mu_i\rangle.
\]
This yields for $i=j=1$,
\begin{multline*}
\langle \int_{M\times M}\pi_1^*(\gamma_1\sfA)\pi_2^*(\sfB\mu_1) \rangle=
\Big[
(-1)^{kd}\,\ii\hbar\int_{M_1\times M_1} \pi_1^*\gamma_1\,\eta_1\,\pi_2^*\mu_1+\\+
(-1)^{d\cdot (k+\deg\gamma_1)}\Big(
\int_{M_1}\gamma_1\sfa_1 + (-1)^{d+(d-1)\cdot\deg\gamma_1}\left(\int_{M_1\times(\de_1M_1\setminus\Sigma)}\pi_1^*\gamma_1\,\eta_1\,\pi_2^*\bA_1'+\right.\\
\left.+\int_{M_1\times\Sigma}\pi_1^*\gamma_1\,\eta_1\pi_2^*\sfa_2
-
(-1)^{k+kd}\int_{M_1\times\Sigma\times\de_1M_2} \varpi_1^*\gamma_1\,p_1^*\eta_1\,p_2^*\eta_2\,\varpi_3^*\bA_2\right)
\Big)\cdot\\
\cdot\left(
\int_{M_1}\sfb_1\mu_1 - (-1)^{d-k+kd}\int_{\de_2M_1\times M_1}\pi_1^*\bB_1\,\eta_1\,\pi_2^*\mu_1\right)
\Big]%
\,\Tilde\psi_M.
\end{multline*}
Similarly, for $i=j=2$ we get,
\begin{multline*}
\langle \int_{M\times M}\pi_1^*(\gamma_2\sfA)\pi_2^*(\sfB\mu_2) \rangle=
\Big[
(-1)^{kd}\ii\hbar\int_{M_2\times M_2} \pi_1^*\gamma_2\,\eta_2\,\pi_2^*\mu_2+\\+
(-1)^{d\cdot(k+\deg\gamma_2)}\left(
\int_{M_2}\gamma_2\sfa_2 + (-1)^{d+(d-1)\cdot\deg\gamma_2}\int_{M_2\times\de_1M_2}\pi_1^*\gamma_2\,\eta_2\,\pi_2^*\bA_2\right)\cdot\\
\cdot\Big(
\int_{M_2}\sfb_2\mu_2 + (-1)^{d-k+kd}\Big(-\int_{(\de_2M_2\setminus\Sigma)\times M_2}\pi_1^*\bB_2'\,\eta_2\,\pi_2^*\mu_2+\\+
\int_{\Sigma\times M_2}\pi_1^*\sfb_1\,\eta_2\,\pi_2^*\mu_2+
(-1)^{k+kd}
\int_{\de_2M_1\times\Sigma\times M_2}
\varpi_1^*\bB_1\,p_1^*\eta_1\,p_2^*\eta_2\,\varpi_3^*\mu_2
\Big)\Big)
\Big]%
\,\Tilde\psi_M.
\end{multline*}

As a consequence,
if $i=j$, we simply get $\int_{M\times M} \pi_1^*\gamma_i\, {\Tilde\eta}
\,\pi_2^*\mu_i=\int_{M\times M} \pi_1^*\gamma_i\, {\eta_i}
\,\pi_2^*\mu_i$; viz., the propagator $\Tilde\eta$ on $M$ coincides with the propagator $\eta_i$ on $M_i$ when both arguments are in $M_i$.

For $i\not=j$, we have instead
\[
\langle \int_{M\times M}\pi_1^*(\gamma_i\sfA)\pi_2^*(\sfB\mu_j) \rangle
= (-1)^{d\cdot(k+\deg\gamma_i)}\int_{\bA_1^\Sigma,\bB_2^\Sigma}
\EE^{\frac\ii\hbar(-1)^{d-k}\int_\Sigma\bB_2^\Sigma\bA_1^\Sigma}\,
\langle\int_{M_i}\gamma_i\sfA_i\rangle\,\langle\int_{M_j}\sfB_j\mu_j\rangle.
\]
The simpler case is when $i=2,j=1$, for in this case the observables do not depend on $\bA_1^\Sigma,\bB_2^\Sigma$.
By Section~\ref{ss:expv} and by \eqref{e:gluedpsi}, we simply get
\begin{multline*}
\langle \int_{M\times M}\pi_1^*(\gamma_2\sfA)\pi_2^*(\sfB\mu_1) \rangle=\\
=(-1)^{d\cdot(k+\deg\gamma_2)}\left(
\int_{M_2}\gamma_2\sfa_2 + (-1)^{d+(d-1)\cdot\deg\gamma_2}\int_{M_2\times\de_1M_2}\pi_1^*\gamma_2\,\eta_2\,\pi_2^*\bA_2\right)\cdot\\
\cdot\left(
\int_{M_1}\sfb_1\mu_1 - (-1)^{d-k+kd}\int_{\de_2M_1\times M_1}\pi_1^*\bB_1\,\eta_1\,\pi_2^*\mu_1\right)
\,\Tilde\psi_M,
\end{multline*}
This implies that $\Check{\Tilde\eta}$, and hence $\Tilde\eta$ vanishes, when the first argument is on $M_2$ and the second argument is on $M_1$.

Next, we come to the case $i=1$ and $j=2$. In this case the observables give nontrivial extra contributions to the integration over
$\bA_1^\Sigma,\bB_2^\Sigma$. We get
\[
\langle \int_{M\times M}\pi_1^*(\gamma_1\sfA)\pi_2^*(\sfB\mu_2) \rangle= C_1+C_2+C_4+C_4
\]
with
\begin{multline*}
C_1=(-1)^{d\cdot(k+\deg\gamma_1)}\left(
\int_{M_1}\gamma_1\sfa_1 + (-1)^{d+(d-1)\cdot\deg\gamma_1} \int_{M_1\times(\de_1M_1\setminus\Sigma)}\pi_1^*\gamma_1\,\eta_1\,\pi_2^*\bA_1'\right)\cdot\\
\cdot\left(
\int_{M_2}\sfb_2\mu_2 - (-1)^{d-k+kd}\int_{(\de_2M_2\setminus\Sigma)\times M_2}\pi_1^*\bB_2'\,\eta_2\,\pi_2^*\mu_2\right)
\,\Tilde\psi_M,
\end{multline*}
\begin{multline*}
C_2=
(-1)^{d+dk+\deg\gamma_1}\left(\int_{M_1\times\Sigma}\pi_1^*\gamma_1\,\eta_1\pi_2^*\sfa_2-
(-1)^{k+kd}\int_{M_1\times\Sigma\times\de_1M_2}
\varpi_1^*\gamma_1\,p_1^*\eta_1\,p_2^*\eta_2\,\varpi_3^*\bA_2
\right)\cdot \\
\cdot\left(
\int_{M_2}\sfb_2\mu_2 - (-1)^{d-k+kd}\int_{(\de_2M_2\setminus\Sigma)\times M_2}\pi_1^*\bB_2'\,\eta_2\,\pi_2^*\mu_2\right)
\,\Tilde\psi_M,
\end{multline*}
\begin{multline*}
C_3=(-1)^{d-k+d\cdot\deg\gamma_1}\left(
\int_{M_1}\gamma_1\sfa_1 + (-1)^{d+(d-1)\cdot\deg\gamma_1}\int_{M_1\times(\de_1M_1\setminus\Sigma)}\pi_1^*\gamma_1\,\eta_1\,\pi_2^*\bA_1'\right)\cdot \\
\cdot\left(\int_{\Sigma\times M_2}\pi_1^*\sfb_1\,\eta_2\,\pi_2^*\mu_2+
(-1)^{k+kd}
\int_{\de_2M_1\times\Sigma\times M_2}
\varpi_1^*\bB_1\,p_1^*\eta_1\,p_2^*\eta_2\,\varpi_3^*\mu_2 \right)
\,\Tilde\psi_M,
\end{multline*}
\begin{multline*}
C_4=(-1)^{d+kd+(d-1)\cdot\deg\gamma_1}\,\Big[
 \ii\hbar\int_{M_1\times\Sigma\times M_2}
\varpi_1^*\gamma_1\,p_1^*\eta_1\,p_2^*\eta_2\,\varpi_3^*\mu_2+\\ %
+(-1)^{d-k}\left(\int_{M_1\times\Sigma}\pi_1^*\gamma_1\,\eta_1\pi_2^*\sfa_2-
(-1)^{k+kd}\int_{M_1\times\Sigma\times\de_1M_2}
\varpi_1^*\gamma_1\,p_1^*\eta_1\,p_2^*\eta_2\,\varpi_3^*\bA_2
\right)\cdot \\
\cdot\left(\int_{\Sigma\times M_2}\pi_1^*\sfb_1\,\eta_2\,\pi_2^*\mu_2+
(-1)^{k+kd}
\int_{\de_2M_1\times\Sigma\times M_2}
\varpi_1^*\bB_1\,p_1^*\eta_1\,p_2^*\eta_2\,\varpi_3^*\mu_2 \right)
\Big]
\,\Tilde\psi_M.
\end{multline*}
This finally implies
\[
\int_{M\times M} \pi_1^*\gamma_1\,{\Tilde\eta}
\,\pi_2^*\mu_2= (-1)^{d+(d-1)\cdot\deg\gamma_1}
\int_{M_1\times\Sigma\times M_2}
\varpi_1^*\gamma_1\,p_1^*\eta_1\,p_2^*\eta_2\,\varpi_3^*\mu_2.
\]
In other words, when the first argument is on $M_1$ and the second on $M_2$, the propagator $\Tilde\eta$ is simply obtained by
taking the product of $\eta_1$ and $\eta_2$ and integrating out the middle point over $\Sigma$.

\subsection{The glued propagator for reduced \pseudovacua}\label{We now do the final step in computing the propagator on $M$ for the reduced  space of pseudovacua}
We now do the final step in computing the propagator on $M$ for the reduced  space of {\pseudovacua} $\Check\calV_M$ of \eqref{e:checkV}. Namely, we define $\Check\eta$ as a $(d-1)$\ndash form on $C_2(M)$ by
\[
\int_{M\times M} \pi_1^*\gamma_i\, \Check{\eta}
\,\pi_2^*\mu_j
=
\frac1{\Check T_M}\frac\ii\hbar\left(\int_{\calL^\times}
\langle \int_{M\times M}\pi_1^*(\gamma_i\sfA)\pi_2^*(\sfB\mu_j) \rangle\right)\Big|_{=0}
\]
Notice that this simply amounts to integrating out the redshirt variables $\sfa_2^\times$ and $\sfb_1^\times$. Since we put all remaining {\pseudovacua} and all boundary fields to zero, the only summands which contribute are those which contain no redshirt variables and those that contain exactly one $\sfa_2^\times$ and one $\sfb_1^\times$ variables. By Gaussian integration, the latter terms produce a pairing by the inverse $V$ of the matrix
$\Lambda$ defined in \eqref{e:Lambda}.
We then get the following:\footnote{
We use notation $\eta(x_1,x_2)$ for the value of a propagator at $(x_1,x_2)\in C_2^0(M)$ as an element of the exterior power of the cotangent bundle: $\eta(x_1,x_2)\in \wedge^{d-1}T^*_{(x_1,x_2)}C_2^0(M)=\oplus_{p=0}^{d-1} (\wedge^pT^*_{x_1}M)\otimes (\wedge^{d-1-p} T^*_{x_2}M)$.
}
\begin{equation}\label{e:glued eta 11}
\Check\eta(x_1,x_2)=\eta_1(x_1,x_2)-\sum_{ij}(-1)^{\deg\chi_{2i}^\times}V^i_j\int_{y\in\Sigma}\eta_1(x_1,y) \chi_{2i}^\times(y)\chi_{1\times}^j(x_2)\qquad \mbox{for}\;x_1,x_2\in M_1,
\end{equation}
\begin{equation}\label{e:glued eta 22}
\Check\eta(x_1,x_2)=\eta_2(x_1,x_2)-\sum_{ij}(-1)^{\deg\chi_{2i}^\times}V^i_j\int_{y\in\Sigma} \chi_{2i}^\times(x_1)\chi_{1\times}^j(y)\eta_2(y,x_2) \qquad \mbox{for}\;x_1,x_2\in M_2,
\end{equation}
\begin{equation}\label{e:glued eta 21}
\Check\eta(x_1,x_2)=-\sum_{ij}(-1)^{d+\deg\chi_{2i}^\times}V^i_j\; \chi_{2i}^\times(x_1)\chi_{1\times}^j(x_2) \qquad \mbox{for}\;x_1\in M_2,\; x_2\in M_1,
\end{equation}
\begin{multline}\label{e:glued eta 12}
\Check\eta(x_1,x_2)=(-1)^d\int_{y\in\Sigma}\eta_1(x_1,y)\eta_2(y,x_2)+\\
+\sum_{ij}(-1)^{\deg\chi_{2i}^\times}V^i_j\int_{y\in\Sigma}\int_{z\in\Sigma} \eta_1(x_1,y)\chi_{2i}^\times(y)\chi_{1\times}^j(z)\eta_2(z,x_2) \qquad \mbox{for}\;x_1\in M_1,\;x_2\in M_2.
\end{multline}

Pictorially we can represent the four cases of gluing as in figures~\ref{fig:gluing1}, \ref{fig:gluing2}, \ref{fig:gluing3}, \ref{fig:gluing4}. Our graphic notations are as follows:  propagators in the submanifolds are denoted by arrows, with the convention that on the l.h.s.\ the propagator vanishes when its tail goes to the boundary, whereas on the r.h.s\ it vanishes when its head goes to the boundary; the propagator in the glued manifold is denoted by a point--dash arrow; a dashed line denotes cohomology classes at its endpoints; finally, a bullet denotes a point on which we integrate.

\begin{figure}[htbp] 
   \centering
   \includegraphics[width=2.5in]{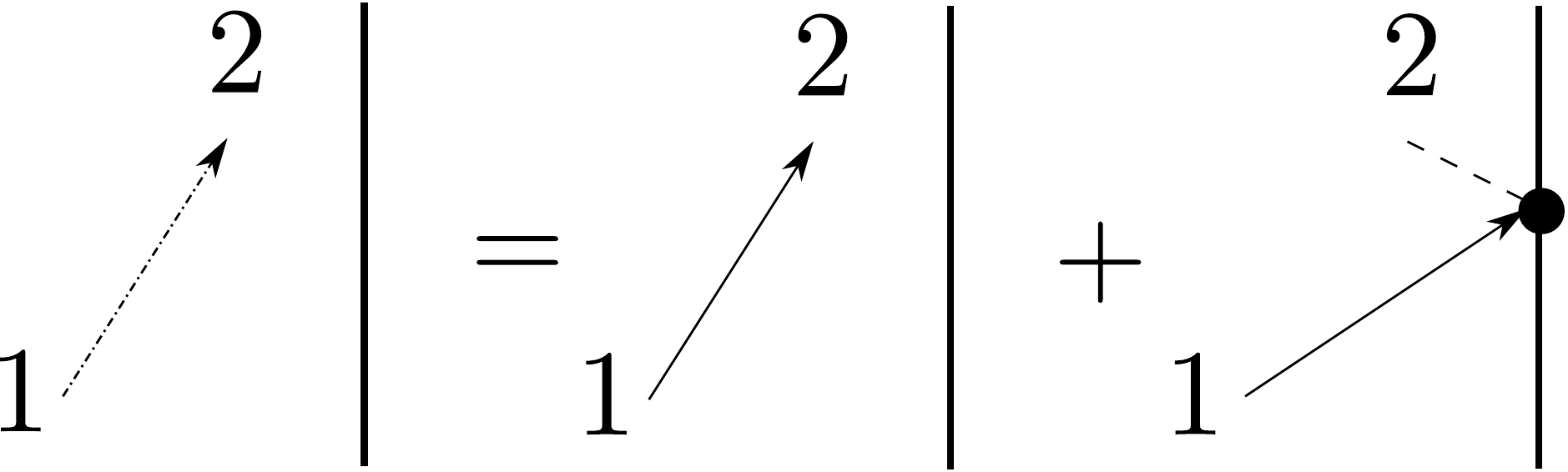}
   \caption{Gluing of propagators: first case}
   \label{fig:gluing1}
\end{figure}

\begin{figure}[htbp] 
   \centering
   \includegraphics[width=2.5in]{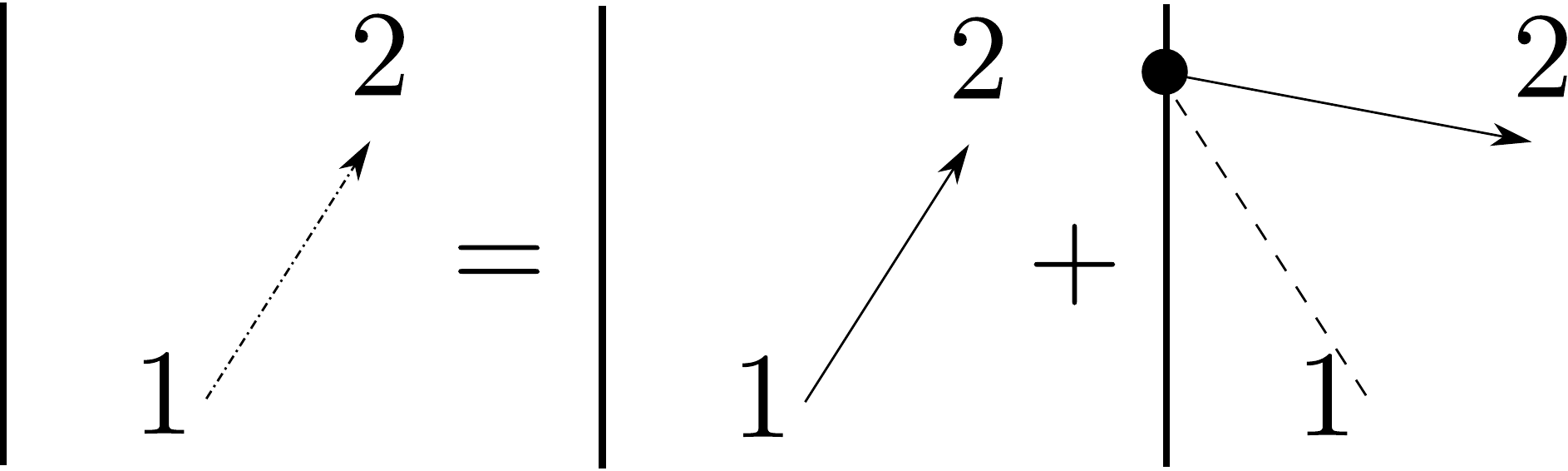}
   \caption{Gluing of propagators: second case}
   \label{fig:gluing2}
\end{figure}

\begin{figure}[htbp] 
   \centering
   \includegraphics[width=2.5in]{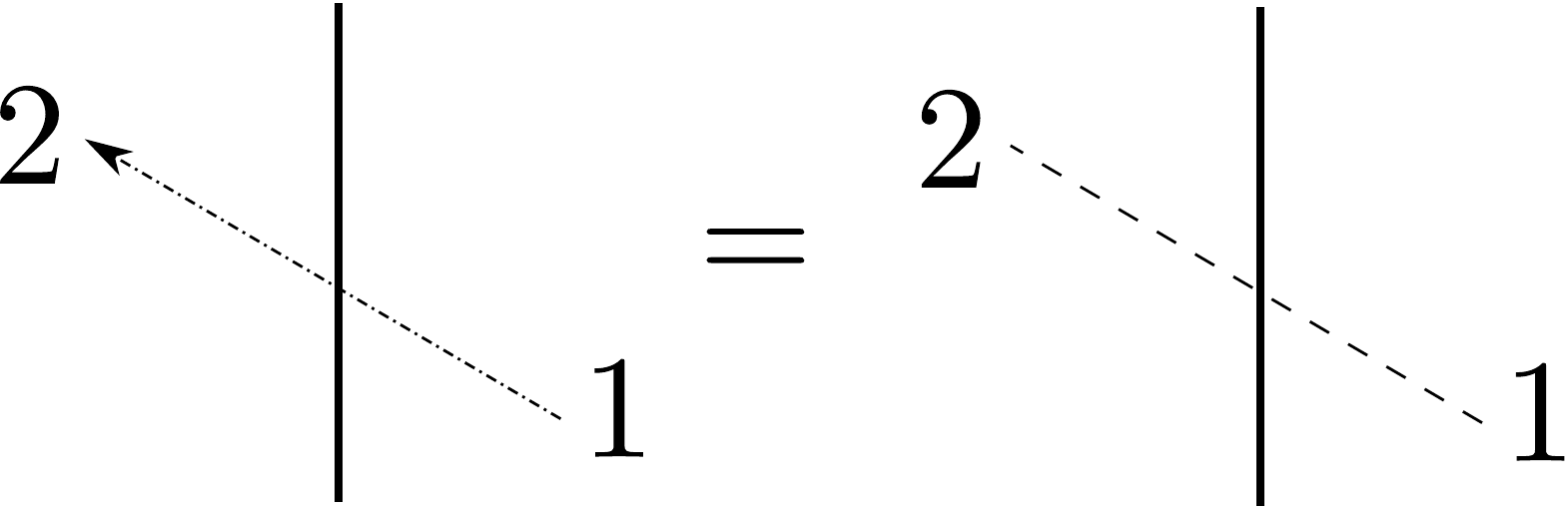}
   \caption{Gluing of propagators: third case}
   \label{fig:gluing3}
\end{figure}

\begin{figure}[htbp] 
   \centering
   \includegraphics[width=2.5in]{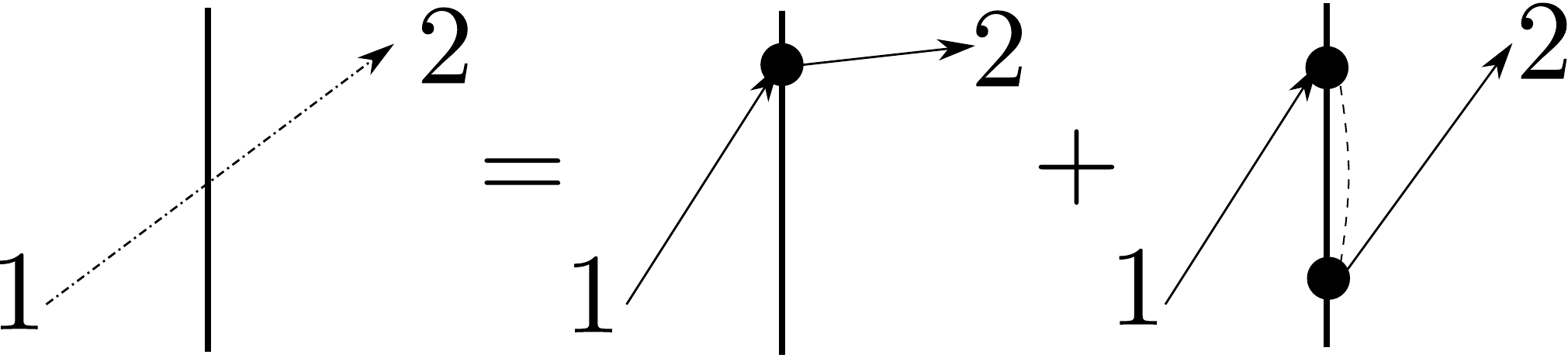}
   \caption{Gluing of propagators: fourth case}
   \label{fig:gluing4}
\end{figure}

The above construction shows heuristically that $\Check\eta$  should be a propagator. This is indeed the case:
\begin{Thm}\label{thm: gluing of propagators}
Form $\Check\eta\in \Omega^{d-1}(C_2(M))$ defined by (\ref{e:glued eta 11}--\ref{e:glued eta 12}) is a propagator on $M$.
\end{Thm}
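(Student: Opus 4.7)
The plan is to verify, in turn, the defining properties of a propagator on $M$ recalled in Section~\ref{sec: properties of propagators}: (i) regularity as a form on $C_2(M)$, (ii) the Dirichlet boundary condition $\iota_\DDD^*\Check\eta=0$, (iii) the differential identity (\ref{e:deta}) with bases of $H^\bullet_{\DD1}(M)$ and $H^\bullet_{\DD2}(M)$ provided by the isomorphisms $h_1,h_2$ of Section~\ref{ss:redpv}, and (iv) the normalization of the flux through a small sphere transverse to the diagonal.

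First I would dispatch the easier items (ii) and (iv). The boundary conditions on $\de_1 M$ and $\de_2 M$ follow by direct inspection of (\ref{e:glued eta 11}--\ref{e:glued eta 12}), using that $\eta_1$ and $\eta_2$ satisfy the $\DDD$-condition on their respective $\de_{1,2}M_i\setminus\Sigma$ components and that the chosen representatives satisfy $\chi^\times_{2i}\in\Omega^\bullet_{\DD1}(M_2)$ and $\chi^j_{1\times}\in\Omega^\bullet_{\DD2}(M_1)$, so all correction pieces vanish on $\de M$ as required. Normalization is immediate: on the interior of $M_1$ (resp.\ $M_2$) the form $\Check\eta$ differs from $\eta_1$ (resp.\ $\eta_2$) by a contribution that is smooth on a neighborhood of the diagonal of $M_i$, so the unit-flux property around diagonal points is inherited from the building-block propagators.

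The core of the argument is (iii), which I would attack by a piecewise computation of $\dd\Check\eta$ on each of the four regions $M_i\times M_j$. On $M_1\times M_1$ one uses (\ref{e:deta}) for $\eta_1$ expanded in a basis of $H^\bullet_{\DD1}(M_1)\otimes H^\bullet_{\DD2}(M_1)$; the second factor is decomposed as $\sigma_1(L_1\cap L_2^\perp)\oplus\sigma_1(L_1^\times)\oplus H^\bullet_{\DD2}(M_1)^\#$ as in Section~\ref{ss:reducing backgrounds}. The $\sigma_1(L_1^\times)$-contribution is the redshirt piece, and it should cancel against the differential of the $V$-correction term in (\ref{e:glued eta 11}) by virtue of the pairing identity $\sum_k V^i_k\Lambda^k_j=\delta^i_j$ with $\Lambda$ given by (\ref{e:Lambda}). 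The surviving contributions from $\sigma_1(L_1\cap L_2^\perp)\oplus H^\bullet_{\DD2}(M_1)^\#$ recombine, via $h_1$ and $h_2$, with the analogous contributions from (\ref{e:glued eta 22}), (\ref{e:glued eta 21}), and (\ref{e:glued eta 12}) to produce exactly $(-1)^{d-1}\sum_i(-1)^{d\cdot\deg\Check\chi_i}\pi_1^*\Check\chi_i\wedge\pi_2^*\Check\chi^i$ with the \emph{extended} representatives $\Check\chi_i,\Check\chi^i$ of Remark~\ref{rem:a ext, b ext}. Here the mixed-region formula (\ref{e:glued eta 12}) plays the key role: the composition kernel $\int_{y\in\Sigma}\eta_1(x_1,y)\eta_2(y,x_2)$ is precisely what realizes the extension of a class from $M_2$ to $M_1$ (and vice versa), mirroring the definitions of $\sfa_2^{\mathrm{ext}}$ and $\sfb_1^{\mathrm{ext}}$ in Section~\ref{ss:reducing backgrounds}.

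The main obstacle will be the regularity question (i) together with the precise handling of the interface $\Sigma$ in the differential identity. Both $\Check\eta$ and the representatives $\Check\chi_i,\Check\chi^i$ are only piecewise smooth, with potential jumps across $\Sigma$ (cf.\ Remark~\ref{rem:a ext, b ext}), and one must verify that these jumps combine so that (\ref{e:deta}) holds on $C_2(M)$ with no spurious $\Sigma$-supported distributions. The cleanest route is to appeal to homological perturbation theory, as developed in \cite{cell_ab_BF}: the gluing formula (\ref{e:glued eta 11}--\ref{e:glued eta 12}) is exactly the integral kernel of the composite chain contraction obtained by gluing the chain contractions $K_1,K_2$ associated to $\eta_1,\eta_2$ via (\ref{e:L via K}), and properties (i)--(iv) then follow from the standard formulas of homological perturbation theory, together with the boundary analysis of the first two steps above. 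Alternatively, one can proceed by a direct Stokes-theorem computation on compactified configuration spaces of collapse strata at $\Sigma$, carefully matching delta-type contributions on both sides of (\ref{e:deta}).
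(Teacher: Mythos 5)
Your proposal follows essentially the same route as the paper's proof: the flux normalization is inherited directly from $\eta_1,\eta_2$, and the differential identity (\ref{e:deta}) is verified region by region on the four pieces of $M\times M$, with the redshirt contributions cancelling against the differentials of the $V$-correction terms via $\sum_i V^i_j\Lambda^l_i=\delta^l_j$ and the surviving terms reassembling, exactly as you describe, into $(-1)^{d-1}\sum_\alpha(-1)^{d\cdot\deg}\pi_1^*\Check\chi\,\pi_2^*\Check\chi^\vee$ for the extended representatives of $H^\bullet_{\DD1}(M)$, $H^\bullet_{\DD2}(M)$ (extension by zero on one side, extension by the kernel $\pm\int_\Sigma\eta_i(\cdot,y)\,\chi(y)$ on the other). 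The only place you diverge is the interface regularity: you propose to outsource it to the homological-perturbation-theory derivation of \cite{cell_ab_BF} (or a Stokes argument on collapse strata), whereas the paper closes this step self-containedly with a short limit computation, using that $\lim_{x\to x_0}(-1)^d\int_{y\in\Sigma}\eta_1(x,y)\beta(y)=\beta(x_0)$ and the analogous reproducing property of $\eta_2$ near $\Sigma$ --- these are just the small-hemisphere normalization of the propagators, and they show directly that $\Check\eta$ has a well-defined pullback to $\Sigma$ in each argument, so $\dd\Check\eta$ acquires no $\Sigma$-supported distributional term. Since \cite{cell_ab_BF} is cited as in preparation, the direct limit argument is the preferable way to make the proof self-contained; otherwise your plan is sound.
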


\begin{proof} The property $\lim_{\epsilon\to 0}\int_{x_1\in S^{d-1}_{x_2,\epsilon}} \Check\eta(x_1,x_2)=1$, where $S^{d-1}_{x_2,\epsilon}$ is the sphere of radius $\epsilon$ (w.r.t.~some fixed metric) centered at $x_2$, follows immediately from the respective property of propagators $\eta_1$ and $\eta_2$. Similarly, one has $\lim_{\epsilon\to 0}\int_{x_2\in S^{d-1}_{x_1,\epsilon}} \Check\eta(x_1,x_2)=(-1)^d$.

Let us check the property (\ref{e:deta}) for $\Check\eta$. For $x_1,x_2\in M_1$, we have from (\ref{e:glued eta 11}) the following:
\begin{multline}\label{e:deta glued check 11}
\dd\Check\eta(x_1,x_2)=\dd\eta_1(x_1,x_2)-\sum_{ij}(-1)^{\deg\chi_{1\times}^j}  V^i_j \int_{y\in\Sigma} \dd\eta_1(x_1,y)\chi_{2i}^\times(y)\chi_{1\times}^j(x_2)\\
=\sum_{I}(-1)^{d-1+d\cdot\deg\chi_{1I}}\chi_{1I}(x_1)\chi_1^{I}(x_2)- \sum_{ij}\sum_{I}(-1)^{d-1+d\cdot\deg\chi_{1I}+\deg\chi_{1\times}^j}  V^i_j \int_\Sigma \chi_{1 I}(x_1)\chi_1^{I}(y)\chi_{2i}^\times(y)\chi_{1\times}^j(x_2)\\
=\sum_{I}(-1)^{d-1+d\cdot\deg\chi_{1I}}\chi_{1I}(x_1)\chi_1^{I}(x_2)-\sum_{ij}\sum_{l} (-1)^{d-1+d\cdot\deg\chi_{1l}^\times}\underbrace{ V^i_j  \Lambda^l_i}_{\delta^l_j} \chi_{1 l}^\times(x_1)\chi_{1\times}^j(x_2)
\\=
\sum_{\alpha}(-1)^{d-1+d\cdot\deg\chi_{1\alpha}^\circ}\chi_{1\alpha}^\circ(x_1)\chi_{1'}^\alpha(x_2).
\end{multline}
Here $I,\alpha$ are the indices for the bases in $H^\bt_{\DD1}(M_1), H^\bt_{\DD1}(M_1)^\circ$ and the dual bases in $H^\bt_{\DD2}(M_1), H^\bt_{\DD2}(M_1)'$ (cf. Section \ref{ss:reducing backgrounds} for notations). Here we used the property (\ref{e:deta}) for $\eta_1$ and the orthogonality of pullbacks to $\Sigma$ of classes from $H^\bt_{\DD2}(M_1)'$ to pullbacks of classes from $H^\bt_{\DD1}(M_2)^\times$.
By a similar computation, for $x_1,x_2\in M_2$ we obtain from (\ref{e:glued eta 22}) that
\begin{equation}\label{e:deta glued check 22}
\dd\Check\eta(x_1,x_2)=
\sum_\beta (-1)^{d-1+d\cdot\deg\chi_{2\beta}'}\chi_{2\beta}'(x_1)\chi_{2\circ}^\beta(x_2)
\end{equation}
where $\beta$ is an index for the basis in $H^\bt_{\DD1}(M_2)'$ and the dual one in $H^\bt_{\DD2}(M_2)^\circ$. For the case $x_1\in M_2$, $x_2\in M_1$, (\ref{e:glued eta 21}) implies immediately that
\begin{equation}\label{e:deta glued check 21}
\dd\Check\eta(x_1,x_2)=0.
\end{equation}
Lastly, for $x_1\in M_1$, $x_2\in M_2$, we have from (\ref{e:glued eta 12}) the following:
\begin{multline}\label{e:deta glued check 12}
\dd\Check\eta(x_1,x_2)=\int_{y\in\Sigma}-\dd\eta_1(x_1,y)\, \eta_2(y,x_2)+ (-1)^{d}\eta_1(x_1,y)\,\dd\eta_2(y,x_2)+\\ +\sum_{ij}(-1)^{\deg\chi_{2i}^\times}\Tilde V^i_j \int_{y\in\Sigma}\int_{z\in\Sigma}\dd\eta_1(x_1,y)\chi_{2i}^\times(y) \chi_{1\times}^j(z)\eta_2(z,x_2)+ \eta_1(x_1,y)\chi_{2i}^\times(y) \chi_{1\times}^j(z)\dd\eta_2(z,x_2)\\
=\sum_\alpha (-1)^{d+\deg\chi_{1\alpha}^\circ}\chi_{1\alpha}^\circ(x_1) \left(\int_{y\in\Sigma}\chi_{1'}^\alpha(y)\eta_2(y,x_2)\right)-\sum_\beta (-1)^{d\cdot\deg\chi_{2\beta}'} \left(\int_{y\in\Sigma}\eta_1(x_1,y)\chi_{2\beta}'(y)\right) \chi_{2\circ}^\beta(x_2).
\end{multline}
Here we are replacing $\dd\eta_1$, $\dd\eta_2$ everywhere with the respective r.h.s.~of (\ref{e:deta}); cancellation of redshirt cohomology classes works similarly to (\ref{e:deta glued check 11}).

Finally, notice that (\ref{e:deta glued check 11}--\ref{e:deta glued check 12}) assembles into property (\ref{e:deta}) for $\Check\eta$ on the glued manifold $M$, with a particular choice of representatives of cohomology of $M$. Namely, for $H^\bt_{\DD1}(M)\simeq H^\bt_{\DD1}(M_1)^\circ\oplus H^\bt_{\DD1}(M_2)'$, we extend representatives $\chi_{1\alpha}^\circ(x)$ by zero into $M_2$ and we extend representatives $\chi_{2\beta}'(x)$ as $(-1)^d\int_{y\in\Sigma}\eta_1(x,y)\chi_{2\beta}'(y)$ into $M_1$ (note that this extension, though being generally non-smooth, has the property of having well-defined pullback to $\Sigma$). Similarly, for $H^\bt_{\DD2}(M)\simeq H^\bt_{\DD2}(M_1)'\oplus H^\bt_{\DD2}(M_2)^\circ$, we extend representatives $\chi_{2\circ}^\beta(x)$ by zero into $M_1$, while representatives $\chi_{1'}^\alpha(x)$ are extended into $M_2$ as $-(-1)^{(d-1)\cdot\deg\chi_{1'}^\alpha}\int_{y\in\Sigma} \chi_{1'}^\alpha \eta_2(y,x)$. (Cf. the construction of residual fields $\Check\sfa$, $\Check\sfb$ on $M$ in Section\ref{ss:reducing backgrounds} and Remark \ref{rem:a ext, b ext}).

The fact that $\Check\eta$ has well-defined pull-back as one of the points restricts to $\Sigma$ (and thus that $\dd\Check\eta$ does not contain a delta-function on $\Sigma$) follows from computing respective limits of (\ref{e:glued eta 11}--\ref{e:glued eta 12}) as one of the points approaches a point on $\Sigma$. For this one uses that, for $\alpha\in \Omega^\bt(M_1)$, one has $\lim_{x\to x_0}-(-1)^{(d-1)\cdot\deg\alpha}\int_{y\in\Sigma}\alpha(y)\eta_2(y,x)= \alpha(x_0)$ where $x_0\in\Sigma$ and likewise for $\beta\in\Omega^\bt(M_2)$ one has $\lim_{x\to x_0}(-1)^d\int_{y\in\Sigma}\eta_1(x,y)\beta(y)=\beta(x_0)$. (These properties follow from the normalization of the integral over a small sphere for $\eta_1$, $\eta_2$, cf. Section \ref{sec: properties of propagators}).

This finishes the proof.

\end{proof}

\section{Examples of gluing of propagators}
\label{a: exa of gluing of prop}

\begin{Exa}\label{exa: gluing intervals 21 cup 21} Let $M_1=[0,1]$, $M_2=[1,2]$ be two intervals. We glue the right endpoint of $M_1$ to the left endpoint of $M_2$ to form $M=M_1\cup_{\{1\}} M_2=[0,2]$. We denote the coordinate on $M$ by $t\in [0,2]$. We set $\de_1 M_1=\de_2 M_2=\{1\}=\Sigma$, $\de_2 M_1=\{0\}$, $\de_1 M_2=\{2\}$. All the relevant cohomology (and hence spaces of {\backgrounds}) vanish for $M_1,M_2,M$. Using the propagator (\ref{e:eta interval 12}) for $M_1$, $M_2$, we obtain by the gluing construction of Appendix \ref{a:comp} the following propagator on $M$:
$$ \Check\eta(t_1,t_2)=\left\{\begin{array}{lll}
\eta_1(t_1,t_2) & \mbox{if}& t_1,t_2\in [0,1], \\
\eta_2(t_1,t_2) & \mbox{if}& t_1,t_2\in [1,2], \\
0 & \mbox{if}& t_1\in [1,2],\; t_2\in [0,1] \\
-\eta_1(t_1,1)\cdot\eta_2(1,t_2) & \mbox{if}& t_1\in[0,1],\;t_2\in [1,2],
\end{array}\right.=
\left\{\begin{array}{lll}
-1 & \mbox{if} & t_1<t_2,\\
0 & \mbox{if} & t_1>t_2.
\end{array}\right. $$
This is precisely the propagator (\ref{e:eta interval 12}) for the glued interval.
\end{Exa}

\begin{Exa}\label{exa: gluing intervals 11 cup 22}
In the setting of Example \ref{exa: gluing intervals 21 cup 21}, let us change the labelling of boundary to $\de_1 M_1=\{0\}\sqcup\{1\}$, $\de_2 M_1=\varnothing=\de_1 M_2$,
$\de_2 M_2=\{1\}\sqcup \{2\}$. The glued interval $M=[0,2]$ has $\de_1 M=\{0\}$, $\de_2 M=\{2\}$. Here one has {\backgrounds} both on $M_1$ and $M_2$ (cf. Example \ref{exa interval 11}), but no {\backgrounds} on $M$. Thus the whole space $\calV_{M_1}\oplus \calV_{M_2}$ consists of redshirt {\backgrounds}.
The relevant cohomology is:
\begin{multline*}
H^1_{\DD1}(M_1)=\mathrm{Span}(\underbrace{[\dd t]}_{[\chi_{10}]}),\quad
H^0_{\DD1}(M_2)=\mathrm{Span}(\underbrace{[1]}_{[\chi_{20}]}),\\
H^0_{\DD2}(M_1)=\mathrm{Span}(\underbrace{[1]}_{[\chi_{1}^0]}),\quad
H^1_{\DD2}(M_2)=\mathrm{Span}(\underbrace{[\dd t]}_{[\chi_{2}^0]}).
\end{multline*}
We also have $L_1=L_1^\times=L_2=L_2^\times=H^0(\{1\})=\mathbb{R}\cdot 1$.
For the propagator on $M_1$ we take (\ref{e:eta interval 11}) and on $M_2$ we take (\ref{e:eta interval 22}) where we make the shift $t_{1,2}\mapsto t_{1,2}-1$ (since now we parametrize $M_2$ by the coordinate $t\in[1,2]$), i.e. $\eta_1(t_1,t_2)=\Theta(t_1-t_2)-t_1$ for $t_1,t_2\in [0,1]$, $\eta_2(t_1,t_2)=-\Theta(t_2-t_1)+t_2-1$ for $t_{1},t_2\in [1,2]$. Formulae of Appendix \ref{We now do the final step in computing the propagator on $M$ for the reduced  space of pseudovacua} yield:
\begin{multline*}
\Check\eta(t_1,t_2)=\left\{\begin{array}{lll}
\eta_1(t_1,t_2)-\eta_1(t_1,1)\chi_{20}(1)\chi_1^0(t_2) & \mbox{if}& t_1,t_2\in [0,1], \\
\eta_2(t_1,t_2)-\chi_{20}(t_1)\chi_1^0(1)\eta_2(1,t_2) & \mbox{if}& t_1,t_2\in [1,2], \\
\chi_{20}(t_1)\chi_1^0(t_2) & \mbox{if}& t_1\in [1,2],\; t_2\in [0,1], \\
-\eta_1(t_1,1)\cdot\eta_2(1,t_2)+\eta_1(t_1,1)\chi_{20}(1)\chi_1^0(1)\eta_2(1,t_2) & \mbox{if}& t_1\in[0,1],\;t_2\in [1,2]
\end{array}\right.\\
=
\left\{\begin{array}{lll}
\Theta(t_1-t_2)& \mbox{if}& t_1,t_2\in [0,1], \\
\Theta(t_1-t_2)& \mbox{if}& t_1,t_2\in [1,2], \\
1 & \mbox{if}& t_1\in [1,2],\; t_2\in [0,1], \\
0 & \mbox{if}& t_1\in[0,1],\;t_2\in [1,2]
\end{array}\right.\quad  =\qquad \Theta(t_1-t_2)\;\; \mbox{for}\;\; t_{1,2}\in [0,2].
\end{multline*}
Thus we obtain exactly the propagator (\ref{e:eta interval 12}), where we have to make a change of coordinates $t\mapsto 2-2t$ to switch from $[0,1]$ with $2-1$ boundary condition to $[0,2]$ with $1-2$ boundary condition.
\end{Exa}

\begin{Exa} Consider gluing two intervals as in Example \ref{exa: gluing intervals 11 cup 22} but in addition let us identify the points $t=0$ and $t=2$. Thus we are gluing a circle $M=S^1$ out of two intervals $M_1=[0,1]$, $M_2=[1,2]$ along two points $\Sigma=\{0\}\sqcup\{1\}$. Then we have no redshirt {\backgrounds}, $\calV_M=\calV_{M_1}\oplus \calV_{M_2}$, with $\calV_{M_1}$, $\calV_{M_2}$ as in Example \ref{exa: gluing intervals 11 cup 22}. For the glued propagator, we obtain:
\begin{multline*}
\Check\eta(t_1,t_2)=\left\{\begin{array}{lll}
\eta_1(t_1,t_2) & \mbox{if}& t_1,t_2\in [0,1], \\
\eta_2(t_1,t_2) & \mbox{if}& t_1,t_2\in [1,2], \\
0 & \mbox{if}& t_1\in [1,2],\; t_2\in [0,1], \\
-\eta_1(t_1,1)\eta_2(1,t_2)+\eta_1(t_1,0)\eta_2(0,t_2) & \mbox{if}& t_1\in[0,1],\;t_2\in [1,2]
\end{array}\right.\\
=\left\{\begin{array}{lll}
\Theta(t_1-t_2)-t_1 & \mbox{if}& t_1,t_2\in [0,1], \\
\Theta(t_1-t_2)+t_2-2 & \mbox{if}& t_1,t_2\in [1,2], \\
0 & \mbox{if}& t_1\in [1,2],\; t_2\in [0,1], \\
-t_1+t_2-1 & \mbox{if}& t_1\in[0,1],\;t_2\in [1,2].
\end{array}\right.
\end{multline*}
This does not coincide with the propagator of Example \ref{exa: eta circle} but is also a valid propagator for the circle, corresponding to different representatives of cohomology of
$M=S^1$ -- the representatives obtained from the gluing procedure for {\backgrounds} of Section \ref{ss:reducing backgrounds}:
\begin{multline*}\Check\chi_0=\left\{\begin{array}{lll} -\eta_1(t,1)\chi_{20}(1)+\eta_1(t,0)\chi_{20}(2)&\mbox{on}& M_1 \\ \chi_{20} &\mbox{on}& M_2 \end{array}\right. \quad=1,\\ \Check\chi_1= \left\{\begin{array}{lll} \chi_{10}&\mbox{on}& M_1 \\ 0 &\mbox{on}& M_2 \end{array}\right. \quad
=\Theta(1-t)\cdot dt,\quad
\Check\chi^0=
\left\{\begin{array}{lll} 0 &\mbox{on}& M_1 \\ \chi_2^0 &\mbox{on}& M_2 \end{array}\right.  \quad=
\Theta(t-1)\cdot dt,\\
\Check\chi^1=\left\{\begin{array}{lll} \chi_{1}^0 &\mbox{on}& M_1 \\
\chi_{1}^0(0)\eta_2(2,t)-\chi_{1}^0(1)\eta_2(1,t)&\mbox{on}& M_2 \end{array}\right. \quad=1.
\end{multline*}
With these representatives, we have equation (\ref{e:deta}) for $\Check\eta$.
Note that these representatives are not continuous (but still closed). Also, the propagator $\Check\eta$ is continuous (for $t_1\neq t_2$) but not differentiable when one of the points hit $\Sigma=\{0\}\sqcup\{1\}$.
\end{Exa}

\subsection{Attaching a cylinder with axial gauge-fixing}

\begin{Exa}[Attaching a cylinder with opposite polarizations on top and bottom]
\label{exa: attaching 21 cylinder}
Let $M_2$ be some $d$-manifold and $\Sigma\subset \de_2 M_2$ a boundary component (or a union of several boundary components).
Set $M_1=\Sigma\times [0,1]$ with $\de_1 M_1=\Sigma\times \{1\}$ (the gluing interface) and $\de_2 M_1=\Sigma\times \{0\}$. Assume that on $M_2$ we have fixed a basis in cohomology $[\chi_{2i}]\in H^\bullet_{\DD1}(M_2)$ together with its dual $[\chi_2^i]\in H^{d-\bullet}_{\DD2}(M_2)$ and fixed a propagator $\eta_2$. Attaching the cylinder (which has $\calV_{M_1}=0$) does not change cohomology, so $\calV_M=\calV_{M_2}$; there are no redshirt {\backgrounds}. Denote by $\phi: M\rightarrow M_2$ the deformation retraction of $M$ onto $M_2$ which is constant on $M_2$ and collapses the cylinder $M_1=\Sigma\times [0,1]$ onto the top $\Sigma\times \{1\}$.  Choosing the gauge-fixing of Example \ref{exa: axial gauge}, we have the glued representatives of cohomology $\Check\chi_i=\phi^* \chi_{2i}$, $\Check\chi^i=\phi^*\chi_2^i$ (the latter are identically zero on $M_1$), for both choices of the propagator on $M_1$.
If we take $\eta_1=\eta^\mathrm{axial}$ (\ref{e:eta axial}), for the glued propagator we obtain
\begin{multline*}
\Check\eta(x_1,x_2) = \eta_2(x_1,x_2),  \quad\mbox{for}\;\; x_1,x_2\in M_2,\\
\Check\eta((y_1,t_1),(y_2,t_2)) = -\Theta(t_2-t_1)\cdot\delta^{(d-1)}(y_1,y_2),  \quad\mbox{for}\;\; (y_i,t_i) \in \Sigma\times [0,1], \\
\Check\eta(x_1,(y_2,t_2)) = 0 ,\quad\mbox{for} \;\; (y_2,t_2) \in \Sigma\times [0,1],\;\; x_1\in M_2,\\
\Check\eta((y_1,t_1),x_2)= \underbrace{\eta_2(y_1,x_2)}_{(\phi\times\mathrm{id})^*\eta_2},\quad\mbox{for} \;\; (y_1,t_1) \in \Sigma\times [0,1],\;\; x_2\in M_2.
\end{multline*}
Taking instead $\eta_1=\eta^\mathrm{hor}$, we obtain the glued propagator
\begin{multline*}
\Check\eta(x_1,x_2) = \eta_2(x_1,x_2,),  \quad\mbox{for}\;\; x_1,x_2\in M_2,\\
\Check\eta((y_1,t_1),(y_2,t_2)) =
-\delta(t_1-t_2)\cdot (\dd t_1-\dd t_2)\cdot\eta_\Sigma(y_1,y_2)-\\
-\sum_i (-1)^{(d-1)\cdot(\deg\chi_{(\Sigma)i}+1)}  \Theta(t_2-t_1)\cdot \chi_{(\Sigma)i}(y_1)\cdot \chi_{(\Sigma)}^i(y_2)
,\quad\mbox{for}\;\; (y_i,t_i) \in \Sigma\times [0,1], \\
\Check\eta(x_1,(y_2,t_2)) = 0, \quad\mbox{for} \;\; (y_2,t_2) \in \Sigma\times [0,1],\;\; x_1\in M_2,\\
\Check\eta((y_1,t_1),x_2)= dt_1\,\delta(1-t_1)\int_\Sigma \eta_\Sigma(y_1,y')\eta_2(y',x_2)
+\underbrace{\sum_i \chi_{(\Sigma)i}(y_1)\cdot \int_\Sigma\chi_{(\Sigma)}^i(y')\;\eta_2(y',x_2)}_{ ((\phi^*P_{H^\bullet(\Sigma)}\iota_\Sigma^*)\otimes\mathrm{id})\;\eta_2}  ,\\ \quad\mbox{for} \;\; (y_1,t_1) \in \Sigma\times [0,1],\;\; x_2\in M_2.
\end{multline*}
Here $[\chi_{(\Sigma)i}]$ is some basis in $H^\bullet(\Sigma)$ and $[\chi_{(\Sigma)}^i]$ the dual one;  $\iota_\Sigma$ is the embedding of $\Sigma$ into $M_2$, $P_{H^\bullet(\Sigma)}$ is the projection to the representatives of cohomology $H^\bullet(\Sigma)$.
\end{Exa}

\begin{Exa}[Changing the polarization by attaching a cylinder]
\label{exa: attaching 11 cylinder}
Let us change the setup of Example \ref{exa: attaching 21 cylinder} by setting
$\de_1 M_1=\Sigma\times\{0\}\sqcup \Sigma\times \{1\}$. I.e. we attach a cylinder with 1-1 boundary condition, which can be viewed as a way to change the boundary condition on $M_2$, since $\Sigma\subset \de_2 M_2$ but $\Sigma\times\{0\}\subset \de_1 M$. In the notations of Section \ref{ss:reducing backgrounds}, we have $L_1=H^\bullet(\Sigma)$; $L_2=L_2^\times\subset H^\bullet(\Sigma)$ is generally nontrivial. The glued cohomology is:
\begin{multline*}
H^\bullet_{\DD1}(M)=\Tilde H^\bullet_{\DD1}(M_1,M_2)=\dd t\cdot\frac{H^{\bullet-1}(\Sigma)}{L_2}\;\oplus \ker \tau_2,\\
H^\bullet_{\DD2}(M)=\Tilde H^\bullet_{\DD2}(M_1,M_2)= 1_t\cdot L_2^\perp\oplus \underbrace{\mathrm{Ann}\; \sigma_2(L_2)}_{\subset H^\bullet_{\DD2}(M_2)}
\end{multline*}
We have a generally non-empty space of redshirt {\backgrounds}:
$$\calV^\times_{M_1M_2}= \left(\dd t\cdot L_2\oplus \sigma_2(L_2)\right)[k]\oplus
\left( 1_t\cdot \frac{H^\bullet(\Sigma)}{L_2^\perp}\;\oplus \frac{H^\bullet_{\DD2}(M_2)}{\mathrm{Ann}\;\sigma_2(L_2)} \right)[d-k-1]
$$
Choosing $\eta_1=\eta^\mathrm{axial}$ (\ref{e:eta axial 11}) as the propagator on $M_1$, we obtain the following glued propagator:
\begin{multline*}
\Check\eta(x_1,x_2) = \eta_2(x_1,x_2)-\sum_{ij}(-1)^{\deg\chi_{2i}^\times}V^i_j\int_\Sigma \chi_{2i}^\times(x_1) \chi_{1\times}^j(y') \eta_2(y',x_2),   \quad\mbox{for}\;\; x_1,x_2\in M_2,\\
\Check\eta((y_1,t_1),(y_2,t_2)) = (\Theta(t_1-t_2)-t_1)\cdot\delta^{(d-1)}(y_1,y_2)- \dd t_1\cdot \eta_\Sigma(y_1,y_2)-\\ -
\sum_{i,j} (-1)^{d+\deg\chi_{2i}^\times}V^i_j \left(t_1\chi_{2i}^\times(y_1)\chi_{1\times}^j(y_2)+ \dd t_1\int_\Sigma\eta_\Sigma(y_1,y') \chi_{2i}^\times(y')\chi_{1\times}^j(y_2)\right)
\quad\mbox{for}\;\; (y_i,t_i) \in \Sigma\times [0,1], \\
\Check\eta(x_1,(y_2,t_2)) = -\sum_{i,j} (-1)^{d+\deg\chi_{2i}^\times}V^i_j\; \chi_{2i}^\times(x_1)\chi_{1\times}^j(y_2), \quad\mbox{for} \;\; (y_2,t_2) \in \Sigma\times [0,1],\;\; x_1\in M_2,\\
\Check\eta((y_1,t_1),x_2)= t_1\eta_2(y_1,x_2)+ \dd t_1 \int_\Sigma \eta_\Sigma(y_1,y') \eta_2(y',x_2)-\\  -
\sum_{i,j} (-1)^{\deg\chi_{2i}^\times}V^i_j \left( t_1\int_\Sigma\chi_{2i}^\times(y_1)\chi_{1\times}^j(y')\eta_2(y',x_2)+ \dd t_1\int_{\Sigma\times \Sigma}\eta_\Sigma(y_1,y') \chi_{2i}^\times(y')\chi_{1\times}^j(y'')\eta_2(y'',x_2)\right),\\
\quad\mbox{for} \;\; (y_1,t_1) \in \Sigma\times [0,1],\;\; x_2\in M_2.
\end{multline*}
Here we have chosen some basis $[\chi_{1\times}^i]$ in $H^\bullet(\Sigma)/L_2^\perp$ and a basis $[\chi_{2i}^\times]$ in $\sigma_2(L_2)\subset H^\bullet_{\DD1}(M_2)$; $V^i_j$ are the matrix elements of the inverse matrix of $\Lambda$ defined by (\ref{e:Lambda}).
The corresponding representatives of $H^\bullet_{\DD1}(M)$ are extensions of $\frac{H^{\bullet-1}(\Sigma)}{L_2}\cdot dt$ by zero to $M_2$ and extensions of $\ker\tau_2$ by zero to $M_1$. For $H^\bullet_{\DD2}(M)$, we extend $L_2^\perp\cdot 1_t$ by the corresponding representatives of $H(M_2,\de_2 M_2\backslash\Sigma)$ given by $\chi^\mathrm{ext}(x)=-(-1)^{(d-1)\cdot\deg\chi}\int_\Sigma \chi(y')\eta_2(y',x)$. Elements of $\mathrm{Ann}\; \sigma_2(L_2)$ are extended by zero on $M_1$.

Next, if instead we choose $\eta_1$ as $\eta^\mathrm{hor}$ (\ref{e:eta hor 11}), we obtain the following:
\begin{multline*}
\Check\eta(x_1,x_2) = \eta_2(x_1,x_2)-\sum_{ij}(-1)^{\deg\chi_{2i}^\times}V^i_j\int_\Sigma \chi_{2i}^\times(x_1) \chi_{1\times}^j(y') \eta_2(y',x_2),   \quad\mbox{for}\;\; x_1,x_2\in M_2,\\
\Check\eta((y_1,t_1),(y_2,t_2)) =
-\delta(t_1-t_2)\cdot (\dd t_1-\dd t_2)\cdot\eta_\Sigma(y_1,y_2)+\\
+\sum_i (-1)^{(d-1)\cdot(\deg\chi_{(\Sigma)i}+1)}\; (\Theta(t_1-t_2)-t_1)\cdot \chi_{(\Sigma)i}(y_1)\cdot \chi_{(\Sigma)}^i(y_2)-
\\
- t_1\sum_{i,j}(-1)^{d+\deg\chi_{2i}^\times} V^i_j
\chi_{2i}^\times(y_1)\chi_{1\times}^j(y_2)
,
\quad\mbox{for}\;\; (y_i,t_i) \in \Sigma\times [0,1], \\
\Check\eta(x_1,(y_2,t_2)) = -\sum_{i,j}(-1)^{d+\deg\chi_{2i}^\times} V^i_j\; \chi_{2i}^\times(x_1)\chi_{1\times}^j(y_2), \quad\mbox{for} \;\; (y_2,t_2) \in \Sigma\times [0,1],\;\; x_1\in M_2,\\
\Check\eta((y_1,t_1),x_2) =\dd t_1\,\delta(1-t_1)\int_\Sigma \eta_\Sigma(y_1,y')\eta_2(y',x_2)
+ t_1 \sum_l \chi_{(\Sigma)l}(y_1)\cdot \int_\Sigma\chi_{(\Sigma)}^l(y') \eta_2(y',x_2)- \\ -
t_1 \sum_{i,j} (-1)^{d\cdot\deg\chi_{2i}^\times} V^i_j \chi_{2i}^\times(y_1)\int_\Sigma\chi_{1\times}^j(y')\eta_2(y',x_2) ,
\quad\mbox{for} \;\; (y_1,t_1) \in \Sigma\times [0,1],\;\; x_2\in M_2.
\end{multline*}

\end{Exa}

\subsection{Gluing Kontsevich's propagators on two half-planes}
This example falls slightly outside of the scope of our construction as the manifolds in question are non-compact, but we find it otherwise instructive.

Let $\Pi_+=\{z\in\mathbb{C}\; |\; \mr{Im}(z)\geq 0\}$ and $\Pi_-=\{z\in\mathbb{C}\; |\; \mr{Im}(z)\leq 0\}$ be the upper and lower halves of the complex plane. On $\Pi_+$ one has the Kontsevich's propagator
\begin{equation}\label{e:eta Kontsevich}
\eta_{\Pi_+}(z,w)=\frac{1}{2\pi}\dd\arg\frac{z-w}{\bar z-w}
\end{equation}
and on $\Pi_-$ one has
\begin{equation}\label{e:eta Kontsevich lower}
\eta_{\Pi_-}(z,w)=-\eta_{\Pi_+}(\bar w,\bar z)=\frac{1}{2\pi}\dd\arg\frac{z-w}{z-\bar w}.
\end{equation}
Here we regard the real line $\bR\subset \bC$ as the $\de_1$-boundary of $\Pi_+$ and $\de_2$-boundary of $\Pi_-$, which corresponds to boundary conditions $\eta_{\Pi_+}(z,w)|_{z=0}=\eta_{\Pi_-}(z,w)|_{w=0}=0$.

\begin{Rem} One can recover the propagators (\ref{e:eta Kontsevich},\ref{e:eta Kontsevich lower}) from the Euclidean ($SO(2)\ltimes\bR^2$-invariant) propagator on the plane, $\eta_{\bR^2}(z,w)=\frac{1}{2\pi}\dd\arg(z-w)$, via the method of image charges of Appendix \ref{a:prop}. Indeed, we have
$$\eta_{\Pi_+}(z,w)=\eta_{\bR^2}(z,w)-\eta_{\bR^2}(\bar z,w)\qquad\mbox{for}\quad \mr{Im}(z)>0,\;\mr{Im}(w)>0$$
and
$$\eta_{\Pi_-}(z,w)=\eta_{\bR^2}(z,w)-\eta_{\bR^2}(z,\bar w)\qquad\mbox{for}\quad \mr{Im}(z)<0,\;\mr{Im}(w)<0.$$
\end{Rem}

Let us calculate the glued propagator $\Check\eta$ on the plane $\bR^2\simeq \bC$. In this example we may regard $\Pi_\pm$ as disks relative to a point on the boundary ($\{\infty\}\in \Bar\Pi_\pm$) and $\bC$ as $\bC P^1$ relative to a point; the corresponding relative cohomology vanishes, so there are no residual fields (neither before nor after gluing).

The non-trivial case is $\mr{Im}(z)>0$ and $\mr{Im}(w)<0$, then we calculate
\begin{multline}\label{e:gluing Kontsevich eta}
\Check\eta(z,w)=\int_{x\in\bR}\eta_{\Pi_+}(z,x)\wedge \eta_{\Pi_-}(x,w)\\
=-\frac{1}{(2\pi)^2}\int_{x\in \bR} \left(\dd_z \log\frac{z-x}{\bar z-x}\wedge \dd_x \log\frac{x-w}{x-\bar w}+
\dd_x \log\frac{z-x}{\bar z-x}\wedge \dd_w \log\frac{x-w}{x-\bar w}\right)\\
=-\frac{1}{(2\pi)^2}\int_{x\in\bR}\left(
\left(\frac{\dd z}{z-x}-\frac{\dd\bar z}{\bar z-x}\right)\wedge
\frac{(w-\bar w)\cdot \dd x}{(x-w)(x-\bar w)}+
\frac{(z-\bar z)\cdot \dd x}{(x-z)(x-\bar z)}\wedge
\left(\frac{\dd w}{w-x}-\frac{\dd\bar w}{\bar w-x}\right)
\right)\\
=-\frac{2\pi\ii}{(2\pi)^2}\left( \frac{\dd z}{z-w}-\frac{\dd\bar z}{\bar z-\bar w}+
\frac{\dd w}{w-z}-\frac{\dd\bar w}{\bar w-\bar z} \right)=
-\frac{\ii}{2\pi}\dd\log\frac{z-w}{\bar z-\bar w}=\frac1\pi \dd\arg (z-w).
\end{multline}
Here the integral over $x$ is computed straightforwardly by residues. Note that on the r.h.s. of (\ref{e:gluing Kontsevich eta}) we obtained {\it twice} the Euclidean 
propagator on the plane $\eta_{\bR^2}$.

Thus, the full result for the glued propagator on the plane is:
$$\Check\eta(z,w)=\left\{\begin{array}{lll}
\frac{1}{2\pi}\dd\arg\frac{z-w}{\bar z-w} & \mbox{if} & \mr{Im}(z)>0,\; \mr{Im}(w)>0, \\
\frac{1}{2\pi}\dd\arg\frac{z-w}{z-\bar w} & \mbox{if} & \mr{Im}(z)<0,\; \mr{Im}(w)<0, \\
0 & \mbox{if} & \mr{Im}(z)<0,\; \mr{Im}(w)>0, \\
\frac1\pi \dd\arg (z-w)  & \mbox{if} & \mr{Im}(z)>0,\; \mr{Im}(w)<0
\end{array}\right.$$

\begin{Rem}
Note that reversing the assignment for boundary conditions on $\Pi_\pm$ (i.e. regarding $\bR$ as $\de_2\Pi_+$ and as $\de_1\Pi_-$) yields a new glued propagator on the plane, $\Check\eta_\mr{reversed}(z,w)=\Check\eta(w,z)$. Therefore, applying to $\Check \eta$ the doubling trick of Section \ref{ss:dbling}, we obtain the symmetrized propagator on the plane
$$\eta_\mr{sym}(z,w)=\frac12 (\Check \eta(z,w)+\Check \eta(w,z))=\frac{1}{2\pi}\dd\arg(z-w),$$
which is again the Euclidean propagator $\eta_{\bR^2}$.
\end{Rem}

\section{On semi-classical BV theories via effective actions}
\label{appendix: semi-classics and eff actions}

Here we outline the setup for perturbative quantization in formal neighborhoods of solutions of equations of motion, done in a family over the body of the Euler-Lagrange moduli space. Over every point of the moduli space we allow a hierarchy (a poset) of ``realizations'', and one can pass from ``larger'' to ``smaller'' realizations via BV pushforwards. Thus, this setup has a version of Wilson's renormalization flow (in a family over the Euler-Lagrange moduli space) built into it. We also consider in detail a 1-dimensional example with realizations associated to triangulations of a circle.

\subsection{General setup}
\label{sec: appF general setup}

We assume that a classical BV theory $M\mapsto (\calF,Q,\omega,\calS)$ is fixed. Let $\mathcal{M}_M=\mathcal{EL}_M/Q$ be the graded odd-symplectic Euler-Lagrange moduli space (see \cite{CMR} for details) and $\mathcal{M}_M^{\mathrm{gh}=0}=EL_M/Q$ its body, i.e. the set of gauge-equivalence classes of (degree zero) solutions of Euler-Lagrange equations (in our notation, $\mathcal{EL}_M$ is the graded zero locus of $Q$ and $EL_M$ is its body; $Q$ in the denominator stands for passing to the quotient over the distribution induced by $Q$ on the zero locus).

For $M$ a space-time manifold, fix 
$x_0\in EL_M$
a solution of Euler-Lagrange equations.
Also, fix a ``formal exponential map'' $\phi(x_0,\bullet)$ from an open subset $U\subset T_{x_0}\calF$ containing the origin to $\calF$, satisfying $\phi(x_0,0)=x_0$ and $\dd\phi(x_0,\bullet)|_{(x_0,0)}=\mathrm{id}\colon T_{x_0}\calF\rightarrow T_{x_0}\calF$.\footnote{In the case when $\calF$ has linear structure, one natural choice is to set $\phi(x_0,\theta)=x_0+\theta$.} For simplicity, we assume that $\phi$ has the ``Darboux property'', i.e. that the 2-form $\phi(x_0,\bullet)^*\omega\in \Omega^2(U)_{-1}$ is constant on $U$.

The $\infty$-jet of $Q$ at $x_0$ defines, via the map $\phi$, an $L_\infty$ algebra $(T_{x_0}[-1]\mathcal{F},\{l_n\}_{n\geq 1})$ where $l_n$ are the $n$-linear operations on $T_{x_0}[-1]\mathcal{F}$. Moreover, this algebra is cyclic, with invariant (i.e. cyclic) inner product of degree $-3$ given by $\omega_{x_0}$.\footnote{Degree $-3$ comes about for the following reason. For $V$ a $\mathbb{Z}$-graded vector space, a degree $-1$ symplectic form on $V[1]$ corresponds to a degree $-3=-1+2(-1)$ inner product on $V$. Factor $2$ appears because the inner product is a binary operation; first $-1$ is the degree of the symplectic form and second $-1$ comes from the shift from $V[1]$ to $V$.} The data of this algebra are related to the ``linearization'' of the action $S$ at $x_0$ by
$$\calS(\phi(x_0,\theta))=\calS(x_0)+\sum_{n\geq 1}\frac{1}{(n+1)!} \omega_{x_0}(\theta,l_n(\underbrace{\theta,\cdots,\theta}_n))$$
where $\theta\in U\subset T_{x_0}\mathcal{F}$ is a tangent vector.

We have a poset (more precisely, a downward directed category) $R$ of deformation retracts of the complex $(T_{x_0}[-1]\mathcal{F}, l_1)$ compatible with the inner product\footnote{\label{footnote: retracts}
For a cochain complex $(V^\bullet,\dd)$ with inner product $\langle,\rangle: V^{j}\otimes V^{k-j}\rightarrow \mathbb{R}$ (for the case in hand, $k=3$) with cyclic property $\langle \dd a,b\rangle=-(-1)^{|a|}\langle a, \dd b \rangle$, we say that $(V'^\bullet,\dd',\langle,\rangle')$ is a {\sf deformation retract compatible with the inner product}, if a chain inclusion $i: V'^\bullet\hookrightarrow V^\bullet$ and a chain projection $p: V^\bullet\twoheadrightarrow V'^\bullet$ are given and have the following properties. Maps $i$ and $p$ should induce identity on cohomology and should satisfy $p\circ i=\mathrm{id}_{V'}$ and $\langle a, i(b')\rangle=\langle p(a),b'\rangle'$. It follows that the splitting $V=i(V')\oplus \ker p$ is orthogonal with respect to $\langle,\rangle$ and induces the pairing $\langle,\rangle'$ on $V'$. If additionally $K:V^\bullet\mapsto V^{\bullet-1}$ is a chain contraction of $V$ onto $V'$ (i.e. $\dd K+K\dd=\mathrm{id}-i\,p$, $K^2=Ki=pK=0$ and $K$ is skew self-adjoint), then we say that the triple $(i,p,K)$ is a {\sf retraction compatible with the inner product} from $V$ onto $V'$ and denote $V\stackrel{(i,p,K)}{\rightsquigarrow} V'$. We view retractions as morphisms in the category of retracts. The composition rule is $(i_1,p_1,K_1)\circ (i_2,p_2,K_2)=(i_2 i_1, p_1 p_2, K_2+i_2 K_1 p_2)$.  The space of retractions between a cochain complex and its fixed retract, inducing a fixed isomorphism on cohomology via $i_*,p_*$, is {\sf contractible}.  We will be omitting ``compatible with inner products'' for retracts and retractions, as it is always assumed throughout this Appendix, unless stated otherwise.}
(we call them ``realizations''), which inherit, via homotopy transfer, an ``induced'' cyclic $L_\infty$ structure
$$(\calV_{x_0,r}[-1],\{l_n^{x_0,r}\}_{n\geq 1},\omega_{x_0,r}).$$
Here $r\in R$ is the label of the particular retract. Note that the operations $l_n^{x_0,r}$ depend on a particular choice of retraction $T_{x_0}[-1]\calF\wavy[g]\calV_{x_0,r}[-1]$; different choices of $g$ induce isomorphic cyclic $L_\infty$ structures on $\calV_{x_0,r}[-1]$.
(We refer the reader to \cite{CMcs} for details on homotopy transfer for cyclic $L_\infty$ algebras.)

\begin{Rem}\label{rem: minimal backgrounds}
Of particular interest is the final (``minimal'') object $r_\mathrm{min}$ of $R$, which corresponds to cohomology of $l_1$ (with the induced cyclic $L_\infty$ structure). The case when all induced operations on $\calV_{x_0,r_\mathrm{min}}[-1]=H^\bullet_{l_1}$ vanish corresponds to the gauge equivalence class $[x_0]$ being a {\sf smooth point} of the EL moduli space $\mathcal{M}_M$, cf. Appendix C of \cite{CMR}. In this case, the tangent space to $\mathcal{M}_M$ at $[x_0]$ is $H^\bullet_{l_1}[1]$; in particular, the tangent space at $[x_0]$ to the body $\mathcal{M}_M^{\mathrm{gh}=0}$ is $H^1_{l_1}$.
We have from homological perturbation theory the $L_\infty$ morphism (extending the chosen embedding $i:H_{l_1}\to T_{x_0}[-1]\calF$ by higher polylinear operations) from $\calV_{x_0,r_\mr{min}}[-1]$ to $T_{x_0}[-1]\calF$; the latter defines a non-linear map of formal pointed dg manifolds $\Tilde i: \calV_{x_0,r_\mr{min}}\to T_{x_0}\calF$. Assuming that $[x_0]$ is a smooth point of $\calM_M$, we have, by reduction by the $Q$-distribution of the map $\calV_{x_0,r_\mr{min}}\xrightarrow{\Tilde i}T_{x_0}\calF\xrightarrow{\phi(x_0,\bt)}\mathcal{EL}_M\subset\calF$, a formal exponential map $\Psi(x_0,\bt)\colon \calV_{x_0,r_\mr{min}}\to \calM_M$.
\end{Rem}

The graded vector space $\calV_{x_0,r}$ is our space of (formal) {\backgrounds}. A perturbative BV theory assigns to the pair $(x_0,r)$ and a retraction $T_{x_0}[-1]\calF\wavy[g] \calV_{x_0,r}[-1] $
(the {\sf gauge-fixing data})  ``the state''
$$\psi_{x_0,r}^g=e^{\frac{i}{\hbar}\left(\calS(x_0)+\sum_{n\geq 1}\frac{1}{(n+1)!} \omega_{x_0,r}(y,l_n^{x_0,r,g}(y,\ldots,y))\right)}\cdot \psi_{x_0,r,g}^{\geq 1\,\mathrm{loops}}$$
where $\psi_{x_0,r,g}^{\geq 1\,\mathrm{loops}}\in \mathrm{Dens}^{\frac12}_\mr{formal}(\calV_{x_0,r})[[\hbar]]=\HDens_\mr{const}(\calV_{x_0,r})\otimes \Hat S^\bt \calV_{x_0,r}^*[[\hbar]]$ is a half-density on $\calV_{x_0,r}$ which is a formal power series in $y$, a coordinate on $\calV_{x_0,r}$, as well as in $\hbar$; we put the index $g$ on operations $l_n$ to emphasize their dependence on gauge-fixing.

\subsubsection{Axioms}
\begin{enumerate}
\item Let $r\wavy[P] r'$ be an ordered pair of realizations with a fixed morphism $P=(i,p,K)$ between them (in the sense of Footnote \ref{footnote: retracts}), i.e. we have
\begin{equation} \label{Y split}
\calV_{x_0,r}=i(\calV_{x_0,r'})\oplus \underbrace{\Tilde{\calV}}_{\ker p}
\end{equation}
-- a splitting into a retract and an acyclic subcomplex w.r.t. $l_1$, which is orthogonal w.r.t. $\omega_{x_0,r}$ and induces $\omega_{x_0,r'}$ on the first term. Then the states for $r$ and $r'$ are related by a BV pushforward:
\begin{equation}\label{e:psi aggregation}
\psi_{x_0,r'}^{P\circ g}=P_* \psi_{x_0,r}^g=\int_{\mathcal{L}\subset \Tilde{\calV}} \psi_{x_0,r}^g
\end{equation}
where $\calL=\mathrm{im}\ K$ -- the gauge-fixing Lagrangian defined by the chain contraction.
\item The state satisfies the quantum master equation
$\Delta \psi_{x_0,r}^g=0$ where $\Delta$ is the canonical BV Laplacian on half-densities on $\calV_{x_0,r}$.
\item 
Changing the gauge-fixing data $g$ changes the state $\psi_{x_0,r}^g$ by a
$\Delta$-exact term (i.e. the corresponding effective action changes by a canonical BV transformation).
\item Allowing $x_0$ to vary, one has a hierarchy (parametrized by $r$) of graded vector bundles $\mathrm{Dens}^{\frac12}_\mr{formal}(\calV_{\bullet,r})$ over the EL moduli space $EL_M/Q=\mathcal{M}_M^{\mathrm{gh}=0}$. Note that one can indeed compare realizations $r$ over open subsets of $\mathcal{M}_M$ via homological perturbation theory. The bundle $\mathrm{Dens}^{\frac12}_\mr{formal}(\calV_{\bullet,r})$ is typically defined over $\mathcal{M}_M$ minus some singular strata (if $r$ is too small, so that the increase of cohomology of $l_1$ over the singular locus obstructs the extension). The bundle corresponding to the minimal realization $r_\mathrm{min}$, defined over the smooth locus of $\mathcal{M}_M^{\mathrm{gh}=0}$, is endowed with flat Grothendieck connection\footnote{
This connection corresponds to the possibility to translate an infinitesimal tangential shift along the base (the moduli space) into a fiber shift in the degree zero part of $\calV_{\bt,r_\mr{min}}$. The terminology is motivated by the terminology of formal geometry \cite{GK}, see also \cite{BCM}.
}
$\nabla_\mr{G}$, and the minimal realization of the state is a horizontal section:
$$\nabla_\mathrm{G}\psi_{\bullet,r_\mathrm{min}}^g=0.$$
We assume here that the gauge-fixing data $T_{x_0}[-1]\calF\wavy[g]\calV_{x_0,r_\mr{min}}[-1]$ is chosen  in a family over $\calM^\mr{gh=0}_M$.
\end{enumerate}

\begin{Rem}\label{rem: nabla_G}
The connection $\nabla_\mr{G}$ is constructed as follows. For $[x_0]$ a smooth point of $\calM_M$, the restriction of the map $\Psi$ of Remark \ref{rem: minimal backgrounds} to degree zero residual fields yields the formal exponential map $\Psi^0(x_0,\bt)\colon \calV_{x_0,r_\mr{min}}^0\to \calM_M^\mr{gh=0}$. We define
\begin{equation}\label{e:nabla_G}
\begin{array}{ccccc}
\nabla_\mr{G} & \colon & T_{x_0}\calM^\mr{gh=0}_M & \to & \mathfrak{X}_\mr{formal}(\calV_{x_0,r_\mr{min}}^0) \\
&& v & \to & \mathbf{T}\big(\underbrace{a}_{\in\calV^0_{x_0,r_\mr{min}}}\mapsto \underbrace{-(d_a\Psi^0(x_0,\bt))^{-1}v}_{\in T_a \calV^0_{x_0,r_\mr{min}}}\big)
\end{array}
\end{equation}
We understand formal vector fields on $\calV^0_{x_0,r_\mr{min}}$ as endomorphisms of $\HDens_\mr{formal}(\calV^0_{x_0,r_\mr{min}})$ and extending trivially to residual fields of nonzero degree, as endomorphisms of $\HDens_\mr{formal}(\calV_{x_0,r_\mr{min}})$; $\mathbf{T}$ stands for converting an actual vector field (defined in a neighborhood of the origin) on $\calV^0_{x_0,r_\mr{min}}$ to a formal vector field, via taking $\infty$-jet in $a$ at the origin. Thus (\ref{e:nabla_G}) does indeed define $\nabla_\mr{G}$ as an Ehresmann connection on the bundle $\HDens_\mr{formal}(\calV_{\bt,r_\mr{min}})$ over $\calM^\mr{gh=0}_M$.
\end{Rem}

\subsubsection{Number-valued partition function}
By Remark \ref{rem: minimal backgrounds}, the minimal realization of the state $\psi_{x_0,r_\mathrm{min}}^g\in\HDens_\mr{formal}(H^\bullet_{l_1}[1])=\HDens_\mr{formal}(T_{[x_0]}\mathcal{M}_M)$ defines
a half-density on the EL moduli space. One can define the {\sf number-valued partition function} of the theory as a BV integral over a Lagrangian submanifold in the EL moduli space (assuming that it converges):
\begin{equation}\label{Z number 1}
Z_M:=\int_{\calL\subset \mathcal{M}_M} \left.\psi^g_{\bullet,r_\mathrm{min}}\right|_0\qquad \in \mathbb{C}.	
\end{equation}
Here $\left.\psi^g_{\bullet,r_\mathrm{min}}\right|_0$ refers to putting degree zero residual fields in $\psi^g_{\bt,r_\mr{min}}$ to zero.

A special case of this construction is as follows. Assume that the body of the moduli space $\mathcal{M}_M^{\mathrm{gh}=0}$ contains an open dense subset $\Tilde{\mathcal{M}}_M$ such that, for any $[x_0]\in \Tilde{\mathcal{M}}_M$, one has $H^i_{l_1}=0$ for $i\neq 1,2$ (note that, by Poincar\'e duality/cyclicity, the vector spaces $H^1_{l_1}$ and $H^2_{l_1}$ are mutually dual). Then $\mathcal{M}_M$ has an open dense subset of the form $T^*[-1]\Tilde{\mathcal{M}}_M$. The minimal realization of the state $\psi^g_{x_0,r_\mathrm{min}}\in \mathrm{Dens}^{\frac12}_\mr{formal}(H^\bullet_{l_1}[1])_0=
\mathrm{Dens}_\mr{formal}(T_{[x_0]}\mathcal{M}_M^{\mathrm{gh}=0})$ defines a (fiberwise, formal in fiber direction) density on the tangent bundle of the moduli space $\mathcal{M}_M^{\mathrm{gh}=0}$ and thus its restriction to the zero-section can be integrated. The integral
\begin{equation}\label{Z number}
Z_M=\int_{\Tilde{\mathcal{M}}_M}\left.\psi^g_{\bullet,r_\mathrm{min}}\right|_0\qquad \in \mathbb{C},
\end{equation}
if it converges, is a special gauge-fixing for the BV integral (\ref{Z number 1}), with some singular strata of the moduli space removed,
corresponding to the Lagrangian submanifold  $\Tilde{\mathcal{M}}_M\subset T^*[-1]\Tilde{\mathcal{M}}_M$. 

\subsection{Example: non-abelian $BF$ theory on polygons twisted by a background connection}\label{appF: sec 1D}
\subsubsection{Model on a circle}\label{appF: circle}
Consider the 1-dimensional non-abelian $BF$ theory on a circle (cf. Example \ref{exa-AKSZ}).
We view the circle as being parametrized either by $t\in \mathbb{R}$ defined modulo $1$, or by $t\in [0,1]$ with the points $t=0$ and $t=1$ identified.
Additionally, we will assume that the Lie algebra of coefficients $\g$ is equipped with an invariant non-degenerate inner product $\langle,\rangle$, so that $\g^*$ can be identified with $\g$.

The space of fields of the model is $\calF=\Omega^\bullet(S^1,\g)[1]\oplus \Omega^\bt(S^1,\g)[-1]$ and the action is
\begin{equation}\label{e:S for BF on a circle}
\calS(\sfA,\sfB)=\oint_{S^1} \langle \sfB, d\sfA+\frac12 [\sfA,\sfA]\rangle
= \oint_{S^1} \langle \sfB^{(0)}, \dd \sfA^{(0)}+[\sfA^{(1)},\sfA^{(0)}] \rangle +
\langle \sfB^{(1)},\frac12 [\sfA^{(0)},\sfA^{(0)}]\rangle.
\end{equation}
Here on the r.h.s. we expressed the action in terms of homogeneous components of fields, $\sfA=\sfA^{(0)}+\sfA^{(1)}$, $\sfB=\sfB^{(0)}+\sfB^{(1)}$ where the upper index is the de Rham degree of the component; the internal degrees (ghost numbers) are
\begin{equation}\label{appF: gh}
|\sfA^{(0)}|=1,\quad |\sfA^{(1)}|=0,\quad |\sfB^{(0)}|=-1,\quad |\sfB^{(1)}|=-2.
\end{equation}

Note that the only classical (i.e. degree zero) field is $\sfA^{(1)}$ -- the connection 1-form on the circle. The classical action (i.e. $\calS$ restricted to degree zero fields) is identically zero. However, there is a gauge symmetry generated by the BV action, $\sfA^{(1)}\mapsto h \sfA^{(1)}h^{-1}+h\dd h^{-1}$ with $h:S^1\to G$. Here $G$ is the simply-connected Lie group integrating $\g$.
Thus the ghost number zero part of the Euler-Lagrange moduli space is
$$ \calM^\mathrm{gh=0}= \frac{\{\sfA^{(1)}\in \Omega^{1}(S^1,\g)\}}{\sfA^{(1)}\sim h\sfA^{(1)}h^{-1}+h\dd h^{-1}\quad\quad \forall h\in C^\infty(S^1,G)} \simeq G/G $$
where $G/G$ stands for the 
stratified manifold of conjugacy classes in $G$, arising as holonomy $U=\calP\exp \int_0^1 A^{(1)}$ of the connection defined by $\sfA^{(1)}$ around the circle modulo conjugation $U\mapsto h(0)\cdot U\cdot h(0)^{-1}$ by a group element (the value $h(0)$ of the generator of the gauge transformation at the base point on the circle).

Fix a background flat connection $A_0\in \Omega^1(S^1,\g)$.
The formal exponential map is 
$$\begin{array}{ccccc}
\phi(A_0,-)\colon &\, & T_{A_0}\calF= \Omega^\bt(S^1,\g)[1]\oplus \Omega^\bt(S^1,\g)[-1] & \to & \calF= \Omega^\bt(S^1,\g)[1]\oplus \Omega^\bt(S^1,\g)[-1] \\
& & (\Hat \sfA,\Hat \sfB) & \mapsto & (A_0+\Hat \sfA,\Hat \sfB)
\end{array}
$$
Here the pair $(\Hat \sfA, \Hat \sfB)=(\Hat \sfA^{(0)}+\Hat \sfA^{(1)},\Hat \sfB^{(0)}+\Hat \sfB^{(1)})$ is the formal variation of the field (which is allowed to have ghost number $\neq 0$ components in addition to a formal variation of the connection $\Hat \sfA^{(1)}$).
In the notations of Section \ref{sec: appF general setup}, $x_0=A_0$ and $\theta= (\Hat \sfA, \Hat \sfB)$.
We have
\begin{equation}\label{e:S for BF on circle twisted}
\calS(\phi(A_0;\Hat \sfA,\Hat \sfB))=\calS(A_0+\Hat \sfA,\Hat \sfB)=
\oint_{S^1} \langle \Hat \sfB, \dd_{A_0} \Hat \sfA+ \frac12 [\Hat \sfA, \Hat \sfA] \rangle
\end{equation}
where $\dd_{A_0}=\dd+[A_0,-]\colon\, \Omega^0(S^1,\g)\to \Omega^1(S^1,\g)$ is the de Rham operator twisted by the background connection. We denote $U=\calP\exp \int_0^1 A_0$ the holonomy around the circle.

It is convenient to introduce the complex of quasi-periodic forms (or, equivalently, forms on the universal covering of the circle equivariant w.r.t. covering transformations):
\begin{equation*}
\Omega^j_U=\{\alpha\in \Omega^j(\mathbb{R},\g)\;|\; \alpha(t+1)=U\alpha(t) U^{-1}\}
,\qquad j=0,1
\end{equation*}
with ordinary de Rham differential $\alpha\mapsto \dd\alpha$. As a complex, $(\Omega^\bt_U,\dd)$ is isomorphic to $(\Omega^\bt(S^1,\g),\dd_{A_0})$ with isomorphism given by
\begin{equation}\label{e:Phi}
\begin{array}{clccc}
\Phi\colon & \,  & (\Omega^\bt_U,\dd) & \stackrel{\sim}{\to} & (\Omega^\bt(S^1,\g),\dd_{A_0}) \\
& & \alpha(t) & \mapsto & U_t^{-1} \alpha(t) U_t
\end{array}
\end{equation}
where $U_t=\calP\exp\int_0^t A_0$ is the holonomy along the interval $[0,t]$. Note that $\Phi$ sends quasi-periodic forms to strictly periodic.

Denoting by $\Hat \sfa=\Phi^{-1}\Hat \sfA$, $\Hat \sfb=\Phi^{-1}\Hat \sfB$ the reparametrized fields, we can write (\ref{e:S for BF on circle twisted}) as
$$\calS(A_0+\Phi\Hat\sfa,\Phi\Hat\sfb)=\oint_{S^1} \langle \Hat\sfb, \dd\Hat\sfa+\frac12 [\Hat\sfa,  \Hat\sfa]\rangle$$
which looks exactly like the original non-twisted action (\ref{e:S for BF on a circle}) but is defined on quasi-periodic forms $(\Hat\sfa,\Hat\sfb)\in \Omega^\bt_U[1]\oplus \Omega^\bt_U[-1]$.

\subsubsection{Polygon realizations}
Now let us introduce the realization of the theory associated to equipping the circle with cell decomposition with $N\geq 1$ $0$-cells (vertices) and $N$ $1$-cells (edges), thus realizing the circle as an $N$-gon. We denote this realization $r_N$; we also denote this cell decomposition of $S^1$ by $T_N$. Next, we introduce the complex of quasi-periodic cell cochains on $T_N$ (or, equivalently, cochains on the covering cell decomposition $\Tilde T_N$ of $\mathbb{R}$, equivariant w.r.t. covering transformations):
$$C^j_U(T_N)=\{\alpha\in C^j(\Tilde T_N,\g)\; |\; \tau^* \alpha = U\alpha U^{-1}\},\qquad j=0,1$$
where $\tau: \Tilde T_N\to \Tilde T_N$ is the covering transformation corresponding to going around $S^1$ once in the direction of orientation. We equip $C^\bt_U(T_N)$ with a coboundary operator $d$ induced from the standard cellular coboundary operator on $C^\bt(\Tilde T_N,\g)$ (acting trivially in $\g$ coefficients).

As a graded vector space (but not as a complex) $C^\bt_U(T_N)$ is isomorphic to $C^\bt(T_N,\g)$. We introduce the cellular bases $\{e_k\}$, $\{e_{k,k+1}\}$ in $C^0(T_N)$ and $C^1(T_N)$, respectively, with $k=0,1,\ldots,N-1$. The coboundary operator of $C^\bt_U(T_N)$ then operates as
$$x_0e_0+\cdots +x_{N-1}e_{N-1}\mapsto (x_1-x_0)e_{01}+\cdots+(x_{N-1}-x_{N-2})e_{N-2,N-1}+(x_N-x_{N-1})e_{N-1,N}$$
where $x_0,\ldots,x_{N-1}\in\g$ are coefficients in the Lie algebra and $x_N\colon = U x_0 U^{-1}$. For $x\in\g$ and $k\in\mathbb{Z}$, we identify the 0-cochain $xe_k$ with the element $\sum_{p=-\infty}^\infty U^p x U^{-p} \Tilde e_{k-pN}\in C^0_U(T_N)$ and likewise the 1-cochain $xe_{k,k+1}$ with the element $\sum_{p=-\infty}^\infty U^p x U^{-p} \Tilde e_{k-pN,k+1-pN}\in C^1_U(T_N)$. Here $\Tilde e_l$, $\Tilde e_{l,l+1}$ with $l\in \mathbb{Z}$ stand for the cellular bases in 0- and 1-cochains of $\Tilde T_N$.

In complete analogy with the discussion above, we introduce the dual cell decomposition of the circle $T_N^\vee$ and the corresponding complex of quasi-periodic cochains $C^\bullet_U(T_N^\vee)$. We denote the cellular bases in  cochains of $T_N^\vee$ by $\{e_k^\vee\}$, $\{e_{k-1,k}^\vee\}$, with $k=0,\ldots,N-1$. We think of cells of $\Tilde T_N$ as being slightly displaced in the direction of orientation w.r.t. the corresponding cells of $T_N$. The intersection pairing is: $\langle e_k,e^\vee_{l-1,l} \rangle=\delta_{k,l}$, $\langle e_{k,k+1},e^\vee_l\rangle=\delta_{k,l}$.

We define the space of residual fields in realization $r_N$ to be
$$\calV_{A_0,r_N}\colon= C^\bt_U(T_N)[1]\oplus C^\bt_U(T_N^\vee)[-1]$$
parametrized by $\sfa=\sum_{k=0}^{N-1} \sfa_k e_k+\sfa_{k,k+1} e_{k,k+1}$ and
 $\sfb=\sum_{k=0}^{N-1} \sfb_k e_k^\vee+\sfb_{k-1,k} e_{k-1,k}^\vee$ where all the coefficients take values in $\g$ and have ghost numbers
$|\sfa_k|=1$, $|\sfa_{k,k+1}|=0$, $|\sfb_k|=-1$, $|\sfb_{k-1,k}|=-2$. The odd-symplectic form on $\calV_{A_0,r_N}$ comes from the intersection pairing: $\omega_{A_0,r_N}=\langle \delta \sfb, \delta \sfa \rangle=\sum_{k=0}^{N-1} \langle \delta\sfb_k ,\delta\sfa_{k,k+1} \rangle+\langle \delta\sfb_{k-1,k},\delta\sfa_k\rangle$.

Assume that $k$-th vertex of the polygon is geometrically realized as the point $t=t_k$ on the circle with $t_0=0<t_1<\cdots<t_{N-1}<t_N=1$ (e.g. one can choose $t_k=k/N$). We construct a retraction\footnote{This is a retraction without any compatibility with inner product.
} $\Omega^\bt_U\wavy[(i,p,K)] C^\bt_U(T_N)$ with
\begin{multline}\label{e:polygon i}
i\colon \sum_{k=0}^{N-1}x_k e_k+x_{k,k+1}e_{k,k+1} \mapsto\\
\mapsto \sum_{k=0}^{N-1} \left(\frac{t_{k+1}-t}{t_{k+1}-t_k}\; x_k+ \frac{t-t_k}{t_{k+1}-t_k}\; x_{k+1}+\frac{\dd t}{t_{k+1}-t_k}x_{k,k+1}\right)\;(\Theta(t-t_k)-\Theta(t_{k+1}-t))
\end{multline}
\begin{equation}\label{e:polygon p}
p\colon f(t)+ g(t)\; \dd t \mapsto \sum_{k=0}^{N-1} f\left(t_k\right) e_k+ \left(\int_{t_k}^{t_{k+1}} g(t')\dd t'\right) e_{k,k+1},
\end{equation}
\begin{equation}\label{e:polygon K}
K\colon g(t)\; \dd t \mapsto \sum_{k=0}^{N-1} \left( \int_{t_k}^t g(t')\dd t'-\frac{t-t_k}{t_{k+1}-t_k}\int_{t_k}^{t_{k+1}} g(t')\dd t' \right) \;(\Theta(t-t_k)-\Theta(t_{k+1}-t)).
\end{equation}
Here $\Theta(t)$ is the Heaviside step function.
This retraction defines in a unique way a retraction compatible with the inner product
\begin{equation} \label{e:doubled retraction}
\Omega^\bt_U\oplus \Omega^\bt_U[-2] \wavy[(i\oplus p^\vee,p\oplus i^\vee, K\oplus K^\vee)] C^\bt_U(T_N)\oplus C^\bt_U(T_N^\vee)[-2]
\end{equation}
where the superscript $\vee$ for maps $i,p,K$ stands for the adjoint map w.r.t.~the Poincar\'e pairing between the two copies of $\Omega^\bt_U$ and intersection pairing between cochains of $T_N$ and $T_N^\vee$. This, upon composition with the isomorphism (\ref{e:Phi}) gives the gauge-fixing data $T_{A_0}[-1]\calF\wavy[g] \calV_{A_0,r_N}[-1]$. The state for the realization $r_N$ is defined as the corresponding BV pushforward
$$\psi_{A_0,r_N}^g=\int_{\mathrm{im}K[1]\oplus \mathrm{im}K^\vee[-1]\subset \Tilde \calV}\EE^{\frac\ii\hbar\calS(A_0+\Phi i\sfa+\alpha,\Phi p^\vee \sfb+\beta)} (d\alpha)^{1/2}(d\beta)^{1/2}(d\sfa)^{1/2}(d\sfb)^{1/2}$$
with $(\sfa,\sfb)\in \calV_{A_0,r_N}$ the residual fields and $(\alpha,\beta)\in \Tilde\calV$ fluctuations. This integral can be computed exactly, following \cite{discrBF}, and yields
\begin{multline}\label{e:psi for BF on polygon}
\psi_{A_0,r_N}^g=\EE^{\frac\ii\hbar\left(\sum_{k=0}^{N-1}\left\langle \sfb_{k-1,k} , \frac12 [\sfa_k,\sfa_k]\right\rangle+\left\langle \sfb_k, F(\ad_{\sfa_{k,k+1}})\circ(\sfa_{k+1}-\sfa_k)+ [\sfa_{k,k+1},\frac{\sfa_k+\sfa_{k+1}}{2}] \right\rangle\right)}\cdot\\
\cdot \prod_{k=0}^{N-1}{\det}_\g G(\ad_{\sfa_{k,k+1}})\cdot
\xi_{r_N}\cdot
(d\sfa)^{1/2}(d\sfb)^{1/2}\qquad\qquad \in \HDens(\calV_{A_0,r_N})
\end{multline}
where we introduced the notation $F,G$ for the two functions
$$ F(x)=\frac{x}{2}\coth \frac{x}{2},\qquad G(x)=\frac{2}{x}\sinh\frac{x}{2}.$$
The factor $\xi_{r_N}=\left(\EE^{-\frac{\pi \ii}{2}}\hbar\right)^{N\dim\g}$ comes from the normalization of the integration measure, cf. (\ref{e:xi=exp}) and \cite{cell_ab_BF}.

The quantum master equation reads
$$\left(\sum_{k=0}^{N-1}\left\langle \frac{\de}{\de \sfa_k}, \frac{\de}{\de \sfb_{k-1,k}} \right\rangle + \left\langle \frac{\de}{\de \sfa_{k,k+1}}, \frac{\de}{\de \sfb_{k}} \right\rangle \right)\psi^g_{A_0,r_N}=0.$$
It can be checked by an explicit computation, cf. Section 5.5.1 of \cite{discrBF}.

For $N=1$, (\ref{e:psi for BF on polygon}) becomes
\begin{multline}\label{appF: psi for T_1}
\psi^g_{A_0,r_1}=\EE^{\frac\ii\hbar \left( \left\langle \sfb_{-1, 0} , \frac12 [\sfa_0,\sfa_0]  \right\rangle +  \left\langle \sfb_0 , F(\ad_{\sfa_{01}})\circ (\Ad_U-\id)\circ \sfa_0 + \frac12 \ad_{\sfa_{01}}\circ (\Ad_U+\id)\circ \sfa_0  \right\rangle  \right)}\cdot \\
\cdot {\det}_\g G(\ad_{\sfa_{01}})\cdot
\left(\EE^{-\frac{\pi \ii}{2}}\hbar\right)^{\dim\g}\cdot
(d\sfa_0)^{1/2}(d\sfa_{01})^{1/2}(d\sfb_0)^{1/2}(d\sfb_{-1,0})^{1/2}\qquad \in
\HDens(\underbrace{\g[1]\oplus \g\oplus\g[-1]\oplus\g[-2]}_{\calV_{A_0,r_1}}).
\end{multline}

One can define cellular aggregation morphisms $\calV_{A_0,r_N}\wavy[\mathrm{agg}_k^\varkappa] \calV_{A_0,r_{N-1}}$ corresponding to merging edges $[k,k+1]$ and $[k+1,k+2]$ together, for $k=0,1\ldots,N-1$. Here $\varkappa\in [0,1]$ is a parameter of the morphism. To define $\mathrm{agg}_k^\varkappa$, we start by introducing a retraction $C^\bt_U(T_N) \wavy[i_k^\varkappa,p_k^\varkappa,K_k^\varkappa] C^\bt_U(T_{N-1})$ (without compatibility with inner products):
\begin{multline*}
i_k^\varkappa\colon \sum_{l=0}^{N-2} x_l e_l+x_{l,l+1} e_{l,l+1} \mapsto
\left(\sum_{l=0}^{k-1} x_l e_l+x_{l,l+1} e_{l,l+1}\right)+ x_k e_k+ ((1-\varkappa)\cdot x_k+\varkappa\cdot x_{k+1}) e_{k+1} + \\
+ x_{k,k+1}(\varkappa\cdot e_{k,k+1} + (1-\varkappa)\cdot e_{k+1,k+2})+
\left(\sum_{l=k+1}^{N-2} x_{l} e_{l+1}+x_{l,l+1} e_{l+1,l+2}\right),
\end{multline*}
\begin{multline*}
p_k^\varkappa\colon \sum_{l=0}^{N-1} x_l e_l+x_{l,l+1} e_{l,l+1} \mapsto \\
\mapsto
\left(\sum_{l=0}^{k-1} x_l e_l+x_{l,l+1} e_{l,l+1}\right) +x_k e_k +(x_{k,k+1}+x_{k+1,k+2}) e_{k,k+1}+ \left(\sum_{l=k+2}^{N-1} (x_l e_{l-1}+x_{l,l+1} e_{l-1,l}\right),
\end{multline*}
$$
K_k^\varkappa\colon \sum_{l=0}^{N-1} x_{l,l+1} e_{l,l+1} \mapsto ((1-\varkappa)\cdot x_{k,k+1}-\varkappa \cdot x_{k+1,k+2})\cdot e_{k+1}.
$$
Next, we define the corresponding aggregation morphism between spaces of residual fields (now, a retraction compatible with the inner product) by the doubling construction as in  (\ref{e:doubled retraction}):
$$ \underbrace{C^\bt_U(T_N)\oplus C^\bt_U(T_N)[-2]}_{\calV_{A_0,r_N}[-1]}\quad \wavy[\mathrm{agg}_k^\varkappa\colon=\;(i_k^\varkappa\oplus p_k^{\varkappa\vee},p_k^\varkappa\oplus i_k^{\varkappa\vee},K_k^\varkappa\oplus K_k^{\varkappa\vee})]\quad \underbrace{C^\bt_U(T_{N-1})\oplus C^\bt_U(T_{N-1})[-2]}_{\calV_{A_0,r_{N-1}}[-1]}  $$

One has the automorphic property of the state (\ref{e:psi for BF on polygon}) with respect to aggregations:
\begin{equation}
(\mathrm{agg}_k^\varkappa)_* \psi_{A_0,r_N}^g=\psi_{A_0,r_{N-1}}^g,
\end{equation}
cf. (\ref{e:psi aggregation}), which can be checked by calculating the BV pushforward explicitly; the computation is analogous to the one in Section 3.2.2 of \cite{AM}. Note that the BV pushforward yields {\sf precisely} the state for the standard gauge-fixing (\ref{e:polygon i},\ref{e:polygon p},\ref{e:polygon K}), not up to a $\Delta$-exact term.\footnote{This corresponds to the observation that gauge-fixings $\mathrm{agg}_k^\varkappa\circ g_{r_{N}}$ and $g_{r_{N-1}}$ for the realization $r_{N-1}$ precisely coincide if we place the $(k+1)$-st vertex in $r_N$ at the point $t_{k+1}^{r_N}=(1-\varkappa)t_k^{r_N}+\varkappa t_{k+2}^{r_N}$ on $S^1$ and assign $t_l^{r_{N-1}}=t_l^{r_N}$ for $l=0,\ldots,k$ and $t_l^{r_{N-1}}=t_{l+1}^{r_N}$ for $l=k+1,\ldots,N-1$. Also note that the state cannot depend on the positions of vertices of the polygon, since the continuum theory is diffeomorphism-invariant. For clarity, here we indicated the realization explicitly.}

\subsubsection{Minimal realization}
The complex $C^\bt_U(T_1)$ is
$$0\to \g\xrightarrow{\Ad_U-\id} \g \to 0.$$
Its cohomology (which is the same as cohomology of $\Omega^\bt_U$ and $C^\bt_U(T_N)$ for any $N$) is
$$H^0_U=\g_U,\qquad H^1_U=\g/\g_U^\perp\simeq \g_U$$
where we denoted $\g_U\subset\g$ the subspace comprised of elements of the Lie algebra commuting with the holonomy $U$; $\g_U^\perp$ is the orthogonal complement of $\g_U$ in $\g$ w.r.t.~the inner product $\langle,\rangle$.  According to Remark \ref{rem: minimal backgrounds}, we set
$$\calV_{A_0,r_\mr{min}}=H^\bt_U[1]\oplus H^\bt_U[-1].$$
We denote the corresponding residual fields $\sfa^{(0)}, \sfa^{(1)}, \sfb^{(0)}, \sfb^{(1)}\in \g_U$
with upper index standing for the form (or cochain) degree, as in Section \ref{appF: circle}; ghost numbers are as in (\ref{appF: gh}).

We have a retraction $C^\bt_U(T_1)\wavy[(i_\mr{min},p_\mr{min},K_\mr{min})] H^\bt_U$ where $i_\mr{min},p_\mr{min}$ correspond to the inclusion of the first summand and the projection onto the first summand in the splitting $\g=\g_U\oplus \g_U^\perp$ in degrees 0 and 1. The chain homotopy $K_\mr{min}$ is $(\Ad_U-\id)^{-1}$ on $\g_U^\perp\subset C^1_U(T_1)$ and vanishes on $\g_U\subset C^1_U(T_1)$. By doubling, as in (\ref{e:doubled retraction}), we produce a gauge-fixing morphism
$\calV_{A_0,r_1}\wavy[P]\calV_{A_0,r_\mr{min}}$. 
The state in the minimal representation $\psi_{A_0,r_\mr{min}}^g$ with gauge-fixing $g=P\circ g^{r_1}$ can be computed from (\ref{appF: psi for T_1}) as a BV pushforward $\psi_{A_0,r_\mr{min}}^g=P_*\psi_{A_0,r_1}^{g_{r_1}}$. The result is:
\begin{multline}\label{appF: psi min}
\psi_{A_0,r_\mr{min}}^g=\EE^{\frac\ii\hbar\left(
\langle\sfb^{(1)},\frac12 [\sfa^{(0)},\sfa^{(0)}] \rangle+
\langle \sfb^{(0)},[\sfa^{(1)},\sfa^{(0)}] \rangle
\right)}
\cdot \\ \cdot {\det}_\g G(\ad_{\sfa^{(1)}})\cdot
{\det}_{\g_U^\perp}\left( F(\ad_{\sfa^{(1)}})\circ (\Ad_U-\id)+\frac12 \ad_{\sfa^{(1)}}\circ (\Ad_U+\id) \right)\cdot \\
\cdot\xi_{r_\mr{min}}\cdot
(d\sfa^{(0)})^{1/2}(d\sfa^{(1)})^{1/2}(d\sfb^{(0)})^{1/2}(d\sfb^{(1)})^{1/2}
\qquad \in \HDens(\underbrace{\g_U[1]\oplus\g_U\oplus\g_U[-1]\oplus \g_U[-2]}_{\calV_{A_0,r_\mr{min}}}).
\end{multline}
Here $\xi_{r_\mr{min}}=\left(\EE^{-\frac{\pi \ii}{2}}\hbar\right)^{\mathrm{rk}(G)}$ with $\mr{rk}(G)=\dim G/G$ the rank of the group $G$.

Note that there is an open dense subset $\Bar G\subset G$ consisting of elements $U\in G$ such that $\g_U$ is a maximal abelian (Cartan) subalgebra of $\g$. These are the group elements with the ``maximal'' conjugacy class; the set of these maximal conjugacy classes $\Bar G/G$ is the smooth locus of the moduli space $\calM^\mr{gh=0}=G/G$.

In the case when the holonomy of the background connection satisfies $U\in\Bar G$, the result (\ref{appF: psi min}) simplifies to
\begin{equation}\label{appF: psi min simplified}
\psi^g_{A_0,r_\mr{min}}={\det}_{\g_U^\perp}\left(\Ad_{U\cdot \exp(\sfa^{(1)})}-\id\right)\cdot \xi_{r_\mr{min}}\cdot
(d\sfa^{(0)})^{1/2}(d\sfa^{(1)})^{1/2}(d\sfb^{(0)})^{1/2}(d\sfb^{(1)})^{1/2}.
\end{equation}

Allowing the background connection $A_0$ to vary as long as the holonomy $U$ is in $\Bar G$, we view $\psi^g_{-,r_\mr{min}}$ as a section of the vector bundle
\begin{equation}\label{appF: bundle of 1/2dens}
\HDens(\calV_{-,r_\mr{min}})\to \Bar G
\end{equation}
(where the dash stands for the background connection and the bundle projection consists in taking the holonomy of the connection). Simultaneous conjugation of $U$ and the residual fields by group elements $h\in G$ induce an action of $G$ on the bundle (\ref{appF: bundle of 1/2dens}) by bundle automorphisms. The section $\psi^g_{-,r_\mr{min}}$ is equivariant w.r.t.~the $G$ action. Thus we can regard $\psi^g_{[-],r_\mr{min}}$ as a section of the bundle over the smooth locus of the moduli space
\begin{equation}\label{appF: bundle over G/G}
\HDens(\calV_{[-],r_\mr{min}})\to \Bar G/G,
\end{equation}
where $[-]$ stands for the gauge equivalence class of the background connection.

We can introduce a partial connection\footnote{``Partial'' means here that we only define covariant derivatives along vector fields tangent to a particular distribution on $\Bar G$.} on the bundle (\ref{appF: bundle of 1/2dens}):
$$ \begin{array}{rcc}
T_U\Bar G\simeq \g\supset \g_U &\to & \mathfrak{X}(\HDens(\calV_{A_0,r_\mr{min}})) \\
v & \mapsto & -\left\langle v,\frac{\de}{\de \sfa^{(1)}} \right\rangle
\end{array} $$
Passing to the quotient by $G$ action in (\ref{appF: bundle of 1/2dens}), we obtain the Grothendieck connection on the bundle over the moduli space (\ref{appF: bundle over G/G}). Explicitly, we can write $$\nabla_\mr{G}=\dd-\left\langle \frac{\de}{\de \sfa^{(1)}} , U^{-1}\dd U \right\rangle.$$
Since the state (\ref{appF: psi min simplified}) manifestly only depends on the combination $U\cdot\exp(\sfa^{(1)})$, it satisfies the horizontality condition
$$\nabla_\mr{G}\psi^g_{[-],r_\mr{min}}=0.$$

The formal exponential map $\Psi^0$ of Remark \ref{rem: nabla_G} sends $\sfa^{(1)}\mapsto U\cdot \exp(\sfa^{(1)})$.

\begin{Rem} Note that the full Euler-Lagrange moduli space of the model $\calM$, as opposed to $\calM^\mr{gh=0}$, contains formal directions spanned by $\sfa^{(0)}, \sfb^{(1)}$ on which the state (\ref{appF: psi min simplified}) does not depend. Therefore there is no choice of gauge-fixing Lagrangian $\calL\subset \calM$ which would produce a convergent nonzero integral (\ref{Z number 1}) for the number-valued partition function.
\end{Rem}

\begin{Rem} The one-dimensional model presented here admits a meaningful generalization to graphs. Under certain assumptions on a graph, one can define the number-valued partition function. We plan to present this generalization in a future paper.
\end{Rem}

\subsection{Example: partition function of 2D non-abelian $BF$ theory on a closed surface}
Consider non-abelian $BF$ theory on a closed surface $\Sigma$ of genus $\gamma\geq 2$. We fix a compact-simply connected Lie group $G$ with Lie algebra $\g$. As in Section \ref{appF: sec 1D}, we identify $\g^*$ with $\g$ using a non-degenerate invariant inner product $\langle,\rangle$ on $\g$.

We have fields
$(\sfA=\sum_{k=0}^2 \sfA^{(k)},\sfB=\sum_{k=0}^2 \sfB^{(k)})\in\calF=\Omega^\bt(\Sigma,\g)[1]\oplus \Omega^\bt(\Sigma,\g)$ where the upper index stands for the form degree. Ghost numbers are $|\sfA^{(k)}|=1-k$, $\sfB^{(k)}=-k$. The moduli space of classical solutions of equations of motion is
$$\calM^\mr{gh=0}=\frac{\{(\sfA^{(1)},\sfB^{(0)})\in \Omega^1(\Sigma,\g)\oplus \Omega^0(\Sigma,\g)\;|\; \dd\sfA^{(1)}+\frac12 [\sfA^{(1)},\sfA^{(1)}]=0,\; \dd\sfB^{(0)}+[\sfA^{(1)},\sfB^{(0)}]=0 \}}{(\sfA^{(1)},\sfB^{(0)})\sim (h\sfA^{(1)}h^{-1}+h\dd h^{-1},h\sfB^{(0)}h^{-1})\quad \forall h:\Sigma\to G }$$
It projects onto the moduli space of flat $G$-connections on $\Sigma$, $M_{\Sigma,G}=\Hom(\pi_1(\Sigma),G)/G$ (by taking holonomy of $\sfA^{(1)}$), with fiber $H^0_{\dd_{\sfA^{(1)}}}$ where $\dd_{\sfA^{(1)}}\colon \Omega^\bt(\Sigma,\g)\to \Omega^{\bt+1}(\Sigma,\g)$ is the de Rham operator twisted by the flat connection $\sfA^{(1)}$. Note also that $H^1_{\dd_{\sfA^{(1)}}}\simeq T_{[\sfA^{(1)}]}M_{\Sigma,G}$ -- the tangent space to moduli space of flat connections, here $[\sfA^{(1)}]$ is the class of the connection modulo gauge transformations.

Denote $M_{\Sigma,G}^\mr{irred}\subset M_{\Sigma,G}$ the moduli space of {\sf irreducible} flat connections (i.e. those with $H^0_{\dd_{\sfA^{(1)}}}=0$). For a surface $\Sigma$ of genus $\geq 2$ (which we requested precisely for this reason), $M_{\Sigma,G}^\mr{irred}$ is open dense in $M_{\Sigma,G}$. Hence, $M_{\Sigma,G}^\mr{irred}\subset \calM^\mr{gh=0}$ is an open dense subset; the inclusion maps $[\sfA^{(1)}]\mapsto [(\sfA^{(1)},0)]$. Moreover, the odd cotangent bundle $T^*[-1]M_{\Sigma,G}^\mr{irred}$ is open dense in the full (i.e. not just the ghost number zero part) Euler-Lagrange moduli space $\calM$.

Fix a classical solution of equations of motion of the form $x_0=(A_0,0)$ with $A_0$ an irreducible flat connection. Then $H^\bt_{\dd_{A_0}}$ is concentrated in degree $1$, thus the minimal realization for the space of residual fields on the background defined by $(A_0,0)$ is $\calV_{(A_0,0),r_\mr{min}}=H^1_{d_{A_0}}\oplus H^1_{\dd_{A_0}}[-1]\simeq T^*[-1] T_{[A_0]}M_{\Sigma,G}$. We denote an element of $\calV_{(A_0,0),r_\mr{min}}$ by $(\sfa^{(1)},\sfb^{(1)})$.
The state in the minimal realization is given by
\begin{equation}\label{appF: 2D psi PI}
\psi_{(A_0,0),r_\mr{min}}(\sfa^{(1)},\sfb^{(1)})=\int_\calL \EE^{\frac\ii\hbar \calS(A_0+\sfa^{(1)}+\alpha,\sfb^{(1)}+\beta)} \left.\left((d\alpha)^{1/2}(d\beta)^{1/2}\right)\right|_\calL (d\sfa^{(1)})^{1/2} (d\sfb^{(1)})^{1/2}
\end{equation}
where $\alpha,\beta$ are fluctuations over which we integrate, restricting to  the gauge-fixing Lagrangian $\calL$. The action in the exponential expands as
$$\calS(A_0+\sfa^{(1)}+\alpha,\sfb^{(1)}+\beta)=\int_\Sigma \langle \beta,\dd_{A_0}\alpha+\frac12 [\alpha,\alpha]+[\sfa^{(1)},\alpha]+\frac12 [\sfa^{(1)},\sfa^{(1)}] \rangle.$$
The path integral on the r.h.s. of (\ref{appF: 2D psi PI}) can be calculated perturbatively and yields
$$\psi_{(A_0,0),r_\mr{min}}(\sfa^{(1)},\sfb^{(1)})= T_\Sigma\; \EE^{W(\sfa^{(1)})}$$
where $T_\Sigma$ is as in (\ref{e:T via tau}) adjusted for the nontrivial local system defined by $[A_0]$ and $W(\sfa^{(1)})$ is the sum of 1-loop graphs (a collection of binary trees with leaves decorated by $\sfa^{(1)}$ with roots attached to the cycle); $W$ is a function on $H^1_{\dd_{A_0}}$ with zero of order at least 2 at the origin. One can calculate $T_\Sigma$ explicitly:
$$T_\Sigma = \underbrace{(2\pi\hbar)^{n/2}(\EE^{-\frac{\pi\ii}{2}}\hbar)^{3n/2}}_{\xi} \underbrace{\frac{\omega^{\wedge n}}{n!}}_{\tau(\Sigma,[A_0])=d\sfa^{(1)}\sim (d\sfa^{(1)})^{1/2}(d\sfb^{(1)})^{1/2}}\qquad \in \mr{Dens} (T_{[A_0]}M_{\Sigma,G})\simeq \HDens (T_{[(A_0,0)]}\calM)$$
where $n=(\gamma-1)\dim G=\frac12 \dim M_{\Sigma,G}$ and $\omega$ is the Atiyah-Bott symplectic structure on $M_{\Sigma,G}$; factor $\xi$ is as in (\ref{e:xi}). The symplectic volume form $\omega^{\wedge n}/n!$ coincides with the Reidemeister torsion of the surface equipped with the non-acyclic local system defined by the flat connection $A_0$ (in adjoint representation), cf. e.g. \cite{Witten}.

Now we can define the number-valued partition function of the theory as in (\ref{Z number}). Since it only depends on the value of $\psi_{(A_0,0),r_\mr{min}}$ at the origin of the tangent space to the moduli space, we do not need to know the function $W$ to define $Z_\Sigma$. Explicitly, we obtain that the partition function is, up to the factor $\xi$, the symplectic volume of the moduli space of flat connections on $\Sigma$  \cite{Witten}:
$$Z_\Sigma=\xi\int_{M^\mr{irred}_{\Sigma,G}}\frac{\omega^{\wedge n}}{n!}=
\xi\cdot\# z(G)\cdot \mr{Vol}(G)^{2\gamma-2}\sum_R \frac{1}{(\dim R)^{2\gamma-2}}.$$
Here $\# z(G)$ is the number of elements in the center of $G$ and the sum in l.h.s. runs over irreducible representations $R$ of $G$.

\thebibliography{99}
\bibitem{ABF} C.~Albert, B.~Bleile, and J.~Fr\"ohlich. ``Batalin--Vilkovisky integrals in finite dimensions,'' J. Math. Phys. {\bf 51} 1 (2010): 015213.
\bibitem{AM} A.~Alekseev and P.~Mnev,
``One-dimensional Chern--Simons theory,'' 
\cmp{307}, 185\Ndash227 (2011).
\bibitem{AKSZ} M. Alexandrov, M. Kontsevich, A. Schwarz and O. Zaboronsky, ``The geometry of the master equation and topological quantum field theory,''
\ijmp{A12}, 1405\Ndash1430 (1997).
\bibitem{Ans} D. Anselmi, ``A general field-covariant formulation of quantum field theory,'' 
European Phys. J. C {\bf 73} 3, 1--19 (2013).
\bibitem{At} M. Atiyah, ``Topological quantum field theories,'' Inst. Hautes Etudes Sci. Publ. Math. {\bf 68}, 175--186
(1988).
\bibitem{ADPW} S. Axelrod, S. Della Pietra and E. Witten, ``Geometric quantization of Chern Simons gauge theory,'' Representations {\bf 34}, 39 (1991).
\bibitem{AS} S. Axelrod and I. M. Singer, ``Chern--Simons perturbation
theory,'' in {\em Proceedings of the XXth DGM Conference}, ed.\
S.~Catto and A.~Rocha, 3\Ndash45 (World Scientific, Singapore, 1992); ``Chern--Simons perturbation theory.~II,'' \jdg{39}, 173\Ndash213 (1994).
\bibitem{Bar-Natan} D. Bar-Natan, ``On the Vassiliev Knot Invariants,'' Topology {\bf 34}, 423\Ndash472 (1995).
\bibitem{BV81} I. A. Batalin, G. A. Vilkovisky, ``Gauge algebra and quantization,'' Phys. Lett. B, {\bf 102} 27 (1981).
\bibitem{BFV} I. A. Batalin, E. S. Fradkin, ``A generalized canonical formalism and quantization of reducible gauge theories,'' Phys. Lett. B {\bf 122} 2, 157--164 (1983);\\
I. A. Batalin, G. A. Vilkovisky, ``Relativistic S-matrix of dynamical systems with boson and fermion costraints,'' Phys. Lett. B {\bf 69} 3, 309--312 (1977).
\bibitem{BatesWeinstein} S.~Bates, A.~Weinstein, ``Lectures on the geometry of quantization,'' Berkeley Mathematics Lecture Notes; Vol 8 (1997).
\bibitem{BCM} F.~Bonechi, A.~S.~Cattaneo and P.~Mnev,
``The Poisson sigma model on closed surfaces," 
JHEP \textbf{2012}, 99, pages 1\Ndash 27 (2012).
\bibitem{BC} R. Bott and A. S. Cattaneo, ``Integral invariants of 3-manifolds,'' \jdg{48}, 91\Ndash133 (1998).
\bibitem{CDGM} S. Cappell, D. DeTurck, H. Gluck and E. Y. Miller,`` Cohomology of harmonic forms on Riemannian manifolds with boundary," Forum Math.\ \textbf{18}, 923\Ndash931  (2006).
\bibitem{C} A. S. Cattaneo,
``Configuration space integrals and invariants for 3-manifolds and knots,'' in {\em Low Dimensional Topology},
ed.\ H.~Nencka, \conm{233}, 153\Ndash165 (1999).
\bibitem{CCrsg} A.~S. Cattaneo and I.~Contreras, ``Relational symplectic groupoids,'' Lett. Math. Phys. {\bf 105}, 723\Ndash 767 (2015).
\bibitem{CFdq} A.~S.~Cattaneo and G.~Felder, ``A path integral approach to the  Kontsevich deformation quantization formula,'' Commun. Math Phys. {\bf 212} 3,  591\Ndash 611 (2000).
\bibitem{CFbranes} A.~S.~Cattaneo and G. Felder, ``Coisotropic submanifolds in Poisson geometry and branes in the Poisson sigma model,'' Lett. Math. Phys. {\bf 69}, 157\Ndash 175 (2004).
\bibitem{CFvacua} A.~S.~Cattaneo and G.~Felder,
``Effective Batalin--Vilkovisky theories, equivariant configuration spaces and cyclic chains,''
\proma{287}, 111\Ndash137 (2011).
\bibitem{CMcs} A.~S.~Cattaneo and P.~Mnev,
``Remarks on Chern--Simons invariants,''
\cmp{293}, 803\Ndash836 (2010).
\bibitem{CMR} A.~S.~Cattaneo, P.~Mnev and N.~Reshetikhin,
``Classical BV theories on manifolds with boundary,''
\href{http://arxiv.org/abs/1201.0290}{math-ph/1201.0290}, Commun. Math. Phys. {\bf 332} 2, 535--603 (2014).
\bibitem{CMR2} A.~S.~Cattaneo, P.~Mnev and N.~Reshetikhin,
``Classical and quantum Lagrangian field theories with boundary,'' 
\href{http://arxiv.org/abs/1207.0239}{arXiv:1207.0239}, Proceedings of the ``Corfu Summer Institute 2011
School and Workshops on Elementary Particle Physics and Gravity,'' PoS CORFU2011 (2011) 044.
\bibitem{cell_ab_BF} A.~S.~Cattaneo, P.~Mnev and N.~Reshetikhin, ``Cellular  BV-BFV-BF theory,'' in preparation.
\bibitem{CMW} A. S. Cattaneo, P. Mnev and K. Wernli, ``Split Chern-Simons theory in the BV-BFV formalism,'' \href{http://arxiv.org/abs/1512.00588}{arXiv:1512.00588}.
\bibitem{CR}A. S. Cattaneo and C. A. Rossi, ``Higher-dimensional $BF$
theories in the Batalin--Vilkovisky formalism: the BV action and generalized
Wilson loops,'' \cmp{221}, 591\Ndash657 (2001).
\bibitem{Cheeger} J. Cheeger, ``Analytic torsion and the heat equation,'' \anm{109} 2, 259\Ndash322 (1979).
\bibitem{Collins} J. C. Collins, ``Renormalization,'' Cambridge University Press (1986).
\bibitem{Costello} K. Costello, ``Renormalization and effective field theory,'' Vol. 170 AMS (2011).
\bibitem{FK} G. Felder, D. Kazhdan, ``The classical master equation,'' \href{http://arxiv.org/abs/1212.1631}{arXiv:1212.1631}, in {\it Perspectives in Representation Theory, Cont. Math.}
\bibitem{Fr} K. O. Friedrichs, ``Differential forms on Riemannian manifolds," Comm.\ Pure Appl.\ Math. \textbf{8}, 551\Ndash590 (1955).
\bibitem{FroehlichKing} J. Fr\"ohlich, C. King, ``The Chern-Simons theory and knot polynomials,'' \cmp{126} 1, 167\Ndash199 (1989).
\bibitem{FM} W. Fulton and R. MacPherson, ``A compactification of configuration spaces,'' \anm{139}, 183--225 (1994).
\bibitem{GK} I. Gelfand, D. Kazhdan, ``Some problems of differential geometry and the calculation of the cohomology of Lie algebras of vector fields,'' Sov. Math. Dokl. {\bf 12}, 1367--1370 (1971).
\bibitem{IbortSpivak} A. Ibort and A. Spivak, ``Covariant Hamiltonian field theories on manifolds with boundary: Yang-Mills theories,'' \href{http://arxiv.org/abs/1506.00338}{arXiv:1506.0033}
\bibitem{Khudaverdian} H. Khudaverdian, ``Semidensities on odd symplectic supermanifolds,'' Commun. Math. Phys. {\bf 247} 2, 353--390 (2004).
\bibitem{K} M. Kontsevich ``Feynman Diagrams and
Low-Dimensional Topology,''
First European Congress of Mathematics, Paris 1992, Volume II,
{\em Progress in Mathematics} {\bf 120} (Birkh\"auser, 1994), 120.
\bibitem{Kontsevich_integral} M. Kontsevich, ``Vassiliev's knot invariants,'' Adv. Soviet Math. {\bf 16},  137--150 (1993).
\bibitem{Kdq} M. Kontsevich, ``Deformation quantization of Poisson manifolds,'' Lett. Math. Phys. {\bf 66} 3, 157\Ndash 216 (2003).
\bibitem{Luck} W. L\"uck, ``Analytic and topological torsion for manifolds with boundary and symmetry,'' J. Diff. Geom. \textbf{37}  2, 263\Ndash322 (1993).
\bibitem{MeVa} S. Merkulov and B. Vallette, ``Deformation theory of representations of prop(erad)s,''
J. Reine Angew.\ Math.\ \textbf{634}, 51\Ndash106 (2009),
\& J. Reine Angew.\ Math.\ \textbf{636}, 123\Ndash174 (2009).
\bibitem{Milnor} J. Milnor, ``A duality theorem for Reidemeister torsion,'' \anm{76}, 137\Ndash147 (1962).
\bibitem{discrBF} P. Mnev, ``Discrete $BF$ theory,'' \href{http://arxiv.org/abs/0809.1160}{arXiv:0809.1160}.
\bibitem{Mo} C. B. Morrey, Jr., ``A variational method in the theory of harmonic integrals, II," Amer.\ J. Math.\ \textbf{78}, 137\Ndash170, (1956).
\bibitem{Ray-Singer} D. B. Ray, I. M. Singer, ``R-torsion and the Laplacian on Riemannian manifolds,'' Adv. Math.
7, 145\Ndash210 (1971).
\bibitem{FS} F. Sch\"atz, ``BFV-Complex and Higher Homotopy Structures,'' \cmp{286}, 399\Ndash443 (2009).
\bibitem{Schlegel} V. Schlegel, ``Gluing manifolds in the Cahiers topos,'' 	arXiv:1503.07408 [math.DG].
\bibitem{SchwBF} A. S. Schwarz, ``The partition function of degenerate
quadratic functionals and Ray--Singer invariants,''
\lmp{2}, 247\Ndash252 (1978).
\bibitem{SchwBV} A. S. Schwarz, ``Geometry of Batalin-Vilkovisky quantization,'' \cmp{155} 2, 249-260 (1993).
\bibitem{SchwTQFT} A. S. Schwarz, ``Topological quantum field theories,''
\href{http://arxiv.org/abs/hep-th/0011260}{arXiv:hep-th/0011260}.
\bibitem{Se} G. Segal, ``The definition of conformal field theory,'' in: \textit{Differential geometrical methods in theoretical physics}, Springer Netherlands, 165--171 (1988).
\bibitem{Severa} P. \v Severa, ``On the origin of the BV operator on odd
symplectic supermanifolds,'' Lett. Math. Phys. {\bf 78}, 55--59 (2006).
\bibitem{JS} J. Stasheff, ``Homological reduction of constrained Poisson algebras,'' \jdg{45}, 221\Ndash240 (1997).
\bibitem{ST} S. Stolz, P. Teichner, ``Supersymmetric Euclidean field theories and generalized cohomology,
a survey,'' preprint, 2008.
\bibitem{TuraevTor} V. Turaev, ``Introduction to combinatorial torsions,'' Springer (2001).
\bibitem{Vishik} S. M. Vishik, ``Generalized Ray-Singer conjecture. I. A manifold with a smooth boundary,'' \cmp{167} 1, 1\Ndash102 (1995).
\bibitem{Witten} E. Witten, ``On quantum gauge theories in two dimensions,'' Commun.\ Math.\ Phys.\ {\bf 141} 1, 153--209 (1991).
\bibitem{Wu} S. Wu, ``Topological quantum field theories on manifolds with a boundary,''
\cmp{136}, 157\Ndash168 (1991).
\end{document}